\documentclass[11pt]{article} 
	\usepackage[margin=1.0 in]{geometry} 
	\usepackage[english]{babel} 
	\usepackage{amsmath, amssymb, textcomp, amsthm, mathtools,mathrsfs} 
	\usepackage{booktabs}
    \usepackage{wasysym} 
	\usepackage{stmaryrd}
	\usepackage{dsfont}
	\usepackage{hyperref}
   	\usepackage{natbib}
   	\usepackage{enumitem}
   	\usepackage{inputenc} 
	\usepackage{csquotes}
	\usepackage{footnote,footmisc} 
	\usepackage{setspace} 
	\usepackage{threeparttable}
	\usepackage{tikz}
	\usepackage{adjustbox}
	\usepackage{pgfplots}
	\pgfplotsset{compat=1.17}
	\usetikzlibrary{shapes.misc}
	\usetikzlibrary{patterns}
	\usepackage{xcolor}
	\hypersetup{
    	colorlinks,
        linkcolor={red!50!black},
    	citecolor={blue!50!black},
    	urlcolor={blue!80!black}
	}
	\usetikzlibrary{arrows}
	\usepackage{float}
	\usepackage{algorithm}
	\usepackage{algpseudocode}
    \usetikzlibrary{arrows.meta}

	\usepackage[toc,page]{appendix}
	
	\newtheorem{theorem}{Theorem}
	\newtheorem{lemma}{Lemma} 
	\newtheorem{proposition}{Proposition}

	\newtheorem{assumption}{Assumption}
	\theoremstyle{definition}
	\newtheorem{example}{Example}
	\newtheorem*{example*}{Example}
	\theoremstyle{plain}
	
	\usepackage{graphicx}  
	\usepackage{marvosym}

    \newtheorem*{assumptions*}{\assumptionnumber}
\providecommand{\assumptionnumber}{}

\usepackage{caption}
\usepackage{subcaption}
\usepackage{mwe}

\definecolor{cyan}{cmyk}{1, 0.4, 0, 0}

\definecolor{mypink}{RGB}{219, 48, 122}

\makeatletter

 \def\R{\mathbb{R}}
 \def\N{\mathbb{N}}
 \def\W{\Omega}
 \def\w{\omega}
 
 \def\X{\mathcal{X}}
 
 \def\Z{\mathcal{Z}}
 \def\Pr{\mathbb{P}}
 \def\P{\mathcal{P}}

\begin{document}

\title{Partial identification with entropy regularized optimal transport}

\author{Bruno Costa\\ University of Michigan \and Florian Gunsilius\\ Emory University}

\date{\today}

\maketitle

\begin{abstract}
  In many statistical settings, the available data and maintained assumptions do not suffice to uniquely identify the model parameters of interest. In such cases, one can only identify sets which are guaranteed to contain the true parameters. These are often characterized through linear programs that optimize over models compatible with the observed data. These programs can be infinite-dimensional in the optimizer and the number of constraints. We provide a unified way to characterize and solve such optimization problems by phrasing them as optimal transport problems on path spaces. This allows us to regularize the problem with an entropy penalty, recasting it as a multi-marginal entropic optimal transport problem, which can be solved efficiently via Sinkhorn iterations. In addition, it allows us to establish convergence of the regularized value to the sharpest bound, derive consistency rates for a plug-in estimator, and obtain asymptotic distribution for approximate bounds. The method is general and accommodates settings ranging from instrumental variable models with continuous variables to welfare estimation in heterogeneous demand models. We verify the statistical and computational properties in simulations and provide an application to demand estimation. \vspace{0.2cm}

        \noindent \emph{Keywords:} causal inference, entropy regularization, linear program, optimal transport, partial identification, Schr\"odinger bridge
\end{abstract}

\section{Contribution and related literature}

Identification in statistical and economics models is often an all-or-nothing concept: either a parameter of interest can be uniquely determined from the model assumptions and the data-generating process, or it cannot. Partial identification relaxes this dichotomy by characterizing the parameter values that are compatible with the observed distribution and maintained assumptions. The argument for partial identification has been that it offers greater flexibility and robustness under weaker assumptions \citep[e.g.][]{manski2003partial, tamer2010partial, kline2023recent, molinari2020microeconometrics}. There are two main methodological approaches in this area, one that mainly draws on random set theory \citep[e.g.][]{molchanov2005theory, molchanov2014applications, beresteanu2012partial, chesher2017generalized} and one that mainly works with linear programming \citep[e.g.][]{balke1994counterfactual, balke1997bounds, laffers2015bounding, tebaldi2019nonparametric}.

Both approaches face substantial computational challenges when unobserved heterogeneity is unrestricted. Finite representations can grow exponentially with the observable supports -- the ``curse of cardinality'' \citep{beresteanu2012partial, russell2019sharp} -- while continuous variables can require optimization over distributions of entire response functions \citep{gunsilius2019path}. We develop a general method for approximating and estimating sharp bounds on partially identified averages that accommodates this functional heterogeneity and data restrictions across a continuous range of observed conditions. Under our regularity conditions, we consider all distributions that satisfy the model's assumptions and reproduce the full observable distributions. Building on \citet{gunsilius2019path}, we show that this problem is equivalent to constrained optimal transport over admissible response paths (Theorem~\ref{thm:strong-duality}). In the binary IV model, these paths encode the familiar compliance types \citep{angrist1996identification}.

  We make this representation computable through deterministic discretization, relaxation of the structural support constraint, and relative entropy regularization. The resulting multi-marginal entropic optimal transport problem \citep{peyre2019computational} has a smooth dual that can be solved using generalized Sinkhorn or IPFP iterations \citep{deming1940least, sinkhorn1967diagonal, cuturi2013sinkhorn, nutz2021entropic}. This construction avoids sampling latent response paths as in \citet{gunsilius2019bounds} and permits a joint analysis of computational approximation and statistical estimation. Simulations and an empirical application assess its computational performance and the
 informativeness of the resulting bounds.

Our statistical analysis tracks the errors introduced by discretization, support relaxation, entropy regularization, and numerical optimization. Drawing on regularization theory \citep{weed2018explicit, nutz2021entropic}, we establish convergence to the sharp bound and consistency under joint conditions on the approximation parameters and sample size (Theorem~\ref{thm:consistency}). For fixed discretization and regularization, with sufficiently accurate numerical optimization, we derive a central limit theorem and a consistent variance estimator for the regularized population value (Theorem~\ref{thm:asymptotic_distribution}). 

Our approach relates to several advances in econometric identification and computation. \citet{henry2015combinatorial} use a maximum-flow representation to avoid enumerating all inequalities defining incomplete models. \citet{torgovitsky2019partial} obtains finite linear characterizations by extending subdistributions, while \citet{gu2023partial, gu2024counterfactual} develop response-type and latent-space enumeration methods for discrete outcome models. \citet{mogstad2018using} bound policy-relevant parameters by optimizing marginal treatment response functions consistent with identified IV-like estimands; their binary-treatment framework allows continuous instruments. Related work studies multiple instruments under partial monotonicity \citep{mogstad2024policy}; \citet{mogstad2024iv} survey this literature. Our formulation instead works with structural response-path laws and the full observable conditional distributions, allowing continuous treatments.

Existing computational methods for continuous treatments include sampling response paths \citep{gunsilius2019path}, imposing selected moment restrictions \citep{kilbertus2020class}, and optimizing generative or neural representations \citep{balazadeh2022partial, tan2024consistency}. The latter can accommodate continuous variables, and \citet{tan2024consistency} establish consistency under explicit conditions, although computation involves non-convex optimization. Our method provides a convex entropic approximation whose discretization and regularization errors can be analyzed jointly with sampling error.

The connection between transport and partial identification also has important precedents \citep[see][for a broader discussion]{gunsilius2025primer}. \citet{galichon2009test, galichon2011set} relate compatibility in incomplete models to transport and random-set representations, and \citet{ekeland2010optimal} study falsifiability using optimal transport. Entropy has an established role in latent-variable econometrics: \citet{schennach2014entropic} uses entropy maximization to transform latent-variable moment restrictions into observable restrictions and simulation to integrate out the latent variables. In our construction, relative entropy regularizes an extremal functional over structurally admissible path laws, with computation based on deterministic discretization.

Recent transport approaches address complementary counterfactual and identification problems. \citet{gu2024wasserstein} study sensitivity to Wasserstein departures from a baseline latent distribution. \citet{oberreynolds2023estimating} derives transport bounds for joint potential-outcome functionals and develops estimation and bootstrap inference. \citet{fan2025partial} characterize incomplete-data moment models through conditional transport. \citet{voronin2025generalized} uses characteristic kernels and polynomial approximation to compute bounds over distributions of finite-dimensional observed and latent vectors. Closely related computationally, \citet{franguridi2025inference} combine entropic transport, Sinkhorn computation, and Gaussian and directional-bootstrap inference for partially identified moment models. Our contribution brings together structural response-function spaces, continuously indexed distributional constraints, and a convergent deterministic approximation with joint statistical and numerical error analysis. 

Also closely related, \citet{gao2025bridging} use multi-marginal optimal transport and entropic computation to bound causal functionals given finitely many identified potential-outcome marginals. Our framework accommodates distributions over entire structural response functions and a continuum of observable distributional restrictions, with approximation guarantees connecting the computational problem to the sharp bounds and a central limit theorem for estimation at fixed regularization and discretization. 

Another close contribution is \citet*{tan2026partial}, who obtain sharp transport bounds on policy-relevant effects in the generalized Roy model. They preserve full distributional information and exploit the model's selection structure to reduce the problem to separable one-dimensional transport problems. Their framework allows continuous instruments and includes treatment extensions and statistical inference. Our approach supplies a complementary numerical method for general structural response-path models under the regularity conditions developed below, without requiring such an analytical reduction and a specific Roy model setting.

Our inference results also connect to the literature on partially identified functions and optimization problems. \citet{santos2012inference} studies nonparametric IV models, and \citet{chernozhukov2023constrained} develop inference under shape restrictions in conditional moment models. \citet{fang2023inference} study large linear systems with known coefficients, while \citet{bai2026inference} allow unknown coefficients. Directionally differentiable functionals and linear-program perturbations provide other routes to inference \citep{fang2018inference, cho2024simple}. Our Gaussian limit uses the smooth dependence of the regularized transport value on the observable marginals, building on statistical theory for entropic transport \citep{delbarrio2023improved}.

Finally, the path formulation connects partial identification to trajectory inference from population snapshots \citep{schiebinger2019optimal, lavenant2024mathematical}. Both settings identify marginals while leaving dependence across coordinates of each path unobserved. Here instrument values index potential responses in place of time. After Gibbs reweighting, the regularized objective has a multi-marginal Schr\"odinger bridge interpretation \citep{schrodinger1931, schrodinger1932, leonard2013survey}. Recent trajectory methods accommodate smooth reference processes \citep{hong2025trajectory} and iterative reference refinement \citep{shen2025multimarginal}. We use this common entropy-minimization structure to approximate extremal causal functionals over paths that respect the structural model.

The leading application is to IV bounds with continuous treatments and instruments, extending formulations developed for binary treatments \citep{balke1997bounds, kitagawa2009identification}, and related problems arise in welfare analysis \citep{hausman2016individual}. Our simulations compare exact linear-program benchmarks and examine computation and inference. An application to food and leisure budget shares illustrates the role of distributional information and shape restrictions.

\textbf{Notation.}
We refer to $\N_0$ as the set containing all natural numbers and $0$, whereas $\overline{\R}$ is used to denote the real numbers with $+\infty$. For any $n\in\N$, we denote the set of all natural numbers smaller than or equal to $n$ by $[n]$, i.e. $[n]=\{1,2,\dots,n\}$. Unless otherwise stated, we reserve calligraphic letters for sets, capital letters $X,Y,Z,\Lambda$ for random variables, and their minuscule is used for elements of a set.

Given two metric spaces  $(\mathcal{X}_i,d_{\mathcal{X}_i})$, $i=1,2$, the set $C(\mathcal{X}_1,\mathcal{X}_2)$ ($C_b(\mathcal{X}_1,\mathcal{X}_2)$) stands for the set of all continuous functions (and bounded) from $\mathcal{X}_1$ to $\mathcal{X}_2$. We endow $C_b(\mathcal{X}_1,\mathcal{X}_2)$ with the sup-norm, $\|\cdot\|_\infty$. The set of all Lipschitz functions is denoted by $\textrm{Lip}(\mathcal{X}_1,\mathcal{X}_2)$. When the domain is compact, this set is endowed with the norm $\|f \|_{\textrm{Lip}}=\|f\|_\infty+\sup_{x\neq y\in\mathcal{X}_1} \frac{d_{\mathcal{X}_2}(f(x),f(y))}{d_{\mathcal{X}_1}(x,y)}$. Under this norm, the closed ball of radius $M>0$ is denoted by $\textrm{Lip}_M(\mathcal{X}_1,\mathcal{X}_2)$.

For any metric space $(\mathcal{X},d_\mathcal{X})$, the Borelian sets of $\mathcal{X}$ are referred by $\mathcal{B}(\mathcal{X})$. 
The set $\P(\mathcal{X})$ ($\mathcal{M}(\mathcal{X})$) stands for the set of all probability (signed) measures defined over $\mathcal{B}(\mathcal{X})$. We analyze this set with two different topologies: weak and $W_1$ topologies. We refer the reader to \citet{billingsley2013convergence, peyre2019computational} for their definition and properties.

\section{Setup and approach}\label{sec:setup}

\subsection{Setup}\label{sec:setup_objects}

\textbf{The baseline framework.} We observe a pair $(Z,X)$: an exogenous variable $Z$ and an endogenous variable $X$, taking values in compact Polish spaces $\Z$ and $\X\subseteq\mathcal{H}$, respectively. The underlying space $\mathcal{H}$ is assumed to be a separable Hilbert space, and $\X$ is either discrete or a convex subset.
The observed data reveal the exogenous variable's (unconditional) distribution $\lambda\in\P(\Z)$, as well as the conditional law of the endogenous variable given the realization $Z=z$, $P_z\in\P(\X)$. We write $\mathbf{P}=(P_z)_{z\in\Z}$ for this \emph{marginal profile} of the data, and denote the joint distribution of the exogenous and endogenous variables by $P=\int_{\Z}P_z\,d\lambda(z)\in \P(\Z\times \X)$.

Unobserved heterogeneity is modeled as a latent variable $\omega$ in a Polish space $\Omega$, distributed according to some non-identified law $Q\in\P(\W)$. Its interaction with the observed variable is governed by a known mechanism function $g:\Omega\to\mathbb{H}$, mapping each latent type $\omega\in\Omega$ to an $\X$-valued function of the exogenous variable, $g(\omega)\in\mathbb{H} = L^1(\Z,\lambda;\X)$. Here $L^1(\Z,\lambda;\mathcal{H})$ stands for the Bochner space of $\mathcal{H}$-valued, $\lambda$-integrable functions on $\Z$ or, equivalently, the set of $\mathcal{H}$-valued $\lambda$-integrable stochastic processes on $\Z$. Equivalently, $\mathbb{H}$ contains all stochastic processes whose value lie in the smaller set $\X$. Intuitively, $g(\omega)$ collects the potential endogenous values, as a stochastic process, that the type $\omega$ would produce at every realization of the exogenous variable. Measurability and integrability are the only regularity this path is required to satisfy, with jumps being allowed. Writing $g_z(\omega):=g(\omega)(z)$ for the path evaluated at exogenous value $z$, the endogenous variable is generated by evaluating the mechanism at the realized instrument: $X = g_Z(\omega)$.

The exogenous variable plays the role of an \emph{instrument}: it induces variation in the endogenous variable that is observed and independent of the unobserved heterogeneity. This is the \emph{strict exogeneity} condition, $Z\perp\omega$, and it may be a source of increasing in the identification power. Figure~\ref{fig:baseline} represents this structure as a directed acyclic graph (DAG).

\begin{figure}[H]
\centering
\begin{tikzpicture}[
  latent/.style={circle, draw, dashed, minimum size=1.1cm},
  obs/.style={circle, draw, minimum size=1.1cm},
  target/.style={rectangle, draw, dashed, rounded corners, minimum size=0.9cm},
  >=stealth, thick
]
  \node[latent] (om) at (0,0) {$\omega$};
  \node[target] (lam) at (0,2) {$\Lambda(\omega)$};
  \node[obs] (z) at (-2.5,-2) {$Z$};
  \node[obs] (x) at (2.5,-2) {$X$};
  \draw[->, dashed] (om) -- (lam);
  \draw[->] (om) -- (x) node[midway, above, sloped] {$g_z$};
  \draw[->] (z) -- (x);
  \draw[dotted] (om) -- (z);
  \node[align=center] at (-2,-0.7) {\footnotesize $Z\perp\omega$};
\end{tikzpicture}
\caption{Solid nodes are observed, the dashed node is latent, and the dashed box marks the target. The dotted line marks the absence of an edge between $\omega$ and $Z$: exogeneity. The solid arrows into $X$ depict the mechanism, jointly determined by the latent type and the instrument.}
\label{fig:baseline}
\end{figure}

The object of interest is the mean of an effect function $\Lambda:\Omega\to[0,1]$ of the latent type, $E_Q[\Lambda]$.\footnote{The restriction to $[0,1]$ is a normalization. Any bounded effect function $\Lambda:\Omega\to[a,b]$ is accommodated by the affine change $\tilde\Lambda=(\Lambda-a)/(b-a)$, since the feasible set $\mathcal{Q}(\mathbf{P},\lambda)$ defined below does not depend on the effect function; boundedness is the only substantive requirement.} The effect function may encode counterfactual outcomes, marginal effects, or other features of the latent type. The data restrict the latent law $Q\in\P(\W)$ to the sharp identified set
\begin{equation}\label{eq:identified-set}
\mathcal{Q}(\mathbf{P},\lambda) = \big\{Q\in\P(\Omega) : (g_z)_\#Q = P_z \text{ for }\lambda\text{-a.e. } z\in\Z\big\},
\end{equation}
that is, the latent laws whose instrument-wise pushforwards reproduce every observed conditional endogenous distribution. Since this set need not be a singleton, the target $E_Q[\Lambda]$ is, in general, only partially identified.

Our main objective is the identification and estimation of the sharpest lower bound on this average effect, the value of the following infinite-dimensional linear program:
\begin{equation}\label{eq:sharp-bound}
v(\mathbf{P},\lambda) = \inf_{Q\in\mathcal{Q}(\mathbf{P},\lambda)} E_Q[\Lambda].
\end{equation}
Focusing on the lower bound is without loss of generality: the upper bound is obtained by solving the same program with $-\Lambda$ in place of $\Lambda$ and reversing the sign of the resulting value.

The key step toward computing this bound is a novel connection to optimal transport. The identification problem can be equivalently formulated as a constrained optimal transport problem, minimizing the expected cost of transporting the instrument distribution to the profile of marginal endogenous distributions. Leveraging this connection, we develop a series of simplifications to the associated optimal transport problem that yield a tractable identification and estimation routine.

Before turning to this connection, we show that covariates, often used in applied work to justify the exogeneity restriction above, can be accommodated without leaving the baseline framework. The covariate framework introduced next is just a special case in which the sharp bound reduces to a covariate-average of baseline bounds. This lets us develop the theory for the baseline framework without loss of generality.

\textbf{The covariate framework.} Suppose the researcher additionally observes a covariate vector $W$ in a compact Polish space $\mathcal{W}$, with known marginal $m\in\P(\mathcal{W})$. The covariate $W$ interacts with both the instrument and the endogenous variable. For each realization $W=w$, the instrument is conditionally distributed as the known law $\lambda_w\in\P(\Z)$, and the endogenous variable given $(Z,W)=(z,w)$ has law $P_{z,w}\in\P(\X)$. The mechanism mediating the interaction of observed and unobserved variables becomes covariate-indexed as well: for each covariate $w\in\mathcal{W}$, a map $g_w:\Omega\to\mathbb{H}$, with $g_{w,z}(\omega)\in\X$ denoting its value at exogenous realization $z$. The structural equation is then $X=g_{W,Z}(\omega)$. As with the mechanism, the effect function may as well depend on the covariate, $\Lambda:\Omega\times\mathcal{W}\to[0,1]$, for instance, to capture the effect of a program for a specific subpopulation.

A distinctive feature of this framework is that the instrument need not be exogenous unconditionally. Unlike in the baseline case, $Z$ and $\omega$ may be marginally dependent, provided the covariate $W$ is the sole channel of that dependence. Once this channel is closed by conditioning on $W$, exogeneity is restored. This is the standard assumption of \emph{conditional exogeneity} maintained in applied work: $Z\perp\omega \mid W$.

This triangular dependence, $W$ feeding into $Z$, $\omega$, and $X$, is displayed in Figure~\ref{fig:covariate}.

\begin{figure}[H]
\centering
\begin{tikzpicture}[
  latent/.style={circle, draw, dashed, minimum size=1.1cm},
  obs/.style={circle, draw, minimum size=1.1cm},
  target/.style={rectangle, draw, dashed, rounded corners, minimum size=0.9cm},
  >=stealth, thick
]
  \node[obs] (w) at (0,-4) {$W$};
  \node[latent] (om) at (0,0) {$\omega$};
  \node[obs] (z) at (-2.5,-2.2) {$Z$};
  \node[obs] (x) at (2.5,-2.2) {$X$};
  \node[target] (lam) at (0,2) {$\Lambda(\omega,w)$};
  \draw[->, dashed] (om) -- (lam);
  \draw[->] (om) -- (x) node[midway, above, sloped] {$g_{w,z}$};
  \draw[->] (z) -- (x);
  \draw[->] (w) -- (om);
  \draw[->] (w) -- (z);
  \draw[->] (w) -- (x);
  \draw[dotted] (om) -- (z);
  \node[align=center] at (-3.6,-1.1) {\footnotesize $Z\perp\omega\mid W$};
\end{tikzpicture}
\caption{The covariate framework. Deleting $W$ returns Figure~\ref{fig:baseline}. Both $W\to Z$ and $W\to\omega$ must be present for this to strictly generalize the baseline framework.}
\label{fig:covariate}
\end{figure}

Conditional exogeneity, together with the fact that the covariate law $m$ and the conditional instrument law $\lambda_w$ are both known, disintegrates the state law into a free part and a known part:
\[
Q(d\omega,dw,dz) = \underbrace{Q_w(d\omega)}_{\text{free}}\ \underbrace{\lambda_w(dz)m(dw)}_{\text{known}}\ ,
\]
with $Q_w$, the latent law given $W=w$, the only unknown component.

The data-matching restriction again applies fiberwise: for each covariate realization, the pushforward of the conditional latent law under the covariate-indexed mechanism must match the conditional distribution of the endogenous variable, $(g_w)_\#Q_w = P_w$ for $m$-a.e. $w$, where $P_w=\int_{\Z}P_{w,z}\,d\lambda_w(z)$ is the covariate-conditional endogenous law. The sharp identified set of state laws is therefore
\[\mathcal{Q}(\mathbf{P},\lambda) = \big\{(Q_w)_{w\in\mathcal{W}}\in\P(\Omega) : (g_w)_\#Q_w = P_w \text{ for } m\text{-a.e. } w\in\mathcal{W}\big\}.\]

The target effect disintegrates in the same way, as the integral of covariate-fiber averages,
\[E_Q[\Lambda]=\int_{\mathcal{W}}E_{Q_w}[\Lambda(\cdot,w)]\,dm(w).\]

The sharp lower bound on the average effect is therefore
\[v(\mathbf{P},\lambda)=\inf_{Q\in\mathcal{Q}(	\mathbf{P},\lambda)} E_Q[\Lambda].\]
Because both the state law and the target disintegrate fiberwise, and under mild additional regularity on the conditional profile $\mathbf{P}_w=(P_{z,w})_{z\in\Z}$, the conditional instrument law $(\lambda_w)_{w\in\mathcal{W}}$, and the effect function $\Lambda$, a measurable-selection argument shows that the sharp bound is itself a covariate average of fiberwise bounds:
\begin{equation}\label{eq:covariate-bound}
v(\mathbf{P},\lambda) = \int_{\mathcal{W}} \left(\inf_{\substack{Q_w\in \P(\W)\\ \,(g_w)_\#Q_w=P_w}} E_{Q_w}[\Lambda(\cdot,w)]\right)\,dm(w) = \int_{\mathcal{W}} v(\mathbf{P}_w,\lambda_w)\,dm(w).
\end{equation}

Framework 2 is thus Framework 1 run in every covariate fiber and averaged: conditional on $W=w$, it is the baseline problem with data $(\mathbf{P}_w,\lambda_w)$, mechanism $g_w$, and effect $\Lambda(\cdot,w)$. Deleting $W$ from Figure~\ref{fig:covariate} returns Figure~\ref{fig:baseline}. We therefore develop identification and estimation for the bound of the baseline framework and recover the covariate framework through the reduction \eqref{eq:covariate-bound}, without loss of generality under the stated regularity.

\subsection{Examples}\label{sec:examples}

\textbf{A common two-stage structure.} All examples that follow share a common structure: a two-stage model in which the latent type describes a complete potential-response profile at each stage. The first stage is a selection or treatment phase, in which the instrument influences the treatment the unit takes. The second is the outcome equation, in which the realized treatment produces the response.

Write $\mathcal{D}$ for the set of treatments or selections and $\mathcal{R}$ for the set of responses, so that the endogenous variable records both stages as the pair $X=(D,R)\in\X=\mathcal{D}\times\mathcal{R}$.

The latent type of each stage is a complete description of how the unit would respond to every instrument or treatment value, that is, a map
\[
\omega_1:\Z\to\mathcal{D} \quad\text{ and }\quad\omega_2:\mathcal{D}\to\mathcal{R}
\]
assigning to each instrument value the treatment the unit would take, and to each treatment the outcome the unit would then realize. The complete latent profile is the concatenation of the two,
\[
\omega=(\omega_1,\omega_2)\in\Omega=\Omega_1\times\Omega_2.
\]

These maps are to be read as counterfactually complete descriptions of behavior. 
A unit's type is therefore its entire response profile across all possible realizations, and the data reveal only the single coordinate that the realized instrument selects. Unobserved heterogeneity in this framework is precisely heterogeneity in these profiles, and it is unrestricted: no assumption ties how a unit responds in the first stage to how it responds in the second.

Additional assumptions enter as shape restrictions on the form of the response functions, that is, as restrictions on the sets $\Omega_1$ and $\Omega_2$ rather than on the observed data. Requiring both maps to be continuous is a natural and comparatively weak such condition. In models with continuous treatment where the targets are marginal effects, one restricts attention to continuously differentiable types, so that the derivatives defining those effects exist. Exclusions of particular response patterns, such as the removal of defiers in Example~\ref{ex:compliance} below, are of the same nature. Restrictions of this kind shrink the feasible set $\mathcal{Q}(\mathbf{P},\lambda)$ and hence tighten the bound, but leave the structure of the problem intact.

Finally, the mechanism function translates the latent type into the instrument-response profile. Given a type $\omega=(\omega_1,\omega_2)$ and an instrument realization $z$, the unit takes treatment $\omega_1(z)$ and realizes the outcome $\omega_2(\omega_1(z))$, so that the mechanism is the two-stage composition:
\begin{equation}\label{eq:mechanism-2stage}
g_z(\omega) = \big(\omega_1(z),\ \omega_2(\omega_1(z))\big)\in\mathcal{D}\times\mathcal{R}.
\end{equation}
The induced path $z\mapsto g_z(\omega)$ is thus vector-valued, tracking jointly how selection and outcome respond as the instrument ranges over $\Z$: it is the element of $\mathbb{H}$ that the type $\omega$ determines, and the data constrain only its pushforwards $(g_z)_\#Q=P_z$.

What varies across the examples that follow is the economic content and the richness of the underlying spaces. We begin with the fully binary case and progress to settings in which the instrument and the treatment are continuous. All share the two-stage structure above, with targets that are means of type-level effects, $E_Q[\Lambda]$.

\begin{example}[Imperfect compliance]\label{ex:compliance}
The instrument $Z\in\{0,1\}$ is an assigned treatment. The endogenous variable bundles the treatment received and the realized outcome, $X=(D,R)\in\{0,1\}\times\{0,1\}$. All observable spaces are binary. With no restrictions imposed, the latent type collects two response functions,
\[
\Omega=\Omega_1\times\Omega_2, \qquad \Omega_1=\Omega_2=\{0,1\}^{\{0,1\}},
\]
where $\omega_1$ is the potential-treatment map and $\omega_2$ the potential-outcome map. Each factor has four elements: never-taker ($\mathrm{NT}$), always-taker ($\mathrm{AT}$), complier ($\mathrm{C}$), defier ($\mathrm{D}$). Writing a response path as the pair of endogenous values realized under each instrument value, $(\mathrm{treat}_{z=0},\mathrm{out}_{z=0};\mathrm{treat}_{z=1},\mathrm{out}_{z=1})$, mechanism \eqref{eq:mechanism-2stage} evaluates to
\[
g(\omega) = \big(\omega_1(0),\,\omega_2(\omega_1(0));\ \omega_1(1),\,\omega_2(\omega_1(1))\big).
\]
The set of possible paths for the potential endogenous variables is $\mathbb{H}=\left(\{0,1\}\times\{0,1\}\right)^{\{0,1\}}$, i.e. the set of all maps from $\{0,1\}$ to $\{0,1\}\times\{0,1\}$, since integrability imposes no further restriction in the finite case. The distribution $\lambda$ is the assignment law, and $P_z=(g_z)_\#Q$ is the observed distribution of $(D,R)$ under assignment $z$. The identified set $\mathcal{Q}(\mathbf{P},\lambda)$ is the set of type-distributions consistent with the two observed tables, and $v(\mathbf{P},\lambda)$ is the sharp lower Balke--Pearl bound \citep{balke1994counterfactual, balke1997bounds}.

Two features of this simple framework already illustrate the core issues of the problem. First, the mechanism is generally many-to-one, which is often the source of partial identification. For instance, for a never-taker in the first-stage ($\omega_1\equiv 0$) the outcome type enters the path only through its response to non-treatment, $\omega_2(0)$: a unit whose second-stage type is analogously a never-taker (recovers under neither treatment) and a unit whose second-stage type is a complier (recovers only if treated) produce the identical path,
\[
g(\mathrm{NT},\mathrm{NT}) = g(\mathrm{NT},\mathrm{C}) = (0,0;0,0).
\]
Thus $Q$ cannot be recovered from the data: only its pushforwards $(g_z)_\#Q$ are constrained, so the average effect of $\Lambda$ is left partially identified.

Second, sign restrictions on the outcome stage may act only as shape restrictions on $\Omega$, not necessarily as point-identifying devices. Requiring treatment to have a weakly beneficial causal effect removes the outcome-defiers, the type with $\omega_2(d)=1-d$ (a treated unit fails to recover while an untreated unit recovers), by imposing $Q(\{\omega_2=\mathrm{D}\})=0$. This shrinks $\mathcal{Q}(\mathbf{P},\lambda)$, and hence the bound, without collapsing it to a point.

Taking $\Lambda(\omega)=\omega_2(1)$, the bound $v(\mathbf{P},\lambda)$ gives the smallest treated-state outcome probability $\Pr(R_{d=1}=1)$ compatible with the data. The original Balke--Pearl target, the average treatment effect $E_Q[\omega_2(1)-\omega_2(0)]$, is accommodated after renormalization and translation, taking
\[
\Lambda(\omega) = \tfrac{1}{2}\big(\omega_2(1)-\omega_2(0)+1\big)\in[0,1],
\]
and the identified set $\mathcal{Q}(\mathbf{P},\lambda)$ is unchanged.
\end{example}

\begin{example}[Returns to schooling]\label{ex:schooling}
The instrument $Z$ is a natural-experiment shifter of schooling: quarter of birth interacted with compulsory-schooling laws \citep{angrist1991compulsory}, proximity to a four-year college \citep{card1993geographic}, or exposure to a school-construction program \citep{duflo2001schooling}. The endogenous variable pairs attained schooling and earnings, $X=(D,R)$, with $D$ in a discrete ladder $\mathcal{D}$ and $R\in\mathcal{R}\subset\mathbb{R}$ log-earnings. The type $\omega=(\omega_1,\omega_2)$ consists of a schooling-response map $\omega_1:\Z\to\mathcal{D}$ and a potential-earnings map $\omega_2:\mathcal{D}\to\mathcal{R}$. The effect function is the individual return to schooling, $\Lambda(\omega)=\omega_2(1)-\omega_2(0)$, and the target parameter is the average return $E_Q[\Lambda]$. Absent restrictions on how returns $\omega_2$ co-vary with selection $\omega_1$, the average effect is only partially identified, with sharp lower bound $v(\mathbf{P},\lambda)$.
\end{example}

\begin{example}[A program under imperfect take-up]\label{ex:takeup}
Let $Z$ be a randomized offer of a program: the Oregon Health Insurance Experiment lottery \citep{finkelstein2012oregon}, a Moving to Opportunity voucher \citep{katz2001moving}, or a JTPA training slot \citep{abadie2002instrumental}. Here, strict exogeneity $Z\perp\omega$ holds by design. The endogenous variable records take-up and the downstream outcome, $X=(D,R)$, with $D\in\{0,1\}$ enrollment and $R$ the outcome (e.g. health utilization, earnings, neighborhood quality). The type factors into an enrollment-response map $\omega_1$ and a potential-outcome map $\omega_2$, with the same mechanism \eqref{eq:mechanism-2stage}, and the target is the average program effect $\Lambda(\omega)=\omega_2(1)-\omega_2(0)$. Because offers are complied with heterogeneously and effects vary across the population, the average effect is partially identified and $v(\mathbf{P},\lambda)$ delivers its sharp lower bound directly from the offer-conditional distributions of $(D,R)$, with no restriction on who complies or on how compliance co-varies with the effect.
\end{example}

\begin{example}[Demand with a continuous instrument and a continuous treatment]\label{ex:demand}
Nothing in the framework requires discreteness, and the instrument $\Z$ may be a continuum, scalar or multidimensional. Consider nonparametric demand estimation with endogenous prices. The endogenous variable is the price-quantity pair $X=(D,R)$, with $D\in\mathcal{D}\subset\mathbb{R}_+$ the log-price faced and $R\in\mathcal{R}\subset\mathbb{R}_+$ the log-quantity demanded (both normalized), and the instrument is a continuous cost shifter. In the gasoline application of \citet{blundell2012measuring,blundell2017nonparametric}, the instrument is the distance between an oil platform in the Gulf of Mexico and the capital of the household's state, a scalar continuum. Taking several shifters jointly (e.g. transport distance, refinery capacity, state excise taxes) gives a multidimensional set $\Z\subset\mathbb{R}^k$.

The latent type $\omega=(\omega_1,\omega_2)$ now consists of a price-response map $\omega_1:\Z\to\mathcal{D}$, recording the price the household faces as the cost shifter varies, and an individual demand function $\omega_2:\mathcal{D}\to\mathcal{R}$. Individual demand may be further restricted by economic theory conditions, such as monotonicity, concavity, or the Slutsky condition. 

The mechanism is the same two-stage composition \eqref{eq:mechanism-2stage}, and the response path $z\mapsto g_z(\omega)$ is now a genuine element of $\mathbb{H}=L^1(\Z,\lambda;\X)$ rather than a finite list. Because $\omega_2$ is an entire demand curve, heterogeneity is infinite-dimensional, in the spirit of the multidimensional unobserved heterogeneity considered by \citet{hausman2016individual}.

Continuity of the treatment makes marginal effects the natural targets. One may be interested in the sensitivity of demand to a price change, such as the elasticity of demand, evaluated at price $d$. This is a marginal effect of the second-stage type, and the effect function
\[
\Lambda(\omega) = \partial_d\,\omega_2(d),
\]
suitably normalized,\footnote{This target requires $\omega_2$ differentiable at $d$, i.e.\ a smoothness restriction on $\Omega$ (not on the data), consistent with the shape-restriction discussion above.} delivers the average elasticity of demand, and welfare targets such as average deadweight loss or equivalent variation from a tax-induced price change are means of the same kind. In the absence of restrictions on how the demand curve $\omega_2$ co-varies with price exposure $\omega_1$, these averages are partially identified, and $v(\mathbf{P},\lambda)$ is their sharp lower bound.
\end{example}

\textbf{Covariates in applied practice.} Strict exogeneity of the instrument is often implausible in applied work, and the standard response is to assert exogeneity only after conditioning on observed covariates. In the examples that follow, we take the resulting conditional-exogeneity restriction $Z\perp\omega\mid W$ as given, as the applied literature does, and note only that the covariate framework then delivers the bound as the covariate-average $\int_{\mathcal{W}}v(\mathbf{P}_w,\lambda_w)\,dm(w)$ of baseline bounds, so working fiber by fiber is without loss of generality. We do not take a stand on whether a given covariate vector in fact delivers conditional exogeneity in a given application. We assume it, as the applied literature does.

The examples below extend the two-stage model above with one additional feature. We now have a covariate $W$ that interacts with every observed and unobserved object in the model, so that conditioning on its realization restores independence between the instrument and the heterogeneity. Concretely, each stage type is now the $W$-fiber of a deeper, jointly-indexed map. The first-stage type $\omega_1$ is the evaluation at $W$ of $\tilde{\omega}_1:\Z\times\mathcal{W}\to\mathcal{D}$, $\omega_1=\tilde{\omega}_1(\cdot,W):\Z\to \mathcal{D}$, and the second-stage type $\omega_2$ is the evaluation at $W$ of $\tilde{\omega}_2:\mathcal{D}\times\mathcal{W}\to\mathcal{R}$, $\omega_2=\tilde{\omega}_2(\cdot,W):\mathcal{D}\to \mathcal{R}$. This representation makes precise the sense in which conditioning on $W$ restores exogeneity. All dependence between $Z$ and the type runs now through the covariate cell, so that, under the standard assumption that the instrument is as good as randomly assigned within cells, the conditional-exogeneity restriction $Z\perp\omega\mid W$ holds.

For each realization $W=w$, the mechanism is the covariate-indexed map $g_w$, the two-stage composition \eqref{eq:mechanism-2stage} evaluated at the realized covariate, so that the endogenous variable is generated by evaluating the mechanism at the realized covariate and instrument,
\[X=g_{W,Z}(\omega)=\left(\omega_1(Z),\omega_2\left(\omega_1(Z)\right)\right).\]
The target may likewise vary with the covariate, $\Lambda(\omega,w)$, and the covariate framework returns the sharp bound $v(\mathbf{P}_w,\lambda_w)$ cell by cell, averaged over $m$, with no restriction on how responsiveness to the instrument co-varies with the effect of the treatment.

\begin{example}[Examiner and judge designs]\label{ex:examiner}
A large applied literature exploits the assignment of cases to decision-makers who differ systematically in how readily they impose a treatment. Because which decision-maker is assigned to a given case is often either deliberately random, to ensure fairness, or as good as random due to idiosyncrasies of the assignment process such as shift rotations, a measure of the assigned decision-maker's leniency serves as an instrument for the decision itself. \citet{kling2006incarceration} pioneered the design, estimating the effect of incarceration length on post-release earnings using the rules that assign offenders to judges. Leading examples include \citet{dobbie2018effects}, who use variation in arraignment judges' tendency to set bail as an instrument for pre-trial detention, \citet{aizer2015juvenile} on juvenile incarceration, \citet{doyle2015measuring}, who exploit the pseudo-random assignment of ambulance companies to patients, and \citet{dobbie2017consumer} on the leniency of randomly assigned bankruptcy judges.

The instrument is unconditionally exogenous only in the rare case of a single undifferentiated assignment pool. In practice assignment is randomized within administrative cells, and examiners may be only conditionally randomly assigned. For instance, defendants charged with felonies may be assigned to a different set of judges than those charged with misdemeanors, in which case the analysis must control for whatever covariates make assignment as good as random. Applied practice reflects this fact, as leniency is typically residualized on cell dummies, such as year-by-department, to account for randomization occurring within these pools. This is exactly the conditional-exogeneity restriction $Z\perp\omega\mid W$ of the covariate framework.

The design maps into the framework directly. Let $W$ be the assignment cell -- e.g. court by time period, office by charge category, hospital by shift -- with $m$ its population distribution. Let $Z$ be the leniency of the assigned decision-maker, so that $\lambda_w$ is the within-cell distribution of leniency. Note that $Z$ is naturally continuous, since it is a rate (in fact, the leave-out leniency measure is constructed to be highly predictive of the decision while uncorrelated with case and defendant characteristics). The treatment space $\mathcal{D}$ records the decision (detained or released, sentence length, benefit granted or denied) and $\mathcal{R}$ the downstream outcome (recidivism, earnings, subsequent filings, health).

As in the general construction above, the first-stage type $\omega_1$ is the $W$-fiber of a deeper map jointly indexed by instrument and cell, recording how the unit's treatment would vary across decision-makers of every leniency level. The second-stage type $\omega_2$ is the $W$-fiber describing the unit's potential outcomes under each decision. The target is the average effect of the decision, $\Lambda(\omega,w)$, permitted to vary across cells, with no restriction on how responsiveness to leniency co-varies with the effect of the decision.
\end{example}

\begin{example}[Demand with covariates]\label{ex:demand-covariates}
Continuing Example~\ref{ex:demand}, cost shifters are plausibly exogenous only conditional on household characteristics. In this direction, \citet{blundell2012measuring} condition on household income together with controls such as population density, urbanization and demographics. Setting $W$ to those characteristics gives $Z\perp\omega\mid W$, with $g_w$ the covariate-indexed demand mechanism and $\Lambda(\omega,w)$ possibly varying with $w$, allowing, for example, price responsiveness to differ across the income distribution, as that literature emphasizes. The bound on the average elasticity is the income-weighted average of the fiberwise bounds.
\end{example}


\subsection{The method in six steps}\label{sec:approach}

Having formulated the sharp bound $v(\mathbf{P},\lambda)$ in \eqref{eq:sharp-bound}, we turn to the two questions the rest of the paper answers: how to \textbf{identify} this bound, and how to \textbf{estimate} it from data. Both answers rest on a single device: a reformulation of \eqref{eq:sharp-bound} as a constrained optimal transport problem. This reformulation replaces the abstract, unobserved latent space $\Omega$ with the concrete space of potential endogenous paths, the trajectories describing how the endogenous variable would respond to every instrument value. It is this concrete representation, rather than the original one over $\Omega$, that both identification and estimation exploit. We proceed in six steps. The first, the optimal transport connection, secures identification. The remaining five progressively reshape the resulting transport problem through discretization, a relaxation of its support constraint, entropic regularization, and endogenous space discretization, turning it into one that is computationally tractable and statistically well-behaved, culminating in an estimator whose asymptotic properties we characterize.

\textbf{Step 1: Optimal transport connection (Section~\ref{sec:ot_connection}).} The bound \eqref{eq:sharp-bound} optimizes over laws on the latent space $\Omega$, which carries no exploitable structure. The latent space is unobserved, and nothing restricts its dimension or geometry. We show that \eqref{eq:sharp-bound} is equivalent to an optimal transport problem posed over the concrete space of potential endogenous paths $\mathbb{H}=L^1(\Z,\lambda;\X)$:
\[
    u(\mathbf{P},\lambda) = \inf_{\substack{\mu\in\mathcal{P}(\mathbb{H})\\ \mu_z = P_z,\ \lambda\text{-a.e. } z\in\Z\\ \mathrm{supp}(\mu)\subseteq g(\Omega)}} E_\mu[\Psi],
\]
where the cost $\Psi(e) = \inf\{\Lambda(\omega): g(\omega)=e\}$ assigns to each achievable path the smallest effect among the latent types capable of generating it (Theorem~\ref{thm:strong-duality}). The correspondence runs both ways, as every feasible latent law $Q$ induces a feasible transport plan $\mu=g_\#Q$ of equal average value, and every optimal transport plan can conversely be disintegrated into a latent law attaining the same value in \eqref{eq:sharp-bound}, so that solving one problem solves the other. 

The gain is concrete. Rather than searching over distributions on the unobserved $\Omega$, it suffices to search over couplings of the observed marginal profile $\mathbf{P}=(P_z)_{z\in\Z}$, supported on the achievable path set $g(\Omega)$. This is an optimization over objects the researcher actually knows, and about which they may hold informed views, such as the shape restrictions of Section~\ref{sec:examples}, which translate directly into restrictions on which paths are achievable. The only genuine unknown is how the marginal profile is jointly coupled across instrument values. This is both a new interpretation of the bound, as the cost of transporting the instrument's distribution to the profile of endogenous variables, and, more importantly, the structure the remaining four steps exploit.

\textbf{Step 2: Discretization (Section~\ref{sec:discretization}).} The transport problem of Step 1 remains infinite-dimensional in two respects: the coupling $\mu$ ranges over the infinite-dimensional path space $\mathbb{H}$, and the marginal constraints $\mu_z=P_z$ are indexed by the continuum $\Z$. Both can be reduced to a single finite resolution $n\in\mathbb{N}$. Indeed, partitioning $\Z$ into $n$ cells and replacing each path by its cell-wise average defines a projection $\pi_n$ onto the $n$-dimensional space of piecewise-constant paths with at most $n$ distinct values. Discretizing the marginal constraints along the same partition, requiring only the projected coupling $\pi_n\#\mu$ to match the resulting cell-wise marginals $P_{k,n}$ and to be supported on the projected achievable set, gives the constrained optimal transport problem
\[
    u_n(\mathbf{P},\lambda) = \inf_{\substack{\mu\in\mathcal{P}(\mathbb{H})\\ (\pi_n\#\mu)_k = P_{k,n},\ k\in[n]\\ \mathrm{supp}(\pi_n\#\mu)\subseteq \pi_n(g(\Omega))}} E_\mu[\Psi_n].
\]
This discretized transport problem has a natural identification counterpart. It is itself the sharp lower bound $v_n(\mathbf{P},\lambda)$ of the partially identified average effect of $\Lambda$ when the mechanism $g$ is replaced by the discretized mechanism $\pi_n\circ g$ and the marginal profile is replaced consistently by $(P_{k,n})_{k\in[n]}$, bridging the discretization to the coarsening the model requirements, not only a numerical device (Proposition~\ref{prop:discretized-ot-pi}).

\textbf{Step 3: Relaxing the support constraint (Section~\ref{sec:moment_penalization}).} The discretized problem of Step 2 still carries the requirement that the projected coupling be supported on the achievable set $\pi_n(g(\Omega))$, a constraint that is numerically costly to enforce and interacts poorly with the entropic penalty introduced next, whose minimizer has full support. We remove it by penalizing incompatible paths directly in the cost rather than excluding them by constraint. For a control parameter $\delta>0$, the regularized cost $\Psi_{\delta,n}(e) = \inf_{\omega\in\Omega}\{\Lambda(\omega) + \delta^{-1}\|\pi_n(e)-\pi_n(g(\omega))\|_{\mathbb{H}}\}$ charges every path, achievable or not, the cost of the latent type that comes closest to generating it, plus a penalty proportional to that shortfall. This Tikhonov-style regularization yields an \emph{unconstrained} transport problem,
\[
    u_{\delta,n}(\mathbf{P},\lambda) = \inf_{\substack{\mu\in\mathcal{P}(\mathbb{H})\\ (\pi_n\#\mu)_k = P_{k,n},\ k\in[n]}} E_\mu[\Psi_{\delta,n}],
\]
easier to implement in practice, whose value $u_{\delta,n}(\mathbf{P},\lambda)\le u_n(\mathbf{P},\lambda)$ converges to the constrained one as $\delta\to0$. Beyond tractability, penalizing rather than constraining is what makes the objective Lipschitz in the path, a property the original cost $\Psi_n$ satisfies only under restrictive conditions, and it is this Lipschitz continuity that proves pivotal for the dual analysis and for the asymptotic distribution of the estimator in Step 5.

\textbf{Step 4: Entropic regularization (Section~\ref{sec:entropic_perturbation}).} The unconstrained problem of Step 3 is, for fixed $\delta$ and $n$, still a linear program, potentially large-scale and, as high-dimensional linear programs typically are, possessing many solutions. Penalizing deviations from a reference coupling by a relative entropy term turns it into a strictly convex problem with a unique solution. For a control parameter $\varepsilon>0$,
\[
    u_{\varepsilon,\delta,n}(\mathbf{P},\lambda) = \inf_{\substack{\mu\in\mathcal{P}(\mathbb{H})\\ (\pi_n\#\mu)_k = P_{k,n},\ k\in[n]}} E_\mu[\Psi_{\delta,n}] + \varepsilon\, H(\mu\,|\,R_{n}),
\]
where $R_{n}$ is the independent coupling of the discretized marginals $(P_{k,n})_{k\in[n]}$ and $H(\cdot\,|\,R_{n})$ is the relative entropy, or Kullback--Leibler divergence, with respect to it. This is a multi-marginal entropic optimal transport problem \citep{cuturi2013sinkhorn, nutz2021entropic}, a currently very active area of applied mathematics and statistics whose tools this reformulation lets us bring to bear on the identification problem. As $\varepsilon\to0$, the value $u_{\varepsilon,\delta,n}$ decreases to the value $u_{\delta,n}$ of Step 3.

Applying duality arguments to the entropic regularized problem delivers more than existence, identifying the unique solution explicitly. Based on it, we show that there exist dual potentials $(\phi_1,\ldots,\phi_n)$, functions of the endogenous value alone, solving a system of first-order conditions, such that the optimal coupling is the Gibbs measure with log-density proportional to $\varepsilon^{-1}\left(\sum_k\phi_k(e_k)\lambda_{k,n}-\Psi_{\delta,n}(e)\right)$ with respect to $R_n$. The same potentials recover the value of the program from their expectation under the marginals. This dual characterization is also what makes the problem computable without sampling through its first-order system, solved by iteratively updating each potential in turn, the Sinkhorn algorithm.

\textbf{Step 5: Discretizing the endogenous space (Section~\ref{sec:x_discretization}).} The problem of Step 4 is computable in principle, but the marginals $P_{k,n}$ entering it, and the potentials $(\phi_1,\ldots,\phi_n)$ solving it, are still, in general, continuous objects on $\X$. This turns the problem into an infinite-dimensional optimization whose plug-in analog is subject to overfitting issues, as its solution is given by a system of $N$ equations for $N$ unknowns within a sample of size $N$. We close this last gap by partitioning $\X$ itself into $m$ cells, controlling the effective number of equations/solutions, obtained by collapsing each marginal $P_{k,n}$ onto the finitely supported measure obtained by moving all its mass within a cell to a single representative point. The resulting entropic optimal transport problem is a genuine finite-dimensional problem, whose solution is determined by a system of $n\times m$ equations and $n\times m$ unknowns. This systems is solvable by the same Sinkhorn iteration as Step 4, and converging to $u_{\varepsilon,\delta,n}(\mathbf{P},\lambda)$ as $m\to\infty$. As numerically we generally do not solve a system of non-linear equations exactly, we introduce a final estimate based on the number of iterations of the Sinkhorn Algorithm, $I\in\N$. This is the deterministic, population-level counterpart of the sampling step of Step 6 below, and the two share the same structure.

\textbf{Step 6: Estimation from data (Section~\ref{sec:estimation}).} The value in Step 5 is still a population object, defined through the finitely supported and regularized marginals, $P_{k,n,m}$. Replacing them with their empirical counterparts from a sample of size $N$ gives the plug-in estimator $\hat{v}_N:=\hat{u}_{\varepsilon,\delta,n,m, I}$ that is actually computed in practice. We show in this section how to construct the marginal estimates, organize the data, and implement the Sinkhorn Algorithm to estimate $\hat{v}_N$. Because the five controls $\beta=(\varepsilon,\delta,n,m,I)$ each introduce their own approximation error on top of sampling error, consistency requires choosing all five jointly as functions of $N$. We derive $L^1$  consistency of this estimator towards the true value $v(\mathbf{P},\lambda)$, by suitably choosing $\beta$ as function of the sample size $N$, in Theorem~\ref{thm:consistency}.

When, in addition, $(\varepsilon,\delta,n,m)$ is held fixed, the estimator is shown to be asymptotically normal at the rate $\sqrt{N}$ by correctly selecting the number of Sinkhorn iterations $I$.  The asymptotic variance is also characterized, given in closed form by the instrument weighted variance of the dual potentials of Step 5 under their respective marginals (Theorem~\ref{thm:asymptotic_distribution}), and consistently estimated from their empirical counterparts. This delivers confidence intervals and hypothesis tests for the sharp lower bound $v(\mathbf{P},\lambda)$ constructed directly from data, discussed in Section~\ref{sec:discussion}, together with the limitations and extensions of the method. 

Simulations in Section~\ref{sec:simulations} illustrate the finite-sample performance of the estimator and the coverage of the confidence intervals. An application in Section~\ref{sec:application} demonstrates the method in practice.

\section{From infinite-dimensional linear programs to the multi-marginal entropic optimal transport}\label{sec:EOT}

This section establishes the connection between the partially identified bound \eqref{eq:sharp-bound} and optimal transport, and shows how a sequence of controlled regularizations turns it into a multi-marginal entropic optimal transport problem that can be solved without sampling. To keep the exposition light, we introduce the main objects only informally and state the central results. The precise definitions are collected in Appendix~\ref{app:aux}, whereas all proofs are collected in Appendix~\ref{app:proofs_all}. The development proceeds in five steps: (i) we recast the bound as a \emph{constrained} optimal transport problem on the space of potential endogenous paths (Section~\ref{sec:ot_connection}); (ii) we project this infinite-dimensional problem onto a finite resolution $n$ through a single projection map $\pi_n$ (Section~\ref{sec:discretization}); (iii) we relax the requirement that the transport plan be supported on achievable paths through an objective function regularization that penalizes deviations from the feasible set (Section~\ref{sec:moment_penalization}); (iv) we add a relative-entropy penalty, obtaining an entropic optimal transport problem amenable to the Sinkhorn algorithm (Section~\ref{sec:entropic_perturbation}); and (v) we discretize the endogenous space $\X$ itself, collapsing each marginal onto a finite grid so that the entropic problem becomes a genuine finite-dimensional problem (Section~\ref{sec:x_discretization}). The subsequent sections then use this representation to compute the bound in practice and to derive its statistical properties.

\subsection{Constrained Optimal Transport Bound}\label{sec:ot_connection}

In this section we establish the connection between the lower bound on the partially identified average, \eqref{eq:sharp-bound}, and an associated constrained optimal transport problem posed over the space of potential endogenous paths, in the sense of \citet{EkrenSonner}. The connection replaces optimization over laws on the abstract, unobserved latent space $\Omega$ with optimization over couplings of the observed marginal profile $\mathbf{P}=(P_z)_{z\in\Z}$. This represents a genuine simplification, since the latter are concrete objects the researcher can reason about directly, while the former is not even observed. We build the connection in three steps: we first express the data-matching restriction defining the identified set $\mathcal{Q}(\mathbf{P},\lambda)$ through a linear operator. We then show that this restriction, transported through the mechanism, characterizes a corresponding set of feasible path distributions. Finally, we define a cost function on paths and show, by means of an appropriate right-inverse of the mechanism function, that the resulting constrained optimal transport problem is equivalent, in value, to the original bound.

It is convenient to express the data-matching restriction defining $\mathcal{Q}(\mathbf{P},\lambda)$ through a linear operator. Let $T^\ast$ denote the adjoint operator mapping a distribution $\mu\in\mathcal{P}(\mathbb{H})$ over endogenous paths to the joint distribution it induces over the observable variables $(Z,X)$. Informally, for any test function $\phi$,
\[E_{T^\ast\mu}[\phi]=E_{\mu}\Big[\int_\Z \phi(z,e_z)\,d\lambda(z)\Big],\]
where $e_z$ denotes the value of the path $e\in\mathbb{H}$ at $z$. The convenience of this operator is that it lets us restate the data-matching restriction compactly: a latent law $Q\in\mathcal{P}(\Omega)$ reproduces the observed data whenever the pushforward of $Q$ through the mechanism, mapped forward again by $T^\ast$, equals the observed joint law $P$, that is, $T^\ast(g_\#Q)=P$. The pre-adjoint operator $T$ admits a complementary interpretation: for a function $\phi$ of the exogenous and endogenous variables and a path distribution $\mu\in\mathcal{P}(\mathbb{H})$, the value $T\phi(e)$ records the conditional expectation of $\phi$ under $h_\#(\lambda\otimes \mu)$ given that the endogenous path equals $e$, where $h:\Z\times \mathbb{H}\to \Z\times \X$, $h(z,e)=(z,e_z)$, is the map that pairs each instrument realization with the corresponding observation of the path. 

Adopting this convention throughout, the identified set of Section~\ref{sec:setup_objects} reads
\begin{equation}\label{eq:identified-set-adjoint}
\mathcal{Q}(\mathbf{P},\lambda) = \big\{Q\in\mathcal{P}(\Omega): T^\ast(g_\#Q)=P\big\},
\end{equation}
an equivalent, and for what follows more informative, restatement of \eqref{eq:identified-set}.

Note that the mechanism function pushes forward any feasible latent-type distribution $Q\in\mathcal{Q}(\mathbf{P},\lambda)$ to a distribution over potential endogenous paths, $\mu:=g_\#Q\in\mathcal{P}(\mathbb{H})$, recording the likelihood of occurrence of each potential path. Importantly, this path distribution is not unrestricted. Because it is the pushforward of a feasible latent-type distribution, it inherits two restrictions, one on its support and one on its marginals. Denote by $\mathbb{K}:=g(\Omega)$ the set of potential endogenous paths achievable by the mechanism function. Since $\mu$ is the pushforward of $Q$ under $g$, its support is automatically contained in $\mathbb{K}$: no unachievable path can receive positive probability. The second restriction comes from the requirement that the $z$-marginals of $\mu$ match the observed data, $P_z$, for $\lambda$-a.e.\ $z$. As just described, this condition is expressed compactly through the adjoint operator, as $T^\ast\mu=P$. These two restrictions define the set of feasible potential-endogenous-path distributions,
\begin{align}
\mathcal{M}(\mathbf{P},\lambda) &= \big\{\mu\in\mathcal{P}(\mathbb{H}): T^\ast\mu=P \text{ and } \mathrm{supp}(\mu)\subseteq\mathbb{K}\big\}\notag\\
&= \big\{\mu\in\mathcal{P}(\mathbb{H}): \mu_z=P_z\ \lambda\text{-a.e.}\ z, \text{ and } \mathrm{supp}(\mu)\subseteq\mathbb{K}\big\}.\label{eq:M-set}
\end{align}
By construction, every feasible latent law $Q\in\mathcal{Q}(\mathbf{P},\lambda)$ is mapped by $g$ into a path distribution $\mu=g_\#Q\in\mathcal{M}(\mathbf{P},\lambda)$. The converse also holds under mild structural conditions: if the mechanism admits a measurable right-inverse $i:\mathbb{K}\to\Omega$ satisfying $g(i(e))=e$ for every achievable path $e\in\mathbb{K}$, then every feasible path distribution $\mu\in\mathcal{M}(\mathbf{P},\lambda)$ can likewise be mapped back to a feasible latent law.

As just discussed, the mechanism function naturally transports feasible latent laws into feasible path distributions. But how should we assess the effect of an endogenous path, when the effect function $\Lambda$ is defined only on types? If the mechanism were invertible, the answer would be immediate: assign to each achievable path $e\in\mathbb{K}$ the effect of the unique type that generates it, $\Lambda\circ g^{-1}(e)$. The core difficulty of partial identification, however, is precisely that the mechanism is not invertible. Many latent types are typically mapped to the same endogenous path, so this naive effect function is ill-defined.

When the mechanism fails to be injective, there is nonetheless an intuitive way to define an effect function on paths that reduces to the one above whenever the mechanism happens to be invertible. Because every effect is a number in $[0,1]$, the worst effect among the types generating a given achievable path is always well-defined. This number can be read as ``the worst outcome we could expect from a type mapped to this path.'' Moreover, since feasibility of latent laws is determined only by pushforwards of the mechanism, at an optimum of the lower bound problem, the least favorable feasible latent law can give positive probability only to the worst types consistent with each achievable path. If that was not the case, mass could be shifted toward a worse type without violating feasibility, strictly lowering the average effect. Together with the fact that, under mild conditions, every feasible path distribution corresponds to a feasible latent law, this delivers that the worst-case effect is a sensible measure of effect over paths, and minimizing its expectation over feasible path distributions is equivalent to minimizing the average effect over feasible latent laws. These are the ideas behind the connection between the lower bound and the constrained optimal transport problem developed below.

Formally, define the cost function $\Psi:\mathbb{H}\to\mathbb{R}_+$, assigning to each achievable path the worst effect among the latent types that generate it, and a maximal cost to every other path\footnote{Since every feasible path distribution assigns zero probability to unachievable paths, the value that $\Psi$ takes there is immaterial to the expected cost. }:
\begin{equation}\label{eq:Psi}
\Psi(e) = \begin{cases}
    \inf\{\Lambda(\omega): g(\omega)=e\}, & \text{if } e\in \mathbb{K},\\
    \sup_{\omega\in\Omega}\Lambda(\omega)+1, & \text{otherwise.}
\end{cases}
\end{equation}

The choice of cost \eqref{eq:Psi} is transparent in the binary instrumental variable model of Example~\ref{ex:compliance} \citep{balke1994counterfactual, balke1997bounds}. For instance, in this example, we have seen that types $(\mathrm{NT},\mathrm{NT})$ and $(\mathrm{NT},\mathrm{C})$ were both mapped into the same path. As treatment effect depends on second-stage type only, the treatment effect of each type is different: the former effect is $0$, while the latter is $1$. Now, because the bound is a worst-case over all latent distributions consistent with the data, and the data depends only on paths constraints, an optimal distribution placing positive mass in the path generated by these types must be concentrated in the most adverse one, $(\mathrm{NT},\mathrm{NT})$. Consequently, the effective cost assigned to $e$ is the \emph{smallest} value of $\Lambda$ among the types compatible with it, and the optimal coupling then concentrates mass on the least favorable type for each achievable path.

Associated with this cost, we define the constrained optimal transport problem
\begin{equation}\label{eq:OT_value}
    u(\mathbf{P},\lambda) = \inf_{\mu\in\mathcal{M}(\mathbf{P},\lambda)} E_\mu[\Psi].
\end{equation}
This problem is posed entirely in terms of objects the researcher knows, and it represents the constrained optimal transport problem over the space of potential endogenous paths. We interpret it in the following way. The objective of the researcher is to determine the feasible endogenous path distribution (a coupling of $z$ and $x$) compatible with the data that minimizes the expected cost of transporting the exogenous distribution $\lambda$ to the marginal profile $	\mathbf{P}$, where the cost of each path is given by the worst-case effect. 

To formalize this idea, we impose a set of mild regularity assumptions on the objects of the baseline framework, maintained throughout the paper.

\begin{assumption}\label{ass:PI}
    The following holds true.
    \begin{enumerate}
        \item The sets $\Omega$ and $\Z$ are compact metric spaces. Moreover, the set $\mathcal{H}$ is a Hilbert space, and $\X\subseteq \mathcal{H}$ is either a compact, convex subset or a finite set.
        \item The mechanism function $g:\Omega\rightarrow \mathbb{H}$ is continuous.
        \item The objective function $\Lambda:\Omega\rightarrow [0,1]$ is continuous and strictly positive.
        \item The feasible set of latent models $\mathcal{Q}(\mathbf{P},\lambda)$ is non-empty.
    \end{enumerate}
\end{assumption}

These conditions impose only simple regularity on the underlying spaces and functions. In particular, requiring the mechanism to be continuous in the latent type does not require the induced paths themselves to be continuous functions of the instrument as jumps in $z$ are allowed, so long as small perturbations of the type do not produce large jumps in the induced path, as measured in the norm of $\mathbb{H}$. Likewise, requiring $\Lambda$ to be strictly positive is a harmless renormalization of boundedness, meant only to keep the bound from being trivially zero. Together with the feasibility condition of item 4, which merely restates that the data are compatible with the model, these assumptions ensure that the problem is worth studying, rather than a vacuous one.

Under these conditions, the constrained optimal transport problem and the lower bound on the partially identified average are, in a precise sense, the same problem. This happens for two reasons. First, as the following proposition shows, the feasible sets of the two problems are in a direct correspondence:

\begin{proposition}\label{prop:co-mechanism}
    Under Assumption~\ref{ass:PI}, there exists a measurable right-inverse $i:\mathbb{K}\to \Omega$ of the mechanism function such that:
    \[\Lambda(i(e))=\Psi(e),\quad \forall e\in\mathbb{K}.\]
    Moreover, its pushforward map $i_\#$ maps feasible path distributions into feasible latent laws, while the pushforward map $g_\#$ maps feasible latent laws onto feasible path distributions, that is:
    \[i_\#(\mathcal{M}(\mathbf{P},\lambda))\subseteq \mathcal{Q}(\mathbf{P},\lambda)\quad \text{and}\quad g_\#(\mathcal{Q}(\mathbf{P},\lambda))=\mathcal{M}(\mathbf{P},\lambda).\]
\end{proposition}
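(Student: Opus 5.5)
The plan is to build $i$ by a measurable selection of minimizers and then check the two inclusions directly.

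\textbf{Construction of $i$.} Since $\Omega$ is compact and $g$ is continuous (Assumption~\ref{ass:PI}), $\mathbb{K}=g(\Omega)$ is a compact subset of the Polish space $\mathbb{H}$. For each $e\in\mathbb{K}$ the fiber $g^{-1}(e)$ is nonempty and compact. Because $\Lambda$ is continuous, the set of fiberwise minimizers
\[
\Gamma(e)=\{\omega\in g^{-1}(e):\Lambda(\omega)=\Psi(e)\}
\]
is therefore nonempty and compact.

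I will show that the correspondence $\Gamma$ has a closed graph, so it is upper hemicontinuous with compact values.
\begin{itemize}
\item First, $\Psi|_{\mathbb{K}}$ is lower semicontinuous. Take $e_j\to e$ with minimizers $\omega_j\in\Gamma(e_j)$. By compactness a subsequence $\omega_j\to\omega$, continuity of $g$ gives $g(\omega)=e$, and so $\Psi(e)\le\Lambda(\omega)=\liminf\Psi(e_j)$.
\item Second, the graph of $\Gamma$ is closed. If $(e_j,\omega_j)\to(e,\omega)$ with $\omega_j\in\Gamma(e_j)$, then $g(\omega)=e$. Lower semicontinuity gives $\Lambda(\omega)=\lim\Psi(e_j)\ge\Psi(e)$, while $\omega\in g^{-1}(e)$ gives $\Lambda(\omega)\ge\Psi(e)$ by definition of $\Psi$; the equality $\Lambda(\omega)=\Psi(e)$ follows.
\end{itemize}
This second bullet is the step I expect to need care. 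The limit of $\Psi(e_j)$ must match $\Psi(e)$, and lower semicontinuity alone may not give this. The fallback is to avoid it entirely. The graph $\{(e,\omega):g(\omega)=e\}$ is closed, so the map $e\mapsto\Psi(e)=\min_{g^{-1}(e)}\Lambda$ is Borel. Then $\Gamma$ has Borel-measurable graph $\{(e,\omega):g(\omega)=e,\ \Lambda(\omega)=\Psi(e)\}$ with compact nonempty sections, and a measurable selection still applies. One option is the Kuratowski--Ryll-Nardzewski theorem for weakly measurable closed-valued maps. Another is the Aumann/von Neumann theorem, which yields a universally measurable selection; Borel measurability can then be upgraded using the compact-section uniformization theorem of Arsenin--Kunugui.

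Either route yields a measurable $i:\mathbb{K}\to\Omega$ with $i(e)\in\Gamma(e)$. By construction $g(i(e))=e$ and $\Lambda(i(e))=\Psi(e)$ for every $e\in\mathbb{K}$.

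\textbf{Inclusions.} For $Q\in\mathcal{Q}(\mathbf{P},\lambda)$, set $\mu=g_\#Q$. Then $\mu(\mathbb{K})=1$, and since $\mathbb{K}$ is closed, $\mathrm{supp}(\mu)\subseteq\mathbb{K}$. Moreover $\mu_z=(\mathrm{ev}_z)_\#g_\#Q=(g_z)_\#Q=P_z$ for $\lambda$-a.e. $z$, so $g_\#\mathcal{Q}(\mathbf{P},\lambda)\subseteq\mathcal{M}(\mathbf{P},\lambda)$.

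Some care is needed because $e_z$ is only defined up to $\lambda$-null sets. I will phrase the marginal condition through $T^\ast$. By Fubini applied to $h_\#(\lambda\otimes\mu)$, the identity $T^\ast(g_\#Q)=P$ is equivalent to the a.e. condition, which sidesteps any pointwise evaluation.

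Conversely, for $\mu\in\mathcal{M}(\mathbf{P},\lambda)$, set $Q=i_\#\mu$, which is well defined since $\mu$ is concentrated on $\mathbb{K}$. Then $g_\#Q=(g\circ i)_\#\mu=\mu$, so $T^\ast(g_\#Q)=T^\ast\mu=P$ and $Q\in\mathcal{Q}(\mathbf{P},\lambda)$. This proves $i_\#\mathcal{M}\subseteq\mathcal{Q}$. It also shows that every $\mu\in\mathcal{M}$ equals $g_\#(i_\#\mu)$, which gives the reverse inclusion and hence the equality $g_\#\mathcal{Q}=\mathcal{M}$.
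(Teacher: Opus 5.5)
Your second bullet is false, not merely delicate. The two inequalities you derive, $\Lambda(\omega)=\lim\Psi(e_j)\ge\Psi(e)$ and $\Lambda(\omega)\ge\Psi(e)$, point the same way, so no equality follows. In fact the argmin correspondence $\Gamma$ need not have a closed graph, because $\Psi$ is only lower semicontinuous and can jump down at the limit. For example, take $\Omega=[0,1]\cup\{2\}$, $g$ continuous and injective on $[0,1]$ with $g(2)=g(1)$, and $\Lambda\equiv1$ on $[0,1]$ with $\Lambda(2)=\tfrac12$. Then $\Gamma(g(t))=\{t\}$ for $t<1$, and $(g(t),t)\to(g(1),1)$, yet $\Gamma(g(1))=\{2\}$.

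So drop the closed-graph route and make your fallback the actual argument. Your first bullet ($\Psi|_{\mathbb{K}}$ lower semicontinuous, hence Borel) is correct. It makes $\mathrm{Graph}(\Gamma)$ Borel with nonempty compact sections, and the Arsenin--Kunugui (or Novikov) uniformization theorem then gives a Borel selector $i$. The Kuratowski--Ryll-Nardzewski option additionally needs weak measurability of $\Gamma$, which you have not established. It does follow from a Borel graph with compact sections, but that is itself a nontrivial theorem. The detour through a universally measurable von Neumann selector is unnecessary.

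With that fix the proof is correct, and it differs from the paper's only in how the selector is obtained. The paper never examines the regularity of $\Gamma$ itself. It observes that the fiber correspondence $G(e)=g^{-1}(e)$ is nonempty, compact-valued and upper hemicontinuous on $\mathbb{K}$, hence weakly measurable, and that $(e,\omega)\mapsto\Lambda(\omega)$ is Carath\'eodory. It then invokes the Measurable Maximum Theorem, which yields at once the measurability of the value function $\Psi$ and of the argmin correspondence, together with a measurable selector. That stays within standard correspondence theory, whereas your route reaches a Borel selector through descriptive-set-theoretic uniformization.

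The pushforward part matches the paper: $g\circ i=\mathrm{id}$ on $\mathbb{K}$ and $\mu(\mathbb{K})=1$ give $g_\#i_\#\mu=\mu$. You also verify $g_\#\mathcal{Q}(\mathbf{P},\lambda)\subseteq\mathcal{M}(\mathbf{P},\lambda)$ explicitly, which the paper only asserts in the main text.
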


\begin{proof}
    See Appendix~\ref{app:proof_co_mechanism}.
\end{proof}

This proposition shows that a ``worst-effect'' latent type can be selected, measurably, for each achievable endogenous path, by means of an appropriate right-inverse of the mechanism function. We refer to this map as the \emph{co-mechanism}, since it undoes the mechanism while preserving the properties of the original problem. In fact, as the second half of the result also shows, the co-mechanism gives a natural way to map feasible path distributions back to feasible latent laws, a map that is reversed by the mechanism itself.

Second, the co-mechanism also lets us conclude that the values of both programs coincide, and that the mechanism and co-mechanism pushforwards map optimal solutions of one problem onto those of the other, creating a conjugacy between them. To see this, let $\mathcal{M}^\ast(\mathbf{P},\lambda)\subseteq \mathcal{M}(\mathbf{P},\lambda)$ and $\mathcal{Q}^\ast(\mathbf{P},\lambda)\subseteq\mathcal{Q}(\mathbf{P},\lambda)$ denote the sets of optimal solutions to the respective problems. Both results are stated in the following theorem.

\begin{theorem}\label{thm:strong-duality}
Under Assumption~\ref{ass:PI}, the value of the lower bound on the partially identified average \eqref{eq:sharp-bound} is equal to the value of the constrained optimal transport problem \eqref{eq:OT_value}:
\[v(\mathbf{P},\lambda) = u(\mathbf{P},\lambda).\]
Moreover, the optimal solutions are conjugate by the mechanism and co-mechanism pushforwards:
\[g_\#(\mathcal{Q}^\ast(\mathbf{P},\lambda))=\mathcal{M}^\ast(\mathbf{P},\lambda)\quad \text{ and }\quad i_\#(\mathcal{M}^\ast(\mathbf{P},\lambda))=\mathcal{Q}^\ast(\mathbf{P},\lambda)\cap\{Q\in\P(\W): \mathrm{supp}(Q)\subseteq i(\mathbb{K})\}.\]
\end{theorem}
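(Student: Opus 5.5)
The plan is to derive both the value equality and the conjugacy of optimizers directly from Proposition~\ref{prop:co-mechanism}, using two pointwise facts. The first is the inequality $\Psi(g(\omega))\le \Lambda(\omega)$ for every $\omega\in\Omega$. It holds because $g(\omega)\in\mathbb{K}$ and $\Psi(g(\omega))$ is the infimum of $\Lambda$ over the fiber $g^{-1}(g(\omega))$, which contains $\omega$. The second is the identity $\Lambda\circ i=\Psi$ on $\mathbb{K}$, supplied by the Proposition.

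For the inequality $u\le v$, I would take any $Q\in\mathcal{Q}(\mathbf{P},\lambda)$ and set $\mu=g_\#Q$, which lies in $\mathcal{M}(\mathbf{P},\lambda)$ by the Proposition. Then
\[E_\mu[\Psi]=E_Q[\Psi\circ g]\le E_Q[\Lambda].\]
For the reverse inequality $v\le u$, I would take any $\mu\in\mathcal{M}(\mathbf{P},\lambda)$. Since $\mathrm{supp}(\mu)\subseteq\mathbb{K}$ and $\mathbb{K}=g(\Omega)$ is compact as the continuous image of a compact set, $\mu$ is concentrated on $\mathbb{K}$. Hence $i_\#\mu$ is well defined and lies in $\mathcal{Q}(\mathbf{P},\lambda)$, and
\[E_{i_\#\mu}[\Lambda]=E_\mu[\Lambda\circ i]=E_\mu[\Psi].\]
Taking infima on both sides gives $v=u$. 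Measurability of $\Psi$ on $\mathbb{K}$ follows from $\Psi=\Lambda\circ i$ with $i$ measurable. The infimum in the definition of $\Psi$ is attained because $g^{-1}(e)$ is compact and $\Lambda$ is continuous, but this is already contained in the Proposition.

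For the conjugacy, I would first show $g_\#(\mathcal{Q}^\ast)\subseteq\mathcal{M}^\ast$. If $Q$ is optimal, then $E_{g_\#Q}[\Psi]\le E_Q[\Lambda]=v=u$, so $g_\#Q$ is optimal. For the reverse inclusion, take $\mu\in\mathcal{M}^\ast$. Then $Q=i_\#\mu$ is feasible with $E_Q[\Lambda]=u=v$, so $Q\in\mathcal{Q}^\ast$, and $g_\#Q=(g\circ i)_\#\mu=\mu$. This gives $\mathcal{M}^\ast\subseteq g_\#(\mathcal{Q}^\ast)$. The same computation shows $i_\#(\mathcal{M}^\ast)\subseteq\mathcal{Q}^\ast$. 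Moreover $i_\#\mu$ is concentrated on $i(\mathbb{K})$, so its support lies in the closure of $i(\mathbb{K})$.

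The main obstacle is the remaining inclusion: every optimal $Q$ supported in $i(\mathbb{K})$ satisfies $Q=i_\#g_\#Q$. For $\omega\in i(\mathbb{K})$, say $\omega=i(e)$, we have $g(\omega)=e$ and therefore $i(g(\omega))=\omega$. So $i\circ g$ is the identity on $i(\mathbb{K})$, and if $Q(i(\mathbb{K}))=1$ then $i_\#g_\#Q=Q$, which completes the argument. The delicate point is interpreting ``$\mathrm{supp}(Q)\subseteq i(\mathbb{K})$'' as $Q$ being concentrated on $i(\mathbb{K})$, since this set need not be closed. I would handle it either by reading the support condition as concentration, or by arguing that $i(\mathbb{K})$ is Borel: $i$ is injective and measurable between Polish spaces, so Lusin--Souslin applies. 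With that in hand, the inclusion is exact.

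A side remark clarifies the role of optimality in this last step. On $i(\mathbb{K})$ we also have $\Lambda(\omega)=\Psi(g(\omega))$, so equality holds in the pointwise inequality there. Consequently, for optimal $Q$ the restriction to $i(\mathbb{K})$ is consistent with the value computation, and optimality itself is not needed beyond membership in $\mathcal{Q}^\ast$.
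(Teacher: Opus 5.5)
Your proof is correct, and it differs from the paper's in two ways.

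\textbf{Value identity.} The paper first shows that both infima are attained, using compact feasible sets and lower semicontinuous costs, and then pushes optimizers through $g_\#$ and $i_\#$. You instead compare arbitrary feasible elements via $E_{g_\#Q}[\Psi]\le E_Q[\Lambda]$ and $E_{i_\#\mu}[\Lambda]=E_\mu[\Psi]$. So your value identity needs no attainment argument. The paper's route does buy one extra fact: $\mathcal{Q}^\ast(\mathbf{P},\lambda)$ and $\mathcal{M}^\ast(\mathbf{P},\lambda)$ are nonempty. Your argument shows only that they are empty or nonempty together, which is enough for the stated identities.

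\textbf{Conjugacy.} This is the more substantive difference. For $\mathcal{Q}^\ast\cap\{\mathrm{supp}(Q)\subseteq i(\mathbb{K})\}\subseteq i_\#(\mathcal{M}^\ast)$, the paper shows that every optimal $Q^\ast$ satisfies $\Lambda\circ i\circ g=\Lambda$ $Q^\ast$-a.s. It then concludes $Q^\ast(i\circ g=\mathrm{id}_\Omega)=1$, never using the support restriction. That step fails whenever a fiber $g^{-1}(e)$ contains several minimizers of $\Lambda$. For example, take $\Omega=\{a,b\}$, $g(a)=g(b)$, $\Lambda(a)=\Lambda(b)$, and let $i$ select $a$. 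Then $\delta_b$ is optimal but $i_\#g_\#\delta_b=\delta_a$. You use the restriction exactly where it is needed, through $i\circ g=\mathrm{id}$ on $i(\mathbb{K})$. You also check that $i_\#\mu$ lands in the restricted set, which the paper never verifies. So your argument is the one that actually proves the stated conjugacy.

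\textbf{Two technical remarks.} First, you do not need Lusin--Souslin. Since $i(\mathbb{K})=\{\omega\in\Omega: i(g(\omega))=\omega\}$, it is Borel: $i\circ g$ is Borel and the diagonal of $\Omega\times\Omega$ is closed.

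Second, Borel measurability and the concentration reading are complementary, not alternative fixes.
\begin{itemize}
\item Measurability, together with $Q(\mathrm{supp}(Q))=1$, handles the direction from $\mathcal{Q}^\ast\cap\{\mathrm{supp}(Q)\subseteq i(\mathbb{K})\}$ into $i_\#(\mathcal{M}^\ast)$ under either reading.
\item The direction $i_\#(\mathcal{M}^\ast)\subseteq\{\mathrm{supp}(Q)\subseteq i(\mathbb{K})\}$ genuinely requires reading the condition as $Q(i(\mathbb{K}))=1$. For merely measurable $i$, the closed support of $i_\#\mu$ is only guaranteed to lie in $\overline{i(\mathbb{K})}$.
\end{itemize}
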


\begin{proof}
    See Appendix~\ref{app:proof_strong_duality}.
\end{proof}

Theorem~\ref{thm:strong-duality} is the central connection underlying the paper. Rather than searching over laws on the unobserved space $\Omega$, it suffices to search over couplings of the observed marginal profile $\mathbf{P}=(P_z)_{z\in\Z}$ supported on the achievable path set $\mathbb{K}$. Moreover, the theorem shows that the optimal solutions of both problems are conjugate by the mechanism and co-mechanism pushforwards, so that the least favorable latent law supported on $i(\mathbb{K})$ can be recovered from the least favorable path distribution, and vice versa. Figure~\ref{fig:conjugacy} summarizes this relationship among the spaces, the feasible sets, the optimal sets, and the effect functions.

\begin{figure}[H]
\centering
\begin{minipage}{\textwidth}
\centering
\begin{subfigure}[b]{0.48\textwidth}
\centering
\begin{tikzpicture}[>=stealth, thick]
  \node (L) at (0,0) {$\Omega$};
  \node (R) at (6,0) {$\mathbb{K}\subseteq\mathbb{H}$};
  \draw[->] (L) to[bend left=35] node[midway, above] {$g$} (R);
  \draw[->] (R) to[bend left=35] node[midway, below] {$i$} (L);
\end{tikzpicture}
\caption{Spaces}
\end{subfigure}
\hfill
\begin{subfigure}[b]{0.48\textwidth}
\centering
\begin{tikzpicture}[>=stealth, thick]
  \node (L) at (0,0) {$\mathcal{Q}(\mathbf{P},\lambda)$};
  \node (R) at (6,0) {$\mathcal{M}(\mathbf{P},\lambda)$};
  \draw[->] (L) to[bend left=35] node[midway, above] {$g_\#$} (R);
  \draw[->] (R) to[bend left=35] node[midway, below] {$i_\#$} (L);
\end{tikzpicture}
\caption{Feasible sets}
\end{subfigure}

\vspace{1.5em}

\begin{subfigure}[b]{0.48\textwidth}
\centering
\begin{tikzpicture}[>=stealth, thick]
  \node (L) {$\mathcal{Q}^\ast(\mathbf{P},\lambda)$};
  \node (R) at ([xshift=4.2cm]L) {$\mathcal{M}^\ast(\mathbf{P},\lambda)$};
  \draw[->] (L) to[bend left=35] node[midway, above] {$g_\#$} (R);
  \draw[->] (R) to[bend left=35] node[midway, below] {$i_\#$} (L);
\end{tikzpicture}
\caption{Optimal sets}
\end{subfigure}
\hfill
\begin{subfigure}[b]{0.48\textwidth}
\centering
\begin{tikzpicture}[>=stealth, thick]
  \node (L) {$\Lambda$};
  \node (R) at ([xshift=4.2cm]L) {$\Psi$};
  \draw[->] (L) to[bend right=35] node[midway, below] {$i$} (R);
  \draw[->] (R) to[bend right=35] node[midway, above] {$g$} (L);
\end{tikzpicture}
\caption{Effect functions}
\end{subfigure}
\end{minipage}
\caption{The conjugacy between the latent and path-space sides of the problem. Top arrows push forward via the mechanism (or its pushforward map). Bottom arrows recover the latent-space side via the co-mechanism $i$ (or its pushforward), which in panel (d) goes in the opposite direction.}
\label{fig:conjugacy}
\end{figure}

The above connection is more than an interpretive gain. Because the transport formulation is posed over concrete, finite- or function-valued objects, it can be regularized in ways that have no natural counterpart for the original problem over $\Omega$. The sections that follow discretize the path space, relax the support constraint $\mathrm{supp}(\mu)\subseteq\mathbb{K}$, and add an entropic penalty, each step trading a controlled approximation error for tractability. The result, developed over the remainder of this section, is a computable and, ultimately, statistically well-behaved route to identifying and estimating the sharp lower bound \eqref{eq:sharp-bound}.

\subsection{Discretization}\label{sec:discretization}

In many applied settings, instruments arise as genuinely continuous quantities (e.g. distance, time, tax rates, capacity) rather than a handful of discrete categories. When $\Z$ is a continuum, both sides of the constrained transport problem \eqref{eq:OT_value} inherit an infinite character from it. First, because a potential endogenous path $e\in\mathbb{H}$ records a value for every instrument realization, so the transport plan $\mu$ ranges over distributions on the infinite-dimensional space $\mathbb{H}$. Second, as the marginal profile $\mathbf{P}=(P_z)_{z\in\Z}$ records a conditional endogenous distribution for every instrument realization as well, so the data-matching constraint $T^\ast\mu=P$ is itself indexed by the continuum $\Z$. The uncountable support of the instrument thus burdens identification and estimation along two distinct dimensions, the infinite dimensionality of paths, and the uncountability of the marginal constraints.

A natural remedy is to approximate the instrument by a discretized version of itself, taking no more than $n$ distinct values that serve as a control for its resolution. The simplest such approximation partitions the instrument space into $n$ disjoint cells, $\mathcal{A}_n=(A_{k,n})_{k\in[n]}$, each visited with positive probability, and represents every instrument realization within a cell by a single value.

\begin{figure}[H]
\centering
\begin{tikzpicture}[x=5.2cm,y=5.2cm]
  \draw[thick] (0,0) rectangle (1,1);
  \draw[thick] (0.38,0) -- (0.38,1);
  \draw[thick] (0.38,0.28) -- (1,0.28);
  \draw[thick] (0.68,0.28) -- (0.68,1);
  \draw[thick] (0.68,0.62) -- (1,0.62);
  \fill[red] (0.18,0.62) circle (1.4pt);
  \fill[red] (0.80,0.10) circle (1.4pt);
  \fill[red] (0.50,0.45) circle (1.4pt);
  \fill[red] (0.90,0.40) circle (1.4pt);
  \fill[red] (0.75,0.85) circle (1.4pt);
  \node[below] at (0.19,-0.02) {\scriptsize $A_{1,5}$};
  \node[below] at (0.69,-0.02) {\scriptsize $A_{2,5}$};
  \node[left] at (0.38,0.65) {\scriptsize $A_{3,5}$};
  \node[left] at (0.68,0.45) {\scriptsize $A_{4,5}$};
  \node[left] at (0.68,0.80) {\scriptsize $A_{5,5}$};
\end{tikzpicture}
\caption{Discretizing a bivariate instrument $Z\in[0,1]^2$. The instrument space is partitioned into $n=5$ cells of positive probability, each represented by a single highlighted value, not necessarily its center.}
\label{fig:instrument-partition}
\end{figure}

This coarsening bears on both paths and marginal profiles. On paths, it projects each endogenous path into the set of piecewise-constant functions. Indeed, if along every cell of the partition the discretized instrument no longer varies, the path it induces should not vary either. A constant that represents well the endogenous value while the instrument is confined to a cell is its conditional average over that cell. This leads to the replacement of the path $e\in\mathbb{H}$ by its projection $\pi_n(e)$,
\[\pi_n(e)_z=\frac{1}{\lambda(A_{k,n})}\int_{A_{k,n}}e_{z'}\,d\lambda(z'),\qquad\text{for all } z\in A_{k,n}.\]
Section~\ref{subsec:projections} states this projection precisely and collects its properties. The effect of different resolutions in the projection is illustrated below:

\begin{figure}[H]
\centering
\adjustbox{max width=\textwidth, max height=\dimexpr\pagegoal-\pagetotal-4\baselineskip\relax}{%
\begin{tikzpicture}[>={Stealth[length=2mm]}, line cap=round]

\begin{scope}[shift={(0,0)}]
  \draw[->] (0,-1.3) -- (0,1.6) node[left]{$x$};
  \draw[->] (0,-1.3) -- (3.3,-1.3) node[below]{$z$};
  \draw[blue] (0.3,1.2) .. controls (1.4,0.55) and (2.3,0.15) .. (3.0,0.0);
  \draw[red, very thick] (0.3,0.65) .. controls (1.4,0.1) and (2.3,-0.25) .. (3.0,-0.4);
\end{scope}

\begin{scope}[shift={(7,2)}]
  \draw[->] (0,-1.3) -- (0,1.6) node[left]{$x$};
  \draw[->] (0,-1.3) -- (3.3,-1.3) node[below]{$z$};
  \draw[blue, thick] (0.4,1.0)--(1.6,1.0) (1.6,0.6)--(2.8,0.6);
  \draw[red,  thick] (0.4,0.35)--(1.6,0.35) (1.6,0.05)--(2.8,0.05);
  \draw[orange, very thick] (0.4,-1.3)--(2.8,-1.3);
  \foreach \x in {0.4,1.6,2.8} \draw[orange, thick] (\x,-1.4)--(\x,-1.2);
  \node[below, orange, font=\scriptsize] at (1.0,-1.45){$A_{1,2}$};
  \node[below, orange, font=\scriptsize] at (2.2,-1.45){$A_{2,2}$};
\end{scope}

\begin{scope}[shift={(7,-3)}]
  \draw[->] (0,-1.3) -- (0,1.6) node[left]{$x$};
  \draw[->] (0,-1.3) -- (3.3,-1.3) node[below]{$z$};
  \draw[blue, thick]
    (0.3,1.05)--(0.8,1.05) (0.8,0.80)--(1.3,0.80) (1.3,0.55)--(1.8,0.55)
    (1.8,0.35)--(2.3,0.35) (2.3,0.20)--(2.8,0.20);
  \draw[red, thick]
    (0.3,0.45)--(0.8,0.45) (0.8,0.25)--(1.3,0.25) (1.3,0.05)--(1.8,0.05)
    (1.8,-0.12)--(2.3,-0.12) (2.3,-0.28)--(2.8,-0.28);
  \draw[orange, very thick] (0.3,-1.3)--(2.8,-1.3);
  \foreach \x in {0.3,0.8,1.3,1.8,2.3,2.8} \draw[orange, thick] (\x,-1.4)--(\x,-1.2);
  \node[below, orange, font=\tiny] at (0.55,-1.45){$A_{1,5}$};
  \node[below, orange, font=\tiny] at (1.05,-1.45){$A_{2,5}$};
  \node[below, orange, font=\tiny] at (1.55,-1.45){$A_{3,5}$};
  \node[below, orange, font=\tiny] at (2.05,-1.45){$A_{4,5}$};
  \node[below, orange, font=\tiny] at (2.55,-1.45){$A_{5,5}$};
\end{scope}

\draw[->] (3.5,0.4) -- (6.6,2.0) node[midway, above, sloped]{$\pi_2$};
\draw[->] (3.5,-0.4) -- (6.6,-2.6) node[midway, above, sloped]{$\pi_5$};

\end{tikzpicture}%
}
\caption{Projecting a path (left) onto the piecewise-constant paths of resolution $n=2$ (top right) and $n=5$ (bottom right).}
\label{fig:path-regularization}
\end{figure}

The same coarsening of the instrument affects the conditional distributions of the endogenous variable. If the instrument is observed only to lie in a cell $A_{k,n}$, the relevant endogenous distribution is the regularized marginal
\[P_{k,n}=\frac{1}{\lambda(A_{k,n})}\int_{A_{k,n}}P_z\,d\lambda(z),\qquad k\in[n].\]
The resulting marginal profile $(P_{k,n})_{k\in[n]}$ has joint law $P_n\in\P(\Z\times\X)$, obtained by averaging this profile over $\Z$ against $\lambda$, exactly as $P$ is obtained from $\mathbf{P}$. Figure~\ref{fig:regularizing-marginal} illustrates the effect of this regularization on a joint density of instrument and endogenous values.

\begin{figure}[H]
\centering
\begin{tikzpicture}
\begin{axis}[title={Original}, width=5cm, height=4.4cm, view={0}{90},
  xlabel={$z$}, ylabel={$x$}, xtick=\empty, ytick=\empty,
  enlargelimits=false, colormap/hot, shader=interp,
  domain=0:1, y domain=0:1]
\addplot3[surf, samples=60, samples y=45]
  {exp(-(y-(0.70-0.35*x))^2/(2*(0.12+0.03*x)^2))};
\end{axis}
\end{tikzpicture}\hfill
\begin{tikzpicture}
\begin{axis}[title={Regularized}, width=5cm, height=4.4cm, view={0}{90},
  xlabel={$z$}, ylabel={$x$}, xtick=\empty, ytick=\empty,
  enlargelimits=false, colormap/hot, shader=interp,
  domain=0:1, y domain=0:1]
\addplot3[surf, samples=60, samples y=45]
  {(x<0.5)*exp(-(y-0.70)^2/(2*0.12^2)) + (x>=0.5)*exp(-(y-0.35)^2/(2*0.15^2))};
\draw[orange, very thick] (axis cs:0.5,0) -- (axis cs:0.5,1);
\node[orange, font=\scriptsize, anchor=north] at (axis cs:0.25,0.02){$A_{1,2}$};
\node[orange, font=\scriptsize, anchor=north] at (axis cs:0.75,0.02){$A_{2,2}$};
\end{axis}
\end{tikzpicture}\hfill
\begin{tikzpicture}
\begin{axis}[width=5cm, height=4.4cm,
  xlabel={$\X$}, ylabel={Likelihood}, xtick=\empty, ytick=\empty,
  domain=0:1, enlargelimits=false]
\addplot[red,  thick, samples=100] {exp(-(x-0.70)^2/(2*0.12^2))};
\addplot[blue, thick, samples=100] {exp(-(x-0.35)^2/(2*0.15^2))};
\node[red]  at (axis cs:0.72,0.75){$P_{1,2}$};
\node[blue] at (axis cs:0.30,0.60){$P_{2,2}$};
\end{axis}
\end{tikzpicture}
\caption{Regularizing the marginal profile of conditional endogenous values: the original joint density of $(Z,X)$ (left), its cell-wise average (center), and the resulting regularized marginals $P_{1,2},P_{2,2}$ (right).}
\label{fig:regularizing-marginal}
\end{figure}

These two effects transfer directly to the partial identification and constrained transport problems of Section~\ref{sec:ot_connection}. Replacing the mechanism $g$ by the discretized mechanism $\pi_n\circ g$, the data-matching restriction becomes a restriction against the discretized joint law. In this case, a latent law is feasible only if the discretized mechanism pushes it forward to $P_n$. In terms of the marginal operator $T^\ast$, this is expressed compactly through its discretized counterpart $T_n^\ast=T^\ast\circ\pi_{n\#}$ (Appendix~\ref{subsec:marginal-operators}), giving the discretized identified set and lower bound on the partially identified effect:
\[\mathcal{Q}_n(\mathbf{P},\lambda)=\big\{Q\in\P(\Omega): T_n^\ast(g_\#Q)=P_n\big\},\qquad v_n(\mathbf{P},\lambda)=\inf_{Q\in\mathcal{Q}_n(\mathbf{P},\lambda)}E_Q[\Lambda].\]
The set $\mathcal{Q}_n(\mathbf{P},\lambda)$ represents the set of latent laws consistent with the discretized mechanism and data, while $v_n(\mathbf{P},\lambda)$ describes the lowest average effect of $\Lambda$ for consistent latent distributions.

The discretization acts analogously on the constrained optimal transport problem. Once the mechanism is discretized, two paths are distinguished only insofar as their projections differ. Following the worst-case logic of Section~\ref{sec:ot_connection}, the cost of a path $e\in\mathbb{H}$ is determined by the latent types whose discretized image coincides with the projected path $\pi_n(e)$:
\[\Psi_n(e) = \begin{cases}
    \inf\{\Lambda(\omega): \pi_n(g(\omega))=\pi_n(e)\}, & \text{if } \pi_n(e)\in \pi_n(\mathbb{K}),\\
    \sup_{\omega\in\Omega}\Lambda(\omega)+1, & \text{otherwise.}
\end{cases}\]
Correspondingly, the feasible path distributions must, once projected, both match the discretized marginal profile and remain supported on the projected achievable set,
\[\mathcal{M}_n(\mathbf{P},\lambda) = \big\{\mu\in\mathcal{P}(\mathbb{H}): T_n^\ast\mu=P_n \text{ and } \mathrm{supp}(\pi_n\#\mu)\subseteq \pi_n(\mathbb{K})\big\}.\]
These give rise to the discretized constrained optimal transport problem
\[u_n(\mathbf{P},\lambda) = \inf_{\mu\in\mathcal{M}_n(\mathbf{P},\lambda)} E_\mu[\Psi_n].\]

So far, nothing shows that these discretized problems bear any relation to the original ones and are informative quantities. Their relevance stems from the fact that, under a mild richness condition on the discretization, their values agree with each other and approximate the original bound. The condition that ensures this is:

\begin{assumption}\label{ass:constraints}
    For every feasible latent distribution $Q\in\mathcal{Q}(\mathbf{P},\lambda)$, there exists a sequence $(Q_n)_{n\in\N}$ of latent distributions such that
    \[Q_n\in\mathcal{Q}_n(\mathbf{P},\lambda)\quad\text{ and }\quad Q_n\Rightarrow Q.\]
\end{assumption}

Assumption~\ref{ass:constraints} requires that every latent law feasible for the original problem be approximable, weakly, by a sequence of latent laws feasible for the discretized problems. This is a mild richness condition and it is exactly what is needed to rule out the discretization from inadvertently excluding relevant latent laws in the limit.

As claimed above, under the above assumption, the discretization of the instrument leads to discretized partial identification and constrained optimal transport problems that retain the same value, and become asymptotically consistent with the limiting problem. These are collected in the following result. 

\begin{proposition}\label{prop:discretized-ot-pi}
Under Assumptions~\ref{ass:PI} and \ref{ass:constraints}, the discretized lower bound and constrained optimal transport problems share a common value,
\[v_n(\mathbf{P},\lambda)=u_n(\mathbf{P},\lambda),\]
and, as $n\uparrow+\infty$, these values converge to the original bounds:
\[v_n(\mathbf{P},\lambda)\rightarrow v(\mathbf{P},\lambda)\quad \text{ and } \quad u_n(\mathbf{P},\lambda)\rightarrow u(\mathbf{P},\lambda).\]
\end{proposition}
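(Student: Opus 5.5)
The plan has two parts. First, the equality $v_n=u_n$ follows by rerunning the argument of Proposition~\ref{prop:co-mechanism} and Theorem~\ref{thm:strong-duality} with the discretized mechanism $g_n:=\pi_n\circ g$ in place of $g$. Second, the two limits follow from a lower and an upper semicontinuity argument built on Assumption~\ref{ass:constraints}.

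\textbf{Step 1: the equality $v_n=u_n$.} The map $\pi_n$ is a conditional expectation onto the cells $A_{k,n}$, so it is a bounded linear contraction on $\mathbb{H}$. Hence $g_n$ is continuous, and the set $\mathbb{K}_n:=\pi_n(\mathbb{K})=g_n(\Omega)$ is compact. The data-matching restriction $T_n^\ast(g_\#Q)=P_n$ is exactly $T^\ast(g_{n\#}Q)=P_n$, so $\mathcal{Q}_n$ is the identified set of the baseline problem with mechanism $g_n$ and data $P_n$. The cost $\Psi_n(e)$ equals $\tilde\Psi_n(\pi_n e)$, where $\tilde\Psi_n$ is the worst-case cost \eqref{eq:Psi} built from $g_n$.

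I will apply the measurable selection used in Proposition~\ref{prop:co-mechanism}. The constraint set $\{\omega: g_n(\omega)=y\}$ is compact and nonempty for $y\in\mathbb{K}_n$, and $\Lambda$ is continuous. This yields a measurable $i_n:\mathbb{K}_n\to\Omega$ with $g_n\circ i_n=\mathrm{id}$ and $\Lambda\circ i_n=\tilde\Psi_n$.

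For $Q\in\mathcal{Q}_n$, the plan $\mu=g_\#Q$ lies in $\mathcal{M}_n$. Since $\Lambda\ge\Psi_n\circ g$ pointwise, we get $E_\mu[\Psi_n]\le E_Q[\Lambda]$. Conversely, for $\mu\in\mathcal{M}_n$, the law $Q=(i_n\circ\pi_n)_\#\mu$ satisfies $g_{n\#}Q=\pi_{n\#}\mu$, because the support of $\pi_{n\#}\mu$ lies in $\mathbb{K}_n$. So $Q\in\mathcal{Q}_n$ and $E_Q[\Lambda]=E_\mu[\Psi_n]$. Together these give $v_n=u_n$.

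Once $v_n\to v$ is shown, the convergence $u_n\to u$ follows immediately from this step and Theorem~\ref{thm:strong-duality}.

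\textbf{Step 2: the upper bound $\limsup_n v_n\le v$.} Take a minimizer $Q^\ast$ of \eqref{eq:sharp-bound}. It exists because $\mathcal{Q}$ is a weakly closed subset of the compact space $\mathcal{P}(\Omega)$: pushforward under the continuous map $g$ and the marginal constraint are preserved under weak limits. Assumption~\ref{ass:constraints} gives $Q_n\in\mathcal{Q}_n$ with $Q_n\Rightarrow Q^\ast$. Since $\Lambda$ is continuous and bounded on compact $\Omega$,
\[
\limsup_n v_n\le\lim_n E_{Q_n}[\Lambda]=E_{Q^\ast}[\Lambda]=v.
\]

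\textbf{Step 3: the lower bound $\liminf_n v_n\ge v$.} Choose $Q_n\in\mathcal{Q}_n$ with $E_{Q_n}[\Lambda]\le v_n+1/n$. Minimizers also exist here, by the compactness argument applied to $g_n$. Pass to a subsequence that realizes the $\liminf$ and converges weakly to some $\bar Q$, using compactness of $\mathcal{P}(\Omega)$.

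The key claim is that $\bar Q\in\mathcal{Q}$. Fix a bounded test function $\phi(z,x)$, Lipschitz in $x$. Feasibility of $Q_n$ gives
\[
E_{Q_n}\Big[\int\phi(z,\pi_n(g(\omega))_z)\,d\lambda\Big]=\int\phi\,dP_n .
\]
I need three facts, which I assume the appendix on projections provides if the partitions refine with vanishing mesh:
\begin{enumerate}
\item $\pi_n\to\mathrm{id}$ uniformly on the compact set $\mathbb{K}$, since a contraction family converging pointwise converges uniformly on compacts;
\item the Lipschitz dependence of $\phi$ in $x$ then turns this into convergence of the integrands;
\item $P_n\to P$ weakly, by martingale convergence of the cell averages $P_{k,n}$.
\end{enumerate}
Together with $Q_n\Rightarrow\bar Q$ and continuity of $\omega\mapsto\int\phi(z,g(\omega)_z)\,d\lambda$, both sides pass to the limit. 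This gives $T^\ast g_\#\bar Q=P$, so $\bar Q\in\mathcal{Q}$, and therefore
\[
\liminf_n v_n=E_{\bar Q}[\Lambda]\ge v .
\]

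\textbf{Main obstacle.} Step 3 is where I expect the difficulty, specifically the uniform convergence of $\pi_n$ on $\mathbb{K}$ and the convergence $P_n\to P$. Both need the partitions to generate $\mathcal{B}(\mathcal{Z})$ with vanishing mesh, which Assumption~\ref{ass:constraints} does not state explicitly. If that condition is not available, I would fall back on Assumption~\ref{ass:constraints} together with the nesting $\mathcal{Q}\subseteq\mathcal{Q}_n$, which holds because conditional averaging commutes with the constraint. This nesting gives $v_n\le v$ directly, and it remains to check that weak limits of the $Q_n$ are feasible.
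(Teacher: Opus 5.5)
Your main argument is correct and follows the paper's proof essentially step by step. A measurable worst-case right-inverse of $\pi_n\circ g$ gives $v_n=u_n$. Assumption~\ref{ass:constraints} applied to an optimal $Q^\ast$ gives $\limsup_n v_n\le v$. Compactness of $\mathcal{P}(\Omega)$, together with feasibility of weak limits of discretized feasible laws, gives $\liminf_n v_n\ge v$; this closure step is the paper's Proposition~\ref{prop:nested-constraints}, which uses exactly $\kappa_n\to0$ and $P_n\Rightarrow P$. Finally, Theorem~\ref{thm:strong-duality} transfers the limit to $u_n\to u$. The nested, vanishing-mesh partitions you were worried about are a standing assumption of Appendix~\ref{subsec:projections} (they drive item 5 of Proposition~\ref{prop:properties-projections}), so your main obstacle is already settled by the paper's setup.

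You should, however, discard the fallback, because the nesting $\mathcal{Q}(\mathbf{P},\lambda)\subseteq\mathcal{Q}_n(\mathbf{P},\lambda)$ is false in general. The discretized constraint requires the law under $Q$ of the cell average $\tilde\pi_{k,n}(g(\omega))$ to equal the mixture $P_{k,n}=\lambda_{k,n}^{-1}\int_{A_{k,n}}P_z\,d\lambda(z)$, and the law of an average is not the average of the laws.

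For a counterexample, take $\mathcal{Z}=\mathcal{X}=[0,1]$, $\lambda$ uniform, $\Omega=\{\omega\}$ with $g_z(\omega)=z$, and the trivial partition $n=1$. Then $P_z=\delta_z$, so $\delta_\omega\in\mathcal{Q}(\mathbf{P},\lambda)$. But $(\tilde\pi_{1,1})_\#(g_\#\delta_\omega)=\delta_{1/2}$, while $P_{1,1}$ is uniform on $[0,1]$, so $\delta_\omega\notin\mathcal{Q}_1(\mathbf{P},\lambda)$.

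Besides, as you note yourself, the fallback would still need feasibility of weak limits of the $Q_n$. That is precisely the step requiring $\kappa_n\to0$ and $P_n\Rightarrow P$, so the fallback would not bypass the obstacle in any case.
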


\begin{proof}
    See Appendix~\ref{app:proof_discretized_ot_pi}.
\end{proof}


With Proposition~\ref{prop:discretized-ot-pi} in hand, the resolution $n$ becomes a genuine control parameter, without loss of generality, the original infinite-dimensional problem is recovered exactly in the limit, while every finite $n$ delivers an essentially $n$ multi-marginal optimal transportation problem constrained by the mechanism. Working with an instrument taking $n$ values is therefore without loss of generality, and this is the representation the remainder of the paper builds on. The next section departs from this $n$-resolution representation to address the constraint discretization leaves untouched: the requirement that the transport plan be supported on the achievable set.

\subsection{Relaxing the support constraint}\label{sec:moment_penalization}

Throughout this section, we work with the instrument already discretized to at most $n$ values, whether because it is genuinely discrete, or because we have applied the projection $\pi_n$ of Section~\ref{sec:discretization}. Every optimal transport problem encountered so far, discretized or not, carries a support restriction alongside its marginal constraints. This constraint asserts that the transport plan must be supported on the set of achievable paths $\mathbb{K}$, or, in the discretized problem, on its projected image $\pi_n(\mathbb{K})$. This restriction is a genuine shape restriction on potential endogenous paths, not an incidental feature of the formulation, encoding real modeling content. A researcher confident that the endogenous variable responds monotonically to the instrument, for instance, is asserting that every achievable path is monotone in $z$; smoothness or continuity of the response is likewise a restriction on which paths belong to $\mathbb{K}$. Dropping the support constraint would mean discarding exactly this informed content, a feature that cannot simply be ignored. In the remainder of this section, we show how to work around the practical difficulties this restriction poses without giving it up.

Support and shape constraints of this kind are a longstanding source of difficulty in the optimal transport literature, and enforcing them exactly is typically impractical. Even the best-studied instance, the martingale constraint of model-free finance, is known to break the numerical machinery (e.g. linear-programming duality, network-flow algorithms, entropic regularization) built for the unconstrained problem, precisely because that machinery relies on the coupling being free to range over the entire product of marginals \citep{beiglbock2013model}. The standard resolution in that literature is not to enforce the constraint exactly, but to relax it into the objective through a penalization that vanishes as a control parameter shrinks \citep{guo2019computational}. We adopt the same strategy here.

Concretely, for a control parameter $\delta>0$, we replace the cost $\Psi_n$, which assigns an essentially infinite cost to every path outside the achievable set $\pi_n(\mathbb{K})$, with a penalized cost that instead charges a path in proportion to its distance from that set:
\[\Psi_{\delta,n}(e) = \inf_{\omega\in\Omega}\left\{\Lambda(\omega) + \delta^{-1}\big\|\pi_n(g(\omega))-\pi_n(e)\big\|_{\mathbb{H}}\right\}.\]
Rather than the smallest effect among the types exactly consistent with a path, cost function $\Psi_{\delta,n}$ takes the smallest sum of a type's effect and the cost of the shortfall between the path it generates and the path being priced. The parameter $\delta$ controls the trade-off between the two. Indeed, the smaller $\delta$, the more heavily an unachievable path is penalized, so that $\Psi_{\delta,n}$ approaches the original, essentially-infinite cost off $\pi_n(\mathbb{K})$. On achievable paths, $\Psi_{\delta,n}$ instead increases toward the unpenalized cost $\Psi_n$. Section~\ref{subsec:costs-correspondences} formalizes these properties (Lemma~\ref{lemma:regularized-cost}).

With the support constraint absorbed into the cost, the relevant set of path distributions is no longer $\mathcal{M}_n(\mathbf{P},\lambda)$, but the larger set of path distributions satisfying only the marginal constraint,
\[\mathcal{C}_n(\mathbf{P},\lambda) = \big\{\mu\in\mathcal{P}(\mathbb{H}): T_n^\ast\mu=P_n\big\}\]
(see Section~\ref{subsec:marginal-operators} for its properties). Paired with the penalized cost $\Psi_{\delta,n}$, this gives the \emph{unconstrained} optimal transport problem
\[u_{\delta,n}(\mathbf{P},\lambda) = \inf_{\mu\in\mathcal{C}_n(\mathbf{P},\lambda)} E_\mu[\Psi_{\delta,n}].\]
Since $\mathcal{C}_n(\mathbf{P},\lambda)\supseteq\mathcal{M}_n(\mathbf{P},\lambda)$ and $\Psi_{\delta,n}\leq\Psi_n$, this problem's value is always (weakly) below that of the constrained one it relaxes, but it is a substantially simpler object to work with in practice.

Although $u_{\delta,n}(\mathbf{P},\lambda)$ is not itself the value of a partial identification problem -- a path distribution feasible for $\mathcal{C}_n(\mathbf{P},\lambda)$ need not be the pushforward, by the mechanism, of any latent law -- it remains a useful and tractable proxy for $u_n(\mathbf{P},\lambda)$. Its optimal coupling may assign positive probability to unachievable paths, but the total such mass is controlled by $\delta$ and vanishes as $\delta\downarrow0$, at which point the optimal solutions of the relaxed problem converge to an optimal solution of the constrained one. Just as important for computation, $u_{\delta,n}(\mathbf{P},\lambda)$ is, once paths are identified with their $n$ cell values, nothing more than a classical multi-marginal optimal transport problem on $\X^n$.

To see this, note that $\Psi_{\delta,n}$ depends on a path only through its projection $\pi_n(e)$, and every projected path is uniquely identified, through the isometry $\tilde{\pi}_n$ (Proposition~\ref{prop:properties-projections} in Appendix~\ref{subsec:projections}), with a vector $(x_1,\ldots,x_n)\in\X^n$ of its $n$ cell values. This lets us descend $\Psi_{\delta,n}$ to a cost function directly on $\X^n$,
\[c_{\delta,n}(x_1,\ldots,x_n) = \Psi_{\delta,n}\big(\tilde{\pi}_n^{-1}(x_1,\ldots,x_n)\big),\]
which inherits boundedness and Lipschitz continuity from $\Psi_{\delta,n}$ (Lemma~\ref{lemma:c-is-cont}), exactly the regularity a multi-marginal Monge--Kantorovich problem needs. The same identification carries over to couplings. Because $\Psi_{\delta,n}$ and the constraint defining $\mathcal{C}_n(\mathbf{P},\lambda)$ both depend on a path only through $\pi_n$, the pushforwards $\tilde{\pi}_{n\#}\mathcal{C}_n(\mathbf{P},\lambda)$ are exactly the couplings on $\X^n$ with $k$-th marginal $P_{k,n}$,
\[\Pi_n(\mathbf{P},\lambda) = \big\{\nu\in\P(\X^n): \nu_k=P_{k,n},\ k\in[n]\big\}.\]
Together, $c_{\delta,n}$ and $\Pi_n(\mathbf{P},\lambda)$ define a classical multi-marginal Monge--Kantorovich problem on $\X^n$, whose value coincides with $u_{\delta,n}(\mathbf{P},\lambda)$, as the following proposition shows. Taking $\delta\downarrow0$ then recovers the discretized bound $u_n(\mathbf{P},\lambda)$, and taking both $\delta\downarrow0$ and $n\uparrow\infty$ jointly recovers the original bound $u(\mathbf{P},\lambda)$. Thus, through Theorem~\ref{thm:strong-duality} and Proposition~\ref{prop:discretized-ot-pi}, the joint limit also recovers the sharp lower bound $v(\mathbf{P},\lambda)$ itself.

\begin{proposition}\label{prop:support-penalization}
Under Assumptions~\ref{ass:PI} and \ref{ass:constraints}, the value of the penalized optimal transport problem $u_{\delta,n}(\mathbf{P},\lambda)$ coincides with the value of the multi-marginal Monge--Kantorovich problem with cost $c_{\delta,n}$ and marginals $(P_{k,n})_{k\in[n]}$,
\[u_{\delta,n}(\mathbf{P},\lambda) = \inf_{\nu\in\Pi_n(\mathbf{P},\lambda)} E_\nu[c_{\delta,n}].\]
Moreover, as $\delta\downarrow0$ with $n$ fixed, $u_{\delta,n}(\mathbf{P},\lambda)$ converges monotonically to $u_n(\mathbf{P},\lambda)$. Taking the joint limit $\delta\downarrow0$, $n\uparrow+\infty$, it converges to the original value $u(\mathbf{P},\lambda)$:
\[u_{\delta,n}(\mathbf{P},\lambda)\uparrow u_n(\mathbf{P},\lambda)\quad\text{ and }\quad u_{\delta,n}(\mathbf{P},\lambda)\rightarrow u(\mathbf{P},\lambda).\]
\end{proposition}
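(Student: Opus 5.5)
Our plan has three parts, matching the three assertions.

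\emph{Identification with a Monge--Kantorovich problem.} Both the cost $\Psi_{\delta,n}$ and the constraint defining $\mathcal{C}_n(\mathbf{P},\lambda)$ depend on a path only through $\pi_n(e)$. Hence $E_\mu[\Psi_{\delta,n}]=E_{\tilde\pi_{n\#}\mu}[c_{\delta,n}]$ for every $\mu\in\mathcal{C}_n(\mathbf{P},\lambda)$. It then suffices to show that $\tilde\pi_{n\#}$ maps $\mathcal{C}_n(\mathbf{P},\lambda)$ onto $\Pi_n(\mathbf{P},\lambda)$.
\begin{itemize}
\item Inclusion: the marginal condition $T_n^\ast\mu=P_n$, read cell by cell, says that the $k$-th coordinate law of $\tilde\pi_{n\#}\mu$ is $P_{k,n}$.
\item Surjectivity: given $\nu\in\Pi_n(\mathbf{P},\lambda)$, set $\mu=(\tilde\pi_n^{-1})_\#\nu$, a law on piecewise-constant paths. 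Since $\pi_n$ fixes these paths, $T_n^\ast\mu=P_n$.
\end{itemize}
Both steps use the isometry and the properties of $T_n^\ast$ from Appendix~\ref{subsec:projections} and Appendix~\ref{subsec:marginal-operators}. Existence of minimizers is not needed for the value identity, although it does hold: $\Pi_n$ is weakly compact because $\X$ is compact, and $c_{\delta,n}$ is continuous by Lemma~\ref{lemma:c-is-cont}.

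\emph{Monotone convergence as $\delta\downarrow0$.} By Lemma~\ref{lemma:regularized-cost}, $\Psi_{\delta,n}$ increases as $\delta$ decreases and satisfies $\Psi_{\delta,n}\le\Psi_n$. Therefore $u_{\delta,n}$ is nondecreasing in $1/\delta$ and bounded by $u_n$. For the reverse inequality we work on $\X^n$.
\begin{itemize}
\item For each $\delta$ take an optimizer $\nu_\delta$. By compactness, a subsequence converges weakly to some $\nu_0\in\Pi_n$.
\item Show $\liminf_\delta E_{\nu_\delta}[c_{\delta,n}]\ge E_{\nu_0}[c_{0,n}]$, where $c_{0,n}$ is the descent of $\Psi_n$.
\end{itemize}
For the second step, $c_{\delta,n}$ increases pointwise to the lower semicontinuous envelope of $c_{0,n}$. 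The infimum over the compact set $\Omega$ is attained, $g$ is continuous and $\Lambda$ is continuous, so the limit equals $\Lambda(\omega^\ast)$ whenever a minimizing sequence of types forces $\pi_n g(\omega_\delta)\to x$. This requires $\pi_n(\mathbb{K})$ to be closed, which holds because it is the continuous image of a compact set, and it gives $\Psi_n$ on $\pi_n(\mathbb{K})$. Off $\pi_n(\mathbb{K})$ the cost blows up; since $c_{\delta,n}$ is bounded by $\sup\Lambda+\delta^{-1}\mathrm{diam}$, this forces $\nu_0$ to be supported on $\pi_n(\mathbb{K})$. For monotone sequences of continuous functions, the standard monotone-convergence argument (Dini or Portmanteau, fixing $c_{\delta',n}\le c_{\delta,n}$ for $\delta\le\delta'$) gives
\[
\liminf_\delta E_{\nu_\delta}[c_{\delta,n}]\ \ge\ E_{\nu_0}[\text{lsc envelope}].
\]
It remains to match this with $E_{\nu_0}[\Psi_n]$. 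We need that $\Psi_n$ restricted to $\pi_n(\mathbb{K})$ is lower semicontinuous. This follows from the argmin argument: $\inf\{\Lambda(\omega):\pi_ng(\omega)=x\}$ is lsc in $x$ by compactness of $\Omega$. Lifting $\nu_0$ back via $\tilde\pi_n^{-1}$ then yields an element of $\mathcal{M}_n$, so the lower bound is at least $u_n$. Finally, $u_n=v_n$ is finite by Assumption~\ref{ass:constraints} together with Proposition~\ref{prop:discretized-ot-pi}.

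\emph{Joint limit.} We combine three facts:
\begin{itemize}
\item $u_{\delta,n}\le u_n$;
\item $u_n\to u$ by Proposition~\ref{prop:discretized-ot-pi};
\item a lower bound uniform in $n$.
\end{itemize}
The main obstacle is the uniform lower bound, since the $\delta$-convergence above need not be uniform in $n$. We would try a compactness argument on path space. Take near-optimizers $\mu_{\delta,n}$ supported on piecewise-constant paths, and pair each with a near-minimizing selection $\omega(e)$ in the definition of $\Psi_{\delta,n}$, which is measurable by the selection argument of Proposition~\ref{prop:co-mechanism}. This produces latent laws $Q_{\delta,n}\in\P(\Omega)$. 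Pass to a weak limit $Q$, using compactness of $\Omega$.

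Because the penalty term is bounded by $1$ times a constant, the expected distance $E\|\pi_n g(\omega)-\pi_n e\|$ is $O(\delta)$. The marginals $P_{k,n}$ of $e$ are the cell averages of $\mathbf{P}$, and $\pi_n$ converges strongly to the identity on $\mathbb{H}$ as the partitions refine. Using these, we aim to show $T^\ast(g_\#Q)=P$, i.e.\ $Q\in\mathcal{Q}(\mathbf{P},\lambda)$. Lower semicontinuity of $Q\mapsto E_Q[\Lambda]$ then gives $\liminf u_{\delta,n}\ge v=u$ by Theorem~\ref{thm:strong-duality}. The delicate point is verifying the marginal identity in the limit via test functions $\phi(z,x)$ Lipschitz in $x$; we expect this to need that $\delta\to0$ along the joint sequence and that the cell mesh vanishes.
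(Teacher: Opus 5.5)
Your first step (identifying $u_{\delta,n}$ with the Monge--Kantorovich value via $\Psi_{\delta,n}=c_{\delta,n}\circ\tilde\pi_n$ and $\tilde\pi_{n\#}\mathcal{C}_n(\mathbf{P},\lambda)=\Pi_n(\mathbf{P},\lambda)$) is the paper's argument. So is your joint-limit step: lift couplings to latent laws $Q_{\delta,n}$ through a selection of minimizers of $\Psi_{\delta,n}$, use that the expected penalty distance is $O(\delta)$, take a weak limit in $\mathcal{Q}(\mathbf{P},\lambda)$, and invoke Theorem~\ref{thm:strong-duality}.

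For the fixed-$n$ limit you take a genuinely different route. The paper never works on $\X^n$ there. It pushes optimal $\mu_{\delta,n}$ forward by a measurable argmin selector $i_{\delta,n}$ and obtains the quantitative bound $W_1(T_n^\ast(g_\#Q_{\delta,n}),P_n)\le\delta$. It then shows that weak limits of $Q_{\delta,n}$ lie in $\mathcal{Q}_n(\mathbf{P},\lambda)$, and closes with $v_n=u_n$ from Proposition~\ref{prop:discretized-ot-pi}. Your argument is instead a $\Gamma$-convergence-style lower bound on $\X^n$: monotone convergence of continuous costs, the limit coupling forced onto $\tilde\pi_n(\mathbb{K})$, and a lift back into $\mathcal{M}_n(\mathbf{P},\lambda)$.

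Your route avoids measurable selection and the identity $v_n=u_n$. It also proves that weak limits of the penalized optimal couplings are optimal for $u_n$, which the main text only asserts informally. The paper's route buys uniformity: the same latent-law construction with the same $\delta$-bound is reused verbatim for the joint limit. Your $\X^n$ argument cannot be reused that way because the space changes with $n$, which is why you have to switch constructions in your third step anyway.

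Some small corrections to your second step:
\begin{itemize}
\item The pointwise limit of $c_{\delta,n}$ is not the lower semicontinuous envelope of $c_{0,n}$. It equals $c_{0,n}$ on the closed set $\tilde\pi_n(\mathbb{K})$ (Lemma~\ref{lemma:regularized-cost}) and $+\infty$ off it, and it is automatically lower semicontinuous as an increasing limit of continuous functions.
\item What forces $\nu_0$ onto $\tilde\pi_n(\mathbb{K})$ is not the bound $c_{\delta,n}\le\sup\Lambda+\delta^{-1}\mathrm{diam}(\X)$. It is the finiteness $E_{\nu_0}[\lim_\delta c_{\delta,n}]\le\lim_\delta u_{\delta,n}\le u_n\le 1$.
\item The inequality $\Psi_{\delta,n}\le\Psi_n$ holds only where $\pi_n(e)\in\pi_n(\mathbb{K})$. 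That is enough, since elements of $\mathcal{M}_n$ project there.
\end{itemize}

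The step you left as ``delicate'' in the joint limit closes with no rate linking $\delta$ and $n$. Take $\phi\in C_b(\Z\times\X)$ that is $L$-Lipschitz in $x$; then $T\phi$ is $L$-Lipschitz on $\mathbb{H}$ and $T_n\phi=T\phi\circ\pi_n$. With exact optimizers $\mu_{\delta,n}$ and an argmin selector $i_{\delta,n}$, both of which exist, this gives
\[
\big|E_{P_n}[\phi]-E_{Q_{\delta,n}}[T\phi\circ g]\big|\le L\,E_{\mu_{\delta,n}}\big\|\pi_n(g(i_{\delta,n}(e)))-\pi_n(e)\big\|_{\mathbb{H}}+L\kappa_n\le L(\delta+\kappa_n).
\]
Now $P_n\Rightarrow P$ by the vanishing mesh, and $Q_{\delta,n}\Rightarrow Q$ with $T\phi\circ g\in C_b(\Omega)$, so $T^\ast(g_\#Q)=P$.

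Here you need the uniform quantity $\kappa_n=\sup_{e\in\mathbb{K}}\|\pi_n(e)-e\|_{\mathbb{H}}\to0$ (Proposition~\ref{prop:properties-projections}, item 5). Strong convergence of $\pi_n$ alone is not enough, because the laws $g_\#Q_{\delta,n}$ move with $n$.
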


\begin{proof}
    See Appendix~\ref{app:proof_support_penalization}.
\end{proof}

Proposition~\ref{prop:support-penalization} is important for at least three reasons. First, it shows that the discretized partial identification problem, once its support constraint is relaxed, can be recast exactly as a classical multi-marginal Monge--Kantorovich problem on the finite-dimensional space $\X^n$, the simplification the next section builds its estimator around. Second, the characterization is consistent with every problem it approximates, as when $\delta\downarrow0$ it recovers the discretized bound $u_n(\mathbf{P},\lambda)$, and jointly with $n\uparrow\infty$ it recovers the original bound $u(\mathbf{P},\lambda)$. Third, and most consequential for what follows, casting the problem on $\X^n$ identifies a natural reference coupling on that space, the independent coupling of the marginals $(P_{k,n})_{k\in[n]}$. It is against this reference that the next section measures a relative-entropy penalty, turning the Monge--Kantorovich problem of Proposition~\ref{prop:support-penalization}, and not the original, support-constrained one, into the entropic optimal transport problem of Section~\ref{sec:entropic_perturbation}.

\begin{figure}[H]
\centering
\adjustbox{max width=0.8\textwidth}{
\begin{minipage}{\textwidth}
\centering
\begin{subfigure}[b]{0.48\textwidth}
\centering
\resizebox{\linewidth}{!}{%
\begin{tikzpicture}[>={Stealth[length=2mm]}, line cap=round]

\begin{scope}[shift={(0,0)}]
  \draw[thick] plot[smooth cycle, tension=0.9]
    coordinates {(-1.3,0.3)(-0.7,1.0)(0.4,1.1)(1.2,0.4)(0.9,-0.6)(-0.1,-1.0)(-1.0,-0.5)};
  \fill[red]  (-0.35,0.35) circle (1.6pt) node[above=1pt, black]{$\omega$};
  \fill[blue] ( 0.15,-0.30) circle (1.6pt) node[right=1pt, black]{$\omega'$};
  \node at (1.4,1.0) {$\Omega$};
\end{scope}

\begin{scope}[shift={(7,0)}]
  \draw[->] (0,-1.4) -- (0,1.5) node[left]{$x$};
  \draw[->] (0,-1.4) -- (3.4,-1.4) node[below]{$z$};
  \draw[blue, thick] (0.3,1.2) .. controls (1.5,0.55) and (2.3,0.2) .. (3.1,0.05);
  \draw[red,  thick] (0.3,0.6) .. controls (1.5,0.05) and (2.3,-0.25) .. (3.1,-0.4);
  \draw[orange, very thick] (0.4,-1.4) -- (2.8,-1.4);
  \foreach \x in {0.4,1.2,2.0,2.8} \draw[orange, thick] (\x,-1.5) -- (\x,-1.3);
  \node[below] at (0.8,-1.55){$A_{1,3}$};
  \node[below] at (1.6,-1.55){$A_{2,3}$};
  \node[below] at (2.4,-1.55){$A_{3,3}$};
  \node at (3.1,1.5){$\mathbb{H}$};
\end{scope}

\begin{scope}[shift={(7,-6)}]
  \draw[->] (0,-1.4) -- (0,1.5) node[left]{$x$};
  \draw[->] (0,-1.4) -- (3.4,-1.4) node[below]{$z$};
  \draw[blue, thick] (0.4,1.05)--(1.2,1.05) (1.2,0.65)--(2.0,0.65) (2.0,0.30)--(2.8,0.30);
  \draw[red,  thick] (0.4,0.15)--(1.2,0.15) (1.2,-0.15)--(2.0,-0.15) (2.0,-0.40)--(2.8,-0.40);
  \draw[orange, very thick] (0.4,-1.4) -- (2.8,-1.4);
  \foreach \x in {0.4,1.2,2.0,2.8} \draw[orange, thick] (\x,-1.5) -- (\x,-1.3);
  \node[below] at (0.8,-1.55){$A_{1,3}$};
  \node[below] at (1.6,-1.55){$A_{2,3}$};
  \node[below] at (2.4,-1.55){$A_{3,3}$};
  \node at (3.0,1.5){$\pi_n(\mathbb{H})$};
\end{scope}

\begin{scope}[shift={(0.3,-6)},
              x={(-0.42cm,-0.35cm)}, y={(1cm,0cm)}, z={(0cm,1cm)}]
  \draw[->] (0,0,0) -- (3,0,0) node[below left]{$x$};
  \draw[->] (0,0,0) -- (0,3,0) node[right]{$x$};
  \draw[->] (0,0,0) -- (0,0,3) node[left]{$x$};
  \fill[red] (1,1,1) circle (1.6pt);
  \draw[red, dashed, thin] (1,1,1) -- (1,1,0) -- (1,0,0);
  \draw[red, dashed, thin] (1,1,0) -- (0,1,0);
  \fill[blue] (2,2,2) circle (1.6pt);
  \draw[blue, dashed, thin] (2,2,2) -- (2,2,0) -- (2,0,0);
  \draw[blue, dashed, thin] (2,2,0) -- (0,2,0);
\end{scope}

\draw[->] (2.1,0.2) -- (6.6,0.2) node[midway, above]{$g$};
\draw[->] (8.5,-1.6) -- (8.5,-4.4) node[midway, right]{$\pi_n$};
\draw[->] (6.6,-4.7) -- (2.6,-4.7) node[midway, above]{$\tilde{\pi}_n$};
\draw[->, dashed] (-0.6,-0.9) -- (-0.6,-4.1) node[midway, left]{$\tilde{\pi}_n \circ g$};


\end{tikzpicture}%
}
\caption{From latent types to $\X^n$.}
\end{subfigure}
\hfill
\begin{subfigure}[b]{0.48\textwidth}
\centering
\resizebox{\linewidth}{!}{%
\begin{tikzpicture}[>={Stealth[length=2mm]}, line cap=round,
   nd/.style={draw, rounded corners, inner sep=4pt, font=\small}]
  \def\R{3.4}
  \node[nd] (n1) at (90:\R)   {$(\Lambda,\Omega)$};
  \node[nd] (n2) at (18:\R)   {$(\Psi,\mathbb{H})$};
  \node[nd] (n3) at (-54:\R)  {$(\Psi_n,\pi_n(\mathbb{H}))$};
  \node[nd] (n4) at (-126:\R) {$(\Psi_{\delta,n},\pi_n(\mathbb{H}))$};
  \node[nd] (n5) at (162:\R)  {$(c_{\delta,n},\X^n)$};

  \draw[->] (n1) -- (n2) node[midway, above right, align=center, font=\scriptsize]{$g$\\inf};
  \draw[->] (n2) -- (n3) node[midway, right, align=center, font=\scriptsize]{$\pi_n$\\inf};
  \draw[->] (n3) -- (n4) node[midway, below, align=center, font=\scriptsize]
       {Penalization\\$+\,\delta^{-1}\lVert\pi_n(g(\cdot))-\pi_n(e)\rVert_{\mathbb{H}}$};
  \draw[->] (n4) -- (n5) node[midway, left, font=\scriptsize]{$\tilde{\pi}_n$};
  \draw[->, dashed] (n1) -- (n5);
\end{tikzpicture}%
}
\caption{Redefining the cost.}
\end{subfigure}
\end{minipage}%
}
\caption{Two views of the same construction underlying Proposition~\ref{prop:support-penalization}. On the left, the chain of identifications $\Omega\to\mathbb{H}\to\pi_n(\mathbb{H})\to\X^n$ that recasts latent types as points of $\X^n$. On the right, the corresponding chain of costs each space carries -- $\Lambda$, $\Psi$, $\Psi_n$, $\Psi_{\delta,n}$, and finally $c_{\delta,n}$.}
\label{fig:morphisms-costs}
\end{figure}

\subsection{Entropic regularization}\label{sec:entropic_perturbation}

Our investigation so far has reduced the sharp lower bound on the partially identified average of $\Lambda$ to a sequence of controlled approximations, each trading a manageable error for tractability. Theorem~\ref{thm:strong-duality} recast it as a constrained optimal transport problem on the path space $\mathbb{H}$; Proposition~\ref{prop:discretized-ot-pi} discretized this problem to a finite resolution $n$; and Proposition~\ref{prop:support-penalization} relaxed its support constraint into a penalized cost $c_{\delta,n}$, recasting it as the multi-marginal Monge--Kantorovich problem $u_{\delta,n}(\mathbf{P},\lambda)$ on $\X^n$. This last identification is already a considerable simplification, since rather than a distribution over an infinite-dimensional, mechanism-constrained path space, the object of interest is now a coupling of $n$ finite-dimensional marginals under a Lipschitz cost. Yet, as a linear program, $u_{\delta,n}(\mathbf{P},\lambda)$ still suffers from a familiar drawback. Its set of optimal couplings need not be a singleton, and is generically large -- a face, possibly of high dimension, of the polytope of couplings with the prescribed marginals. This multiplicity is more than a theoretical fact. In fact, without a way to select among optimal couplings, the numerical determination of the value $u_{\delta,n}(\mathbf{P},\lambda)$ becomes considerably more delicate in practice, particularly as $n$ grows.

To overcome the difficulty arising from this multiplicity of solutions, we perturb the objective of $u_{\delta,n}(\mathbf{P},\lambda)$ with a relative entropy term, controlled by a regularization parameter $\epsilon>0$. This single modification delivers a cluster of convenient properties at once. The resulting problem is a strictly convex minimization, and therefore admits a unique solution -- resolving the multiplicity of Proposition~\ref{prop:support-penalization} at the source. Strong duality holds, so the value of the entropic problem can equivalently be recovered from its dual, an unconstrained maximization over a finite vector of functions of the data. And because this dual problem is smooth, its first-order optimality condition is a fixed-point equation in the vector of maximizers, one that can be solved numerically, and efficiently, by the Sinkhorn algorithm \citep{cuturi2013sinkhorn, nutz2021entropic, peyre2019computational}. We develop each of these properties below.

Both the entropy penalty and the Sinkhorn algorithm require a reference measure against which a coupling's density is defined and its distance to independence is measured. Fortunately, the transport formulation of the previous two sections already singles out a natural candidate. Return to the endogenous-path description of Section~\ref{sec:discretization}, and consider the measure $R_n\in\P(\pi_n(\mathbb{H}))$ that assigns to every cylinder set of piecewise-constant paths the product probability of its $n$ cell values under the regularized marginals $(P_{k,n})_{k\in[n]}$,
\[R_n\big(\{e\in\pi_n(\mathbb{H}): \tilde{\pi}_{k,n}(e)\in B_k,\ k\in[n]\}\big)=\prod_{k\in[n]}P_{k,n}(B_k),\qquad B_1,\ldots,B_n\in\mathcal{B}(\X);\]
that is, under $R_n$, a piecewise-constant path has its value on each cell of the partition sampled independently from the corresponding regularized marginal. Every feasible coupling $\mu\in\mathcal{C}_n(\mathbf{P},\lambda)$ admits this measure as a natural dominating measure for its projection. Since $\pi_n\#\mu$ has $k$-th cell marginal $P_{k,n}$ by definition of $\mathcal{C}_n(\mathbf{P},\lambda)$, any cylinder set assigned zero probability by $R_n$ -- that is, with $P_{k,n}(B_k)=0$ for some $k$ -- is also assigned zero probability by $\pi_n\#\mu$, so $\pi_n\#\mu\ll R_n$. Thus $R_n$ offers a natural prior distribution against which every feasible coupling admits a density, and the task of finding an optimal coupling becomes that of determining the feasible density that minimizes the expected cost. To discourage densities that concentrate too heavily on regions $R_n$ regards as unlikely, we measure the deviation from $R_n$ by relative entropy. Then, a higher value signals a density concentrating on sets of small $R_n$-probability. This gives the entropic optimal transport problem
\[u_{\epsilon,\delta,n}(\mathbf{P},\lambda)=\inf_{\mu\in\mathcal{C}_n(\mathbf{P},\lambda)}E_\mu[\Psi_{\delta,n}]+\epsilon\,H(\pi_n\#\mu\,|\,R_n).\]
By the same argument as in Proposition~\ref{prop:support-penalization}, this path-space problem can be recast as a classical optimal transport problem on $\X^n$, and the same identification carries over directly to the reference measure. Indeed, since $\tilde{\pi}_n$ restricts to an isometric bijection between $\pi_n(\mathbb{H})$ and $\X^n$ (Proposition~\ref{prop:properties-projections}), it identifies $R_n$ with the reference measure
\[r_n:=\tilde{\pi}_{n\#}R_n=\bigotimes_{k\in[n]}P_{k,n}\in\P(\X^n),\]
obtained by sampling each of the $n$ coordinates independently from the corresponding regularized marginal profile. Because relative entropy is invariant under measurable bijections, $H(\pi_{n\#}\mu\,|\,R_n)=H(\tilde{\pi}_{n\#}\mu\,|\,r_n)$ for every $\mu\in\mathcal{C}_n(\mathbf{P},\lambda)$. By the same argument as above, every coupling $\nu\in\Pi_n(\mathbf{P},\lambda)$ is likewise absolutely continuous with respect to $r_n$. Using $r_n$ as the reference measure delivers the following equivalent, classical entropic optimal transport characterization:
\begin{equation}\label{eq:eot_value}
u_{\epsilon,\delta,n}(\mathbf{P},\lambda)=\inf_{\nu\in\Pi_n(\mathbf{P},\lambda)}E_\nu[c_{\delta,n}]+\epsilon\,H(\nu\,|\,r_n).
\end{equation}

Departing from \eqref{eq:eot_value}, we can obtain a dual characterization of its value. For this, we introduce the following auxiliary notation and objects. Similar to the path distributions of the previous sections, the set of couplings $\Pi_n(\mathbf{P},\lambda)$ can be expressed in terms of the adjoint of operator $\mathrm{T}_n:C_b(\X)^n\to C_b(\X^n)$ (see Appendix~\ref{subsec:marginal-operators}), mapping a vector of functions $\phi=(\phi_1,\ldots,\phi_n)$ on $\X$ to:
\[\mathrm{T}_n\phi(x)=\sum_{k\in[n]}\phi_k(x_k)\lambda_{k,n}.\]
Beyond describing the feasible set of couplings, this operator also appears in the dual of the entropic optimal transport above. Indeed, we refer to each vector $\phi\in C_b(\X)^n$ as a (Kantorovich) potential function, whose interpretation is:  for each component of the partition, the potential $\phi_k(x)$ is to be read as the conditional expected effect of $\Lambda$ for paths whose projection lands, in cell $k$, at the value $x\in\X$. Its objective is to choose the potentials that maximize the expected weighted value $\mathrm{T}_n\phi$ under the reference measure $r_n$, net of a soft (exponential) approximation to the indicator that this weighted value is dominated by the cost $c_{\delta,n}$ -- the same soft-max penalty that appears in the classical entropic optimal transport dual. 

Just as the support-relaxation parameter $\delta$, the entropic penalty vanishes as $\epsilon\downarrow0$, recovering the unregularized value. For instance, \citet{carlier2017convergence} shows that, under our assumptions, the entropic value is encapsulated within an explicit and vanishing distance of $u_{\delta,n}(\mathbf{P},\lambda)$,
\[u_{\delta,n}(\mathbf{P},\lambda)\ \leq\ u_{\epsilon,\delta,n}(\mathbf{P},\lambda)\ \leq\ u_{\delta,n}(\mathbf{P},\lambda)+d\,\epsilon\log\epsilon+\frac{\epsilon}{\delta},\]
where $d$ is bounded by the dimension of $\X$. This finer control is what the joint-limit statements below package into simple convergence. Together, these facts are collected in the following proposition.

\begin{proposition}\label{prop:duality}
Under Assumption~\ref{ass:PI} and \ref{ass:constraints}, the value of the entropic optimal transport problem \eqref{eq:eot_value} equals the value of its dual,
\[u_{\epsilon,\delta,n}(\mathbf{P},\lambda)=\sup_{\phi\in C_b(\X)^n}E_{r_n}\!\left[\mathrm{T}_n\phi-\epsilon\left(e^{\frac{\mathrm{T}_n\phi-c_{\delta,n}}{\epsilon}}-1\right)\right].\]
Moreover, as $\epsilon\downarrow0$ with $\delta,n$ fixed, $u_{\epsilon,\delta,n}(\mathbf{P},\lambda)\downarrow u_{\delta,n}(\mathbf{P},\lambda)$, jointly sending $\delta\downarrow0$ and $\epsilon/\delta\rightarrow0$ with $n$ fixed, $u_{\epsilon,\delta,n}(\mathbf{P},\lambda)\rightarrow u_n(\mathbf{P},\lambda)$, and, further letting $n\uparrow+\infty$,
\[u_{\epsilon,\delta,n}(\mathbf{P},\lambda)\rightarrow u(\mathbf{P},\lambda).\]
\end{proposition}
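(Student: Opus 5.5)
The plan has three parts: the duality formula, monotone convergence in $\epsilon$, and the joint limits.

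\textbf{Duality.} I would prove it on the finite-dimensional formulation \eqref{eq:eot_value} on $\X^n$, where $\X$ is compact and $c_{\delta,n}$ is bounded and Lipschitz (Lemma~\ref{lemma:c-is-cont}). The first step is weak duality. For any $\nu\in\Pi_n(\mathbf{P},\lambda)$ with density $f=d\nu/dr_n$ and any $\phi\in C_b(\X)^n$, the marginal constraints give
\[E_\nu[\mathrm{T}_n\phi]=\sum_k\lambda_{k,n}E_{P_{k,n}}[\phi_k]=E_{r_n}[\mathrm{T}_n\phi].\]
Fenchel--Young applied to $t\mapsto \epsilon(t\log t-t+1)$, whose conjugate is $s\mapsto\epsilon(e^{s/\epsilon}-1)$, yields pointwise
\[(\mathrm{T}_n\phi-c_{\delta,n})f\le \epsilon(f\log f-f+1)+\epsilon\big(e^{(\mathrm{T}_n\phi-c_{\delta,n})/\epsilon}-1\big).\]
Integrating against $r_n$ and using $\int f\,dr_n=1$ shows that the dual objective is at most the primal objective. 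For the reverse inequality I would invoke the standard multi-marginal entropic OT duality for bounded continuous costs on compact spaces \citep{nutz2021entropic, carlier2017convergence}. That result supplies Schr\"odinger potentials $\phi^\ast$ such that $\nu^\ast\propto e^{(\mathrm{T}_n\phi^\ast-c_{\delta,n})/\epsilon}r_n$ lies in $\Pi_n$, with normalizing constant $1$ after shifting the potentials. Under this choice the Fenchel--Young inequality is an equality, so the values coincide. Because the cost is continuous and the potentials can be taken as $c$-transforms, they are continuous, hence belong to $C_b(\X)^n$. The reweighting by $\lambda_{k,n}$ only rescales the potentials and is harmless.

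\textbf{Limit in $\epsilon$.} Monotonicity follows from the primal: for each fixed $\nu$, the objective $E_\nu[c_{\delta,n}]+\epsilon H(\nu|r_n)$ is nondecreasing in $\epsilon$, and taking the infimum preserves this. The lower bound $u_{\epsilon,\delta,n}\ge u_{\delta,n}$ holds because $H\ge0$. For the upper bound I would use block approximation: take an optimal $\nu^\ast$ for $u_{\delta,n}$, partition $\X^n$ into cubes of side $h$, and replace $\nu^\ast$ on each cube by the product of the conditional marginals. This keeps the marginals, costs at most $\mathrm{Lip}(c_{\delta,n})\,h\lesssim h/\delta$, and has entropy at most $d\log(1/h)$. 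Choosing $h=\epsilon$ gives
\[u_{\epsilon,\delta,n}\le u_{\delta,n}+C\big(\epsilon/\delta+d\,\epsilon\log(1/\epsilon)\big),\]
which is the bound quoted from \citet{carlier2017convergence}. The finite-$\X$ case is simpler, since there $H\le\sum_k\log|\X|$.

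\textbf{Joint limits.} I would combine $|u_{\epsilon,\delta,n}-u_{\delta,n}|\lesssim \epsilon/\delta+\epsilon\log(1/\epsilon)\to0$ with Proposition~\ref{prop:support-penalization}, which gives $u_{\delta,n}\uparrow u_n$ as $\delta\downarrow0$ and $u_{\delta,n}\to u$ in the joint limit. For the last step I would additionally use Proposition~\ref{prop:discretized-ot-pi} and a diagonal choice of $\delta=\delta(n)$.

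The main obstacle is making the block-approximation constant uniform: the Lipschitz constant of $c_{\delta,n}$ has to be tracked as $\delta^{-1}$ times the Lipschitz constant of $\tilde\pi_n^{-1}$, and the dimension factor has to stay controlled as $n$ grows. If uniformity in $n$ fails, I would prove the last limit by an iterated limit, first $\epsilon/\delta\to0$ and $\delta\to0$ for fixed $n$, and then $n\to\infty$ along a diagonal subsequence.
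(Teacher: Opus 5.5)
For the duality formula and the limits at fixed $n$ your route is the paper's. The paper also takes strong duality from a multi-marginal Schr\"odinger duality theorem (Theorem 4.5 of \citet{dimarino2020schrodinger}, then the rescaling $\psi_k=\lambda_{k,n}\phi_k$). It gets monotonicity in $\epsilon$ from $H\ge0$, and the $\epsilon$-limits from the \citet{carlier2017convergence} bias bound combined with Proposition~\ref{prop:support-penalization}. Your Fenchel--Young weak duality and block-approximation argument only make explicit what the paper cites. The real difference is the step $n\uparrow+\infty$, where your caution is justified and your fallback is the statement that can actually be proved.

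The Lipschitz part of the block bound is uniform in $n$. The map $\tilde\pi_n$ is an isometry, so $c_{\delta,n}$ is $\delta^{-1}$-Lipschitz for $|\cdot|_{\X^n}$. A product cell whose factors have diameter $h$ has $|\cdot|_{\X^n}$-diameter at most $\sum_k\lambda_{k,n}h=h$, so the cost error is at most $h/\delta$. The entropy part is not uniform: the block coupling can have entropy of order $(n-1)\log N(h)$ relative to $r_n$. The honest bound is therefore $u_{\epsilon,\delta,n}\le u_{\delta,n}+\epsilon/\delta+O\big(n\epsilon\log(1/\epsilon)\big)$, whereas the paper's proof treats $d$ as independent of $n$.

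This growth cannot be avoided. Take $\Omega=\Z=\X=[0,1]$, $\lambda$ uniform, $g(\omega)$ the constant path $\omega$, $\Lambda\equiv\tfrac12$, $P_z\equiv\mathrm{Unif}[0,1]$, and nested cells with $\lambda_{k,n}\in[\tfrac{1}{2n},\tfrac{2}{n}]$. Then $c_{\delta,n}(x)=\tfrac12+\delta^{-1}\min_y\sum_k\lambda_{k,n}|x_k-y|$ and $u=u_n=u_{\delta,n}=\tfrac12$. Dropping the marginal constraints in \eqref{eq:eot_value} gives the Gibbs bound $u_{\epsilon,\delta,n}\ge-\epsilon\log E_{r_n}[e^{-c_{\delta,n}/\epsilon}]$. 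Restricting $y$ to a grid of mesh $\delta^2$ then yields $u_{\epsilon,\delta,n}\ge\tfrac12+n\epsilon\log\frac{1}{4n\epsilon\delta}-o(1)$. This diverges along $\epsilon=\delta^2$, $n\approx\delta^{-5/2}$, even though $\epsilon/\delta\to0$.

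So the provable versions of the last claim are your iterated or diagonal limit, or the joint limit under the extra condition $n\epsilon\log(1/\epsilon)\to0$. Note also that your cubes, like the paper's constant $d$, presuppose $\dim\X<\infty$, which Assumption~\ref{ass:PI} does not require.
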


\begin{proof}
    See Appendix~\ref{app:proof_duality}.
\end{proof}

The relevance of this dual characterization goes beyond its value. The entropic problem \eqref{eq:eot_value} is attainable, in the sense that its dual admits a maximizer, and this maximizer can be recovered from a first-order optimality condition. From a simple application of variational analysis and differentiating the dual objective of Proposition~\ref{prop:duality} with respect to each potential $\phi_k$ shows that, at an optimum, the system of $n$ equations
\begin{equation}\label{eq:foc}
\phi_k(x_k) = -\frac{\epsilon}{\lambda_{k,n}}\,\log\!\left(E_{r_{-k,n}}\!\left[\exp\!\left(\frac{\mathrm{T}_{k,n}\phi_{-k}-c_{\delta,n}(x_k,\cdot)}{\epsilon}\right)\right]\right),\qquad P_{k,n}\text{-a.s.},
\end{equation}
must hold for every $k\in[n]$, where $\mathrm{T}_{k,n}:C_b(\X)^{n-1}\to C_b(\X^{n-1})$ is the operator of Appendix~\ref{subsec:marginal-operators} obtained from $\mathrm{T}_n$ by dropping the $k$-th coordinate, and $r_{-k,n}=\bigotimes_{j\neq k}P_{j,n}$ is the corresponding marginal of the reference measure $r_n$. Equation~\eqref{eq:foc} expresses each potential as a fixed point of the remaining $n-1$. Such system can be solved iteratively, updating one potential at a time while holding the others fixed -- an algorithm known as Sinkhorn's, which converges to the optimal vector of potentials under mild conditions \citep{cuturi2013sinkhorn, peyre2019computational}. Attainment, the fixed-point characterization, and the resulting expression for the optimal coupling are collected in the following proposition.

\begin{proposition}\label{prop:entropic-ot-dual}
Under Assumption~\ref{ass:PI}, the dual problem of Proposition~\ref{prop:duality} admits a maximizer $\phi=(\phi_1,\ldots,\phi_n)\in C_b(\X)^n$, each component $\delta^{-1}$-Lipschitz continuous and bounded by $\frac{1+2\delta^{-1}\mathrm{diam}(\X)}{\lambda_{k,n}}$, satisfying the first-order condition \eqref{eq:foc}. The value of the entropic problem is recovered from these potentials by
\[u_{\epsilon,\delta,n}(\mathbf{P},\lambda)=E_{r_n}[\mathrm{T}_n\phi]=\sum_{k\in[n]}\lambda_{k,n}\,E_{P_{k,n}}[\phi_k]=\sum_{k\in[n]}\int_{A_{k,n}\times\X}\phi_{k}(x)dP(z,x).\]
Moreover, the optimal coupling $\nu^\ast_{\epsilon,\delta,n}\in\Pi_n(\mathbf{P},\lambda)$ solving \eqref{eq:eot_value} exists, is unique, and its log-density relative to $r_n$ is given by
\[\log\!\left(\frac{d\nu^\ast_{\epsilon,\delta,n}}{dr_n}\right)(x_1,\ldots,x_n)=\frac{\mathrm{T}_n\phi(x_1,\ldots,x_n)-c_{\delta,n}(x_1,\ldots,x_n)}{\epsilon}.\]
\end{proposition}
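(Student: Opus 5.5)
The plan is to build the potentials directly through the Sinkhorn fixed-point map, normalize them, and use compactness. After that, identify the optimal coupling through the usual Gibbs-form verification argument. The weights $\lambda_{k,n}$ are only a rescaling: writing $\psi_k=\lambda_{k,n}\phi_k$ turns the problem into the standard multi-marginal entropic OT dual with cost $c_{\delta,n}$, reference $r_n$ and marginals $P_{k,n}$. I will therefore work with $\psi$ and translate back at the end.

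\textbf{Step 1: regularity of the Sinkhorn map.} By Lemma~\ref{lemma:c-is-cont}, $c_{\delta,n}$ takes values in $[0,1+\delta^{-1}\mathrm{diam}(\X)]$, using $\Lambda\le1$, the penalty bound, and nonexpansiveness of $\pi_n$. It is also $\delta^{-1}$-Lipschitz in each coordinate. For any bounded measurable $\psi_{-k}$, define
\[
(S_k\psi_{-k})(x_k)=-\epsilon\log E_{r_{-k,n}}\!\left[\exp\!\left(\frac{\sum_{j\neq k}\psi_j-c_{\delta,n}(x_k,\cdot)}{\epsilon}\right)\right].
\]
A log-sum-exp comparison shows that $S_k\psi_{-k}$ inherits the $\delta^{-1}$-Lipschitz modulus of $c_{\delta,n}$ in $x_k$. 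Its oscillation is at most $\sup c_{\delta,n}-\inf c_{\delta,n}$, whatever $\psi_{-k}$ is.

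\textbf{Step 2: existence of a maximizer.} Start from $\psi^0=0$ and apply the cyclic updates $\psi_k\leftarrow S_k\psi_{-k}$. The dual objective is nondecreasing along these updates, since each update is an exact coordinate maximization: the objective is concave in $\psi_k$, and the first-order condition is exactly \eqref{eq:foc}. The dual is invariant under shifts $\psi_k\mapsto\psi_k+a_k$ with $\sum_k a_k=0$. I normalize by fixing $E_{P_{k,n}}[\psi_k]=0$ for $k<n$. With this normalization, the oscillation bound of Step 1 yields a uniform sup-norm bound of order $1+2\delta^{-1}\mathrm{diam}(\X)$. After dividing by $\lambda_{k,n}$, this gives the stated bound on $\phi_k$.

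Since $\X$ is compact, Arzelà--Ascoli gives a uniformly convergent subsequence of the iterates. The dual objective is continuous under uniform convergence, by dominated convergence with a bounded integrand. Hence the limit is a fixed point of every $S_k$ and attains the supremum. Here I use the standard fact that the monotone dual values converge and that the limit satisfies each block's optimality condition. If that fact is problematic, I would instead take a maximizing sequence, replace each term by its ``$c$-transform'' $S_k$ (which only increases the objective), and extract a limit by Arzelà--Ascoli. This fallback is the step I expect to need most care; it is where the uniform Lipschitz and sup bounds do the work. Either way, the limit satisfies \eqref{eq:foc}, and every component is $\delta^{-1}$-Lipschitz.

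\textbf{Step 3: coupling, uniqueness, and value.} Define $d\nu^\ast=\exp\big((\mathrm{T}_n\phi-c_{\delta,n})/\epsilon\big)\,dr_n$. Integrating out the coordinates other than $k$ and using \eqref{eq:foc} shows that the $k$-th marginal has density one with respect to $P_{k,n}$. So $\nu^\ast\in\Pi_n(\mathbf{P},\lambda)$, and in particular $\nu^\ast$ is a probability measure. Using this density, for any $\nu\in\Pi_n$ we get
\[
E_\nu[c_{\delta,n}]+\epsilon H(\nu|r_n)=E_\nu[\mathrm{T}_n\phi]+\epsilon H(\nu|\nu^\ast)=\sum_k\lambda_{k,n}E_{P_{k,n}}[\phi_k]+\epsilon H(\nu|\nu^\ast).
\]
This identity shows that $\nu^\ast$ is the unique minimizer, since $H(\nu|\nu^\ast)=0$ only for $\nu=\nu^\ast$, and that the minimum equals $E_{r_n}[\mathrm{T}_n\phi]$.

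Consistency with Proposition~\ref{prop:duality} follows because the exponential term in the dual objective integrates to $1$ under $r_n$, so it vanishes at the maximizer. The final expression uses $\lambda_{k,n}P_{k,n}=P(A_{k,n}\times\cdot)$, which follows from the definition of $P_{k,n}$. The log-density formula is the definition of $\nu^\ast$.
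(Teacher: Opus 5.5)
Your route is valid in outline, but as written it does not deliver the two constants the statement asserts. The cause is the coordinatewise Lipschitz constant.

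\textbf{The gap: the stated constants.} You take $c_{\delta,n}$ to be $\delta^{-1}$-Lipschitz in each coordinate. That makes $\psi_k=S_k\psi_{-k}$ $\delta^{-1}$-Lipschitz, so $\phi_k=\psi_k/\lambda_{k,n}$ is only $(\delta\lambda_{k,n})^{-1}$-Lipschitz, not $\delta^{-1}$-Lipschitz. The same loss affects the sup bound. The oscillation bound $\mathrm{osc}(\psi_k)\le\mathrm{osc}(c_{\delta,n})$, together with mean-zero normalization, gives $\|\psi_k\|_\infty\le 1+\delta^{-1}\mathrm{diam}(\X)$ only for $k<n$. The free last component satisfies only $\|\psi_n\|_\infty\le\|c_{\delta,n}\|_\infty+\sum_{j<n}\|\psi_j\|_\infty$, which is of order $n(1+\delta^{-1}\mathrm{diam}(\X))$. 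So the stated bound on $\phi_n$ does not follow from your argument.

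\textbf{The fix.} The missing ingredient is that $c_{\delta,n}$ is $\delta^{-1}$-Lipschitz for the weighted metric $|\cdot|_{\X^n}$. Changing only $x_k$ therefore moves it by at most $\delta^{-1}\lambda_{k,n}|x_k-x_k'|_\X$. With this:
\begin{itemize}
\item $\psi_k$ is $\delta^{-1}\lambda_{k,n}$-Lipschitz, so $\phi_k$ is $\delta^{-1}$-Lipschitz. This is exactly the cancellation in the paper's computation.
\item $\mathrm{osc}(\psi_k)\le\delta^{-1}\lambda_{k,n}\mathrm{diam}(\X)$. Under your normalization this gives $\|\psi_k\|_\infty\le\delta^{-1}\lambda_{k,n}\mathrm{diam}(\X)$ for $k<n$.
\item Using your bound $\|c_{\delta,n}\|_\infty\le1+\delta^{-1}\mathrm{diam}(\X)$, you get $\|\psi_n\|_\infty\le 1+\delta^{-1}\mathrm{diam}(\X)+\delta^{-1}\mathrm{diam}(\X)\sum_{j<n}\lambda_{j,n}\le 1+2\delta^{-1}\mathrm{diam}(\X)$.
\end{itemize}
Dividing by $\lambda_{k,n}$ gives exactly the stated bound, and uniform limits preserve both the Lipschitz and the sup estimates.

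\textbf{Existence step.} Prefer your fallback: take a maximizing sequence, replace its components by successive $c$-transforms, normalize, and extract a uniform limit. Maximality then forces \eqref{eq:foc} $P_{k,n}$-a.s., because each block is strictly concave. Your first route, that a subsequential limit of the Sinkhorn iterates is a fixed point, needs a further argument. The standard one shows that each block update raises the dual by $\epsilon$ times the relative entropy of $P_{k,n}$ with respect to the current $k$-th marginal, so these errors vanish.

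\textbf{Comparison with the paper.} The paper takes dual attainment, the Schr\"odinger system \eqref{eq:foc}, the Gibbs form of the primal optimizer and the sup bound on $\psi_k$ directly from \citet{dimarino2020schrodinger}. It gets primal uniqueness from strict convexity, and derives only the Lipschitz estimate and the value formula by hand. You instead construct the maximizer via $c$-transforms and Arzel\`a--Ascoli. You then obtain existence, uniqueness, the log-density and the value of the primal all at once from the identity
\[
E_\nu[c_{\delta,n}]+\epsilon H(\nu|r_n)=\sum_k\lambda_{k,n}E_{P_{k,n}}[\phi_k]+\epsilon H(\nu|\nu^\ast).
\]
This is more self-contained. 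It also re-derives strong duality: any bounded solution of \eqref{eq:foc} has dual objective equal to the primal value, and is therefore a dual maximizer.
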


\begin{proof}
    See Appendix~\ref{app:proof_entropic_ot_dual}.
\end{proof}

The log-density formula of Proposition~\ref{prop:entropic-ot-dual} admits a transparent reading. Each potential $\phi_k(x_k)$ can be interpreted as the average effect of $\Lambda$ under the optimal latent distribution, conditional on the exogenous variable falling in cell $A_{k,n}$ and the endogenous realization equal to $x_k$. Averaged against the weights $(\lambda_{k,n})_{k\in[n]}$, the quantity $\mathrm{T}_n\phi(x)$ records the average effect of $\Lambda$ under the optimal latent distribution, conditional on the discretized path landing at the point $x\in\X^n$. The cost $c_{\delta,n}(x)$, on the other hand, records the minimum value of $\Lambda$ attainable by any latent type whose induced path lies close to $x$. 

The optimal coupling $\nu^\ast_{\epsilon,\delta,n}$ balances these two objects. Its likelihood at $x$ is, by Proposition~\ref{prop:entropic-ot-dual}, proportional to the exponentiated difference between the average effect $\mathrm{T}_n\phi(x)$ and the minimal cost $c_{\delta,n}(x)$, so that the optimal coupling concentrates on points where the average effect is close to the smallest value the cost can attain. Since the entropic problem \eqref{eq:eot_value} minimizes expected cost subject to the marginal constraint, this is exactly the behavior one would expect of an optimal coupling. It internalizes the trade-off between conforming to the observed marginals and favoring points where the (unobserved) worst-case effect is small. Undoing the identification of $\X^n$ with $\pi_n(\mathbb{H})$, the same Gibbs structure describes the optimal path distribution $\mu^\ast_{\epsilon,\delta,n}=\tilde{\pi}_n^{-1}\#\nu^\ast_{\epsilon,\delta,n}$ relative to the reference measure $R_n$, closing the loop back to the endogenous-path formulation of Section~\ref{sec:discretization}.

Together with Propositions~\ref{prop:discretized-ot-pi} and~\ref{prop:support-penalization}, Proposition~\ref{prop:duality} shows that, for any desired accuracy, one can choose $n$ large and $\delta,\epsilon$ small enough that $u_{\epsilon,\delta,n}(\mathbf{P},\lambda)$ lies within any prescribed tolerance of the sharp bound $v(\mathbf{P},\lambda)$, while remaining, for every fixed $(\epsilon,\delta,n)$, an unconstrained, strictly convex problem solvable by the Sinkhorn iteration of Proposition~\ref{prop:entropic-ot-dual}. Section~\ref{sec:x_discretization} closes the remaining gap before estimation, discretizing the endogenous space itself so that the problem becomes a genuinely finite, computable object when solved with data.

\subsection{Discretizing the endogenous space}\label{sec:x_discretization}

The entropic problem of Section~\ref{sec:entropic_perturbation} resolves the multiplicity of optimal couplings, but a difficulty remains. Fixing $(\epsilon,\delta,n)$ does not, by itself, turn \eqref{eq:eot_value} into a finite-dimensional problem. The marginals $(P_{k,n})_{k\in[n]}$ entering its dual are, in general, continuous distributions with possibly uncountable support on $\X$, and so, correspondingly, are the dual potentials $\phi=(\phi_1,\ldots,\phi_n)$ of Proposition~\ref{prop:entropic-ot-dual}. This happens as the first-order condition \eqref{eq:foc} pins down the value of each $\phi_k$ at every point of the support of $P_{k,n}$, so that solving for $\phi$ means solving for $n$ functions on the support of the respective marginal. Even after four steps of regularization, the entropic problem therefore remains, formally, an infinite-dimensional root-finding problem.

A second, related difficulty appears once the marginals $(P_{k,n})_{k\in[n]}$ are themselves unknown and must be estimated. The plug-in estimator of the value in Section~\ref{sec:entropic_perturbation} replaces each $P_{k,n}$ by the empirical distribution of the endogenous observations falling in cell $A_{k,n}$. Since no information is aggregated along the endogenous dimension, the resulting first-order condition can carry as many unknowns as the sample has observations. Fitting $N$ parameters from a sample of size $N$ is a textbook recipe for overfitting, the estimated potentials would interpolate the sample exactly, at the cost of a noisy, high-variance approximation to the population potentials they target. Discretizing the endogenous space, and grouping both probability mass and observations into the same, finitely many, cells, addresses both difficulties at once. At one hand, it turns \eqref{eq:eot_value} into a genuinely finite-dimensional problem. At another, it aggregates observations before they ever enter the plug-in first-order condition, yielding a more efficient, tractable, and less noisy estimator. This is the purpose of the present section.

To accomplish this, we proceed as Section~\ref{sec:discretization} did for the exogenous variable. For each $m\in\N$, partition $\X$ into finitely many cells, $\mathcal{B}_m=(B_{j,m})_{j\in[m]}$, and select a representative point $x_{j,m}\in B_{j,m}$ within each. We take the partitions $(\mathcal{B}_m)_{m\in\N}$ to be increasing -- so that every cell of $\mathcal{B}_{m+1}$ is a union of cells of $\mathcal{B}_m$ -- and require their mesh size, $\rho_m:=\max_{j\in[m]}\mathrm{diam}(B_{j,m})$, to shrink as the number of cells grows, vanishing in the limit. Write $\X_m:=\{x_{1,m},\ldots,x_{m,m}\}$ for the resulting grid of representative points.

For each $k\in[n]$, define the discretized marginal $P_{k,n,m}\in\P(\X)$ by collapsing the mass $P_{k,n}$ assigns to each cell onto its representative point,
\[P_{k,n,m} := \sum_{j\in[m]} P_{k,n}(B_{j,m})\,\delta_{x_{j,m}}.\]
Each $P_{k,n,m}$ is finitely supported on $\X_m$, on at most $m$ points, and approximates $P_{k,n}$ more closely as the number of cells grows, converging weakly to it once the mesh $\rho_m$ vanishes. 

We collect these discretized marginals into the finitely supported reference measure $r_{n,m}:=\bigotimes_{k\in[n]}P_{k,n,m}\in\P(\X^n)$. The feasible set of couplings with marginals $(P_{k,n,m})_{k\in[n]}$ is
\[\Pi_{n,m}(\mathbf{P},\lambda) := \{\nu\in\P(\X^n): \nu_k=P_{k,n,m},\ k\in[n]\},\]
and it can be seen as the discretized counterpart of $\Pi_n(\mathbf{P},\lambda)$. Every coupling $\nu\in\Pi_{n,m}(\mathbf{P},\lambda)$ is supported on the finite grid $\X_m^n\subseteq\X^n$, and can therefore be represented exactly as an $n$-way tensor of size $m\times\cdots\times m$. Figure~\ref{fig:x-discretization} illustrates the resulting construction. 
\begin{figure}[H]
\centering
\begin{subfigure}[b]{0.315\textwidth}
\centering
\begin{tikzpicture}[x=2.5cm,y=2.5cm]
  \draw[thick] (0,0) rectangle (1,1);
  \draw[thick] (0.22,0) -- (0.22,1);
  \draw[thick] (0.48,0) -- (0.48,1);
  \draw[thick] (0.75,0) -- (0.75,1);
  \draw[thick] (0,0.32) -- (1,0.32);
  \draw[thick] (0,0.68) -- (1,0.68);
  \foreach \x in {0.10,0.36,0.60,0.90}{
    \draw[dotted] (\x,0) -- (\x,1);
    \fill[black] (\x,0) circle (1.1pt);
  }
  \foreach \y in {0.14,0.50,0.84}{
    \draw[dotted] (0,\y) -- (1,\y);
    \fill[black] (0,\y) circle (1.1pt);
  }
  \foreach \x in {0.10,0.36,0.60,0.90}{
    \foreach \y in {0.14,0.50,0.84}{
      \fill[red] (\x,\y) circle (1.1pt);
    }
  }
  \node[below] at (0.5,-0.09) {\scriptsize $\Z$};
  \node[left] at (-0.10,0.5) {\scriptsize $\X$};
  \node[below] at (0.90,-0.03) {\scriptsize $z_{4,n}$};
  \node[left] at (-0.02,0.84) {\scriptsize $x_{3,m}$};
\end{tikzpicture}
\caption{The grid $\X_m^n$}
\end{subfigure}
\hfill
\begin{subfigure}[b]{0.315\textwidth}
\centering
\begin{tikzpicture}[x=2.5cm,y=2.5cm]
  \draw[thick] (0,0) rectangle (1,1);
  \draw[blue,thick] (0,0.20) -- (0.05,0.10) -- (0.10,0.25) -- (0.15,0.18) -- (0.20,0.30)
                     -- (0.24,0.55) -- (0.29,0.75) -- (0.34,0.90) -- (0.39,0.80) -- (0.44,0.85)
                     -- (0.50,0.60) -- (0.57,0.40) -- (0.63,0.55) -- (0.69,0.45) -- (0.74,0.50)
                     -- (0.78,0.65) -- (0.84,0.80) -- (0.90,0.90) -- (0.95,0.78) -- (1.00,0.85);
  \node[below] at (0.5,-0.09) {\scriptsize $\Z$};
  \node[left] at (-0.10,0.5) {\scriptsize $\X$};
\end{tikzpicture}
\caption{An endogenous path $e$}
\end{subfigure}
\hfill
\begin{subfigure}[b]{0.315\textwidth}
\centering
\begin{tikzpicture}[x=2.5cm,y=2.5cm]
  \draw[thick] (0,0) rectangle (1,1);
  \draw[thick] (0.22,0) -- (0.22,1);
  \draw[thick] (0.48,0) -- (0.48,1);
  \draw[thick] (0.75,0) -- (0.75,1);
  \draw[thick] (0,0.32) -- (1,0.32);
  \draw[thick] (0,0.68) -- (1,0.68);
  \draw[blue!35,thick] (0,0.20) -- (0.05,0.10) -- (0.10,0.25) -- (0.15,0.18) -- (0.20,0.30)
                     -- (0.24,0.55) -- (0.29,0.75) -- (0.34,0.90) -- (0.39,0.80) -- (0.44,0.85)
                     -- (0.50,0.60) -- (0.57,0.40) -- (0.63,0.55) -- (0.69,0.45) -- (0.74,0.50)
                     -- (0.78,0.65) -- (0.84,0.80) -- (0.90,0.90) -- (0.95,0.78) -- (1.00,0.85);
  \foreach \x/\y in {0.10/0.14,0.36/0.84,0.60/0.50,0.90/0.84}{
    \fill[blue] (\x,\y) circle (1.4pt);
  }
  \node[below] at (0.5,-0.09) {\scriptsize $\Z$};
  \node[left] at (-0.10,0.5) {\scriptsize $\X$};
\end{tikzpicture}
\caption{$e$ mapped onto $\X_m^n$}
\end{subfigure}
\caption{Discretizing the endogenous space, jointly with the exogenous partition of Section~\ref{sec:discretization} (both shown here as scalar, for clarity). Panel (a): the square $\Z\times\X$ is partitioned into $n=4$ exogenous and $m=3$ endogenous cells, with representative points $z_{k,n}$ and $x_{j,m}$ (black) on each axis and the induced grid of $n\cdot m$ representative pairs (red). Panel (b): a generic endogenous path $e$, unconstrained by the grid. Panel (c): the same path, once information is represented by the single grid points of $\X_m^n$ closest to its behavior there (blue), turning $e$ into one of the $m^n$ elements of $\X_m^n$.}
\label{fig:x-discretization}
\end{figure}

Pairing this feasible set with the cost $c_{\delta,n}$ of Section~\ref{sec:moment_penalization}, the resulting entropic optimal transport problem is
\begin{equation}\label{eq:eot_value_discretized}
u_{\epsilon,\delta,n,m}(\mathbf{P},\lambda) := \inf_{\nu\in\Pi_{n,m}(\mathbf{P},\lambda)} E_\nu[c_{\delta,n}] + \epsilon\,H(\nu\,|\,r_{n,m}).
\end{equation}

As the number of endogenous cells grows, we expect the value of \eqref{eq:eot_value_discretized} to converge to the full-support value $u_{\epsilon,\delta,n}(\mathbf{P},\lambda)$. Because $c_{\delta,n}$ is Lipschitz, this convergence can in fact be quantified, at a rate controlled by the mesh $\rho_m$ and the Lipschitz constant $\delta^{-1}$ of the cost function -- a result that follows from the same lines of \citet{eckstein2023quantitative}. Moreover, since $r_{n,m}$ is supported on the finite grid $\X_m^n$, the first-order condition of the dual of \eqref{eq:eot_value_discretized} reduces to a system of at most $n\cdot m$ equations in as many unknowns, one for the value of each of the $n$ dual potentials at each of the $m$ points of $\X_m$. In this case, the vector of dual potentials solves the (non-linear) system
\begin{equation}\label{eq:eot_fullydiscretized_foc}
\phi_k(x_{\ell,m})=-\frac{\epsilon}{\lambda_{k,n}}\log\!\left(E_{r_{-k,n,m}}\!\left[\exp\!\left(\frac{\mathrm{T}_{k,n}\phi_{-k}-c_{\delta,n}(x_{\ell,m},\cdot)}{\epsilon}\right)\right]\right),\qquad\text{ if }P(A_{k,n}\times B_{\ell,m})>0.
\end{equation}
When $P(A_{k,n}\times B_{\ell,m})=0$, the value of the dual potential $\phi_k(x_{\ell,m})$ is undetermined, as it is a point outside the support of the marginal, and irrelevant to the dual objective. Thus, it can be set to any convenient value, for instance zero. The value of \eqref{eq:eot_value_discretized} and the log-density of the optimal coupling $\nu^\ast_{\epsilon,\delta,n,m}\in\Pi_{n,m}(\mathbf{P},\lambda)$ are then recovered from the solution of this system by the discretized analogs of the formulas in Proposition~\ref{prop:entropic-ot-dual}:
\begin{equation}\label{eq:eot_fullydiscretized_value}
u_{\epsilon,\delta,n,m}(\mathbf{P},\lambda) =E_{r_{n,m}}[\mathrm{T}_{n}\phi] = \sum_{k\in[n]} \sum_{\ell\in[m]} \phi_k(x_{\ell,m})\, P(A_{k,n}\times B_{\ell,m}),
\end{equation}
\begin{equation}\label{eq:eot_fullydiscretized_coupling}
  \log\!\left(\frac{\nu^\ast_{\epsilon,\delta,n,m}}{r_{n,m}}\right)\!(x_{\ell_1,m},\ldots,x_{\ell_n,m}) = \frac{\sum_{k\in[n]}\phi_k(x_{\ell_k,m})\lambda_{k,n}-c_{\delta,n}(x_{\ell_1,m},\ldots,x_{\ell_n,m})}{\epsilon}, \text{ for all } \ell_1,\ldots,\ell_n\in[m].
\end{equation}

These results, together with the rate at which \eqref{eq:eot_value_discretized} approximates \eqref{eq:eot_value}, are collected in the following proposition.

\begin{proposition}\label{prop:fully-discretized-ot}
Under Assumption~\ref{ass:PI} and \ref{ass:constraints}, the entropic optimal transport problem \eqref{eq:eot_value_discretized} is well-posed and strong duality holds, that is, its value equals that of the dual
\[u_{\epsilon,\delta,n,m}(\mathbf{P},\lambda)=\sup_{\phi\in C_b(\X)^n}E_{r_{n,m}}\!\left[\mathrm{T}_n\phi-\epsilon\left(e^{\frac{\mathrm{T}_n\phi-c_{\delta,n}}{\epsilon}}-1\right)\right],\]
attained by a maximizer $\phi=(\phi_1,\ldots,\phi_n)$ satisfying the first-order condition \eqref{eq:eot_fullydiscretized_foc}. The value of the program and the log-density of the optimal coupling $\nu^\ast_{\epsilon,\delta,n,m}\in\Pi_{n,m}(\mathbf{P},\lambda)$ relative to $r_{n,m}$ are recovered, respectively, from \eqref{eq:eot_fullydiscretized_value} and \eqref{eq:eot_fullydiscretized_coupling}. Moreover, for every fixed $(\epsilon,\delta,n)$,
\[|u_{\epsilon,\delta,n,m}(\mathbf{P},\lambda)-u_{\epsilon,\delta,n}(\mathbf{P},\lambda)|\leq \frac{\rho_m}{\delta},\]
so that $u_{\epsilon,\delta,n,m}(\mathbf{P},\lambda)\to u_{\epsilon,\delta,n}(\mathbf{P},\lambda)$ as the mesh $\rho_m$ of the partition $\mathcal{B}_m$ vanishes. If, in addition, $\frac{\rho_m}{\delta}\to0$, $\frac{\epsilon}{\delta}\to0$, and $n\to\infty$ jointly, then
\[u_{\epsilon,\delta,n,m}(\mathbf{P},\lambda)\to v(\mathbf{P},\lambda).\]
\end{proposition}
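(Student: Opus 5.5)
The plan has three parts, taken in order: well-posedness and duality on the finite grid, the stability bound in the marginals, and the joint limit.

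\textbf{Duality on the grid.} Once the marginals $P_{k,n,m}$ are finitely supported, \eqref{eq:eot_value_discretized} is a strictly convex problem over the compact simplex of nonnegative tensors on $\X_m^n$ with prescribed marginals.
\begin{itemize}
\item The feasible set is non-empty, since $r_{n,m}\in\Pi_{n,m}(\mathbf{P},\lambda)$.
\item The cost $c_{\delta,n}$ is bounded and Lipschitz (Lemma~\ref{lemma:c-is-cont}), and the entropy term is lower semicontinuous and strictly convex.
\item Hence a unique minimizer exists.
\end{itemize}
For strong duality I would repeat the argument of Propositions~\ref{prop:duality} and~\ref{prop:entropic-ot-dual} with $P_{k,n}$ replaced by $P_{k,n,m}$. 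Those proofs only use that the marginals are Borel probability measures on the compact set $\X$ and that the cost is bounded and continuous, so they apply verbatim. Alternatively, in finite dimensions, one can invoke Lagrangian duality with Slater's condition: $r_{n,m}$ has full support on $\prod_k\mathrm{supp}(P_{k,n,m})$, so it lies in the relative interior.

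The conclusions then follow directly:
\begin{itemize}
\item Attainment of a dual maximizer, obtained from the a priori bounds on the potentials in Proposition~\ref{prop:entropic-ot-dual}, which make the maximization effectively over a compact subset of $\R^{n\times m}$.
\item The first-order condition \eqref{eq:eot_fullydiscretized_foc}, obtained by differentiating the dual objective in $\phi_k(x_{\ell,m})$. This derivative is nontrivial only where $P_{k,n,m}(\{x_{\ell,m}\})=P(A_{k,n}\times B_{\ell,m})/\lambda_{k,n}>0$.
\item The value formula \eqref{eq:eot_fullydiscretized_value} and the Gibbs density \eqref{eq:eot_fullydiscretized_coupling}. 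At the optimum the exponential term integrates to one, so the dual objective reduces to $E_{r_{n,m}}[\mathrm{T}_n\phi]$.
\end{itemize}

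\textbf{The stability bound in $m$.} Let $q_m:\X\to\X_m$ send each $x\in B_{j,m}$ to $x_{j,m}$, and let $Q_m=q_m^{\otimes n}$. Then $P_{k,n,m}=q_{m\#}P_{k,n}$ and $r_{n,m}=Q_{m\#}r_n$.

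For the upper bound, take the optimizer $\nu^\ast$ of \eqref{eq:eot_value} and push it forward, $\nu_m=Q_{m\#}\nu^\ast\in\Pi_{n,m}$.
\begin{itemize}
\item By the data-processing inequality, $H(\nu_m|r_{n,m})\le H(\nu^\ast|r_n)$.
\item The cost changes by at most the Lipschitz constant times the displacement. Each coordinate moves by at most $\rho_m$, and the path-space norm weights coordinate $k$ by $\lambda_{k,n}$, which sum to one. So the displacement in $\|\cdot\|_{\mathbb{H}}$ is at most $\rho_m$, giving $|c_{\delta,n}(x)-c_{\delta,n}(Q_m x)|\le\rho_m/\delta$.
\end{itemize}
This yields $u_{\epsilon,\delta,n,m}\le u_{\epsilon,\delta,n}+\rho_m/\delta$.

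For the reverse inequality I would lift a discrete coupling $\nu\in\Pi_{n,m}$ back to $\X^n$ by the block construction
\[\tilde\nu(dx)=\sum_{\ell}\nu(\{x_\ell\})\bigotimes_k P_{k,n}(dx_k\mid B_{\ell_k,m}),\]
where $x_\ell=(x_{\ell_1,m},\ldots,x_{\ell_n,m})$ ranges over the grid.
\begin{itemize}
\item $\tilde\nu$ has marginals $P_{k,n}$, because summing over the other indices returns $\sum_{\ell_k} P_{k,n}(B_{\ell_k,m})P_{k,n}(\cdot\mid B_{\ell_k,m})=P_{k,n}$.
\item Its density relative to $r_n$ equals that of $\nu$ relative to $r_{n,m}$, evaluated at the cell, so $H(\tilde\nu|r_n)=H(\nu|r_{n,m})$.
\item The cost again changes by at most $\rho_m/\delta$.
\end{itemize}
This gives the other inequality.

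\textbf{The joint limit.} Split the error as
\[|u_{\epsilon,\delta,n,m}-v|\le \rho_m/\delta+|u_{\epsilon,\delta,n}-u_{\delta,n}|+|u_{\delta,n}-u_n|+|u_n-v|.\]
\begin{itemize}
\item The second term is $O(\epsilon/\delta+\epsilon|\log\epsilon|)$ by the Carlier-type bound quoted before Proposition~\ref{prop:duality}.
\item The third and fourth terms vanish by Propositions~\ref{prop:support-penalization} and~\ref{prop:discretized-ot-pi}, together with Theorem~\ref{thm:strong-duality}.
\end{itemize}

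\textbf{Main obstacle.} The hard step is the joint limit. Proposition~\ref{prop:support-penalization} gives $u_{\delta,n}\to u$ only along joint limits, and the $\delta\downarrow0$ convergence at fixed $n$ is not uniform in $n$. So I would argue along subsequences, choosing $\delta$ small relative to $n$ exactly as in the proof of Proposition~\ref{prop:duality}, and reuse that diagonal argument unchanged. The first two parts are routine; the only care needed there is checking the norm weighting that yields the factor $\rho_m$ rather than $\sqrt n\,\rho_m$.
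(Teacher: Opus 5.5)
Your proof is correct. For well-posedness, duality and the joint limit it follows the paper: you apply the same multi-marginal entropic duality to the finitely supported marginals $P_{k,n,m}$, then chain the $m$-bound with the earlier limit results.

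Where you genuinely differ is the bound $|u_{\epsilon,\delta,n,m}-u_{\epsilon,\delta,n}|\le\rho_m/\delta$. The paper gets it in one line from the general stability estimate of Lemma~\ref{lemma:stability-eot} (the shadow-coupling construction of Lemma~\ref{lemma:coupling-approximation}), combined with $W_1(P_{k,n},P_{k,n,m})\le\rho_m$. You instead build both competitor couplings by hand:
\begin{itemize}
\item you push the continuous optimizer forward through the quantizer $Q_m$, with the entropy controlled by data processing;
\item you lift the discrete optimizer back through the cell-wise conditional laws, with the entropy preserved exactly because $d\tilde\nu/dr_n=(d\nu/dr_{n,m})\circ Q_m$.
\end{itemize}
This is in effect the shadow construction run with the explicit quantization coupling in each direction. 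It needs no optimal $W_1$ couplings and no general lemma, and it shows directly why the $\lambda_{k,n}$-weighting of $|\cdot|_{\X^n}$ gives $\rho_m$ rather than a factor growing with $n$. The paper's route costs nothing extra, because Lemma~\ref{lemma:stability-eot} is needed anyway for the sampling-error step of Theorem~\ref{thm:consistency}. There the perturbed marginals are empirical and have no quantization structure to exploit.

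Your closing ``main obstacle'' is a misdiagnosis, although it does not break the argument. Proposition~\ref{prop:support-penalization} asserts $u_{\delta,n}\to u$ along every sequence with $\delta\downarrow0$ and $n\uparrow\infty$, with no rate linking the two. Its proof gets this from weak compactness of the penalized latent laws $Q_{\delta,n}$ together with $W_1(T_n^\ast(g_\#Q_{\delta,n}),P_n)\le\delta$, not from uniformity in $n$.
\begin{itemize}
\item Your third and fourth terms therefore vanish directly, since $|u_{\delta,n}-u_n|\le|u_{\delta,n}-u|+|u-u_n|$.
\item The proof of Proposition~\ref{prop:duality} contains no diagonal argument to reuse; it simply cites that joint statement.
\item Choosing $\delta$ small relative to $n$, as you propose, would prove less than the proposition, which claims convergence along every admissible joint sequence.
\end{itemize}
Finally, like the paper's proof, your decomposition tacitly requires $\delta\downarrow0$. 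The statement omits this condition, it does not follow from $\epsilon/\delta\to0$, and it is needed to invoke Proposition~\ref{prop:support-penalization} at all.
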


\begin{proof}
    See Appendix~\ref{app:proof_fully_discretized_ot}.
\end{proof}

Proposition~\ref{prop:fully-discretized-ot} reduces the discretized entropic optimal transport problem to a genuinely \textbf{finite task}: solving the system \eqref{eq:eot_fullydiscretized_foc} of at most $n\cdot m$ equations in as many unknowns for the potentials $\phi$, from which both the value $u_{\epsilon,\delta,n,m}(\mathbf{P},\lambda)$ and the optimal coupling $\nu^\ast_{\epsilon,\delta,n,m}$ follow in closed form via \eqref{eq:eot_fullydiscretized_value} and \eqref{eq:eot_fullydiscretized_coupling}. This finite-dimensional character is more than an incidental simplification, but a genuine feature of the discretized problem, and the remainder of the paper -- in particular, the estimator developed in Section~\ref{sec:estimation} -- is built directly on top of it. We conclude this section making this structure explicit, recasting problem \eqref{eq:eot_value_discretized} as a maximization over a matrix of dual potentials, relating its first-order optimality condition to a system of non-linear equations and showing how to approximately solve it using the Sinkhorn algorithm.

First, note that the discretized marginal profile $(P_{k,n,m})_{k\in[n]}$ and the cell weights $(\lambda_{k,n})_{k\in[n]}$ are together described by a pair $(p,\lambda)$, where $p$ is an $n\times m$ matrix in $(\Delta_m)^n$, i.e., each row lies in the $m$-simplex $\Delta_m$, and a vector of weights $\lambda\in\Delta_n$. Each entry $p_{k,\ell}$ collects the conditional probability that the endogenous variable falls in the cell $B_{\ell,m}$ given that the instrument lies in $A_{k,n}$, $p_{k,\ell} = P_{k,n,m}(x_{\ell,m}) = P_{k,n}(B_{\ell,m})$,
while each component of the vector $\lambda_{k,n}=\lambda(A_{k,n})$ collects the probability that the instrument itself lies in $A_{k,n}$.

Second, since expectation in \eqref{eq:eot_value_discretized} involves $\phi_k$ only at the $m$ points of $\X_m$, every dual potential can likewise be identified with a matrix $\phi_{k,\ell}:=\phi_k(x_{\ell,m})$ specifying the value of each potential at each grid point. In this case, the vector of continuous functions $\phi=(\phi_1,\ldots,\phi_n)$, for the purpose of this program, is equivalent to a matrix $\phi\in\R^{n\times m}$.

Finally, the cost function $c_{\delta,n}$ enters problem \eqref{eq:eot_value_discretized} only through its values on the $m^n$ points of the grid $\X_m^n$. This means that the exponentiated-cost can be described as a tensor $C\in\R_+^{[m]^n}$, with entry
\[C(\ell)=\exp\!\left(-\frac{c_{\delta,n}(x_{\ell_1,m},\ldots,x_{\ell_n,m})}{\epsilon}\right)\]
for every array of indices $\ell=(\ell_1,\ldots,\ell_n)\in[m]^n$.

With this notation, maximizing the dual objective of Proposition~\ref{prop:fully-discretized-ot} over continuous functions is the same as maximizing over $n\times m$ real-valued matrices the function $\Phi:\R^{n\times m}\times(\Delta_m)^n\times\Delta_n\to\R$,
\[\Phi(\phi,p,\lambda)=\sum_{\ell\in[m]^n}\left(\lambda^T\phi_{\cdot,\ell}-\epsilon\left(C(\ell)\,e^{\frac{\lambda^T\phi_{\cdot,\ell}}{\epsilon}}-1\right)\right)\prod_{k\in[n]}p_{k,\ell_k},\]
where $\phi_{\cdot,\ell}=(\phi_{1,\ell_1},\ldots,\phi_{n,\ell_n})^T$ contains the value of every potential at the coordinates of $\ell$ (i.e. $u_{\epsilon,\delta,n,m}(\mathbf{P},\lambda)=\max_{\phi\in\R^{n\times m}}\Phi(\phi,p,\lambda))$.\footnote{We refer the reader to Appendix~\ref{subsec:eot_properties} for a detailed study of $\Phi$ and the value of this program as $p,\lambda$ varies. Here we only record the facts needed for computation.}

Beyond transforming the program into a finite-dimensional optimization, this description also allows us to pose the first-order optimality condition \eqref{eq:eot_fullydiscretized_foc} in matrix notation. It is this what makes the first-order system amenable to the Sinkhorn Algorithm and solvable numerically. Indeed, note that the first-order condition \eqref{eq:eot_fullydiscretized_foc} reads, for every \emph{active} cell $(k,\ell)$, that is,  the cells whose probability is $p_{k,\ell}>0$,
\begin{equation}\label{eq:foc-matrix}
\phi_{k,\ell}=\alpha_{k,\ell}(\phi_{-k},p,\lambda):=-\frac{\epsilon}{\lambda_{k,n}}\log\left(\sum_{\ell_{-k}\in[m]^{n-1}}\,C(\ell_k,\ell_{-k})\,e^{\frac{\lambda_{-k}^T\phi_{-k,\ell_{-k}}}{\epsilon}}\Big(\prod_{j\neq k}p_{j,\ell_j}\Big)\right),
\end{equation}
where $\ell_{-k}$ and $\phi_{-k,\ell_{-k}}$ are used to refer to the indices and potential values of the coordinates other than $k$. Equation~\eqref{eq:foc-matrix} can be understood as a search for a matrix that is a fixed point of the mapping $\alpha$. As for \emph{inactive} cells, $p_{k,\ell}=0$, the entry $\phi_{k,\ell}$ enters neither \eqref{eq:foc-matrix} nor the value, and is left undetermined. Thus, we can arbitrarily adopt the harmless convention of setting it to zero.

This convention for inactive entries is not, by itself, enough to pin down a unique solution of \eqref{eq:foc-matrix}. This happens as, similarly to any Kantorovich dual, potentials are generally identified only up to additive constants that cancel out when averaged against the $\lambda$ weight, $\lambda^T\phi_{\cdot,\ell}$. Following the multi-marginal entropic optimal transport literature, we remove this indeterminacy by normalizing each of the first $n-1$ rows of $\phi$. Under this normalization, each of these rows should have zero mean under its own conditional distribution, $p_{k,\cdot}^T\phi_{k,\cdot}=0$ for $k\in[n-1]$. The $n$-th row is left free, absorbing the one remaining degrees of freedom. Together with the convention on inactive cells, this singles out the subspace
\[\mathcal{K}_p=\left\{\phi\in\R^{n\times m}: p_{k,\cdot}^T\phi_{k,\cdot}=0 \text{ for all }k\in[n-1], \text{ and }\phi_{k,\ell}=0 \text{ whenever }p_{k,\ell}=0\right\}.\]

Subspace $\mathcal{K}_p$ is more than a convenient choice of normalization, but the natural domain of the mapping $\alpha$ in \eqref{eq:foc-matrix} for which we can combine the geometric normalization and iterations of the map $\alpha$ to approach the unique solution of the program lying in this set. This idea is the foundation of the \textbf{Sinkhorn algorithm}, where block coordinate descent is combined with normalization. Concretely, starting from an initial $t=0$ point in $\mathcal{K}_p$, say $\phi^0=0\in \mathcal{K}_p$, the $t+1$-th iteration updates, for all active pair $(k,\ell)$,
\[\phi^{t+1}_{k,\ell}=\begin{cases}
\alpha_{k,\ell}(\phi^{t+1}_{1},\dots,\phi^{t+1}_{k-1},\phi^{t}_{k},\dots,\phi^{t}_n,p,\lambda)-p_{k,\cdot}^T\alpha_{k,\cdot}(\phi^{t+1}_{1},\dots,\phi^{t+1}_{k-1},\phi^{t}_{k},\dots,\phi^{t}_n,p,\lambda), & k<n,\\[4pt]
\alpha_{k,\ell}(\phi^{t+1}_{1},\dots,\phi^{t+1}_{n-1},\phi^{t}_{n},p,\lambda), & k=n,
\end{cases}\]
leaving every inactive entry at zero, where the right-hand side uses the already-updated rows $1,\ldots,k-1$ together with the previous iteration's rows $k,\ldots,n$ (a block-coordinate, or Gauss--Seidel, update). By construction $\phi^t\in \mathcal{K}_p$ for every $t$, as the recentering subtracted from the first $n-1$ rows enforces the normalization exactly, and the last row is left unnormalized, matching the definition of $\mathcal{K}_p$.

Each Sinkhorn iterate maximizes $\Phi(\cdot,p,\lambda)$ along one row of $\phi$, holding the rest fixed, so the sequence $\Phi(\phi^t,p,\lambda)$ increases monotonically toward the value of the program, $u_{\epsilon,\delta,n,m}(\mathbf{P},\lambda)$ -- and, in fact, does so at a geometric rate. If we cap the number of iterations at some value $I$, we obtain an approximate solution $\phi^{I}\in\mathcal{K}_p$ that is within a certain tolerance of the true solution. Associated to it, we can define a proxy for the value of the program, denoted by
\[u_{\epsilon,\delta,n,m,I}(\mathbf{P},\lambda)=\Phi(\phi^{I},p,\lambda),\]
evaluating the dual objective at the $I$-th iterate.
Translating the linear-convergence result of \citet{carlier2022linear} for the multi-marginal Sinkhorn algorithm into this problem's notation gives the following proposition.

\begin{proposition}\label{prop:sinkhorn-convergence}
Under Assumption~\ref{ass:PI} and \ref{ass:constraints}, let $(\phi^t)_{t\in[I]}$ be the sequence of Sinkhorn iterates defined above, started at $\phi^0=0\in \mathcal{K}_p$. If $\epsilon<\max\{\mathrm{diam}(\X),1\}$ and $I>0$, then,
\[|u_{\epsilon,\delta,n,m}(\mathbf{P},\lambda)-u_{\epsilon,\delta,n,m,I}(\mathbf{P},\lambda)| \ \leq\ \exp\left(-Ie^{-\frac{16n}{\delta \epsilon}\mathrm{diam}(\X)}\right).\]
\end{proposition}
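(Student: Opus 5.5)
The plan is to reduce the statement to the linear-convergence theorem of \citet{carlier2022linear} for multi-marginal Sinkhorn. To do so, I will rewrite the fully discretized problem \eqref{eq:eot_value_discretized} in their normalized form and then track how the oscillation of the cost enters their contraction constant.

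\textbf{Step 1: change variables to the standard form.} Set $\psi_k:=\lambda_{k,n}\phi_k$. Then $\mathrm{T}_n\phi=\sum_k\psi_k(x_k)$, and $\Phi(\phi,p,\lambda)$ becomes the standard multi-marginal entropic dual
\[
D(\psi)=\sum_k E_{P_{k,n,m}}[\psi_k]-\epsilon\Big(E_{r_{n,m}}\big[e^{(\oplus\psi-c_{\delta,n})/\epsilon}\big]-1\Big).
\]
In these variables the update \eqref{eq:foc-matrix}, followed by recentering, is exactly the block-coordinate (Gauss--Seidel) Sinkhorn step. The normalization $\mathcal{K}_p$ is the usual gauge fixing, so $D(\psi^t)=\Phi(\phi^t,p,\lambda)$. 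By Proposition~\ref{prop:fully-discretized-ot} a maximizer exists, and $u_{\epsilon,\delta,n,m}=\max D$.

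\textbf{Step 2: uniform oscillation bounds.} Lemma~\ref{lemma:c-is-cont} gives that $c_{\delta,n}$ is bounded and $\delta^{-1}$-Lipschitz on $\X^n$ in the product norm. Hence its oscillation is
\[
\mathrm{osc}(c_{\delta,n})\le \delta^{-1}\mathrm{diam}(\X^n)\le \delta^{-1}\sqrt n\,\mathrm{diam}(\X)\le n\delta^{-1}\mathrm{diam}(\X).
\]
Each Sinkhorn half-step is a soft-min of $c_{\delta,n}$ minus the other potentials. Therefore every iterate $\psi^t_k$ after the first sweep has oscillation at most $\mathrm{osc}(c_{\delta,n})$, and so does the optimizer (compare Proposition~\ref{prop:entropic-ot-dual}). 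This holds regardless of $p$; since the measures are finitely supported, no regularity of $P_{k,n,m}$ is needed.

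\textbf{Step 3: invoke the contraction.} \citet{carlier2022linear} show that the suboptimality gap satisfies
\[
\max D-D(\psi^{t+1})\le (1-\kappa)\big(\max D-D(\psi^t)\big),
\qquad
\kappa\gtrsim e^{-c\,\mathrm{osc}(c_{\delta,n})/\epsilon},
\]
with an absolute numerical constant in the exponent, which I expect to be at most $16$ after accounting for the $n$ blocks. From this,
\[
\max D-D(\psi^I)\le (1-\kappa)^I\big(\max D-D(\psi^0)\big)\le \exp(-I\kappa)\big(\max D-D(\psi^0)\big).
\]
Step 2 gives $\kappa\ge \exp(-16n\,\mathrm{diam}(\X)/(\delta\epsilon))$. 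It remains to bound the initial gap $\max D-D(0)$ by $1$. Note that $D(0)=-\epsilon(E_{r_{n,m}}[e^{-c/\epsilon}]-1)\ge 0$, since $0\le e^{-c/\epsilon}\le 1$. Also $\max D\le$ the primal value at the independent coupling, which is $E_{r}[c_{\delta,n}]$. Since $c_{\delta,n}\le \sup\Lambda\le 1$ (take $e=\pi_n g(\omega)$ shifted appropriately, and $\Lambda\le1$), the gap is at most $1$; this is where the hypothesis $\epsilon<\max\{\mathrm{diam}(\X),1\}$ is intended to absorb constants. Monotonicity, $D(\psi^t)\uparrow$, gives the absolute value.

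\textbf{Main obstacle.} The main difficulty is Step 3, specifically extracting the explicit constant $16n/(\delta\epsilon)$ from Carlier's contraction factor. Their rate depends on $\|c\|_{\mathrm{osc}}$ and on the number of marginals, and in rewriting it one must check that the weights $\lambda_{k,n}$ do not enter. They do not: the weights are absorbed into $\psi_k$, and the oscillation bounds in Step 2 are stated for $\psi$, not $\phi$. If the stated constant turns out not to fit exactly, I would first derive the contraction directly from the strong concavity of $D$ on $\mathcal{K}_p$ restricted to the $L^\infty$-ball from Step 2. On that ball the exponential term has curvature at least $\epsilon^{-1}e^{-2\,\mathrm{osc}/\epsilon}$, which gives a Polyak--Łojasiewicz inequality for block ascent.
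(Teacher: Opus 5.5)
Your reduction is the paper's. You rescale the potentials ($\psi_k=\lambda_{k,n}\phi_k$; the paper also divides by $\epsilon$, so that $\Phi=\epsilon(1-F)$ with Carlier's $F$). You identify the recentred Gauss--Seidel updates with the block-coordinate iterates of \citet{carlier2022linear} and apply his Theorem~3.3. The trouble is in the two quantitative steps that are supposed to produce the displayed constant.

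First, your bound on the initial gap rests on $c_{\delta,n}\le\sup\Lambda\le 1$, which is false. The penalty $\delta^{-1}\|\pi_n(g(\omega))-\pi_n(e)\|_{\mathbb{H}}$ can be made to vanish only when $\pi_n(e)\in\pi_n(\mathbb{K})$. Off the projected achievable set, Lemma~\ref{lemma:regularized-cost} gives $\Psi_{\delta,n}(e)\ge\delta^{-1}\inf_{\omega}\|\pi_n(g(\omega)-e)\|_{\mathbb{H}}$, so $c_{\delta,n}$ is of order $\delta^{-1}$ there; that is the purpose of Section~\ref{sec:moment_penalization}. The independent coupling $r_{n,m}$ puts positive mass on such unachievable grid paths whenever the shape restrictions bind. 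Hence $E_{r_{n,m}}[c_{\delta,n}]$, and with it the initial gap, is only bounded by $\|c_{\delta,n}\|_\infty$, of order $1+\delta^{-1}\mathrm{diam}(\X)$ (Lemma~\ref{lemma:c-is-cont}). The paper bounds it the same way, via $\Phi(\phi^\ast,p,\lambda)\le\|c_{\delta,n}\|_\infty$. This factor multiplies $(1-\kappa)^I$, and the hypothesis $\epsilon<\max\{\mathrm{diam}(\X),1\}$ cannot absorb a multiplicative prefactor. Your argument therefore yields the exponential times a constant of order $\delta^{-1}\mathrm{diam}(\X)$, not the prefactor-free bound. The paper's last line does not spell out how this factor disappears either, so matching the statement exactly needs an extra argument, not the claim that the gap is at most $1$.

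Second, you guess the contraction factor instead of reading it off. As the paper uses it, Carlier's Theorem~3.3 gives, in your units, $\kappa=n^{-1}\exp\big(-(16n-8)\|c_{\delta,n}\|_\infty/\epsilon\big)$. The dependence on the number of marginals, both the factor $16n-8$ and the prefactor $1/n$, is part of the theorem, and the theorem is stated in terms of $\|c\|_\infty$, not the oscillation. Your factor $n$ in the exponent instead comes from $\mathrm{diam}(\X^n)\le n\,\mathrm{diam}(\X)$, which is not the paper's geometry. Under $|x-y|_{\X^n}=\sum_k\lambda_{k,n}|x_k-y_k|_{\X}$ one has $\mathrm{diam}(\X^n)=\mathrm{diam}(\X)$, hence $\mathrm{osc}(c_{\delta,n})\le\delta^{-1}\mathrm{diam}(\X)$. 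The two slips happen to offset and reproduce the stated exponent. With Carlier's actual constant, your own oscillation bound would give an exponent of order $n^2\,\mathrm{diam}(\X)/(\delta\epsilon)$.

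Done correctly, the oscillation idea does improve on the paper's use of $\|c_{\delta,n}\|_\infty$. Replacing $c_{\delta,n}$ by $c_{\delta,n}-\min c_{\delta,n}$ only shifts the unnormalized last row of the iterates, and the objective, by constants. The suboptimality gaps therefore coincide from the first block update on. Carlier's theorem then gives $\kappa\ge n^{-1}e^{-(16n-8)\mathrm{diam}(\X)/(\delta\epsilon)}$, removing the additive $1$. You must state and verify this invariance, though. Even then, the $1/n$ prefactor yields $\kappa\ge e^{-16n\,\mathrm{diam}(\X)/(\delta\epsilon)}$ only when $8\,\mathrm{diam}(\X)/(\delta\epsilon)\ge\ln n$. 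Your Polyak--\L{}ojasiewicz fallback is only a sketch.
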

\begin{proof}
    See Appendix~\ref{app:proof_sinkhorn_convergence}.
\end{proof}

Proposition~\ref{prop:sinkhorn-convergence} guarantees that a number of iterations $I$, chosen large enough that the right-hand side above falls below any desired tolerance, delivers an iterate $\phi^I$ whose objective value $u_{\epsilon,\delta,n,m,I}(\mathbf{P},\lambda)$ is within that tolerance of $u_{\epsilon,\delta,n,m}(\mathbf{P},\lambda)$, without ever solving the non-linear system \eqref{eq:foc-matrix} exactly. This is the departing point for the estimator developed in Section~\ref{sec:estimation}, since we can replace the population pair $(p,\lambda)$ by its sample analog $(\hat{p}^N,\hat{\lambda}^N)$ and running the same iteration on the empirical objective $\Phi(\cdot,\hat{p}^N,\hat{\lambda}^N)$. The resulting estimator can then be shown to be consistent, by suitably controlling the error introduced by each approximation, as well as asymptotically normal, when these parameters are held fixed.

\section{Estimation and inference}\label{sec:estimation}
So far, our treatment of the lower bound $v(\mathbf{P},\lambda)$ has been posed at the population level, where the marginal profile $\mathbf{P}=(P_z)_{z\in\Z}$ and the instrument distribution $\lambda$ are taken as known. In practice, a researcher observes only a sample of $N$ draws from these and must construct empirical analogs from it. This section lifts the population-level results of Section~\ref{sec:EOT} to this empirical setting. We show how the lower bound on the partially identified average $v(\mathbf{P},\lambda)$ can be estimated consistently from the plug-in estimator of the entropic optimal transport bound $u_{\epsilon,\delta,n,m,I}(\mathbf{P},\lambda)$, provided the regularization parameters $(\epsilon,\delta,n,m,I)$ are suitably selected. When these parameters are held fixed, the resulting plug-in estimator obeys a standard $\sqrt{N}$ central limit theorem, with identifiable asymptotic variance.

\subsection{Data and Empirical Marginals}\label{sec:dgp}

We observe an i.i.d.\ sample $\{(Z_i,X_i)\}_{i\in[N]}$ of $N$ joint realizations of the instrument and the endogenous variable, valued in $\Z\times\X$, with joint law $P\in\P(\Z\times\X)$ as in Assumption~\ref{ass:PI} and instrument marginal $\lambda\in\P(\Z)$. Our goal in this section is to use this sample -- without imposing further structure, such as a parametric form, on either distribution -- to construct empirical estimates of the discretized conditional marginals $P_{k,n,m}$ and the cell weights $\lambda_{k,n}$ of Section~\ref{sec:x_discretization}. The resolution and regularization parameters $\beta=(\epsilon,\delta,n,m,I)$ are all held fixed throughout this section. Only in Section~\ref{sec:consistency} do we let them vary with the sample size $N$, in order to obtain a consistent estimator of the lower bound.

To estimate the instrument marginal, note that each cell weight $\lambda_{k,n}$ can be consistently estimated, without additional structure, by the share of the sample landing in the corresponding cell. For each $k\in[n]$, let $N_k$ denote the number of instrument realizations falling in $A_{k,n}$, $N_k = \sum_{i\in[N]} \mathbf{1}_{[Z_i\in A_{k,n}]}$,
and collect the resulting empirical cell probabilities in the vector $\hat{\lambda}^N\in\Delta_n$, with $\hat{\lambda}^N_k:=\frac{N_k}{N}$.

To estimate the discretized endogenous marginals, we need the conditional probability that the endogenous variable falls in cell $B_{\ell,m}$ given that the instrument falls in cell $A_{k,n}$, for every combination of cells. Again absent further structure on the conditional law, the natural estimate is the corresponding empirical frequency. Let $N_{k,\ell}$ denote the number of sample observations falling in the joint cell $A_{k,n}\times B_{\ell,m}$, $N_{k,\ell} = \sum_{i\in[N]} \mathbf{1}_{[(Z_i,X_i)\in A_{k,n}\times B_{\ell,m}]}$. The empirical conditional probability of $B_{\ell,m}$ given $A_{k,n}$ is the share of the observations in $A_{k,n}$ that also fall in $B_{\ell,m}$. We collect these estimates in the random matrix $\hat{p}^N\in(\Delta_m)^n$, with $\hat{p}^N_{k,\ell}:=\frac{N_{k,\ell}}{N_k}$.

Figure~\ref{fig:sample-cells} illustrates the resulting construction, and Appendix~\ref{subsec:sample_properties} collects the asymptotic properties of these estimates -- consistency, joint asymptotic normality at rate $\sqrt{N}$, and an $L^1$-error bound -- used in the estimation results below.

\begin{figure}[H]
\centering
\begin{subfigure}[b]{0.47\textwidth}
\centering
\begin{tikzpicture}[x=3.0cm,y=3.0cm]
  \draw[thick] (0,0) rectangle (1,1);
  \draw[thick] (0.22,0) -- (0.22,1);
  \draw[thick] (0.48,0) -- (0.48,1);
  \draw[thick] (0.75,0) -- (0.75,1);
  \draw[thick] (0,0.32) -- (1,0.32);
  \draw[thick] (0,0.68) -- (1,0.68);
  \foreach \p in {(0.04,0.38),(0.10,0.45),(0.06,0.55),(0.15,0.42),(0.09,0.62)}
    \fill[blue] \p circle (1.1pt);
  \foreach \p in {(0.05,0.72),(0.12,0.80),(0.07,0.90),(0.16,0.95)}
    \fill[blue] \p circle (1.1pt);
  \foreach \p in {(0.27,0.05),(0.33,0.12),(0.29,0.20),(0.40,0.08),(0.36,0.25)}
    \fill[blue] \p circle (1.1pt);
  \foreach \p in {(0.28,0.38),(0.35,0.48),(0.30,0.58),(0.42,0.44)}
    \fill[blue] \p circle (1.1pt);
  \foreach \p in {(0.27,0.72),(0.34,0.80),(0.30,0.90),(0.40,0.85),(0.37,0.95)}
    \fill[blue] \p circle (1.1pt);
  \foreach \p in {(0.53,0.06),(0.60,0.14),(0.56,0.22),(0.68,0.10)}
    \fill[blue] \p circle (1.1pt);
  \foreach \p in {(0.54,0.38),(0.62,0.48),(0.57,0.58),(0.70,0.44),(0.65,0.62)}
    \fill[blue] \p circle (1.1pt);
  \foreach \p in {(0.55,0.75),(0.63,0.85),(0.58,0.93)}
    \fill[blue] \p circle (1.1pt);
  \foreach \p in {(0.80,0.06),(0.87,0.14),(0.83,0.22),(0.94,0.10)}
    \fill[blue] \p circle (1.1pt);
  \foreach \p in {(0.80,0.40),(0.88,0.50),(0.83,0.60),(0.95,0.45),(0.90,0.62)}
    \fill[blue] \p circle (1.1pt);
  \node[below] at (0.11,-0.02) {\scriptsize $A_{1,4}$};
  \node[below] at (0.35,-0.02) {\scriptsize $A_{2,4}$};
  \node[below] at (0.615,-0.02) {\scriptsize $A_{3,4}$};
  \node[below] at (0.875,-0.02) {\scriptsize $A_{4,4}$};
  \node[left] at (-0.02,0.16) {\scriptsize $B_{1,3}$};
  \node[left] at (-0.02,0.50) {\scriptsize $B_{2,3}$};
  \node[left] at (-0.02,0.84) {\scriptsize $B_{3,3}$};
\end{tikzpicture}
\caption{Sample realizations across the joint grid $\mathcal{A}_n\times\mathcal{B}_m$}
\end{subfigure}
\hfill
\begin{subfigure}[b]{0.47\textwidth}
\centering
\renewcommand{\arraystretch}{1.3}
\begin{tabular}{c|ccc|c}
 & $B_{1,3}$ & $B_{2,3}$ & $B_{3,3}$ & $\hat\lambda^N_k$\\\hline
$A_{1,4}$ & $0.00$ & $0.56$ & $0.44$ & $0.20$\\
$A_{2,4}$ & $0.35$ & $0.29$ & $0.36$ & $0.32$\\
$A_{3,4}$ & $0.33$ & $0.42$ & $0.25$ & $0.27$\\
$A_{4,4}$ & $0.44$ & $0.56$ & $0.00$ & $0.21$\\
\end{tabular}
\caption{The matrix $\hat p^N$ (rows sum to $1$), with $\hat\lambda^N$ in the right-most column}
\end{subfigure}
\caption{From sample realizations to the pair $(\hat p^N,\hat\lambda^N)$. Panel (a): the sample (blue) scattered across the cells of the grid of Figure~\ref{fig:x-discretization}(a), with $n=4$ instrument cells and $m=3$ endogenous cells. Cells $(A_{1,4},B_{1,3})$ and $(A_{4,4},B_{3,3})$ receive no observations, so $\hat p^N_{1,1}=\hat p^N_{4,3}=0$ -- inactive for the Sinkhorn algorithm. Panel (b): the resulting empirical conditional probabilities $\hat p^N_{k,\ell}$ and cell weights $\hat\lambda^N_k$.}
\label{fig:sample-cells}
\end{figure}

\subsection{Lower Bound Estimation and Sinkhorn Algorithm}\label{sec:plug_in}

We are now in a position to turn the sample into an estimate of the partially identified average of $\Lambda$. Fix a vector of regularization parameters $\beta=(\epsilon,\delta,n,m,I)$, held fixed throughout this section, and to be chosen as a function of the sample size $N$ only once we turn to consistency, in Section~\ref{sec:consistency}. Plugging the empirical estimates $(\hat{p}^N,\hat{\lambda}^N)$ of Section~\ref{sec:dgp} into the first-order condition \eqref{eq:foc-matrix} in place of the population pair $(p,\lambda)$, and running $I$ iterations of the Sinkhorn algorithm of Section~\ref{sec:x_discretization} on the resulting empirical objective $\Phi(\cdot,\hat p^N,\hat\lambda^N)$, delivers an estimate of the dual potential matrix, $\hat\phi^N\in\mathcal{K}_{\hat p^N}$. Evaluating the finite-dimensional objective $\Phi$ at these empirical quantities gives the 	\textbf{lower bound estimator}
\begin{equation}\label{eq:estimator}
\hat{v}_N(\mathbf{P},\lambda):=\Phi(\hat\phi^N,\hat{p}^N,\hat{\lambda}^N).
\end{equation}
As before, both $\alpha$ and $\Phi$ also depend on the regularization vector $\beta$, a dependence we leave implicit to reduce notational burden.

This estimator is computationally feasible precisely because of the Sinkhorn algorithm of Section~\ref{sec:x_discretization}. Rather than solving the $n\times m$-equation system \eqref{eq:foc-matrix} exactly, $I$ rounds of the block-coordinate update deliver an approximate solution whose objective value is, by Proposition~\ref{prop:sinkhorn-convergence}, within an explicit and controllable distance of the exact optimum. Importantly, the algorithm is guaranteed to converge regardless of the sample realization, thereby ensuring strong consistency. Algorithm~\ref{alg:sinkhorn} summarizes the resulting procedure, from the raw sample to the estimator $\hat{v}_N(\mathbf{P},\lambda)$.

\begin{algorithm}[H]
\caption{Sinkhorn algorithm for the lower bound estimator $\hat{v}_N(\mathbf{P},\lambda)$}
\label{alg:sinkhorn}
\begin{algorithmic}[1]
\Require Sample $\mathbb{X}=\{(Z_i,X_i)\}_{i\in[N]}$
\Statex 	\textbf{Initialization step:}
\State $N\gets$ sample size
\State $\beta\gets\beta(N)$ \Comment{select the regularization parameters}
\State Construct $C,\ \hat{p}^N,\ \hat{\lambda}^N$ \Comment{cost tensor, empirical marginal matrix, cell weights}
\State Define $\Phi,\ \alpha$ \Comment{objective and first-order map, given $(C,\hat p^N,\hat\lambda^N)$}
\Statex 	\textbf{Sinkhorn algorithm:}
\State $\phi\gets 0\in\R^{n\times m}$
\For{$t=1,\ldots,I$}
    \For{$k=1,\ldots,n$}
        \If{$k<n$}
            \For{$\ell=1,\ldots,m$}
                \If{$\hat{p}^N_{k,\ell}=0$}
                    \State move to $\ell+1$
                \Else
                    \State $\phi_{k,\ell}\gets\alpha_{k,\ell}(\phi,\hat{p}^N,\hat{\lambda}^N)-(\hat{p}^N_{k,\cdot})^T\alpha_{k,\cdot}(\phi,\hat{p}^N,\hat{\lambda}^N)$
                \EndIf
            \EndFor
        \Else
            \For{$\ell=1,\ldots,m$}
                \If{$\hat{p}^N_{k,\ell}=0$}
                    \State move to $\ell+1$
                \Else
                    \State $\phi_{k,\ell}\gets\alpha_{k,\ell}(\phi,\hat{p}^N,\hat{\lambda}^N)$
                \EndIf
            \EndFor
        \EndIf
    \EndFor
\EndFor
\Statex 	\textbf{Compute the resulting value:}
\State $\hat{v}_N(\mathbf{P},\lambda)\gets\Phi(\phi,\hat{p}^N,\hat{\lambda}^N)$
\Statex 	\textbf{Output:} $\hat{v}_N(\mathbf{P},\lambda)$
\end{algorithmic}
\end{algorithm}

With $\hat{v}_N(\mathbf{P},\lambda)$ in hand, the question left is whether it approximates the targeted value. The next section shows that, for a suitable choice of the regularization parameters $\beta(N)$, the Algorithm~\ref{alg:sinkhorn} yields a consistent estimator of the lower bound $v(\mathbf{P},\lambda)$.

\subsection{Consistency}\label{sec:consistency}

The route taken so far is long, built over a sequence of regularizations grounded in the optimal transport reformulation of Section~\ref{sec:EOT}. Tracking the effect that each of these regularizations has on the resulting finite-dimensional program let us construct, in Section~\ref{sec:plug_in}, an estimator of the partially identified bound $v(\mathbf{P},\lambda)$ that is numerically feasible. What remains is to show that these effects can be controlled jointly. That is, we show that a suitable choice of the regularization vector $\beta=(\epsilon,\delta,n,m,I)$, as a function of the sample size, drives $\hat{v}_N(\mathbf{P},\lambda)$ to the population value it targets. This is the content of the following result.

\begin{theorem}\label{thm:consistency}
    Under Assumptions~\ref{ass:PI} and \ref{ass:constraints}, suppose the vector of regularization parameters $\beta=(\epsilon,\delta,n,m,I)$ satisfies:
    \[\frac{e^{\frac{16n}{\delta \epsilon}\mathrm{diam}(\X)}}{I},\frac{\epsilon}{\delta},\frac{\rho_m}{\delta}=o(1),\quad\text{ and }\quad \frac{\overline{\lambda}_n^{\frac{1}{2}}nm}{\delta}=o\left(N^{\frac{1}{2}}\right).\]
    Then, the estimator $\hat{v}_N$ defined in equation~\eqref{eq:estimator} is consistent in $L^1$, that is:
    \[\hat{v}_N(\mathbf{P},\lambda)\xrightarrow{L^1}v(\mathbf{P},\lambda).\]
\end{theorem}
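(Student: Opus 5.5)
The plan is a triangle-inequality decomposition of the error into a deterministic approximation part and a stochastic part. Write $\beta=\beta(N)$ and split
\[
|\hat v_N-v(\mathbf{P},\lambda)|\le |\hat v_N-\hat u^{\ast}_N|+|\hat u^{\ast}_N-u_{\epsilon,\delta,n,m}(\mathbf{P},\lambda)|+|u_{\epsilon,\delta,n,m}(\mathbf{P},\lambda)-v(\mathbf{P},\lambda)|,
\]
where $\hat u^\ast_N=\max_\phi\Phi(\phi,\hat p^N,\hat\lambda^N)$ is the exact value of the empirical fully discretized entropic problem. The third term is deterministic. By Proposition~\ref{prop:fully-discretized-ot}, together with Theorem~\ref{thm:strong-duality}, it vanishes once $\rho_m/\delta\to0$, $\epsilon/\delta\to0$ and $n\to\infty$. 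These rates are contained in the hypotheses, where $n\to\infty$ and $\delta\downarrow0$ are implicit in the choice of sequence. Because every quantity lies in a bounded range (costs are bounded by $1+\delta^{-1}\mathrm{diam}(\X)$, and values are therefore bounded), $L^1$ convergence reduces to bounding expectations of the first two terms.

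For the first term, I would apply Proposition~\ref{prop:sinkhorn-convergence} to the empirical pair $(\hat p^N,\hat\lambda^N)$ instead of $(p,\lambda)$. The linear-rate bound of \citet{carlier2022linear} depends only on the cost oscillation and on $(n,\epsilon,\delta)$, not on the marginals. One point needs checking: empty instrument cells, with $\hat\lambda^N_k=0$. I would handle these either on an event whose probability vanishes under the sample-size condition, or by noting that such rows simply drop out. The first term is then uniformly bounded by $\exp(-Ie^{-16n\,\mathrm{diam}(\X)/(\delta\epsilon)})$, which is $o(1)$ because $e^{16n\,\mathrm{diam}(\X)/(\delta\epsilon)}/I=o(1)$.

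The second term is the stochastic perturbation, and it is where the main obstacle lies. I would establish a Lipschitz stability estimate for the value map $(p,\lambda)\mapsto\max_\phi\Phi(\phi,p,\lambda)$ in the marginals, uniformly in $\epsilon$. The natural route is the envelope/duality argument. The value is $\sum_{k,\ell}\lambda_k p_{k\ell}\phi^\ast_{k\ell}$ at an optimizer. By weak duality, evaluating the optimal potentials of one problem in the dual of the other gives one-sided bounds. Combining the two sides yields
\[
|\hat u^\ast_N-u_{\epsilon,\delta,n,m}|\lesssim \sum_{k}\big\|\hat\lambda^N_k\hat p^N_{k,\cdot}-\lambda_k p_{k,\cdot}\big\|_1\cdot \max_{k}\|\lambda_k\phi_k\|_\infty .
\]
The sup norms are controlled by Proposition~\ref{prop:entropic-ot-dual}: each component of $\lambda_{k,n}\phi_k$ is bounded by $1+2\delta^{-1}\mathrm{diam}(\X)$, which gives the $1/\delta$ factor. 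The difficulty is that the entropic term $\epsilon\,(e^{(\mathrm{T}_n\phi-c)/\epsilon}-1)$ in the dual is not linear in the marginals. An alternative is to use the primal side, via the gluing/shadow-coupling argument of \citet{eckstein2023quantitative}, which produces a bound in total variation between product measures without an $\epsilon^{-1}$ factor. I would attempt the primal shadow argument first, precisely to avoid any $1/\epsilon$ dependence.

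Finally, the $L^1$ bound of Appendix~\ref{subsec:sample_properties} gives
\[
E\sum_{k,\ell}|\hat\lambda^N_k\hat p^N_{k\ell}-\lambda_kp_{k\ell}|\lesssim \overline{\lambda}_n^{1/2}\,nm/\sqrt N .
\]
Multiplying by the $1/\delta$ factor, this is $o(1)$ exactly under the stated condition $\overline{\lambda}_n^{1/2}nm/\delta=o(N^{1/2})$. Summing the three terms then gives $E|\hat v_N-v|\to0$.
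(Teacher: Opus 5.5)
Your proposal follows essentially the same route as the paper: the same three-term split into the Sinkhorn truncation error (Proposition~\ref{prop:sinkhorn-convergence} applied to $(\hat p^N,\hat\lambda^N)$), the sampling error controlled by the primal shadow-coupling stability bound of Lemma~\ref{lemma:stability-eot} combined with the $L^1$ bound of Lemma~\ref{lemma:sample-properties}, and the deterministic error handled by Proposition~\ref{prop:fully-discretized-ot}. The primal route you settle on is exactly what the paper uses, and you are right that the dual/envelope route would pick up the nonlinear entropic term. When you carry it out, use the $\delta^{-1}$-Lipschitz property of $c_{\delta,n}$ with respect to the $\lambda$-weighted metric on $\X^n$, not the oscillation of $c_{\delta,n}$ times an unweighted total-variation distance. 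That way the stochastic term comes out as $\overline{\lambda}_n^{1/2}nm/(\delta\sqrt{N})$, which matches the stated rate condition, instead of involving $\underline{\lambda}_n^{-1/2}$.
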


\begin{proof}
    See Appendix~\ref{app:proof_consistency}.
\end{proof}

Theorem~\ref{thm:consistency} is, to the best of our knowledge, the first statistical consistency result of its kind for a partially identified bound built from a regularized, multi-marginal entropic optimal transport problem. Its rate conditions make transparent how each stage of the construction in Sections~\ref{sec:x_discretization}--\ref{sec:plug_in} feeds a single asymptotic requirement The Sinkhorn error must vanish relative to the entropic and cell-width regularizations ($\frac{\exp\left(\frac{16n}{\delta\epsilon}\mathrm{diam}(\X)\right)}{I}=o(1)$), the entropic penalty must vanish relative to the localization parameter ($\frac{\epsilon}{\delta}=o(1)$), the endogenous discretization mesh must vanish relative to $\delta$ ($\frac{\rho_m}{\delta}=o(1)$), and, jointly, the exogenous and endogenous grids $n,m$ cannot refine faster than the sample size accumulates information, relative to $\delta$. Rather than requiring each regularization parameter to vanish on its own, the theorem shows that they must be selected \emph{at compatible relative rates}, letting a single tuning sequence $\beta(N)$ drive every source of approximation error (discretization, entropic smoothing, localization, and finite-sample Sinkhorn truncation) to zero together. Section~\ref{sec:asymptotics} builds on this same logic to characterize the estimator's limiting distribution, under an analogous but sharper choice of $\beta(N)$.

\subsection{Asymptotic distribution}\label{sec:asymptotics}

The previous section shows that letting the regularization vector $\beta=(\epsilon,\delta,n,m,I)$ vary suitably with the sample size drives the estimator $\hat{v}_N(\mathbf{P},\lambda)$ to the population lower bound $v(\mathbf{P},\lambda)$. The natural next step, and our task in this section, is to characterize the asymptotic distribution of the estimator.

Differently from Section~\ref{sec:consistency}, however, we now hold $\beta=(\epsilon,\delta,n,m)$ fixed and let only the number of Sinkhorn iterations $I$ grow with the sample size $N$. There are at least three prominent reasons for this choice. First, with $(\epsilon,\delta,n,m)$ fixed, the estimator value is simply an ``almost'' maximum of a strongly concave, coercive objective over a finite-dimensional subspace, that is, an $M$-estimator. For this reason, standard $M$-estimation arguments and a direct application of the delta method deliver a normal limit. Moreover, a closed-form asymptotic variance can be identified by the weighted population variance of the entropic optimal transport potentials. Second, in a substantial part of the applied literature the joint distribution of the instrument and the endogenous variable is already discrete and finitely supported, so that $n,m$ are \emph{de facto} fixed and holding them fixed here imposes no additional restriction. Third, when $(\epsilon,\delta,n,m)$ are instead allowed to vanish (respectively, diverge) with $N$, as in Section~\ref{sec:consistency}, the limiting distribution of the estimator depends on the particular rate at which they do so, and is not, to our knowledge, yet fully understood in the multi-marginal optimal transport literature. We leave this more general case for future research and refer the reader to our discussion in Section~\ref{sec:discussion}.

Throughout this section, $\beta=(\epsilon,\delta,n,m)$ is fixed. Write $\hat{\phi}^N_\beta\in\mathcal{K}_{\hat{p}^N}$ for the dual potential matrix obtained after $I$ Sinkhorn iterations on the empirical marginals, and $\phi^\ast_\beta\in\mathcal{K}_p$ for the (unique) dual potential matrix solving the first-order condition \eqref{eq:foc-matrix} at the population pair $(p,\lambda)$. Each row of the matrices $\hat{\phi}^N_\beta$ and $\phi^\ast_\beta$ records the value of a potential function at the grid points of $\X_m$, points sampled according to $\hat{p}^N_{k,\cdot}$ and $p_{k,\cdot}$, respectively. This lets us associate to each row a variance, and hence to each potential matrix an instrument $\lambda$-weighted aggregate variance,
\begin{align*}
    \textnormal{Var}_{(p,\lambda)}[\phi^\ast_\beta]&:=\sum_{k\in[n]}\textnormal{Var}_{p_k}[\phi^\ast_{\beta,k}]\lambda_{k,n}=\sum_{k\in[n]} \left((\phi^\ast_{\beta,k,\cdot})^{2} \cdot p_{k,\cdot} - (\phi^\ast_{\beta,k,\cdot}\cdot p_{k,\cdot})^2\right)\lambda_{k,n},\\
    \textnormal{Var}_{(\hat{p}^N,\hat{\lambda}^N)}[\hat{\phi}^N_\beta]&:=\sum_{k\in[n]}\textnormal{Var}_{\hat{p}^N_k}[\hat{\phi}^N_{\beta,k}]\hat{\lambda}^N_{k,n}=\sum_{k\in[n]} \left((\hat{\phi}^N_{\beta,k,\cdot})^{2} \cdot \hat{p}^N_{k,\cdot} - (\hat{\phi}^N_{\beta,k,\cdot}\cdot \hat{p}^N_{k,\cdot})^2\right)\hat{\lambda}^N_{k,n},
\end{align*}
where we adopt the convention of interpreting $a_{k,\cdot}^2=(a_{k,1}^2,\ldots,a_{k,m}^2)$ for any matrix $a\in\R^{n\times m}$. Each aggregate variance is the sum, across rows, of the variance of each $k$-th potential function under its own conditional law, weighted by the cell weight $\lambda_{k,n}$. As the next result shows, it is exactly this quantity that controls the asymptotic variance of the estimator $\hat{v}_N$.

\begin{theorem}\label{thm:asymptotic_distribution}
    Under Assumptions~\ref{ass:PI} and \ref{ass:constraints}, and for a fixed vector of regularization parameters $\beta=(\epsilon,\delta,n,m)$, if the Sinkhorn iterations $I$ grow faster than $\ln N$, so that $\frac{\ln N}{I} = o(1)$, then the estimator $\hat{v}_N$ defined in equation~\eqref{eq:estimator} is asymptotically normal:
    \[\sqrt{N}\left(\hat{v}_N(\mathbf{P},\lambda)-u_{\epsilon,\delta,n,m}(\mathbf{P},\lambda)\right)\xrightarrow{d}\mathcal{N}\left(0,\textnormal{Var}_{(p,\lambda)}[\phi^\ast_\beta]\right).\]
    Moreover, the variance $\textnormal{Var}_{(p,\lambda)}[\phi^\ast_\beta]$ can be consistently estimated by $\textnormal{Var}_{(\hat{p}^N,\hat{\lambda}^N)}[\hat{\phi}^N_\beta]$ so that:
    \[\textnormal{Var}_{(\hat{p}^N,\hat{\lambda}^N)}[\hat{\phi}^N_\beta]\rightarrow\textnormal{Var}_{(p,\lambda)}[\phi^\ast_\beta]\textnormal{ a.s.}\]
\end{theorem}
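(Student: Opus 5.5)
The proof decomposes the estimation error into a Sinkhorn truncation term and a plug-in term:
\[
\sqrt{N}\big(\hat v_N-u_{\epsilon,\delta,n,m}(\mathbf P,\lambda)\big)=\sqrt{N}\big(\Phi(\hat\phi^N,\hat p^N,\hat\lambda^N)-V(\hat p^N,\hat\lambda^N)\big)+\sqrt{N}\big(V(\hat p^N,\hat\lambda^N)-V(p,\lambda)\big),
\]
where $V(q,\eta):=\max_{\phi}\Phi(\phi,q,\eta)$ denotes the value of the fully discretized entropic program. For fixed $(\epsilon,\delta,n,m)$, the bound of Proposition~\ref{prop:sinkhorn-convergence} holds uniformly in the marginals, because its constant $e^{-16n\,\mathrm{diam}(\X)/(\delta\epsilon)}$ does not depend on $(p,\lambda)$. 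Applied at $(\hat p^N,\hat\lambda^N)$, it bounds the first term by $\sqrt N\exp(-Ic)$ with $c>0$ fixed. This is $o(1)$ exactly when $I-\ln N/(2c)\to\infty$, which follows from $\ln N/I=o(1)$. So the first term is negligible, deterministically.

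For the second term, I would use the delta method.

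\textbf{Step 1 (CLT for the inputs).} The vector of cell frequencies $(N_{k,\ell}/N)_{k,\ell}$ is multinomial, so it satisfies a $\sqrt N$-CLT. The pair $(\hat p^N,\hat\lambda^N)$ is a smooth function of it near the population value, since $\lambda_{k,n}>0$. This is the joint asymptotic normality recorded in Appendix~\ref{subsec:sample_properties}.

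\textbf{Step 2 (differentiability of $V$).} I would show $V$ is differentiable at $(p,\lambda)$ by an envelope argument. On the normalized subspace $\mathcal K_p$, the objective $\Phi(\cdot,q,\eta)$ is strictly concave with a unique maximizer $\phi^\ast(q,\eta)$. The Hessian $-\epsilon^{-1}\sum_\ell(\ldots)$ is negative definite on $\mathcal K_p$, and the implicit function theorem applied to \eqref{eq:foc-matrix} then gives a $C^1$ map $(q,\eta)\mapsto\phi^\ast$ near $(p,\lambda)$.

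The subtlety is that the subspace $\mathcal K_q$ moves with $q$ and that inactive cells may appear. Two observations handle this. The value is invariant under the normalization shifts, and with probability tending to one the empirical support coincides with the population support. So I can work on the fixed subspace $\mathcal K_p$ throughout.

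The envelope theorem gives $\nabla V(p,\lambda)=\partial_{(q,\eta)}\Phi(\phi^\ast_\beta,p,\lambda)$. Differentiating the product $\prod_k q_{k,\ell_k}$ and using that the optimal coupling has marginals $q_k$, the exponential terms collapse. This leaves a derivative of the form
\[
\partial_{q_{k,\ell}}V=\lambda_{k,n}\phi^\ast_{k,\ell}+\text{const}_k.
\]
The row constant is killed by the simplex constraint on the perturbation. In the $\eta$-direction, the derivative is $E_{p_k}[\phi^\ast_k]$ plus the value-normalization terms.

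\textbf{Step 3 (influence function).} I would translate back to an influence function: $V$ should behave like the linear functional $(q,\eta)\mapsto\sum_k\eta_k E_{q_k}[\phi^\ast_k]$, which equals $\sum_{k,\ell}\phi^\ast_{k,\ell}\,P(A_{k,n}\times B_{\ell,m})$ by \eqref{eq:eot_fullydiscretized_value}. The influence function of an observation $(Z,X)$ is then $\phi^\ast_{k(Z)}(x_{\ell(X)})$ minus its mean. I expect the extra $\eta$-derivative contributions to cancel against the constants.

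The variance of this influence function is the total variance of $\phi^\ast_{k(Z)}(X)$. That total variance equals the within-cell term $\sum_k\lambda_{k,n}\mathrm{Var}_{p_k}[\phi^\ast_k]$ plus the between-cell term $\mathrm{Var}_\lambda(E_{p_k}\phi^\ast_k)$. The normalization $p_k^T\phi^\ast_k=0$ for $k<n$ does not by itself kill the between-cell term, because the last row is free.

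\textbf{Main obstacle.} Matching the stated variance requires showing that the between-cell term vanishes. I would try to show this by exploiting the free additive constants: shifting $\phi_k\mapsto\phi_k+a_k$ with $\sum_k\lambda_{k,n}a_k=0$ leaves $\mathrm T_n\phi$ unchanged. The derivative of $V$ in $\eta$ should then carry terms that exactly offset the row means, leaving only the within-cell fluctuation. Carrying out this bookkeeping carefully is the main technical step, and I would resolve it by an explicit computation of $\partial_\eta\Phi$ at the optimum.

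\textbf{Variance estimator.} Consistency of $\mathrm{Var}_{(\hat p^N,\hat\lambda^N)}[\hat\phi^N_\beta]$ follows from three facts:
\begin{itemize}
\item $(\hat p^N,\hat\lambda^N)\to(p,\lambda)$ a.s. by the strong law.
\item $\phi^\ast(\hat p^N,\hat\lambda^N)\to\phi^\ast_\beta$ by continuity of the implicit map.
\item $\|\hat\phi^N-\phi^\ast(\hat p^N,\hat\lambda^N)\|\to0$, since strong concavity with a uniform modulus converts the value gap of Proposition~\ref{prop:sinkhorn-convergence} into a parameter gap of order $e^{-cI/2}$.
\end{itemize}
The plug-in variance is a continuous function of these quantities, so it converges a.s.
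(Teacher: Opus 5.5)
Your overall architecture matches the paper's proof. The Sinkhorn truncation is $o(N^{-1/2})$ because the rate in Proposition~\ref{prop:sinkhorn-convergence} does not depend on the marginals. The plug-in value $V(\hat p^N,\hat\lambda^N)$ is handled by the delta method with the envelope-theorem gradient. The variance estimator follows from a.s.\ convergence of near-optimal potentials. Your implicit-function-theorem route to differentiability is a perfectly good substitute for the paper's Danskin argument.

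The gap is the variance identification. You leave it as an open ``main obstacle'', and your Step~3 heuristic is in fact wrong. The functional $(q,\eta)\mapsto\sum_k\eta_kE_{q_k}[\phi^\ast_k]$ agrees with $V$ only at the optimum. It is not a valid linearization in $\eta$, because the optimal potential moves with $\eta$.

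The missing observation is this. With the cost tensor $C$ held fixed (as in the paper and in your proposal), $\Phi(\phi,q,\eta)$ depends on $(\phi,\eta)$ only through $\psi_k=\eta_k\phi_k$, so $V(q,\eta)$ does not depend on $\eta$ at all. Equivalently, apply the envelope theorem with the same marginal collapse you already used for the $q$-derivative:
\[
\partial_{\eta_k}\Phi(\phi^\ast_\beta,p,\lambda)=E_{p_k}[\phi^\ast_{\beta,k}]-\sum_{\ell\in[m]^n}\nu^\ast(\ell)\,\phi^\ast_{\beta,k,\ell_k}=0,
\]
since the optimal coupling $\nu^\ast(\ell)=C(\ell)e^{\lambda^T\phi^\ast_{\beta,\cdot,\ell}/\epsilon}\prod_jp_{j,\ell_j}$ has $k$-th marginal $p_k$. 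Hence the fluctuations of $\hat\lambda^N$ do not enter at first order. The influence function is the within-cell-centered $\phi^\ast_{\beta,k(Z)}(x_{\ell(X)})-E_{p_{k(Z)}}[\phi^\ast_{\beta,k(Z)}]$, not $\phi^\ast_{\beta,k(Z)}(x_{\ell(X)})$ minus its grand mean.

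With this in hand, no bookkeeping is needed. Your gradient $\partial_{q_{k,\ell}}V=\lambda_{k,n}\phi^\ast_{\beta,k,\ell}+\mathrm{const}_k$, combined with the block-diagonal covariance $\Sigma_k=\lambda_{k,n}^{-1}(\mathrm{diag}(p_{k,\cdot})-p_{k,\cdot}p_{k,\cdot}^T)$ of Lemma~\ref{lemma:sample-properties}, gives $\sum_k\lambda_{k,n}^2\,\lambda_{k,n}^{-1}\mathrm{Var}_{p_k}[\phi^\ast_{\beta,k}]$. That is the stated variance, and it is exactly the ``block structure'' simplification in the paper.

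The mechanism you propose instead cannot produce the cancellation. Shifting $\phi_k\mapsto\phi_k+a_k$ with $\sum_k\lambda_{k,n}a_k=0$ leaves $V$ and each within-cell variance unchanged, but changes $\mathrm{Var}_\lambda(E_{p_k}[\phi^\ast_k])$. That shows the frozen-potential influence function is ill-defined, not that some term cancels. Evaluated at the $\mathcal{K}_p$-normalized $\phi^\ast_\beta$ (row means $0$ for $k<n$ and $V(p,\lambda)/\lambda_{n,n}$ for $k=n$), your heuristic would add the spurious term $V(p,\lambda)^2(1-\lambda_{n,n})/\lambda_{n,n}$. Incidentally, the same shift-invariance of within-cell variances makes the normalization mismatch between $\mathcal{K}_{\hat p^N}$ and $\mathcal{K}_p$ harmless in your variance-estimator argument.
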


\begin{proof}
    See Appendix~\ref{app:proof_asymptotic_distribution}.
\end{proof}

Theorem~\ref{thm:asymptotic_distribution} delivers on the promise made at the outset of this section (Section~\ref{sec:estimation}): a standard $\sqrt{N}$ central limit theorem, with a plug-in-estimable asymptotic variance. This means that, for a fixed choice of the regularization parameters $\beta=(\epsilon,\delta,n,m)$, the limiting distribution of our estimator $\hat{v}_N(\mathbf{P},\lambda)$ is known and can be used to construct estimates of the regularized value $u_{\epsilon,\delta,n,m}(\mathbf{P},\lambda)$, confidence intervals, and hypothesis tests. 

The asymptotic variance, consistently estimated by $\textnormal{Var}_{(\hat{p}^N,\hat{\lambda}^N)}[\hat{\phi}^N_\beta]$, can be interpreted as measuring the following variation. Fixed a cell $A_{k,n}$ of the instrument grid, the $k$-th row of the dual potential matrix $\phi^\ast_\beta$ is a function defined on the endogenous grid $\X_m$. This function collects, at each grid point $x_\ell \in\X_m$, the approximated expected cost of the effect function $\Lambda$ conditioned on paths falling in $B_{\ell,m}$ for instrument realizations in $A_{k,n}$ under the optimal distribution path distribution. Its variance then measures how much the conditional cost varies across the support of the endogenous variable. The weighted sum of these variances, across all instrument cells, gives the overall asymptotic variance of the estimator $\hat{v}_N(\mathbf{P},\lambda)$. Next section, we show how to, with additional assumptions, employ this result in the construction of confidence intervals and hypothesis tests for the partially identified bound $v(\mathbf{P},\lambda)$ itself.





\section{Applications, Limitations, and Future Directions}\label{sec:discussion}

This section discusses uses of the estimation results of Section~\ref{sec:estimation} for empirical practice, the main limitations they inherit from the underlying framework, and the directions for future research it suggests.

\noindent\textbf{Confidence intervals for the partially identified bound.} The asymptotic distribution result of Theorem~\ref{thm:asymptotic_distribution} is stated for the regularized value $u_{\epsilon,\delta,n,m}(\mathbf{P},\lambda)$, not the population bound $v(\mathbf{P},\lambda)$ itself, since $\beta$ is held fixed throughout Section~\ref{sec:asymptotics}. Suppose, however, that a rate function $\kappa(\beta)$ is known for the (deterministic) population bias this regularization introduces,
\[|u_{\epsilon,\delta,n,m}(\mathbf{P},\lambda)-v(\mathbf{P},\lambda)|\leq \kappa(\beta).\]
Fix $\beta$ and, to lighten notation, write $\widehat{\mathrm{se}}_N(\beta):=\sqrt{\frac{\textnormal{Var}_{(\hat{p}^N,\hat{\lambda}^N)}[\hat{\phi}^N_\beta]}{N}}$ for the estimated standard error of $\hat{v}_N$ from Theorem~\ref{thm:asymptotic_distribution}. Then, for any level $\tau\in(0,1)$, an asymptotically valid confidence interval at this level for $v(\mathbf{P},\lambda)$, rather than merely for $u_{\epsilon,\delta,n,m}(\mathbf{P},\lambda)$, is given by
\[\left[\hat{v}_N-\kappa(\beta)- z_{1-\frac{\tau}{2}}\,\widehat{\mathrm{se}}_N(\beta),\ \hat{v}_N +\kappa(\beta) + z_{1-\frac{\tau}{2}}\,\widehat{\mathrm{se}}_N(\beta)\right],\]
where $z_{1-\frac{\tau}{2}}$ is the $(1-\frac{\tau}{2})$-quantile of the standard normal distribution. This is simply the usual Wald interval, widened on both sides by the known bias bound $\kappa(\beta)$. The rate function $\kappa(\beta)$ need not be optimal for this to be valid. In fact, a non-optimal rate can already be obtained from purely qualitative properties of the model, such as continuity of the dual solutions to the population problem, itself inherited from continuity of the effect function $\Lambda$, the mechanism $g$, and the regularity of the underlying spaces. Since these properties are specific to each application, generic formulas for the rate are beyond the scope of this work. However, it is worth mentioning that the sharper the rate is known to be, the tighter the resulting confidence interval.

A second, related determinant of the width of this interval is the asymptotic variance. Our results suggest that this quantity is sensitive to the non-negligibility of the regularization parameters $\epsilon$ and $\delta$. Indeed, the smaller these are, the larger the asymptotic variance tends to be. Intuitively, this happens as the parameters $\epsilon$ and $\delta$ control the smoothness of the entropic dual objective, and hence the stability of its maximizer, the dual potential $\phi^\ast_\beta$. Then, the less smooth the objective, the more sensitive $\phi^\ast_\beta$ is to small perturbations of the data, and the larger its sampling variance. Together with $\kappa(\beta)$'s own dependence on $\beta$, this points to a bias-variance trade-off in the choice of the regularization vector, where $\beta$ should be selected to balance the deterministic bias $\kappa(\beta)$ against its effect on the estimator's sampling variance, rather than to minimize both simultaneously.

\vspace{0.1cm}

\noindent\textbf{Hypothesis testing.} The same rate function $\kappa(\beta)$ also lets us test whether the partially identified bound exceeds a value of interest, $\underline{v}\in\R$, fixed for instance by economic theory or a policy threshold. Consider the null hypothesis $H_0: v(\mathbf{P},\lambda)>\underline{v}$ against the alternative $H_1: v(\mathbf{P},\lambda)\leq\underline{v}$, at significance level $\tau\in(0,1)$. Fix $\beta$ and let
\[\underline{v}_N := \underline{v}-\kappa(\beta)+ z_{\tau}\,\widehat{\mathrm{se}}_N(\beta).\]
The test $\hat{t}_N$ that rejects $H_0$ whenever $\hat{v}_N(\mathbf{P},\lambda)\leq \underline{v}_N$ is asymptotically valid at level $\tau$. Indeed, under $H_0$, its type-I error is controlled by
\[\mathbb{P}\left[\hat{t}_N \textnormal{ rejects } H_0 \mid H_0\right]\leq \mathbb{P}\left[\frac{\hat{v}_N(\mathbf{P},\lambda)-u_{\epsilon,\delta,n,m}(\mathbf{P},\lambda)}{\widehat{\mathrm{se}}_N(\beta)}\leq z_\tau \ \middle|\ H_0\right]\longrightarrow \tau,\]
where the convergence follows directly from Theorem~\ref{thm:asymptotic_distribution}.

\vspace{0.1cm}

\noindent\textbf{The curse of dimensionality.} These two applications rely on the estimator $\hat v_N(\mathbf{P},\lambda)$ being computable in the first place. When the joint support of the exogenous and endogenous variables is uncountable, Section~\ref{sec:ot_connection} rewrites the partially identified bound as an optimal transport problem over the infinite-dimensional path space $\mathbb{H}$. In this case, complete discretization of the support and regularization is what makes this problem computationally tractable. Nevertheless, it comes with the cost of introducing both a deterministic bias, of the kind $\kappa(\beta)$ above must account for, and a system of $n\cdot m$ non-linear equations, one for each cell of the joint grid, to be solved for the dual potentials. This system is defined through a cost tensor $C$ over discretized paths and can only ever be solved approximately, after $I$ iterations of the Sinkhorn algorithm. Although finite-dimensional, the space of discretized paths $\X_m^n$ can itself contain up to $m^n$ configurations, so that the cost tensor, and the computational burden of forming and iterating over it, can grow exponentially in the number $n$ of instrument cells $Z$.

A first, partial remedy comes from the model's own structure. The paths that matter for the transport problem are only those achievable through the mechanism function. This achievability restriction typically encodes a large number of \emph{shape restrictions}, such as Lipschitz continuity or monotonicity, as we introduce in the application and simulation sections, on the ``active'' domain of the cost tensor (the set of discretized paths whose cost is not prohibitively high, since every unachievable path is penalized at order $\delta^{-1}$). As the cost of paths enter in the first-order optimality system exponentiated (proportional to $e^{-\frac{c}{\epsilon}}$), higher cost leads to less relevance of the path in the first-order system. As a consequence, one can try to neglect the paths with high costs in this system, effectively decreasing the number of active equations to be solved. Similarly, since the likelihood of a given discretized path in $\X_m^n$ is a product of cell-wise likelihoods, if a particular path's likelihood is known to be negligible, we can then further shrink the set of paths that need to be tracked in practice by eliminating unlikely paths. In either case, the model's own structure can be used to reduce the effective dimension of the problem, and hence the size of the cost tensor, without changing the underlying discretization.

A second trade-off is more fundamental. The bias introduced by the endogenous discretization is controlled by its mesh $\rho_m$ (Proposition~\ref{prop:fully-discretized-ot}), which shrinks as the number of cells $m$ grows. However, a finer partition also enlarges the grid $\X_m^n$ over which the cost tensor is defined, so refining the discretization to reduce bias directly increases the computational burden it must overcome. In this case, better knowledge of the bias rate can be of paramount importance in the selection of the discretization parameters $(n,m)$. This is because the knowledge of the precise bias allows the researcher to choose a coarser discretization that keeps the cost tensor manageable while still controlling the introduced error.

Thirdly, and independently of the size of the cost tensor, the number of Sinkhorn iterations $I$ required to make the resulting truncation error small is itself sensitive to dimension. As Proposition~\ref{prop:sinkhorn-convergence} shows, it must grow at rate $\exp\left(\frac{n}{\delta\epsilon}\right)$ for the error bound to vanish. Importantly, this is not an incidental feature of our methodology, but an unavoidable feature of the problem. In fact, linear programs in general suffer from the difficulty of determining its proximal solutions via finite iterative methods. The entropic regularization improves this, by permitting the identification of a suitable unique solution via a system of non-linear equations. Nevertheless, the multi-marginal Sinkhorn algorithm used to ``solve'' it also suffers from this kind of curse of dimensionality in the number of marginals, as is well documented in the computational optimal transport literature. We refer the reader to \citet{peyre2019computational}, whose complexity bound for the algorithm scales exponentially in the number of marginals; \citet{lin2022complexity}, who derive matching bounds of order $m^n$ for approximating the multi-marginal transport problem itself; \citet{altschuler2022wasserstein}, who show that closely related multi-marginal problems are, in the worst case, NP-hard; and the references therein.

A further, promising direction is to exploit the model's structure to break the problem into smaller subproblems before applying the discretization and entropic perturbation proposed here. \citet{tan2026partial} do so in the Roy model, where the structure of the problem splits it into smaller, parallelizable optimal transports. More generally, this is possible when the mechanism $g$ and the effect $\Lambda$ are additively separable across blocks of the instrument $Z$, in which case the marginal constraints decouple and each block can be solved independently. Even absent exact separability, when the cost tensor admits a low-rank or otherwise structured approximation, the Sinkhorn iterations can be carried out without ever forming the full $m^n$ tensor \citep{strossner2023lowrank}, sharply reducing the cost per iteration.

Confronted by these unpleasant manifestations of the curse of dimensionality of the problem, we advocate for two practical responses. One is to fix $(\epsilon,\delta,n,m)$ and let only $I$ grow with the sample size, as we do in Section~\ref{sec:asymptotics}. As shown, this sidesteps the exponential rate at the cost of leaving the discretization bias uncontrolled. The other is to let $n$ grow, but slowly enough -- for instance, at the iterated-logarithm rate $n=O(\ln\ln N)$ -- to keep the resulting number of iterations manageable. This, however, demands a correspondingly large number of observations per instrument cell, which can undermine its practical viability for moderate sample sizes.

\vspace{0.1cm}

\noindent\textbf{Future directions.} Overcoming these limitations will require leaning on the shape restrictions already embedded in a given application (through the mechanism and the achievable path set it generates), which are the most promising lever for taming the curse of dimensionality. These shape restrictions shrink the active domain of the cost tensor rather than fight the exponential rate directly. We see as the most immediate directions for future work: (i) formalizing how much a given shape restriction reduces the effective dimension of the problem; (ii) identifying which structural features of the model, such as separability or low tensor rank, permit parallelized or low-rank computation; (iii) deriving sharp bias rates $\kappa(\beta)$ for the regularization, and how shape restrictions improve them; and (iv) extending the consistency and asymptotic distribution results of Section~\ref{sec:estimation} to the case where all of $\beta=(\epsilon,\delta,n,m,I)$ vary jointly with the sample size.

\section{Simulations}\label{sec:simulations}

This simulation section is intended to empirically validate and assess the theoretical statements from the previous sections. We focus on approximation accuracy, inference, and computational cost. We employ the two-stage selection model of Section~\ref{sec:examples}, varying its complexity. As the effect function, we pick the treatment effect, $\Lambda(\w)=\w_2(1)-\w_2(0)$. 

We start with the simple binary setting from \citet{balke1997bounds} phrased in our approach. Then we focus on a setting with continuous treatment-outcomes and binary instrument, where we take advantage of its lower computational burden to assess its statistical properties. Finally, we focus on a setting with continuous treatment-outcomes and a multi-valued instrument. 

\vspace{0.2cm}

\noindent\textbf{The \citet{balke1997bounds} benchmark.}

We first focus on the simple setup with binary instrument, treatment and outcome, giving $16$ unobserved latent types. As Table~\ref{tab:short-binary} depicts, at $(\delta,\epsilon)=(0.25,0.0002)$, the maximum absolute difference from the sharp linear program (LP) is $1.48\times10^{-4}$. 
\begin{table}[!ht]\centering
\caption{Binary population benchmark}\label{tab:short-binary}
{\small\begin{tabular}{lrr}
\toprule Design & Sharp LP interval & Regularized endpoints\\\midrule
Strong & $[0.405000, 0.705000]$ & $[0.405022, 0.704985]$\\
Weak & $[-0.120000, 0.780000]$ & $[-0.119852, 0.779897]$\\
Opposite effects & $[-0.090000, 0.270000]$ & $[-0.089883, 0.269875]$\\
\bottomrule\end{tabular}}
\par\medskip\begin{minipage}{.97\linewidth}\footnotesize Population marginals; all sixteen binary response types are admitted. Regularized entries use $\delta=0.25$ and $\epsilon=0.0002$.
\end{minipage}\end{table}

Figure~\ref{fig:binary-accuracy} varies both tuning parameters. It shows that reducing entropy produces approximately proportional error reductions at sufficiently strong support penalties, while a substantial error still persists at $\delta=0.75$. Removing entropy at $\delta=0.25$ gives the sharp bounds to full numerical precision.
\begin{figure}[!ht]\centering
\includegraphics[width=0.7\linewidth]{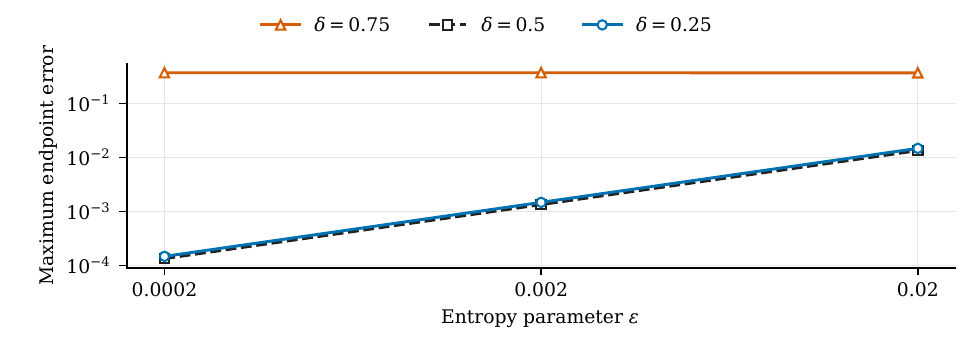}
\caption{Binary approximation error across tuning parameters}\label{fig:binary-accuracy}
\begin{minipage}{.97\linewidth}\footnotesize Each point is the maximum absolute difference from the sharp LP across all three designs and both endpoints. At $\epsilon=0$, the maximum errors for $\delta=0.75,0.5,0.25$ are $0.3733333$, $1.11\times10^{-16}$, and zero, respectively; zero entropy is excluded from the logarithmic axis. As proved in the theoretical section, entropy can partly offset the support-relaxation error, which explains the slightly rising error as $\epsilon$ decreases at $\delta=0.75$.
\end{minipage}\end{figure}

\vspace{0.3cm}

\noindent\textbf{Continuous outcome and treatment with binary instrument for statistical analysis.}

We now analyze the statistical properties of the estimator in the computationally easier setting where the instrument is binary, $\Z=\{0,1\}$, while both treatment and outcome take values in $\mathcal{D}=\mathcal{R}=[0,1]$. This is computationally easier because we only have two marginals and the cost function is just in a matrix form, not a tensor.

For the (simulated) treatment and outcome distributions, we let $U,V$ be independent uniforms on $[-1,1]$, independent of instrument, and generate the counterfactual outcomes as
\[
Y(d)=0.5+0.1U+0.1V+(-0.4+0.2U)(d-0.5)+0.03\sin(2\pi d).
\]
The treatment values are generated as
\[
D(0)=0.5+0.3U,\quad\text{ and }\quad D(1)=0.6-0.3U,
\]
and the instrument is a Bernoulli random variable with success probability $0.5$. Composition of the treatment with the potential outcome gives us the conditional distribution profile $\mathbf{P}$. Note that the potential treatment and outcome functions are $1$-Lipschitz in their respective inputs. Moreover, the resulting treatment effect is $-0.4+0.2U$, and its average equals $-0.4$.

As for the latent space of potential treatment and outcomes, we allow for arbitrary dependence between them, and we assume only $1$-Lipschitz first- and second-stage responses (i.e. $\Omega_1=\mathrm{Lip}_1(\Z,\mathcal{D})$ and $\W_2=\mathrm{Lip}_1(\mathcal{D},\mathcal{R})$). From the outset, we do not assume monotonicity. 

We use eight bins per coordinate represented at cell centers, $\delta=0.025$, and $\epsilon\in\{0.02,0.005\}$. For each $N\in\{500,2000,10000\}$, we simulate 500 datasets and construct the individual 95\% Wald intervals using the fixed-regularization central limit theorem and dual-potential standard errors derived before. As Table~\ref{tab:short-coverage} shows, the marginal coverage ranges from 94.4\% to 95.8\%. The average estimated standard errors divided by empirical standard deviations range from $0.949$ to $1.031$.

\begin{table}[H]\centering
\caption{Finite-sample inference for regularized population endpoints}\label{tab:short-coverage}
{\small\textit{A. Lower endpoint}}\par\smallskip
{\small\begin{tabular}{rrrrrr}
\toprule $N$ & $\epsilon$ & Bias $\times10^3$ & RMSE $\times10^3$ & SE/SD & Coverage\\\midrule
500 & 0.020 & 0.610 & 6.264 & 1.019 & 0.952\\
500 & 0.005 & 0.721 & 6.259 & 1.031 & 0.956\\
2,000 & 0.020 & 0.201 & 3.158 & 1.006 & 0.948\\
2,000 & 0.005 & 0.140 & 3.150 & 1.015 & 0.952\\
10,000 & 0.020 & -0.033 & 1.435 & 0.986 & 0.952\\
10,000 & 0.005 & -0.070 & 1.451 & 0.985 & 0.948\\
\bottomrule\end{tabular}}

\medskip
{\small\textit{B. Upper endpoint}}\par\smallskip
{\small\begin{tabular}{rrrrrr}
\toprule $N$ & $\epsilon$ & Bias $\times10^3$ & RMSE $\times10^3$ & SE/SD & Coverage\\\midrule
500 & 0.020 & 0.109 & 16.586 & 1.013 & 0.948\\
500 & 0.005 & 0.674 & 17.787 & 1.012 & 0.954\\
2,000 & 0.020 & -0.396 & 8.938 & 0.949 & 0.944\\
2,000 & 0.005 & -0.233 & 9.498 & 0.956 & 0.946\\
10,000 & 0.020 & 0.042 & 3.820 & 0.995 & 0.958\\
10,000 & 0.005 & 0.069 & 4.083 & 0.996 & 0.952\\
\bottomrule\end{tabular}}
\par\medskip\begin{minipage}{.97\linewidth}\footnotesize 500 replications per sample size; eight bins per coordinate; $\delta=0.025$. Bias and RMSE are multiplied by $10^3$. SE/SD is the average reported standard error divided by the Monte Carlo sample standard deviation. Coverage is for individual $95\%$ Wald intervals $\widehat v_\beta\pm1.96\,\widehat{\mathrm{se}}(\widehat v_\beta)$ based on the fixed-regularization central limit theorem, with standard errors estimated from the dual potentials. Coverage Monte Carlo standard errors are approximately $0.01$.
\end{minipage}\end{table}

\vspace{0.3cm}

\noindent\textbf{Continuous outcome and treatment with $5$ realizations for the instrument.}

Retain the treatment and outcome spaces of the preceding example, but consider an instrument with more than just two realizations (e.g. $\Z=\{0,\frac{1}{4},\frac{1}{2},\frac{3}{4},1\}$). Since our cost function in general does not factor nicely, this means that we need to consider the entire cost tensor of order $5$. This is computationally significantly more costly than binary instruments. 

The (simulated) outcome follows the exact same structure as in the previous example, while the treatment is generated as
\[D(z)=0.25+0.15U+(0.4-0.35U)z,\]
where $Z$ is uniformly distributed on $\Z$.

We consider two possible cases for the latent space, depending on the shape restriction imposed on the potential outcome functions. First, we assume that both stage responses are $1$-Lipschitz, with a nondecreasing first stage  potential treatment function -- a shape restriction satisfied by the true treatment response above. In our second specification, we restrict the former latent space by considering only the subset of latent profiles $(\w_1,\w_2)$ whose second stage responses are nonincreasing.

As Table~\ref{tab:short-five} shows, using observation rectangles and $(\delta,\epsilon)=(0.025,0.001)$, the outcome monotonicity reduces the twelve-bin outer interval width by 44.7\%, moving its upper endpoint from $0.5045$ to $-0.1377$. The strict separation from zero uses the data as well as the shape restriction. Figure~\ref{fig:five-benchmark} compares these coarsened-population outer intervals with the analytical sharp continuous-population benchmarks, which makes the remaining approximation error clearly visible. 

\begin{table}[H]\centering
\caption{Five-value instrument: ATE bounds}\label{tab:short-five}
{\small\begin{tabular}{rlr}
\toprule Bins & Outcome restriction & Outer interval\\\midrule
8 & Lipschitz & $[-0.9657, 0.5739]$\\
8 & + Nonincreasing & $[-0.9657, -0.1006]$\\
12 & Lipschitz & $[-0.9324, 0.5045]$\\
12 & + Nonincreasing & $[-0.9323, -0.1377]$\\
\bottomrule\end{tabular}}
\par\medskip\begin{minipage}{.97\linewidth}\footnotesize Both specifications have a nondecreasing, $1$-Lipschitz first stage. Outcome responses are $[0,1]$-valued and $1$-Lipschitz. The entries use $\delta=0.025$, $\epsilon=0.001$.
\end{minipage}\end{table}
\begin{figure}[H]\centering
\includegraphics[width=0.75\linewidth]{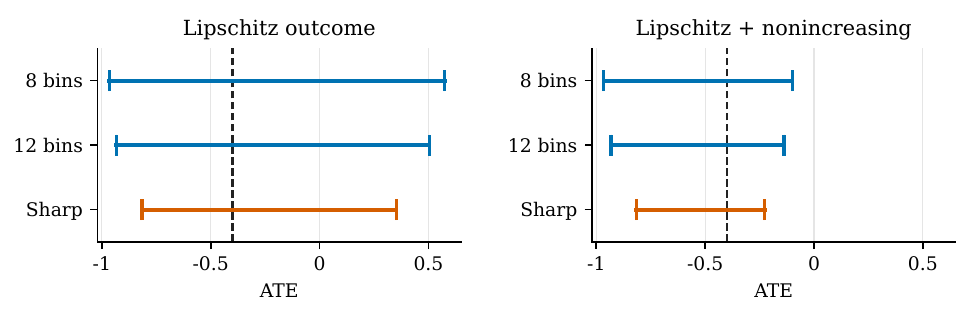}
\caption{Blue intervals are entropy-adjusted outer bounds for the coarsened quadrature population at $\delta = 0.025$, $\epsilon = 0.001$. Orange intervals are analytical sharp bounds for the original continuous population: $[-0.8158, 0.3516]$
and $[-0.8158, -0.2275]$. Dashed lines mark the generating ATE, $-0.4$. The analytical formulas are evaluated numerically. The eight- and twelve-bin partitions are not nested.}\label{fig:five-benchmark}
\end{figure}

The complete five-value comparison, including costs, two resolutions, both outcome specifications, and three entropy levels, took approximately three minutes on a MacBook Pro with a 2.3 GHz eight-core Intel Core i9-9880H processor and 16 GB of RAM. It used the full cost tensors of up to $227,136$ entries per sign.

\section{Application to Demand Estimation}\label{sec:application}
As an application, we revisit the food and leisure expenditure setting of \citet{gunsilius2019path}, using an analysis extract of the $1995/1996$ UK Family Expenditure Survey \citep{fes1995}. In line with previous literature, we concentrate on the fraction of households allocating at most a given budget share to each good, while focusing on the entire distribution of expenditure shares. Prior work has focused on Engel curves with endogenous expenditure \citep{blundell2007semi}, identification in nonseparable triangular models \citep{imbens2009identification}, and demand bounds under revealed preference \citep{blundell2008best}.
\begin{figure}[!ht]\centering
\includegraphics[width=0.75\linewidth]{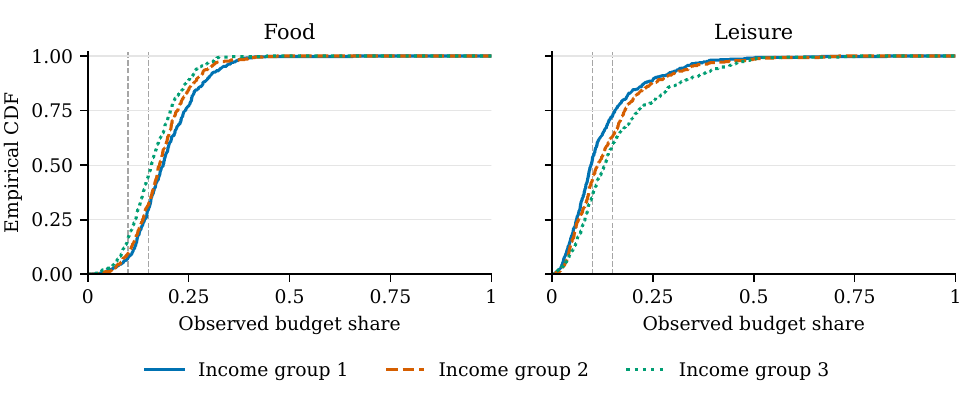}
\caption{Observed budget-share distributions across income groups}\label{fig:app-ecdf}
\begin{minipage}{.97\linewidth}\footnotesize Observed empirical conditional CDFs within the three income intervals in Table~\ref{tab:app-sample}, using 488, 434, and 398 estimation observations, respectively. Vertical lines mark the primary 10\% and 15\% thresholds.
\end{minipage}\end{figure}

The dataset we use consists of $1,650$ households with two adults, at most two children, and a self-employed or full-time-employed head aged 20--54. Using $330$ households as a calibration sample to determine the cutoffs for the instrument $Z$ (to have enough data in each bracket), we are left with $1,320$ households for our estimation. The outcome $Y$ is the food or leisure expenditure share, the treatment $D$ is normalized log total expenditure, and the instrument $Z$ is the normalized log-income of the head of the household. We decompose the latter into $3$ terciles, as this level of coarseness provides enough data in each section. Empirical CDFs are depicted in Figure \ref{fig:app-ecdf}. Following the previous literature \citep{blundell2007semi, imbens2009identification}, we assume the instrument is valid. Table~\ref{tab:app-sample} depicts the summary statistics of the sample. 
\begin{table}[!ht]\centering
\caption{Estimation sample by income interval}\label{tab:app-sample}
{\small\begin{tabular}{llrrrr}
\toprule Group & Income interval & $n$ & Mean $X$ & Food share & Leisure share\\\midrule
1 & $[0.0000,0.7763)$ & 488 & 0.458 & 0.195 & 0.126\\
2 & $[0.7763,0.8114)$ & 434 & 0.506 & 0.183 & 0.142\\
3 & $[0.8114,1.0000]$ & 398 & 0.578 & 0.164 & 0.166\\
All & $[0.0000,1.0000]$ & 1,320 & 0.510 & 0.182 & 0.143\\
\bottomrule\end{tabular}}\par\medskip
\begin{minipage}{.97\linewidth}\footnotesize The 330-household calibration sample sets the two internal income cutoffs for $Z$; the remaining $1,320$ are the data used. Cutoffs are rounded for display; group membership uses full precision. The last two columns are mean observed budget shares, not counterfactual quantities.
\end{minipage}\end{table}

Our target functional is the distributional contrast. Holding a budget-share threshold $q$ fixed, the targeted parameter is constructed from the effect function:
\[\Lambda(\w)=1_{\{\w_2(0.75)\leq q\}}-1_{\{\w_2(0.25)\leq q\}}.\]
Fixed a good and for each household's expenditure share function $\w_2$, the effect function evaluates whether the expenditure share crosses the threshold $q$ when moving from the lower to the higher expenditure level, $0.25$ to $0.75$. When averaged over the population of households, we obtain the distributional contrast
\[
\theta(q):=E_Q[\Lambda]=Q\{\w_2(0.75)\le q\}-Q\{\w_2(0.25)\le q\},\qquad q\in\{0.10,0.15\}.
\]
The reason we focus on this is analogous to the reasoning in \citet{gunsilius2019path}. If $\theta(q)$ is positive, it means that the higher expenditure intervention increases the probability of a share at or below $q$. A negative value, on the other hand, means the reverse. These correspond to the definitions of a necessity and luxury good, respectively. Focusing on food and leisure (a necessity and leisure good, respectively) we can then gauge if the corresponding bounds are informative in the sense that they validate that food is a necessity and leisure is luxury. This way, we can check with real data if the bounds give reasonable estimates. 

We assume that both the first and second stage latent variables are $1$-Lipschitz continuous. This baseline specification means that we do not make any monotonicity restrictions on either stage. As a secondary specification, we assume that each household's food share is nonincreasing, and the leisure share nondecreasing, in expenditure. These are monotonicity restrictions on the second stage, but we leave the first-stage direction unrestricted. They are put on the shares, so nonincreasing food shares do not require food spending itself to fall.

To compute the confidence intervals, we hold the income partition, observation grid, structural restrictions, distance penalty $\delta=0.025$, and positive entropy parameter fixed. 

Table~\ref{tab:app-asymptotic-sensitivity} reports the results and the strength of the respective monotonicity assumptions. In particular, without the monotonicity restriction of the second stage, the bounds contain $0$ and are largely uninformative about whether food is a necessity or luxury good. Once we impose the monotonicity on the second stage the bounds become strongly informative and do suggest, in line with economic intuition, that food is a necessity good and leisure is a luxury good. In particular, this is even true when considering the $95\%$ confidence intervals around each bound separately, as in Table~\ref{tab:app-asymptotic-sensitivity}.
\begin{table}[!ht]\centering
\caption{Fixed-entropy sensitivity: endpointwise 95\% Gaussian intervals ($\epsilon=0.001$)}\label{tab:app-asymptotic-sensitivity}
{\footnotesize\begin{tabular}{rllrrrr}
\toprule $q$ & Outcome & Model & $\widehat L_\epsilon$ & 95\% CI & $\widehat U_\epsilon$ & 95\% CI\\\midrule
0.10 & Food & L & -0.9529 & $[-0.9673, -0.9385]$ & 0.9958 & $[0.9911, 1.0005]$\\
0.10 & Food & L+$\downarrow$ & 0.0007 & $[0.0007, 0.0008]$ & 0.9954 & $[0.9907, 1.0001]$\\
0.10 & Leisure & L & -0.9834 & $[-0.9915, -0.9754]$ & 0.9113 & $[0.8851, 0.9375]$\\
0.10 & Leisure & L+$\uparrow$ & -0.9767 & $[-0.9853, -0.9681]$ & -0.0014 & $[-0.0015, -0.0014]$\\
0.15 & Food & L & -0.9756 & $[-0.9857, -0.9655]$ & 0.9987 & $[0.9973, 1.0000]$\\
0.15 & Food & L+$\downarrow$ & 0.0008 & $[0.0008, 0.0009]$ & 0.9983 & $[0.9970, 0.9996]$\\
0.15 & Leisure & L & -0.9888 & $[-0.9949, -0.9827]$ & 0.9313 & $[0.9087, 0.9539]$\\
0.15 & Leisure & L+$\uparrow$ & -0.9682 & $[-0.9803, -0.9560]$ & -0.0015 & $[-0.0015, -0.0014]$\\
\bottomrule\end{tabular}}\par\medskip
\begin{minipage}{.97\linewidth}\footnotesize L denotes Lipschitz assumptions; arrows denote outcome monotonicity. Each confidence interval concerns its own fixed-entropy endpoint and uses the normal critical value $1.96$.
\end{minipage}\end{table}

\section{Conclusion}

  This paper develops a general approach to computing and estimating sharp bounds on partially identified averages. The framework accommodates unobserved heterogeneity represented by general structural response functions, together with distributional restrictions indexed by a potentially continuous range of observed conditions. Under our regularity conditions, we show that the identification problem is equivalent to a constrained optimal transport problem on the space of admissible response paths. This representation preserves the identifying information in the observable distributions while making the role of structural assumptions explicit.

  We make this formulation computable through deterministic discretization, relaxation of structural support restrictions, and relative entropy regularization. The resulting problems can be solved by generalized Sinkhorn iterations, an area that is currently very active in the trajectory inference community \citep[e.g.][]{lavenant2024mathematical, schiebinger2019optimal}. Our statistical analysis establishes consistency for the original sharp bounds when the approximation parameters are suitably chosen. For fixed regularization and discretization, we also establish a central limit theorem with an estimable asymptotic variance. The simulations illustrate the computational implementation and the contribution of shape restrictions, while the household expenditure application demonstrates how the method can provide informative bounds.

  Several directions remain open. First, computation becomes increasingly demanding as the number of discretizations in $\Z$ grows in practice, which motivates settings that exploit additional structure in the transport cost and admissible response paths such as the contribution in \citet{tan2026partial}. An alternative is to use low-rank tensor approximations for the general cost tensor or exploiting Markov structures. Another extension, which is a well-known and longstanding open problem in the literature on statistical optimal transport, is to obtain inference to allow discretization and regularization to vary with sample size.

\newpage 

\appendix

\section{Auxiliary Mathematical Results}\label{app:aux}

This appendix collects the auxiliary mathematical objects and supporting results used throughout the paper: the discretization projections (Section~\ref{subsec:projections}), the cost functions and associated correspondences (Section~\ref{subsec:costs-correspondences}), and the marginal operators together with the resulting constraint sets (Section~\ref{subsec:marginal-operators}). Throughout, we follow the notation and terminology introduced in the main text.

Let $\mathcal{H}$ be a separable Hilbert space with norm $|\cdot|_\mathcal{H}$, and let $\X\subseteq\mathcal{H}$ be a convex, compact subset, endowed with the subspace topology and norm $|\cdot|_\X$; under this norm $\X$ is a compact Polish space. Fixing the cell probabilities $(\lambda_{k,n})_{k\in[n]}$ of the partition $\mathcal{A}_n$ of Section~\ref{sec:discretization}, we metrize the product $\X^n$ by
\[|x-y|_{\X^n}=\sum_{k\in[n]}|x_k-y_k|_{\X}\,\lambda_{k,n}\,,\]
a distance under which $\X^n$ is likewise a compact Polish space; this is the metric with respect to which $\pi_n$ will be shown to act as an isometry below.

Write $\mathbb{L}^1(\Z,\lambda;\mathcal{H})$ for the Bochner space of $\mathcal{H}$-valued functions on the probability space $(\Z,\sigma(\Z),\lambda)$, equipped with the norm
\[\|e\|_{\mathbb{L}^1(\Z,\lambda;\mathcal{H})}=\int_\Z |e_z|_{\mathcal{H}}\,d\lambda(z),\qquad e\in\mathbb{L}^1(\Z,\lambda;\mathcal{H}),\]
where $\sigma(\Z)$ denotes its Borel $\sigma$-algebra; we refer the reader to \cite{Hytonen2016} for a comprehensive treatment of Bochner spaces and their properties.

The subset of functions taking values in $\X$, $\lambda$-almost surely, is denoted $\mathbb{H}$; endowed with the restriction of $\|\cdot\|_{\mathbb{L}^1(\Z,\lambda;\mathcal{H})}$, which, with some abuse of notation, we continue to write $\|\cdot\|_{\mathbb{H}}$, $\mathbb{H}$ is itself a Polish space -- the path space $L^1(\Z,\lambda;\X)$ of Section~\ref{sec:setup_objects}. The remaining objects used in this appendix are introduced as needed, in the three subsections below.

\subsection{Projections and Discretizations}\label{subsec:projections}

The projection $\pi_n$ of Section~\ref{sec:discretization} is the single device that reduces the endogenous path, the exogenous variable, and the marginal constraints simultaneously to a common finite resolution $n$; this subsection collects its properties, each doing specific work later in the paper. Idempotence and the resulting compactness of the range $\pi_n(\mathbb{H})$ make the discretized transport problem of Section~\ref{sec:discretization} well posed; the isometry between $\pi_n(\mathbb{H})$ and $\X^n$ is what turns that problem into a finite-dimensional program; compatibility with refinement, $\pi_n\circ\pi_{n+1}=\pi_n$; and convergence to the identity is what lets the discretization be undone in the limit.

Throughout this section, we fix a refining sequence of partitions $(\mathcal{A}_n)_{n\in\N}$ of $\Z$ as in Section~\ref{sec:discretization}: each $\mathcal{A}_n=\{A_{1,n},\ldots,A_{n,n}\}$ has $n$ cells of strictly positive measure, $\lambda_{k,n}:=\lambda(A_{k,n})>0$ for every $k\in[n]$. The partitions are ordered and nested, so that each cell $A_{k,n}$ of $\mathcal{A}_n$ is a union of cells of the finer partition $\mathcal{A}_{n+1}$. Their mesh vanishes,$\rho_n:=\max_{k\in[n]}\operatorname{diam}_\Z(A_{k,n})\ \rightarrow\ 0$ as $n\to\infty$. This rules out a partition that keeps refining in one region while leaving a cell of fixed size untouched forever, and is what makes the piecewise-constant approximation below track a path's local behavior, not only its cell count.

Define the projections $\pi_n:\mathbb{L}^1(\Z,\lambda;\mathcal{H})\rightarrow \mathbb{L}^1(\Z,\lambda;\mathcal{H})$ and $\tilde{\pi}_n:\mathbb{L}^1(\Z,\lambda;\mathcal{H})\rightarrow \mathcal{H}^n$, together with their components $\pi_{k,n}:\mathbb{L}^1(\Z,\lambda;\mathcal{H})\rightarrow \mathbb{L}^1(\Z,\lambda;\mathcal{H})$ and $\tilde{\pi}_{k,n}:\mathbb{L}^1(\Z,\lambda;\mathcal{H})\rightarrow \mathcal{H}$, by
\begin{align*}
    \tilde{\pi}_{k,n}(e) &= \frac{1}{\lambda_{k,n}}\int_{A_{k,n}} e_z\,d\lambda(z), & \tilde{\pi}_n(e) &= (\tilde{\pi}_{k,n}(e))_{k\in[n]},\\
    \pi_{k,n}(e)(z) &= \tilde{\pi}_{k,n}(e)\,1_{A_{k,n}}(z), & \pi_n(e)(z) &= \sum_{k\in[n]} \pi_{k,n}(e)(z).
\end{align*}

In words, $\tilde\pi_{k,n}(e)$ is the $\lambda$-conditional average of the path $e$ over $A_{k,n}$, and $\tilde\pi_n(e)$ collects these $n$ averages into a single point of $\mathcal{H}^n$; correspondingly, $\pi_n(e)$ patches these averages together into the piecewise-constant path equal to $\tilde\pi_{k,n}(e)$ on each cell $A_{k,n}$. Restricted to $\mathbb{H}\subseteq\mathbb{L}^1(\Z,\lambda;\mathcal{H})$, the range of $\pi_n$ is denoted $\pi_n(\mathbb{H})$: the set of all piecewise-constant functions on $\mathcal{A}_n$ taking values in $\X$.

\begin{proposition}\label{prop:properties-projections}
    Under Assumption~\ref{ass:PI}, the projections defined above satisfy the following properties.
    \begin{enumerate}
        \item The operators $\pi_n$, $\pi_{k,n}$, $\tilde{\pi}_n$, and $\tilde{\pi}_{k,n}$ are linear and continuous, with operator norm at most $1$.
        \item Restricted to $\mathbb{H}$, $\pi_n$ is idempotent, $\pi_n\circ\pi_n=\pi_n$, and its range $\pi_n(\mathbb{H})$ is a convex, compact subset of $\mathbb{H}$.
        \item The restriction of $\tilde{\pi}_n$ to $\pi_n(\mathbb{H})$ is an isometric bijection onto $\X^n$.
        \item The projections are compatible with refining the partition:
        \[\tilde{\pi}_n\circ\pi_{n+1}=\tilde{\pi}_n \qquad\text{and}\qquad \pi_n\circ\pi_{n+1}=\pi_n.\]
        \item For any path $e\in\mathbb{H}$, $\|\pi_n(e)-e\|_{\mathbb{H}}\to 0$ as $n\to\infty$, and the convergence is uniform over the achievable path set $\mathbb{K}\subseteq\mathbb{H}$, that is,
        \[\kappa_n:=\sup_{e\in\mathbb{K}}\|\pi_n(e)-e\|_{\mathbb{H}}\rightarrow 0.\]
    \end{enumerate}
\end{proposition}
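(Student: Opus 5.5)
We prove the five items in order. Items 1–4 follow directly from the definitions and the convexity of $\X$. Item 5 is the only genuinely analytic statement, and we handle it by a density argument followed by compactness of $\mathbb{K}$. Throughout we work in the convex compact case fixed at the start of this appendix. When $\X$ is finite, cell averages may leave $\X$, so one would replace $\X$ by its closed convex hull.

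\emph{Items 1–3.} Linearity of $\tilde{\pi}_{k,n}$ is linearity of the Bochner integral, and the other operators are built from it linearly. For the norm bound we use the Bochner inequality $|\int_A e_z\,d\lambda|_{\mathcal{H}}\le\int_A|e_z|_{\mathcal{H}}\,d\lambda$. This gives
\[\|\pi_n(e)\|=\sum_{k}\lambda_{k,n}|\tilde{\pi}_{k,n}(e)|_{\mathcal{H}}\le\sum_k\int_{A_{k,n}}|e_z|\,d\lambda=\|e\|,\]
and the same computation bounds $\tilde{\pi}_n$ in the weighted norm on $\mathcal{H}^n$ and each component. Idempotence holds because the average of a function that is constant on $A_{k,n}$ is that constant.

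For $e\in\mathbb{H}$, each $\tilde{\pi}_{k,n}(e)$ lies in $\X$. Indeed, the normalized Bochner average of an $\X$-valued function lies in the closed convex hull of $\X$; one sees this by separating with continuous linear functionals via Hahn–Banach, and that hull is $\X$ itself. Hence $\pi_n(\mathbb{H})$ is exactly the set of $\X$-valued functions that are constant on each cell of $\mathcal{A}_n$. The map $\tilde{\pi}_n$ sends this set onto $\X^n$: every $x\in\X^n$ is the image of the path $\sum_k x_k1_{A_{k,n}}$. It is injective, and the identity
\[\|\pi_n e-\pi_n e'\|_{\mathbb{H}}=\sum_k\lambda_{k,n}|\tilde{\pi}_{k,n}e-\tilde{\pi}_{k,n}e'|_{\X}=|\tilde{\pi}_n e-\tilde{\pi}_n e'|_{\X^n}\]
is precisely the isometry of item 3. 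Compactness of $\pi_n(\mathbb{H})$ then follows because it is the isometric image of the compact set $\X^n$. Convexity follows from convexity of $\X^n$ and linearity of $\tilde{\pi}_n^{-1}$ on piecewise-constant paths.

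\emph{Item 4.} Since the partitions are nested, $A_{k,n}$ is a disjoint union of cells $A_{j,n+1}$. Averaging over $A_{k,n}$ the function that equals the $A_{j,n+1}$-averages is the weighted mean $\sum_j\frac{\lambda_{j,n+1}}{\lambda_{k,n}}\tilde{\pi}_{j,n+1}(e)$, which equals $\tilde{\pi}_{k,n}(e)$ by additivity of the integral. Hence $\tilde{\pi}_n\circ\pi_{n+1}=\tilde{\pi}_n$, and $\pi_n\circ\pi_{n+1}=\pi_n$ follows.

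\emph{Item 5 (main obstacle).} For pointwise convergence we first treat continuous paths $f\in C(\Z,\mathcal{H})$. Such an $f$ is uniformly continuous on the compact space $\Z$ with some modulus $w_f$. On each cell, $|\tilde{\pi}_{k,n}f-f_z|\le w_f(\rho_n)$, so $\|\pi_nf-f\|\le w_f(\rho_n)\to0$ because the mesh $\rho_n$ vanishes. Next, $C(\Z,\mathcal{H})$ is dense in $\mathbb{L}^1(\Z,\lambda;\mathcal{H})$, since $\lambda$ is a finite Borel measure on a compact metric space. Combining this density with the operator-norm bound of item 1 gives
\[\|\pi_ne-e\|\le\|\pi_n(e-f)\|+\|\pi_nf-f\|+\|f-e\|\le 2\|e-f\|+w_f(\rho_n).\]
Choosing $f$ close to $e$ first and then letting $n\to\infty$ yields $\|\pi_ne-e\|\to0$.

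For uniformity we use that $\mathbb{K}=g(\Omega)$ is compact, as the continuous image of a compact set (Assumption~\ref{ass:PI}). Fix $\eta>0$ and cover $\mathbb{K}$ by finitely many balls $B(e^i,\eta)$. Since the maps $e\mapsto\pi_ne-e$ are $2$-Lipschitz uniformly in $n$, this gives
\[\sup_{e\in\mathbb{K}}\|\pi_ne-e\|\le2\eta+\max_i\|\pi_ne^i-e^i\|,\]
and the maximum vanishes by pointwise convergence. Thus $\kappa_n\to0$. The delicate point is justifying the density of continuous $\mathcal{H}$-valued functions in the Bochner space. This holds because simple functions are dense, and the indicator of each Borel set is $L^1$-approximable by continuous functions, by regularity of $\lambda$ on a metric space.
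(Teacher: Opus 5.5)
Your proposal is correct and follows the paper's proof essentially step for step: the Bochner (Jensen) inequality for the norm bound, cell-averaging for idempotence and refinement compatibility, closed convexity of $\X$ for the range together with the explicit inverse $x\mapsto\sum_k x_k1_{A_{k,n}}$ for the isometry, and density of continuous paths plus a finite cover of the compact set $\mathbb{K}$ for item 5. Two points are slightly more careful than the paper's wording without changing the route: you state explicitly that the norm-one bound for $\tilde{\pi}_n$ and $\tilde{\pi}_{k,n}$ holds in the $\lambda$-weighted norms (in the plain norm $|\cdot|_{\mathcal{H}}$, $\tilde{\pi}_{k,n}$ has norm $1/\lambda_{k,n}$), and you approximate by continuous $\mathcal{H}$-valued rather than $\X$-valued paths.
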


\begin{proof}
    See Appendix~\ref{app:proof_properties_projections}.
\end{proof}

\subsection{Cost Functions and Correspondences}\label{subsec:costs-correspondences}

The correspondences $G,G_n$ collect the latent types consistent with a given path, and are behind the co-mechanism of Proposition~\ref{prop:co-mechanism}. The cost functions $\Psi,\Psi_n$, and their moment-penalized counterpart $\Psi_{\delta,n}$ of Section~\ref{sec:moment_penalization}, are the objectives of the transport problem \eqref{eq:OT_value} and its discretized and regularized counterparts. This section establishes the regularity each of these objects needs to play its role.

Let $G,G_n:\mathbb{H}\rightrightarrows \W$ be the correspondences collecting the latent types mapped to a given path, exactly or after discretization:
\[G(e)=\{\w\in\W: g(\w)=e\}\quad\text{ and } \quad G_n(e)=\{\w\in\W: \pi_n(g(\w))=\pi_n(e)\},\qquad e\in\mathbb{H}.\]
The fiber $G(e)$ is exactly the set over which the cost $\Psi(e)$ of \eqref{eq:Psi} minimizes $\Lambda$; $G_n(e)$ plays the same role for its discretized counterpart $\Psi_n$.

\begin{lemma}\label{lemma:correspondences-objectives}
    Under Assumption~\ref{ass:PI}, the correspondences $G,G_n:\mathbb{H}\rightrightarrows \W$ are compact-valued and upper hemicontinuous; the cost functions $\Psi,\Psi_n:\mathbb{H}\rightarrow \R$ are bounded and lower semicontinuous; $\Psi_n$ monotonically converges to $\Psi$ for paths $e\in\mathbb{K}$, i.e. $\Psi_n(e)\uparrow \Psi(e)$.
\end{lemma}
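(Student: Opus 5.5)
The plan is to derive everything from the continuity of $g$ (Assumption~\ref{ass:PI}) and of $\pi_n$ (Proposition~\ref{prop:properties-projections}), together with compactness of $\W$.

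\textbf{Correspondences.} The graphs of $G$ and $G_n$ are
\[\{(e,\w):g(\w)=e\}\quad\text{and}\quad\{(e,\w):\pi_n(g(\w))=\pi_n(e)\}.\]
Both are closed in $\mathbb{H}\times\W$, because $(e,\w)\mapsto g(\w)-e$ and $(e,\w)\mapsto\pi_n(g(\w))-\pi_n(e)$ are continuous. Each fiber is therefore a closed subset of the compact space $\W$, hence compact. A correspondence into a compact Hausdorff space with closed graph is upper hemicontinuous, by the closed graph theorem for correspondences. Concretely: if $e_j\to e$, $\w_j\in G(e_j)$ and $\w_j$ stays outside an open $U\supseteq G(e)$, extract a convergent subsequence $\w_j\to\w$; closedness of the graph gives $\w\in G(e)\setminus U$, a contradiction.

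\textbf{Costs.} Boundedness is immediate, since $\Psi,\Psi_n\in[0,\sup\Lambda+1]$. For lower semicontinuity, note that $\mathbb{K}=g(\W)$ is compact and so is $\pi_n(\mathbb{K})$. The set $\{\pi_n(e)\in\pi_n(\mathbb{K})\}=\pi_n^{-1}(\pi_n(\mathbb{K}))$ is therefore closed, and its complement, where the cost equals the maximal value $\sup\Lambda+1$, is open.

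Take $e_j\to e$. If $e\notin\mathbb{K}$, then $e_j\notin\mathbb{K}$ eventually, so $\Psi(e_j)=\Psi(e)$ eventually. If $e\in\mathbb{K}$, pass to a subsequence realizing $\liminf_j\Psi(e_j)$. Along that subsequence:
\begin{itemize}
\item terms with $e_j\notin\mathbb{K}$ have cost $\sup\Lambda+1\ge\Psi(e)$;
\item for terms with $e_j\in\mathbb{K}$, pick minimizers $\w_j\in G(e_j)$, which exist because $G(e_j)$ is compact and $\Lambda$ is continuous. Extract $\w_j\to\w$; closedness of the graph gives $\w\in G(e)$, so $\liminf_j\Lambda(\w_j)=\Lambda(\w)\ge\Psi(e)$.
\end{itemize}
In both cases $\liminf_j\Psi(e_j)\ge\Psi(e)$. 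The argument for $\Psi_n$ is identical with $G_n$ in place of $G$.

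\textbf{Monotone convergence.} Fix $e\in\mathbb{K}$. By the refinement property $\pi_n\circ\pi_{n+1}=\pi_n$, if $\pi_{n+1}(g(\w))=\pi_{n+1}(e)$ then applying $\pi_n$ gives $\pi_n(g(\w))=\pi_n(e)$. Hence
\[G(e)\subseteq G_{n+1}(e)\subseteq G_n(e),\]
so $\Psi_n(e)$ is nondecreasing in $n$ and bounded above by $\Psi(e)$.

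For the limit, pick minimizers $\w_n\in G_n(e)$ with $\Lambda(\w_n)=\Psi_n(e)$ and extract $\w_n\to\w$. Then
\[\|g(\w)-e\|\le\|g(\w)-g(\w_n)\|+\|g(\w_n)-\pi_n g(\w_n)\|+\|\pi_n(e)-e\|,\]
using $\pi_n g(\w_n)=\pi_n(e)$. The first term vanishes by continuity of $g$. The last two are each at most $\kappa_n\to0$ by item 5 of Proposition~\ref{prop:properties-projections}, since both $g(\w_n)$ and $e$ lie in $\mathbb{K}$. So $\w\in G(e)$ and
\[\lim_n\Psi_n(e)=\Lambda(\w)\ge\Psi(e),\]
which together with the monotone bound gives $\Psi_n(e)\uparrow\Psi(e)$.

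The main obstacle is this last identification of the limit point as an element of $G(e)$. It needs the \emph{uniform} convergence $\kappa_n\to0$ over $\mathbb{K}$, rather than mere pointwise convergence of $\pi_n$ to the identity, because the paths $g(\w_n)$ vary with $n$.
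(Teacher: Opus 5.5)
Your proof is correct and follows essentially the same route as the paper: closed graphs plus compactness of $\W$ for compact-valuedness and upper hemicontinuity, minimizers together with the closed graph for lower semicontinuity, and the nesting $G(e)\subseteq G_{n+1}(e)\subseteq G_n(e)$ combined with a convergent subsequence of minimizers and $\kappa_n\to 0$ for the monotone limit. One small remark: your closing claim that uniform convergence is indispensable is overstated. Along your subsequence $g(\w_n)\to g(\w)$ and $\|\pi_n\|\le 1$, so $\|g(\w_n)-\pi_n g(\w_n)\|_{\mathbb{H}}\le 2\|g(\w_n)-g(\w)\|_{\mathbb{H}}+\|g(\w)-\pi_n g(\w)\|_{\mathbb{H}}$, and pointwise convergence of $\pi_n$ at $g(\w)$ already suffices.
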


\begin{proof}
    See Appendix~\ref{app:proof_correspondences_objectives}.
\end{proof}

The moment-penalized cost $\Psi_{\delta,n}$ of Section~\ref{sec:moment_penalization} is likewise attained on a correspondence, selecting the latent types achieving its minimum:
\[G^\ast_{\delta,n}(e)=\{\w\in\W: \Psi_{\delta,n}(e)=\Lambda(\w)+\delta^{-1}\|\pi_n(g(\w)-e)\|_{\mathbb{H}}\},\qquad e\in\mathbb{H}.\]

\begin{lemma}\label{lemma:regularized-cost}
    Under Assumption~\ref{ass:PI}, the cost function $\Psi_{\delta,n}:\mathbb{H}\rightarrow\R$ is bounded and $\delta^{-1}$-Lipschitz continuous, and the correspondence $G^\ast_{\delta,n}$ is non-empty, compact-valued, and upper hemicontinuous. Moreover, for any path with $\pi_{n}(e)\in\pi_n(\mathbb{K})$, $\Psi_{\delta,n}$ increases monotonically to $\Psi_n$ as $\delta\downarrow0$, and every $\w\in G^\ast_{\delta,n}(e)$ satisfies
    \[\|\pi_n(g(\w)-e)\|_{\mathbb{H}}\leq \delta.\]
    Finally, for every path $e\in\mathbb{H}$, the perturbed cost satisfies
    \[\Psi_{\delta,n}(e)\geq \frac{1}{\delta}\inf_{\w\in\W}\|\pi_n(g(\w)-e)\|_{\mathbb{H}}.\]
\end{lemma}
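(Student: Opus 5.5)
The plan is to treat $\Psi_{\delta,n}$ as an infimal convolution of the lower semicontinuous function $\Lambda$ on the compact space $\Omega$ with a Lipschitz penalty, and to read off each claim from that structure.

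\textbf{Boundedness and Lipschitz continuity.} Write $f(\omega,e):=\Lambda(\omega)+\delta^{-1}\|\pi_n(g(\omega)-e)\|_{\mathbb{H}}$. Since $\Lambda\in[0,1]$ and $\pi_n(g(\omega)-e)$ takes values in $\X-\X$, we get $0\le f\le 1+\delta^{-1}\mathrm{diam}(\X)$, so $\Psi_{\delta,n}$ is bounded. For fixed $\omega$, Proposition~\ref{prop:properties-projections}(1) (linearity and norm at most one of $\pi_n$) gives
\[
|f(\omega,e)-f(\omega,e')|\le\delta^{-1}\|\pi_n(e-e')\|_{\mathbb{H}}\le\delta^{-1}\|e-e'\|_{\mathbb{H}}.
\]
An infimum of uniformly $\delta^{-1}$-Lipschitz functions is $\delta^{-1}$-Lipschitz, which settles the first claim.

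\textbf{The correspondence $G^\ast_{\delta,n}$.} The map $f$ is jointly continuous, because $g$ and $\pi_n$ are continuous and $\Lambda$ is continuous (Assumption~\ref{ass:PI}). The constraint correspondence $e\mapsto\Omega$ is constant with compact values. Berge's maximum theorem therefore gives that the argmin $G^\ast_{\delta,n}$ is non-empty, compact-valued, and upper hemicontinuous, and that $\Psi_{\delta,n}$ is continuous.

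\textbf{The distance bound for minimizers.} Take $e$ with $\pi_n(e)\in\pi_n(\mathbb{K})$ and pick $\omega_0$ with $\pi_n(g(\omega_0))=\pi_n(e)$. Then $\Psi_{\delta,n}(e)\le\Lambda(\omega_0)\le 1$. Any $\omega\in G^\ast_{\delta,n}(e)$ satisfies
\[
\delta^{-1}\|\pi_n(g(\omega)-e)\|_{\mathbb{H}}\le\Psi_{\delta,n}(e)-\Lambda(\omega)\le 1,
\]
which is the bound $\|\pi_n(g(\omega)-e)\|_{\mathbb{H}}\le\delta$. The final inequality holds for every $e$ simply because $\Lambda\ge0$, so $f(\omega,e)\ge\delta^{-1}\|\pi_n(g(\omega)-e)\|_{\mathbb{H}}$; taking the infimum over $\omega$ gives it.

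\textbf{Monotone convergence (main step).} Monotonicity is immediate: $f$ is pointwise increasing as $\delta\downarrow0$. Restricting the infimum to $G_n(e)$, where the penalty vanishes, gives $\Psi_{\delta,n}(e)\le\Psi_n(e)$. For the reverse inequality in the limit, take $\delta_j\downarrow0$ and $\omega_j\in G^\ast_{\delta_j,n}(e)$, and extract a subsequence with $\omega_j\to\bar\omega$ by compactness of $\Omega$. The distance bound gives $\|\pi_n(g(\omega_j)-e)\|_{\mathbb{H}}\le\delta_j\to0$, and continuity of $\pi_n\circ g$ then yields $\pi_n(g(\bar\omega))=\pi_n(e)$, that is, $\bar\omega\in G_n(e)$. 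Hence
\[
\lim_j\Psi_{\delta_j,n}(e)\ge\liminf_j\Lambda(\omega_j)=\Lambda(\bar\omega)\ge\Psi_n(e).
\]
Since the limit exists by monotonicity, the whole family satisfies $\Psi_{\delta,n}(e)\uparrow\Psi_n(e)$.

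The only delicate point is the continuity of $\Lambda$ along the subsequence together with closedness of the fiber $G_n(e)$. Both follow from Assumption~\ref{ass:PI} and Lemma~\ref{lemma:correspondences-objectives}.
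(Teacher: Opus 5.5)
Your proposal is correct and follows essentially the same route as the paper: the Lipschitz bound through the norm-one projection $\pi_n$, the distance bound obtained by comparing a minimizer with an exact preimage $\omega_0\in G_n(e)$ and using $\Lambda\ge 0$, and the monotone limit via a convergent subsequence of minimizers whose limit lands in $G_n(e)$. The only differences are minor. You get upper hemicontinuity of $G^\ast_{\delta,n}$ from Berge's maximum theorem, which is legitimate because $\Lambda$ is continuous and so the objective is jointly continuous, whereas the paper checks closedness of the graph directly. Your sup bound $1+\delta^{-1}\mathrm{diam}(\X)$ is also slightly sharper than the paper's $1+2\delta^{-1}\mathrm{diam}(\X)$.
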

\begin{proof}
    See Appendix~\ref{app:proof_regularized_cost}.
\end{proof}

An immediate consequence of Lemma~\ref{lemma:regularized-cost} is that the regularized cost $\Psi_{\delta,n}$ is well-behaved for paths whose projection is achievable, while it grows without bound for those whose distance to the projected achievable set is positive.

Once paths are identified with their images in $\X^n$ via the isometry of Proposition~\ref{prop:properties-projections}, $\Psi_{\delta,n}$ descends to a cost function $c_{\delta,n}:\X^n\rightarrow \R$ on the finite-dimensional space. We show it inherits well-definedness, boundedness, and $\delta^{-1}$-Lipschitz continuity from $\Psi_{\delta,n}$.

\begin{lemma}\label{lemma:c-is-cont}
Under Assumption~\ref{ass:PI}, the function $c_{\delta,n}:\X^n\rightarrow \R$ is well-defined, bounded by $1+2\delta^{-1}\mathrm{diam}(\X)$, and $\delta^{-1}$-Lipschitz continuous.
\end{lemma}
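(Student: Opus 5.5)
The plan is to reduce everything to properties of $\Psi_{\delta,n}$ already established in Lemma~\ref{lemma:regularized-cost}, transported through the isometry of Proposition~\ref{prop:properties-projections}. By item~3 of that proposition, the restriction $\tilde{\pi}_n|_{\pi_n(\mathbb{H})}:\pi_n(\mathbb{H})\to\X^n$ is an isometric bijection. Its inverse $\tilde{\pi}_n^{-1}(x_1,\ldots,x_n)=\sum_{k}x_k1_{A_{k,n}}$ is therefore a well-defined element of $\pi_n(\mathbb{H})\subseteq\mathbb{H}$ for every $x\in\X^n$. Since $\Psi_{\delta,n}$ is defined on all of $\mathbb{H}$, the composition $c_{\delta,n}=\Psi_{\delta,n}\circ\tilde{\pi}_n^{-1}$ is well defined. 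It is also worth noting that nothing is lost by restricting to $\pi_n(\mathbb{H})$: $\Psi_{\delta,n}(e)$ depends on $e$ only through $\pi_n(e)$, so $\Psi_{\delta,n}=c_{\delta,n}\circ\tilde{\pi}_n$ on $\mathbb{H}$, using idempotence (item~2).

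For the \textbf{Lipschitz property}, take $x,y\in\X^n$ and set $e=\tilde{\pi}_n^{-1}(x)$, $e'=\tilde{\pi}_n^{-1}(y)$. Then
\[|c_{\delta,n}(x)-c_{\delta,n}(y)|=|\Psi_{\delta,n}(e)-\Psi_{\delta,n}(e')|\le\delta^{-1}\|e-e'\|_{\mathbb{H}}=\delta^{-1}|x-y|_{\X^n},\]
where the inequality is the $\delta^{-1}$-Lipschitz continuity in Lemma~\ref{lemma:regularized-cost} and the final equality is the isometry. If one prefers not to rely on that lemma, the same bound follows directly. For each fixed $\omega$, the map $e\mapsto\Lambda(\omega)+\delta^{-1}\|\pi_n(g(\omega))-\pi_n(e)\|_{\mathbb{H}}$ is $\delta^{-1}$-Lipschitz, because it is a reverse triangle inequality combined with $\|\pi_n\|\le1$ (item~1). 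An infimum of uniformly Lipschitz functions is Lipschitz with the same constant, provided it is finite.

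For the \textbf{explicit bound}, nonnegativity is immediate because $\Lambda\ge0$. For the upper bound, pick any $\omega_0\in\Omega$, which is nonempty by Assumption~\ref{ass:PI}. Then
\[c_{\delta,n}(x)\le\Lambda(\omega_0)+\delta^{-1}\|\pi_n(g(\omega_0))-\tilde{\pi}_n^{-1}(x)\|_{\mathbb{H}}\le 1+\delta^{-1}\sum_k\lambda_{k,n}|\tilde{\pi}_{k,n}(g(\omega_0))-x_k|_{\X}.\]
Each $\tilde{\pi}_{k,n}(g(\omega_0))$ lies in $\X$, since $\X$ is convex and closed and a conditional average of $\X$-valued functions stays in $\X$. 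Hence each term is at most $\mathrm{diam}(\X)$, and the bound is $1+\delta^{-1}\mathrm{diam}(\X)$, which is below the stated $1+2\delta^{-1}\mathrm{diam}(\X)$.

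The \textbf{main obstacle} is the finite-$\X$ case allowed by Assumption~\ref{ass:PI}. There $\X$ is not convex, so the averages $\tilde{\pi}_{k,n}(g(\omega_0))$ need not lie in $\X$. In that case I would bound the distance through an arbitrary point of $\X$ via the triangle inequality. Both $\tilde{\pi}_{k,n}(g(\omega_0))$ and $x_k$ are then within $\mathrm{diam}(\mathrm{conv}\,\X)=\mathrm{diam}(\X)$ of any point of $\X$, which yields exactly the factor $2$ in the stated bound. The well-definedness of the isometry onto $\X^n$ needs the same care: in the discrete case $\pi_n(\mathbb{H})$ should be read as the piecewise-constant $\X$-valued paths. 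I expect this interpretation issue, rather than any estimate, to be the delicate point.
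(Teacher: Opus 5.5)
Your proof is correct and is essentially the paper's: $c_{\delta,n}=\Psi_{\delta,n}\circ\tilde{\pi}_n^{-1}$ is well defined through the isometry of Proposition~\ref{prop:properties-projections}, and the $\delta^{-1}$-Lipschitz property and the boundedness are transferred from $\Psi_{\delta,n}$ (Lemma~\ref{lemma:regularized-cost}). Your explicit derivation of the bound, which gives the sharper $1+\delta^{-1}\mathrm{diam}(\X)$ in the convex case, and your discussion of finite $\X$ only add detail the paper leaves implicit. In the finite case nothing is actually delicate: $x\mapsto\sum_k x_k 1_{A_{k,n}}$ is always a well-defined isometry into $\mathbb{H}$. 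Moreover, both $\tilde{\pi}_{k,n}(g(\omega_0))$ and $x_k$ lie in $\mathrm{conv}\,\X$, whose diameter equals $\mathrm{diam}(\X)$, so the factor $2$ is not needed there either.
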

\begin{proof}
    See Appendix~\ref{app:proof_c_is_cont}.
\end{proof}

We close this subsection by providing a variant of the cost function $\Psi_{\delta,n}$ (and $c_{\delta,n}$) that is less computationally demanding in its construction. This is because $\Psi_{\delta,n}$ involves taking a minimum over the entire latent space $\W$, which is often high-dimensional and numerically expensive. The variant replaces this minimization with a log-sum-exp operation, an entropic smoothing of the minimum operator. It is this regularized version that is used in practice by the numerical implementation of the generalized Sinkhorn algorithm.

For this, let $d$ denote the metric of $\W$ and let $Q_0\in\P(\W)$ be a reference distribution with full support on $\W$. For each fixed parameter $\zeta>0$, we define the entropy-regularized cost function $\Psi_{\delta,n,\zeta}:\mathbb{H}\rightarrow\R$ as
\[\Psi_{\delta,n,\zeta}(e) = -\zeta\log\left( E_{Q_0}\left[ \exp(-\zeta^{-1} (\Lambda(\w)+\delta^{-1} \|\pi_n(g(\w)-e)\|_{\mathbb{H}}))\right]\right),\]
and define the corresponding cost function $c_{\delta,n,\zeta}:\X^n\rightarrow\R$ on the finite-dimensional space analogously.

A classical argument in the spirit of the Laplace principle in large deviation theory states that if the reference distribution $Q_0$ puts enough mass on small balls, then the regularized cost function $\Psi_{\delta,n,\zeta}$ converges to the original cost function $\Psi_{\delta,n}$ as $\zeta \to 0$. This ensures that the regularized version is a good approximation of the original one. Indeed, for each $\w^\ast\in\W$ and $t>0$, let the small-ball rate function be $s(t,\w^\ast)=-\log Q_0(d(\w,\w^\ast)<t)$, and let its Legendre-type transform be
\[s^\ast(\tau,\w^\ast) = \inf_{t>0}\left\{s(t,\w^\ast)+\tau t\right\}, \text{ for all }\tau>0\text{ and }\w^\ast\in\W.\]
Then, we have:

\begin{lemma}\label{lemma:log-sum-exp-cost}
Suppose the reference distribution $Q_0$ has a small-ball rate function $s$ that is continuous in $(0,+\infty)\times \W$ and satisfies
    \[\sup_{\w^\ast\in\W}\frac{s^\ast(\tau,\w^\ast)}{\tau} \to 0 \quad \text{as } \tau \to +\infty.\]
If the effect function $\Lambda$ and the mechanism function $g$ are $M$-Lipschitz continuous for some $M>0$, then for any fixed $\delta>0$, $n\in\N$, and $\zeta>0$, we have:
\[\|\Psi_{\delta,n,\zeta}-\Psi_{\delta,n}\|_\infty\leq \sup_{\w^\ast\in\W} \left(\zeta s^\ast\left(\frac{M(1+\delta^{-1})}{\zeta},\w^\ast\right)\right)\to 0 \quad \text{as } \zeta \to 0.\]
\end{lemma}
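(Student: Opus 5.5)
We argue pointwise in $e\in\mathbb{H}$ with a standard Laplace-principle sandwich, writing
\[F_e(\w):=\Lambda(\w)+\delta^{-1}\|\pi_n(g(\w)-e)\|_{\mathbb{H}},\]
so that $\Psi_{\delta,n}(e)=\min_{\w\in\W}F_e(\w)$, which is attained by Lemma~\ref{lemma:regularized-cost}, and $\Psi_{\delta,n,\zeta}(e)=-\zeta\log E_{Q_0}[e^{-F_e/\zeta}]$.

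The \emph{upper-side} inequality $\Psi_{\delta,n,\zeta}(e)\geq \Psi_{\delta,n}(e)$ is immediate. Since $F_e\geq \Psi_{\delta,n}(e)$ on $\W$, we get $E_{Q_0}[e^{-F_e/\zeta}]\leq e^{-\Psi_{\delta,n}(e)/\zeta}$, and taking $-\zeta\log$ reverses the inequality. So the whole content lies in the reverse bound.

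For the reverse bound, fix a minimizer $\w^\ast\in G^\ast_{\delta,n}(e)$. First we show that $F_e$ is Lipschitz in $\w$ with constant $M(1+\delta^{-1})$, uniformly in $e$. This follows from the $M$-Lipschitz continuity of $\Lambda$ and $g$, the triangle inequality, and the fact that $\pi_n$ has operator norm at most $1$ (Proposition~\ref{prop:properties-projections}). Set $L:=M(1+\delta^{-1})$. For any $t>0$, on the ball $\{d(\w,\w^\ast)<t\}$ we then have $F_e(\w)\leq \Psi_{\delta,n}(e)+Lt$. Restricting the expectation to this ball gives
\[E_{Q_0}[e^{-F_e/\zeta}]\geq Q_0(d(\w,\w^\ast)<t)\,e^{-(\Psi_{\delta,n}(e)+Lt)/\zeta}=e^{-s(t,\w^\ast)}e^{-(\Psi_{\delta,n}(e)+Lt)/\zeta}.\]
Taking $-\zeta\log$ yields
\[\Psi_{\delta,n,\zeta}(e)\leq \Psi_{\delta,n}(e)+\zeta\Big(s(t,\w^\ast)+\tfrac{L}{\zeta}t\Big).\]
Optimizing over $t>0$ gives exactly $\zeta s^\ast(L/\zeta,\w^\ast)$. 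Bounding this by the supremum over $\w^\ast\in\W$ removes the dependence on $e$, so combining with the upper-side inequality and taking the supremum over $e$ gives the stated sup-norm bound.

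For the limit, write $\tau=L/\zeta$, so that $\zeta s^\ast(\tau,\w^\ast)=L\, s^\ast(\tau,\w^\ast)/\tau$. As $\zeta\to0$ we have $\tau\to\infty$, and the hypothesis $\sup_{\w^\ast}s^\ast(\tau,\w^\ast)/\tau\to0$ gives the convergence to zero. We also need $s$ to be finite, so that $s^\ast$ is finite; this holds because $Q_0$ has full support, so every ball has positive mass. The continuity of $s$ is then only a technical point that ensures measurability of the supremum.

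The main point to check carefully is the uniform Lipschitz constant $L$ for $F_e$ together with attainment of the minimizer $\w^\ast$. If attainment failed, we would instead take a near-minimizer and pass to a limit; compactness of $\W$ makes this routine.
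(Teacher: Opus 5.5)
Your proof is correct and follows essentially the same route as the paper's. Both fix a minimizer $\w^\ast$ of $\Psi_{\delta,n}(e)$, restrict the $Q_0$-expectation to the ball $\{d(\w,\w^\ast)<t\}$, use the $M(1+\delta^{-1})$-Lipschitz bound on $\w\mapsto\Lambda(\w)+\delta^{-1}\|\pi_n(g(\w)-e)\|_{\mathbb{H}}$, and optimize over $t$ to get $\zeta s^\ast(M(1+\delta^{-1})/\zeta,\w^\ast)$. You also spell out two steps the paper leaves implicit: the inequality $\Psi_{\delta,n,\zeta}\geq\Psi_{\delta,n}$, and the substitution $\tau=M(1+\delta^{-1})/\zeta$ for the limit. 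One small inaccuracy: continuity of $s$ plays no role in this lemma, whose bound is a supremum of real numbers with no measurability issue; the paper uses continuity only in the subsequent remark, to check that the hypothesis holds automatically.
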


\begin{proof}
    See Appendix~\ref{app:proof_log_sum_exp_cost}.
\end{proof}

Under Assumption~\ref{ass:PI}, the latent space $\W$ is compact. Since $Q_0$ has full support and $s$ is continuous, $\sup_{\w^\ast\in\W}s(t,\w^\ast)<\infty$ for every $t>0$. Then, for every $t>0$,
\[\sup_{\w^\ast\in\W}\frac{s^\ast(\tau,\w^\ast)}{\tau}\leq t+\frac{\sup_{\w^\ast\in\W}s(t,\w^\ast)}{\tau}\longrightarrow t\quad\text{as }\tau\to+\infty,\]
and letting $t\downarrow0$ shows that the hypothesis of Lemma~\ref{lemma:log-sum-exp-cost} is satisfied automatically in our setting. We nevertheless state it explicitly because it makes transparent the rate at which $\Psi_{\delta,n,\zeta}$ approaches $\Psi_{\delta,n}$. For instance, if $s(t,\w^\ast)\leq -d_0\log t+C$ uniformly in $\w^\ast$, then $s^\ast(\tau,\w^\ast)=O(d_0\log\tau)$, and the bound of Lemma~\ref{lemma:log-sum-exp-cost} is of order $-\zeta\log\zeta$.

\subsection{Marginal Operators and Constraint Sets}\label{subsec:marginal-operators}

This section defines the marginal operators $T^\ast$ and $T$ used to express the data-matching constraint $T^\ast\mu=P$ of Section~\ref{sec:ot_connection} compactly, together with their discretized counterparts $T_n^\ast$, $\mathrm T_n^\ast$, $\mathrm T_{k,n}^\ast$ that appear once the transport problem is discretized (Section~\ref{sec:discretization}) or estimated (Section~\ref{sec:estimation}). We first collect their common properties -- linearity, continuity, positivity, and mapping the constant function $1$ to $1$ -- and show these suffice for the adjoint operators to map probability measures to probability measures. Equipped with this, we show that the resulting sets of feasible measures, $\mathcal{Q}(\mathbf{P},\lambda)$, $\mathcal{M}(\mathbf{P},\lambda)$, and their discretized counterparts, are convex and compact, which is what guarantees the transport problems attain their infimum. We close by showing that the discretized constraint sets converge to the original feasible set.

Let us start with the operator $T$, whose adjoint $T^\ast$ is used to define the constraint set $\mathcal{Q}(\mathbf{P},\lambda)$. As described in the main text, $T$ maps functions on the observable space $\Z\times \X$ into functions on the path space $\mathbb{H}$, and is formally defined as the map $T:C_b(\Z\times \X)\rightarrow C_b(\mathbb{H})$ such that, for any $\phi\in C_b(\Z\times \X)$ and $e\in\mathbb{H}$:
\[T\phi(e)=\int_{\Z} \phi(z,e_z) \, d\lambda(z).\]
Its discretized version $T_n$ is defined analogously, precomposing with the projection $\pi_n$: the map $T_n:C_b(\Z\times \X)\rightarrow C_b(\mathbb{H})$ acts as
\[T_n\phi(e):=T\phi(\pi_n(e))=\sum_{k\in[n]}\int_{A_{k,n}}\phi(z,\pi_{k,n}(e))d\lambda(z).\]
Once restricted to $\X^n$, the data-matching constraint is expressed instead through the adjoint of $\mathrm{T}_n$, the operator mapping an $n$-tuple of functions on $\X$ into a function on $\X^n$: $\mathrm{T}_n:C_b(\X)^n\rightarrow C_b(\X^n)$ is defined by
\[\mathrm{T}_n(\phi_1,\ldots,\phi_n)(x_1,\ldots,x_n) = \sum_{k\in[n]} \phi_k(x_k)\lambda_{k,n}.\]
We also define, for later use in the dual iteration of Section~\ref{sec:entropic_perturbation}, the operator $\mathrm{T}_{k,n}:C_b(\X)^{n-1}\rightarrow C_b(\X^{n-1})$ obtained by dropping the $k$-th coordinate,
\[\mathrm{T}_{k,n}(\phi_1,\ldots,\phi_{k-1},\phi_{k+1},\ldots,\phi_n) = \sum_{j\in[n]\setminus\{k\}} \phi_j(x_j)\lambda_{j,n}.\]

Our first result aggregates the properties of the marginal operators $T$, $T_n$, $\mathrm{T}_n$ needed throughout the rest of the paper. We identify the dual of $C_b(\X)^n$ with the set of $n$-tuples of regular signed measures on $\X$, so that the restriction of $\mathrm{T}_n^\ast$ to $\P(\X^n)$ is understood as the map from couplings on $\X^n$ to their $n$-tuple of marginals, $\mathrm{T}_n^\ast: \P(\X^n)\rightarrow \P(\X)^n$.
\begin{lemma}\label{lemma:marginal-operators}
    The operators $T$, $T_n$, and $\mathrm{T}_n$ are linear, continuous with operator norm at most $1$, positive, and map the constant element $1$ of their domains to the constant element $1$ of their ranges.
\end{lemma}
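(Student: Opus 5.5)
The plan is to verify the four properties (linearity, positivity, unit preservation, operator norm at most one) directly from the integral and sum formulas, treating $T$ first and then obtaining $T_n$ and $\mathrm{T}_n$ with little extra work. The only property that is not a one-line calculation is that each operator actually lands in the stated space of bounded continuous functions. So the argument splits into an algebraic part, which is routine, and a well-definedness and continuity part, which requires an estimate.

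\textbf{Algebraic part.} For $T$, linearity in $\phi$ follows from linearity of the Lebesgue integral. Positivity holds because $\phi\ge 0$ gives $\phi(z,e_z)\ge0$ pointwise, so $T\phi(e)\ge0$. Since $\lambda$ is a probability measure, $T1(e)=\lambda(\Z)=1$. Finally,
\[|T\phi(e)|\le \int_\Z|\phi(z,e_z)|\,d\lambda(z)\le\|\phi\|_\infty,\]
so $\|T\phi\|_\infty\le\|\phi\|_\infty$. Measurability of $z\mapsto\phi(z,e_z)$ holds because $e$ is strongly measurable and $\phi$ is continuous.

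For $\mathrm{T}_n$, the same four properties are immediate. The weights satisfy $\lambda_{k,n}>0$ and $\sum_k\lambda_{k,n}=1$, so $\mathrm{T}_n$ is a convex combination of the coordinate evaluations. Taking $\|(\phi_k)\|=\max_k\|\phi_k\|_\infty$ as the norm on $C_b(\X)^n$ gives $\|\mathrm{T}_n\phi\|_\infty\le\sum_k\lambda_{k,n}\|\phi_k\|_\infty\le\max_k\|\phi_k\|_\infty$. Continuity of $\mathrm{T}_n\phi$ on $\X^n$ follows from continuity of each $\phi_k$.

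For $T_n=T\circ\pi_n$, the four properties are inherited from $T$, provided $\pi_n$ maps $\mathbb{H}$ into $\mathbb{H}$ and is continuous. This is Proposition~\ref{prop:properties-projections}, items 1 and 2; convexity of $\X$ guarantees that the cell averages stay in $\X$. Composing a bounded continuous function with a continuous map keeps it in $C_b$.

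\textbf{Main obstacle: continuity of $T\phi$ on $\mathbb{H}$.} The expected difficulty is showing that $T\phi$ is continuous on $\mathbb{H}$ with respect to the $L^1$ norm, given only that $\phi$ is bounded and continuous. The plan is a subsequence argument.
\begin{enumerate}
\item Let $e^j\to e$ in $\mathbb{H}$, and suppose for contradiction that $|T\phi(e^j)-T\phi(e)|\ge\eta>0$ along some subsequence.
\item $L^1$ convergence yields a further subsequence with $e^j_z\to e_z$ for $\lambda$-a.e.\ $z$.
\item Along this subsequence, continuity of $\phi$ gives $\phi(z,e^j_z)\to\phi(z,e_z)$ for $\lambda$-a.e.\ $z$.
\item Since $|\phi|\le\|\phi\|_\infty$, dominated convergence gives $T\phi(e^j)\to T\phi(e)$, contradicting step 1.
\end{enumerate}
Every subsequence therefore has a further subsequence along which $T\phi(e^j)\to T\phi(e)$, so $T\phi(e^j)\to T\phi(e)$ and $T\phi\in C_b(\mathbb{H})$.

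An alternative is to use compactness of $\Z\times\X$ to make $\phi$ uniformly continuous and then argue via convergence in measure, but the subsequence route above avoids that step.
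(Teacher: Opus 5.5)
Your argument is correct, and for the four listed properties it matches the paper's proof. Linearity and positivity come from the integral (or the weighted sum), and $T1=1$ because $\lambda$ is a probability measure.

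There are two differences.

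\emph{The norm bound.} You estimate the integral directly. The paper instead uses that $1\pm\phi\geq 0$ whenever $\|\phi\|_\infty\leq 1$, so positivity and unit preservation give $|T\phi|\leq T1=1$. This one-line device applies unchanged to $T$, $T_n$ and $\mathrm{T}_n$ without expanding any formula. The two routes are equivalent here.

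\emph{The codomain.} This is the more substantive difference. The paper never checks that $T\phi$, and hence $T_n\phi$, is actually a continuous function on $\mathbb{H}$; its ``continuous'' refers only to boundedness of the linear operator. Your proof supplies this check:
\begin{itemize}
\item for $T$, by extracting an a.e.-convergent subsequence and applying dominated convergence;
\item for $T_n$, through $T_n\phi=T\phi\circ\pi_n$, using continuity of $\pi_n$ and $\pi_n(\mathbb{H})\subseteq\mathbb{H}$. The inclusion needs $\X$ to be convex and closed, which is the standing assumption of Appendix~\ref{app:aux}.
\end{itemize}

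This addition matters for the rest of the paper. Lemma~\ref{lemma:prob-to-prob} requires the operator to map into $C_b(H)$. The weak continuity of $T^\ast$ and $T_n^\ast$ is used for closedness of the feasible sets in Proposition~\ref{prop:feasible-sets}. It is also used in the limit arguments of Propositions~\ref{prop:support-penalization} and~\ref{prop:nested-constraints}, where $T\phi$ is explicitly treated as a fixed bounded continuous function. All of these rest on $T\phi\in C_b(\mathbb{H})$. So the paper's proof is shorter, while yours also establishes the well-definedness of the operators' codomains, which the paper takes for granted.
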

\begin{proof}
    See Appendix~\ref{app:proof_marginal_operators}.
\end{proof}

The above properties suffice to show that the adjoint operators $T^\ast$, $T_n^\ast$, and $\mathrm{T}_n^\ast$ are well-defined and map probability distributions to probability distributions, by the following classical result:
\begin{lemma}\label{lemma:prob-to-prob}
    Let $K$ be a compact Polish space, and $H$ be a Polish space. If $S:C_b(K)\rightarrow C_b(H)$ is a linear, continuous, positive operator that maps the constant function $1$ to $1$, then its adjoint restricted to probability measures, $S^\ast:\P(H)\rightarrow \P(K)$, is well-defined, linear on convex combinations, continuous, and takes values in the set of probability measures.
\end{lemma}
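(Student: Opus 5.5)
The plan is to identify the adjoint concretely and then verify each claimed property from the corresponding property of $S$. Since $H$ is Polish, every $\mu\in\P(H)$ is tight, so the functional $\ell_\mu:C_b(K)\to\R$, $\ell_\mu(\phi)=\int_H S\phi\,d\mu$, is well defined. It is linear because $S$ is linear. It is bounded, with $|\ell_\mu(\phi)|\leq\|S\phi\|_\infty\leq\|\phi\|_\infty$, using that $\|S\|\leq1$; this bound also follows from positivity together with $S1=1$, since $-\|\phi\|_\infty\leq\phi\leq\|\phi\|_\infty$ implies $-\|\phi\|_\infty\leq S\phi\leq\|\phi\|_\infty$.

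Because $K$ is compact, $C_b(K)=C(K)$, and the Riesz--Markov representation theorem applies: there is a unique regular signed Borel measure $S^\ast\mu$ on $K$ with $\int_K\phi\,dS^\ast\mu=\int_H S\phi\,d\mu$ for all $\phi$. This gives well-definedness. Positivity of $S$ yields $\ell_\mu(\phi)\geq0$ whenever $\phi\geq0$, so $S^\ast\mu$ is a positive measure. Moreover $S^\ast\mu(K)=\ell_\mu(1)=\int_H S1\,d\mu=1$, so $S^\ast\mu\in\P(K)$.

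Affinity on convex combinations is immediate from linearity of $\mu\mapsto\int_H S\phi\,d\mu$ and uniqueness in Riesz. For continuity I will use the weak topology. Suppose $\mu_j\Rightarrow\mu$ in $\P(H)$. For each $\phi\in C(K)$, the function $S\phi$ lies in $C_b(H)$, so $\int_K\phi\,dS^\ast\mu_j=\int_H S\phi\,d\mu_j\to\int_H S\phi\,d\mu=\int_K\phi\,dS^\ast\mu$, which is exactly $S^\ast\mu_j\Rightarrow S^\ast\mu$. The topology on $\P(H)$ is metrizable because $H$ is Polish, so sequences suffice. I will also note in passing that on compact $K$ the weak and $W_1$ topologies coincide, so the same conclusion holds in $W_1$ on the target side.

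The only point that needs care is \emph{which} adjoint is meant. The Banach adjoint of $S$ acts on $C_b(H)^\ast$, whose elements include finitely additive functionals. My plan avoids this by defining $S^\ast$ directly on $\P(H)$ through the pairing above; this agrees with the Banach adjoint restricted to the countably additive elements, so no real obstacle remains. If continuity is also wanted in $W_1$ on the source side, I would remark that $W_1$ convergence implies weak convergence, so the same argument covers it.
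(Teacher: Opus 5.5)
Your proof is correct and follows essentially the paper's route: you represent the functional $\phi\mapsto\int_H S\phi\,d\mu$ as a regular measure on the compact space $K$, obtain positivity of $S^\ast\mu$ from positivity of $S$, and obtain unit mass from $S1=1$. Your appeal to tightness is unnecessary for well-definedness, since $S\phi$ is bounded and Borel. The one difference is in how $S^\ast$ is set up and how continuity is obtained. The paper takes $S^\ast$ to be the Banach adjoint on the full dual $ba_n(H)$ of $C_b(H)$ and cites its continuity abstractly. You instead define $S^\ast$ directly on $\P(H)$ and verify weak continuity explicitly, which is the form of continuity the paper later relies on when it shows the feasible sets are closed.
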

\begin{proof}
    See Appendix~\ref{app:proof_prob_to_prob}.
\end{proof}

Equipped with this result, we show that the set of feasible measures associated to each operator is convex and compact. Denote by $\mathcal{Q}(\mathbf{P},\lambda)$, $\mathcal{Q}_n(\mathbf{P},\lambda)$, $\mathcal{M}(\mathbf{P},\lambda)$, $\mathcal{M}_n(\mathbf{P},\lambda)$, $\mathcal{C}_n(\mathbf{P},\lambda)$, and $\Pi_n(\mathbf{P},\lambda)$ the sets defined by:
\begin{align*}
    \mathcal{Q}(\mathbf{P},\lambda) &= \{Q\in\P(\W): T^\ast(g_\#Q)=P\},
    \qquad \mathcal{Q}_n(\mathbf{P},\lambda) = \{Q\in\P(\W): T_n^\ast(g_\#Q)=P_n\},\\
    \mathcal{M}(\mathbf{P},\lambda) &= \{\mu\in\P(\mathbb{H}): T^\ast\mu=P, \mathrm{supp}(\mu)\subseteq \mathbb{K}\},\\
    \mathcal{M}_n(\mathbf{P},\lambda) &= \{\mu\in\P(\mathbb{H}): T_n^\ast\mu=P_n, \mathrm{supp}((\pi_n)_\#\mu)\subseteq \pi_n(\mathbb{K})\},\\
    \mathcal{C}_n(\mathbf{P},\lambda) &= \{\mu\in\P(\mathbb{H}): T_n^\ast\mu=P_n\},\\
    \Pi_n(\mathbf{P},\lambda) &= \{\nu\in\P(\X^n): \mathrm{T}_n^\ast\nu=(\mathbf{P}_{k,n})_{k\in[n]}\},
\end{align*}
where $P_n$ is the discretized joint law of Section~\ref{sec:discretization}: the sets $\mathcal{Q}_n$, $\mathcal{M}_n$, $\mathcal{C}_n$ are matched against $P_n$, not $P$, since they encode only the resolution-$n$ constraint, exactly as in the discretized transport problem of Section~\ref{sec:discretization}. The following proposition shows that these sets are convex and compact.

\begin{proposition}\label{prop:feasible-sets}
    The sets of feasible measures $\mathcal{Q}(\mathbf{P},\lambda)$, $\mathcal{Q}_n(\mathbf{P},\lambda)$, $\mathcal{M}(\mathbf{P},\lambda)$, ${\pi_n}_\# \left(\mathcal{C}_n(\mathbf{P},\lambda)\right)$, and ${\pi_n}_\# \left(\mathcal{M}_n(\mathbf{P},\lambda)\right)$ are convex and compact. Moreover, $\tilde{\pi}_{n\#} \left(\mathcal{C}_n(\mathbf{P},\lambda)\right)=\Pi_n(\mathbf{P},\lambda)$.
\end{proposition}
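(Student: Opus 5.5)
The plan rests on one principle: each feasible set is the preimage of a closed singleton under a continuous affine map, possibly intersected with a closed support condition, inside a compact space of probability measures. Compactness of the ambient space comes from Prokhorov's theorem. $\W$ is compact by Assumption~\ref{ass:PI}, so $\P(\W)$ is weakly compact. For the path-space sets I will not work in $\P(\mathbb{H})$, since $\mathbb{H}$ itself need not be compact. Instead I will use compact subsets. The set $\mathbb{K}=g(\W)$ is compact as the continuous image of a compact set. The range $\pi_n(\mathbb{H})$ is compact by Proposition~\ref{prop:properties-projections}, item 2. Hence $\P(\mathbb{K})$ and $\P(\pi_n(\mathbb{H}))$ are weakly compact. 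Convexity in every case is immediate: $T^\ast$, $T_n^\ast$ and pushforwards are linear on convex combinations (Lemma~\ref{lemma:prob-to-prob}), and the support conditions are preserved under mixtures.

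The steps are as follows.

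\textbf{(i) The set $\mathcal{Q}(\mathbf{P},\lambda)$.} Write it as $\{Q\in\P(\W): T^\ast(g_\#Q)=P\}$. The map $Q\mapsto g_\#Q$ is weakly continuous because $g$ is continuous. The map $T^\ast$ is continuous by Lemmas~\ref{lemma:marginal-operators} and~\ref{lemma:prob-to-prob}. Therefore $\mathcal{Q}(\mathbf{P},\lambda)$ is the closed preimage of $\{P\}$ inside a compact set, hence compact.

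\textbf{(ii) The set $\mathcal{Q}_n(\mathbf{P},\lambda)$.} The same argument applies with $T_n^\ast$ and $P_n$ in place of $T^\ast$ and $P$.

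\textbf{(iii) The set $\mathcal{M}(\mathbf{P},\lambda)$.} Every element is supported on the compact set $\mathbb{K}$, so $\mathcal{M}(\mathbf{P},\lambda)$ sits inside $\P(\mathbb{K})$. It is cut out there by the continuous constraint $T^\ast\mu=P$. A subtlety is that supports are taken in $\mathbb{H}$. Since $\mathbb{K}$ is closed, a weak limit of measures supported on $\mathbb{K}$ is again supported on $\mathbb{K}$ by the Portmanteau theorem. Compactness of $\mathcal{M}(\mathbf{P},\lambda)$ then follows from Prokhorov's theorem: the family is tight because all its elements live on the single compact set $\mathbb{K}$.

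\textbf{(iv) The sets $\pi_{n\#}\mathcal{C}_n(\mathbf{P},\lambda)$ and $\pi_{n\#}\mathcal{M}_n(\mathbf{P},\lambda)$.} By idempotence of $\pi_n$ (Proposition~\ref{prop:properties-projections}), the constraint $T_n^\ast\mu=P_n$ depends on $\mu$ only through $\pi_{n\#}\mu$, and $T_n^\ast\mu=T^\ast(\pi_{n\#}\mu)$. Hence
\[\pi_{n\#}\mathcal{C}_n(\mathbf{P},\lambda)=\{\nu\in\P(\pi_n(\mathbb{H})): T^\ast\nu=P_n\}.\]
For the inclusion $\supseteq$, take $\mu=\nu$ and use $\pi_{n\#}\nu=\nu$. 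This set is closed in the compact space $\P(\pi_n(\mathbb{H}))$, hence compact. For $\mathcal{M}_n(\mathbf{P},\lambda)$ I additionally intersect with $\{\nu:\mathrm{supp}(\nu)\subseteq\pi_n(\mathbb{K})\}$. This set is closed because $\pi_n(\mathbb{K})$ is compact, being a continuous image of the compact set $\mathbb{K}$.

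\textbf{(v) The identity $\tilde\pi_{n\#}\mathcal{C}_n(\mathbf{P},\lambda)=\Pi_n(\mathbf{P},\lambda)$.} The key computation is that for $\mu\in\P(\mathbb{H})$ and $\psi\in C_b(\X)$,
\[\int\psi\,d(\tilde\pi_{n\#}\mu)_k=\int\psi(\tilde\pi_{k,n}(e))\,d\mu(e).\]
Testing $T_n^\ast\mu=P_n$ against functions of the form $\phi(z,x)=1_{A_{k,n}}(z)\psi(x)$ then identifies the $k$-th marginal of $\tilde\pi_{n\#}\mu$ with $P_{k,n}$. These test functions are only bounded measurable, not continuous, so I will justify this step either by approximation or by noting that $T_n\phi$ is well defined for such $\phi$. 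This proves the inclusion $\subseteq$. For the converse, take $\nu\in\Pi_n(\mathbf{P},\lambda)$ and set $\mu=(\tilde\pi_n|_{\pi_n(\mathbb{H})})^{-1}_\#\nu$, using the isometric bijection of Proposition~\ref{prop:properties-projections}, item 3. Then $\mu$ lies in $\mathcal{C}_n(\mathbf{P},\lambda)$, because $T_n^\ast\mu$ is determined by the cell marginals $(\nu_k)_{k\in[n]}$ together with $\lambda$ restricted to each cell, which is exactly how $P_n$ is built.

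\textbf{Main obstacle.} I expect the delicate point to be step (iii) and the analogous closedness checks in step (iv). Compactness of $\P(\mathbb{H})$ is not available, so the support restrictions, or $\pi_n$ itself, must be used to confine everything to a compact subset. The continuity of $T^\ast$ in Lemma~\ref{lemma:prob-to-prob} must then be invoked on these restricted domains.
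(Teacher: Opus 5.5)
Your proof is correct and follows essentially the paper's route: each set is a weakly closed preimage under the continuous affine maps $T^\ast$, $T_n^\ast$, intersected with closed support constraints. The pushed-forward sets are rewritten through $T_n^\ast=T^\ast\circ\pi_{n\#}$, which comes from the definition $T_n\phi=T\phi\circ\pi_n$, with idempotence needed only for the reverse inclusion. Compactness then follows from Prokhorov, since everything lives on the compact sets $\Omega$, $\mathbb{K}$, $\pi_n(\mathbb{H})$.

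Your step (v) goes further than the paper: its proof asserts $\tilde{\pi}_{n\#}\mathcal{C}_n(\mathbf{P},\lambda)=\Pi_n(\mathbf{P},\lambda)$ but never writes out an argument. Your worry about the indicator test functions there is settled directly by Lemma~\ref{lemma:marginals-of-T}. Alternatively, note that $E_\mu[T_n\phi]$ is integration against a pushforward of $\lambda\otimes\mu$, so agreement on $C_b$ extends to bounded measurable $\phi$.
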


\begin{proof}
    See Appendix~\ref{app:proof_feasible_sets}.
\end{proof}

To conclude this section, we demonstrate that the constraint sets are nested and, with Assumption~\ref{ass:constraints}, we can show that the limiting set of discretizations is indeed the feasible constraint set of the original problem. This follows directly from the following lemma:

\begin{lemma}\label{lemma:marginals-of-T}
    The $\Z$-marginals of the images of the operators $T^\ast$ and $T^\ast_n$ are $\lambda$, that is,
    \[T^\ast\mu(A\times \X)=T^\ast_n\mu(A\times \X)=\lambda(A),\quad\text{ for any measurable }A\subseteq\Z.\]
    Moreover, for any measurable sets $A\subseteq \Z$ and $B\subseteq \X$, we have that:
    \[T^\ast_n\mu (A\times B)= \sum_{k\in[n]}{\tilde{\pi}_{k,n\#}}\mu(B)\lambda(A\cap A_{k,n}).\]
\end{lemma}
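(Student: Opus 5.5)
The plan is to compute both adjoints directly by testing against product-type functions and then pass from test functions to sets. The adjoint is defined by duality: for $\phi\in C_b(\Z\times\X)$,
\[E_{T^\ast\mu}[\phi]=E_\mu[T\phi]=\int_{\mathbb{H}}\int_\Z\phi(z,e_z)\,d\lambda(z)\,d\mu(e).\]
By Fubini (justified since $\phi$ is bounded and the map $(z,e)\mapsto e_z$ is jointly measurable on $\Z\times\mathbb{H}$ for Bochner-integrable paths), this equals $E_{h_\#(\lambda\otimes\mu)}[\phi]$ with $h(z,e)=(z,e_z)$. Since two Borel probability measures on the Polish space $\Z\times\X$ that agree on $C_b$ coincide, we get $T^\ast\mu=h_\#(\lambda\otimes\mu)$. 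For the discretized operator, the definition $T_n\phi(e)=\sum_k\int_{A_{k,n}}\phi(z,\tilde\pi_{k,n}(e))\,d\lambda(z)$ gives, by the same argument,
\[T_n^\ast\mu=\sum_{k\in[n]}\big(\lambda|_{A_{k,n}}\big)\otimes\big(\tilde\pi_{k,n\#}\mu\big),\]
where $\tilde\pi_{k,n}$ is continuous by Proposition~\ref{prop:properties-projections}. This pushforward is well defined and $\X$-valued because $\X$ is convex and closed, so cell averages of $\X$-valued paths stay in $\X$.

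Both claimed identities then follow by evaluating on rectangles. For the first, take $A\times\X$ with $A\subseteq\Z$ measurable. For $T^\ast$ this gives
\[T^\ast\mu(A\times\X)=\int_{\mathbb{H}}\int_\Z 1_A(z)\,1_\X(e_z)\,d\lambda\,d\mu=\lambda(A),\]
since $e_z\in\X$ for $\lambda$-a.e. $z$ and $\mu$ is a probability measure. For $T_n^\ast$ this gives
\[T_n^\ast\mu(A\times\X)=\sum_k\lambda(A\cap A_{k,n})=\lambda(A),\]
because the cells partition $\Z$. For the second identity, evaluate the product formula on $A\times B$: each summand contributes $\lambda(A\cap A_{k,n})\,\tilde\pi_{k,n\#}\mu(B)$, which is exactly the stated expression.

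The only genuinely delicate step is passing from $C_b$ test functions to indicators of rectangles, since indicators are not continuous. I would avoid approximating indicators altogether. Instead I would identify $T^\ast\mu$ and $T_n^\ast\mu$ as the concrete measures above by showing that they integrate every $\phi\in C_b$ identically, and then appeal to uniqueness: a finite Borel measure on a metric space is determined by its integrals of bounded continuous functions. The rectangle evaluations are then just evaluations of explicitly known measures. A secondary point is the joint measurability of $(z,e)\mapsto e_z$, which is needed for Fubini. I would handle it by approximating $e$ in $\mathbb{L}^1$ by simple functions and using that evaluation is measurable on the space of equivalence classes after choosing a measurable version, which is the standard Bochner-space fact from \citet{Hytonen2016}.
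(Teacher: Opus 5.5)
Your proof is correct and follows essentially the paper's route. You compute the adjoints by duality and Fubini, identify $T_n^\ast\mu=\sum_{k\in[n]}\lambda|_{A_{k,n}}\otimes\tilde{\pi}_{k,n\#}\mu$, and read off its values on rectangles; by testing against all of $C_b(\Z\times\X)$ rather than only the products $\psi(z)\eta(x)$, you also avoid the density step (Stone--Weierstrass or a $\pi$-system argument) that the paper leaves implicit.

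The one unnecessary detour is the full identification $T^\ast\mu=h_\#(\lambda\otimes\mu)$, which drags in the joint-measurability issue for $(z,e)\mapsto e_z$ on equivalence classes. Taking $\phi(z,x)=\psi(z)$ makes $T\phi\equiv\int_\Z\psi\,d\lambda$ constant in $e$, which gives the $\Z$-marginal of $T^\ast\mu$ directly, exactly as the paper does.
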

\begin{proof}
    See Appendix~\ref{app:proof_marginals_of_T}.
\end{proof}

These are summarized in the following proposition.

\begin{proposition}\label{prop:nested-constraints}
    Under Assumptions~\ref{ass:PI} and \ref{ass:constraints}, the feasible sets satisfy
    \[\mathcal{Q}(\mathbf{P},\lambda) = \bigcap_{n\in\N}\overline{\bigcup_{m\geq n}\mathcal{Q}_{m}(\mathbf{P},\lambda)} \qquad\text{and}\qquad \mathcal{M}(\mathbf{P},\lambda) = \bigcap_{n\in\N}\overline{\bigcup_{m\geq n}\mathcal{M}_{m}(\mathbf{P},\lambda)},\]
    that is, $\mathcal{Q}(\mathbf{P},\lambda)$ (resp. $\mathcal{M}(\mathbf{P},\lambda)$) is exactly the set of weak limit points of sequences of the discretized feasible sets.
\end{proposition}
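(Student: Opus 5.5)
The plan is to prove the two set equalities separately, each by a double inclusion, treating the $\mathcal{Q}$-statement first and then transferring it to $\mathcal{M}$ through the mechanism $g$ and Proposition~\ref{prop:co-mechanism}. Throughout, I use that $\P(\W)$ is weakly compact because $\W$ is compact (Assumption~\ref{ass:PI}). Hence the set $\bigcap_n\overline{\bigcup_{m\ge n}\mathcal{Q}_m}$ is exactly the set of weak limits of sequences $Q_{m_j}\in\mathcal{Q}_{m_j}$ with $m_j\to\infty$.

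\emph{Inclusion $\subseteq$ for $\mathcal{Q}$.} This is immediate from Assumption~\ref{ass:constraints}. Every $Q\in\mathcal{Q}(\mathbf{P},\lambda)$ is a weak limit of some $Q_n\in\mathcal{Q}_n(\mathbf{P},\lambda)$, so for every $n$ it lies in $\overline{\bigcup_{m\ge n}\mathcal{Q}_m}$.

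\emph{Inclusion $\supseteq$ for $\mathcal{Q}$.} Take $Q_{m_j}\in\mathcal{Q}_{m_j}$ with $Q_{m_j}\Rightarrow Q$. We must show $T^\ast(g_\#Q)=P$. Since measures on the compact space $\Z\times\X$ are determined by bounded Lipschitz test functions, it suffices to take such a $\phi$ and pass to the limit in
\[
\int\phi\,dP_{m_j}=\int_\W T\phi\big(\pi_{m_j}(g(\w))\big)\,dQ_{m_j}(\w).
\]
This splits into three pieces.
\begin{enumerate}
\item \emph{Left-hand side.} Show $\int\phi\,dP_m\to\int\phi\,dP$. By Lemma~\ref{lemma:marginals-of-T}, $\int\phi\,dP_m=\sum_k\lambda_{k,m}^{-1}\int_{A_{k,m}}\int_{A_{k,m}}\int_\X\phi(z,x)\,dP_{z'}(x)\,d\lambda(z')\,d\lambda(z)$. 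The difference from $\int\phi\,dP$ is therefore bounded by the modulus of continuity of $\phi$ in $z$ at the mesh $\rho_m$, which tends to zero.
\item \emph{Projection error.} Replace $\pi_m(g(\w))$ by $g(\w)$. For $e\in\mathbb{K}$ we have $|T\phi(\pi_m e)-T\phi(e)|\le \mathrm{Lip}(\phi)\,\|\pi_m e-e\|_{\mathbb{H}}\le\mathrm{Lip}(\phi)\,\kappa_m$, and $\kappa_m\to0$ uniformly over $\mathbb{K}$ by Proposition~\ref{prop:properties-projections}(5).
\item \emph{Weak convergence.} Since $T\phi\circ g$ is bounded and continuous (Lemma~\ref{lemma:marginal-operators} together with continuity of $g$), weak convergence gives $\int T\phi\circ g\,dQ_{m_j}\to\int T\phi\circ g\,dQ$.
\end{enumerate}
Combining the three pieces gives $\int\phi\,d\,T^\ast(g_\#Q)=\int\phi\,dP$ for every such $\phi$, so $Q\in\mathcal{Q}(\mathbf{P},\lambda)$.

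\emph{Inclusion $\subseteq$ for $\mathcal{M}$.} By Proposition~\ref{prop:co-mechanism}, any $\mu\in\mathcal{M}$ can be written as $\mu=g_\#Q$ with $Q\in\mathcal{Q}$. Take $Q_n\in\mathcal{Q}_n$ with $Q_n\Rightarrow Q$ and set $\mu_n=g_\#Q_n$. Then $T_n^\ast\mu_n=P_n$ by definition of $\mathcal{Q}_n$. Moreover $\pi_{n\#}\mu_n$ is supported on $\pi_n(\mathbb{K})$, which is compact as the continuous image of the compact set $\mathbb{K}=g(\W)$. So $\mu_n\in\mathcal{M}_n$, and $\mu_n\Rightarrow\mu$ by the continuous mapping theorem.

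\emph{Inclusion $\supseteq$ for $\mathcal{M}$: the main obstacle.} Take $\mu_{m_j}\in\mathcal{M}_{m_j}$ with $\mu_{m_j}\Rightarrow\mu$. The data-matching part, $T^\ast\mu=P$, follows exactly as in the $\mathcal{Q}$ case once $\mu_{m_j}$ is replaced by its projection. The difficulty is the support condition, and whether the projection step is legitimate at all. The set $\mathcal{M}_m$ only constrains the \emph{projected} support, so $\mu_m$ itself may charge unachievable paths, and $\|\pi_m e-e\|$ is not uniformly small off $\mathbb{K}$.

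My approach is to work with $\nu_m:=\pi_{m\#}\mu_m$ instead of $\mu_m$. By idempotence (Proposition~\ref{prop:properties-projections}(2)), $\nu_m$ is again in $\mathcal{M}_m$ and induces the same constraint, $T_m^\ast\nu_m=T_m^\ast\mu_m$. It is supported on $\pi_m(\mathbb{K})$, and every point there lies within distance $\kappa_m$ of $\mathbb{K}$.

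Fix $\eta>0$. The closed $\eta$-neighbourhood $\mathbb{K}^\eta$ of $\mathbb{K}$ satisfies $\nu_m(\mathbb{K}^\eta)=1$ as soon as $\kappa_m\le\eta$. By Portmanteau, every weak limit of $(\nu_m)$ is then supported on $\bigcap_\eta\mathbb{K}^\eta=\mathbb{K}$, using that $\mathbb{K}$ is closed. On $\pi_m(\mathbb{K})$ the Lipschitz estimate of step~2 above also applies, so the limit satisfies $T^\ast(\cdot)=P$.

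What remains is to identify the weak limits of $\nu_m$ with those of $\mu_m$. This step is where I expect trouble, and it may force reading $\mathcal{M}_m$ modulo $\pi_m$-equivalence, which is the identification the paper already makes via $\tilde\pi_n$ in Proposition~\ref{prop:feasible-sets}. Alternatively, one can state the second equality for the projected sets $\pi_{m\#}\mathcal{M}_m$.
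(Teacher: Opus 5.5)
Your arguments for the $\mathcal{Q}$-equality and for the inclusion $\mathcal{M}(\mathbf{P},\lambda)\subseteq\bigcap_n\overline{\bigcup_{m\ge n}\mathcal{M}_m(\mathbf{P},\lambda)}$ are correct and follow the paper. Your explicit check that $P_m\Rightarrow P$, via the modulus of continuity of $\phi$ at mesh $\rho_m$, even fills in a step the paper only asserts.

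The gap is in the reverse inclusion for $\mathcal{M}$. You show that weak limits of $\nu_{m_j}=\pi_{m_j\#}\mu_{m_j}$ lie in $\mathcal{M}(\mathbf{P},\lambda)$, but you leave open whether such a limit equals $\mu$. Without that, nothing is proved about $\mu$. Weakening the statement, as you suggest, is unnecessary. The missing fact, which is exactly what the paper's proof invokes, is that $\mu_{m_j}\Rightarrow\mu$ already forces $\pi_{m_j\#}\mu_{m_j}\Rightarrow\mu$.

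You got stuck because you only used the uniform bound $\kappa_m$ on $\mathbb{K}$. What is needed instead is the pointwise part of item 5 of Proposition~\ref{prop:properties-projections}, namely $\pi_m(e)\to e$ for every $e\in\mathbb{H}$, combined with the contraction bound of item 1. Since $\mathbb{H}$ is Polish, Skorokhod's representation gives random paths $\xi_j\sim\mu_{m_j}$ and $\xi\sim\mu$ with $\|\xi_j-\xi\|_{\mathbb{H}}\to0$ almost surely. Then
\[\|\pi_{m_j}(\xi_j)-\xi\|_{\mathbb{H}}\le\|\pi_{m_j}(\xi_j-\xi)\|_{\mathbb{H}}+\|\pi_{m_j}(\xi)-\xi\|_{\mathbb{H}}\le\|\xi_j-\xi\|_{\mathbb{H}}+\|\pi_{m_j}(\xi)-\xi\|_{\mathbb{H}}\to0\]
almost surely, that is, $\nu_{m_j}\Rightarrow\mu$. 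Equivalently, weakly convergent sequences are tight, and the $1$-Lipschitz maps $\pi_m$ converge to the identity uniformly on compacts. No control off $\mathbb{K}$ is needed, only on sets carrying most of the mass.

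With this in hand, the rest of your argument closes:
\begin{itemize}
\item Your Portmanteau step gives $\mu(\mathbb{K})=1$. The paper does the same with the distance functions to $\pi_n(\mathbb{K})$, which converge uniformly to the distance to $\mathbb{K}$.
\item Data matching is immediate from $\int\phi\,dP_{m_j}=E_{\mu_{m_j}}[T_{m_j}\phi]=E_{\nu_{m_j}}[T\phi]\to E_\mu[T\phi]$, together with $P_{m_j}\Rightarrow P$.
\end{itemize}
This also resolves your fair concern that the $\mathcal{Q}$-argument cannot be transplanted verbatim to $\mu_{m_j}$, which need not be supported on $\mathbb{K}$. The paper's own proof glosses over that point.
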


\begin{proof}
    See Appendix~\ref{app:proof_nested_constraints}.
\end{proof}

\subsection{Properties of the Entropic Optimal Transport Problem}\label{subsec:eot_properties}

This section collects two properties of the multi-marginal entropic optimal transport problem needed in Section~\ref{sec:estimation}. First, we extend the quantitative stability inequality of Theorem 3.7(ii) in \citet{eckstein2023quantitative} to our setting, adapted to the weighted metric $|\cdot|_{\X^n}$. Second, restricting to marginals with finite, common support, we show that the value of the entropic optimal transport problem is differentiable with respect to the marginal probabilities and that its dual admits a unique optimal solution within a suitably normalized compact set.

Throughout, we consider the multi-marginal entropic optimal transport problem with $n$ generic marginals $\rho_k \in\P(\X)$, $k\in[n]$, and an $L$-Lipschitz continuous cost function $c:\X^n\rightarrow \R$, where $\X^n$ is endowed with the metric $|\cdot|_{\X^n}$. The parameter $\epsilon>0$ is fixed, and the reference measure is the product of the marginals, $R:=\bigotimes_{k\in[n]}\rho_k\in \P(\X^n)$. Write $\mathbf{\rho}=(\rho_k)_{k\in[n]}$ for the profile of marginals. The entropic optimal transport problem is defined via its dual formulation as
\[S(\mathbf{\rho})=\sup_{\phi\in C_b(\X)^n} E_{R}\left[\mathrm{T}_n\phi - \epsilon\left(e^{\frac{\mathrm{T}_n\phi -c}{\epsilon}}-1\right)\right].\]

Recall that the $W_1$-distance between two probability measures $\rho,\tilde{\rho}\in\P(\X)$ is defined as
\[W_1(\rho,\tilde{\rho})=\inf_{\mu\in\Pi(\rho,\tilde{\rho})} E_{\mu}[|X-Y|_\X],\]
where $\Pi(\rho,\tilde{\rho})$ is the set of all couplings between $\rho$ and $\tilde{\rho}$, and $X,Y$ are random variables for the first and second marginals of $\mu$, respectively. This distance can be equivalently defined via the Kantorovich-Rubinstein duality as
\[W_1(\rho,\tilde{\rho})=\sup_{\phi\in \mathrm{Lip}_1(\X)} E_{\rho}[\phi]-E_{\tilde{\rho}}[\phi],\]
where $\mathrm{Lip}_1(\X)$ is the set of all $1$-Lipschitz continuous functions on $\X$ (we employ this characterization below). As for measures in $\P(\X^n)$, the $W_1$-distance is defined with respect to the metric $|\cdot|_{\X^n}$, and we write $W_1(\nu,\tilde{\nu})$ for $\nu,\tilde{\nu}\in\P(\X^n)$, defined analogously as
\[W_1(\nu,\tilde{\nu})=\sup_{\phi\in \mathrm{Lip}_1(\X^n)} E_{\nu}[\phi]-E_{\tilde{\nu}}[\phi].\]
Since the norm is itself $1$-Lipschitz with respect to $|\cdot|_{\X^n}$, maximizing the resulting separable objective coordinatewise shows that this distance satisfies the inequality
\begin{equation}\label{eq:w1-marginal-lower-bound}
W_1(\nu,\tilde{\nu})\geq \sum_{k\in[n]} W_1(\nu_k,\tilde{\nu}_k)\lambda_{k,n}.
\end{equation}

As the following lemma shows, a direct extension of Lemma 3.2 in \citet{eckstein2023quantitative}, for any two profiles of marginals $\mathbf{\rho},\tilde{\mathbf{\rho}}\in\P(\X)^n$ and a coupling $\mu$ of $\mathbf{\rho}$, we can find a further coupling $\tilde{\mu}$ of $\tilde{\mathbf{\rho}}$ whose distance to $\mu$ is controlled by the averaged distances between the marginals, and whose relative entropy with respect to the product of its own marginals is no larger than that of the original coupling.

\begin{lemma}\label{lemma:coupling-approximation}
    Let $\mathbf{\rho},\tilde{\mathbf{\rho}}\in\P(\X)^n$ be two profiles of marginals, and let $\mu\in \Pi_n(\mathbf{\rho})$ be a coupling of $\mathbf{\rho}$. Then there exists a coupling $\tilde{\mu}\in \Pi_n(\tilde{\mathbf{\rho}})$ such that
    \[W_1(\mu,\tilde{\mu})= \sum_{k\in[n]} W_1(\rho_k,\tilde{\rho}_k)\lambda_{k,n},\]
    and
    \[H\Big(\tilde{\mu}\,\Big|\bigotimes_{k\in[n]} \tilde{\rho}_k\Big)\leq H\Big(\mu\,\Big|\bigotimes_{k\in[n]} \rho_k\Big).\]
\end{lemma}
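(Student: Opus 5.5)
The plan is the standard gluing construction of \citet{eckstein2023quantitative}, Lemma 3.2, carried out coordinatewise. For each $k\in[n]$ I fix an optimal $W_1$-coupling $\gamma_k\in\Pi(\rho_k,\tilde\rho_k)$. Such a coupling exists because $\X$ is compact, so the set of couplings is weakly compact and $(x,y)\mapsto|x-y|_\X$ is continuous. Each $\gamma_k$ disintegrates as $\gamma_k(dx,dy)=\rho_k(dx)K_k(x,dy)$ for a Markov kernel $K_k$, which exists since $\X$ is Polish. I then set
\[\Gamma(dx,dy)=\mu(dx)\prod_{k\in[n]}K_k(x_k,dy_k)\in\P(\X^n\times\X^n),\qquad \tilde\mu:=(\mathrm{pr}_2)_\#\Gamma .\]
The $k$-th marginal of $\tilde\mu$ is $\int\rho_k(dx_k)K_k(x_k,\cdot)=\tilde\rho_k$, so $\tilde\mu\in\Pi_n(\tilde{\mathbf{\rho}})$.

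\emph{Distance.} Since $\Gamma$ couples $\mu$ and $\tilde\mu$,
\[W_1(\mu,\tilde\mu)\le E_\Gamma\big[|X-Y|_{\X^n}\big]=\sum_k\lambda_{k,n}E_{\gamma_k}|X_k-Y_k|_\X=\sum_k\lambda_{k,n}W_1(\rho_k,\tilde\rho_k).\]
The reverse inequality is exactly \eqref{eq:w1-marginal-lower-bound}, applied to $\nu=\mu$ and $\tilde\nu=\tilde\mu$, whose marginals are $\rho_k$ and $\tilde\rho_k$. Together these give equality.

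\emph{Entropy.} I use the data-processing inequality (monotonicity of relative entropy under a common Markov kernel). Let $\mathbf{K}(x,dy)=\prod_kK_k(x_k,dy_k)$. Pushing the reference $R=\bigotimes_k\rho_k$ through $\mathbf{K}$ gives $\bigotimes_k\int\rho_k(dx_k)K_k(x_k,\cdot)=\bigotimes_k\tilde\rho_k$, and this uses the product structure of $\mathbf{K}$. Pushing $\mu$ through $\mathbf{K}$ gives $\tilde\mu$. Hence
\[H\Big(\tilde\mu\,\Big|\,\bigotimes_k\tilde\rho_k\Big)=H(\mu\mathbf{K}\,|\,R\mathbf{K})\le H(\mu\,|\,R).\]
To justify the inequality, I would write it via the chain rule: $H(\mu\otimes\mathbf{K}\,|\,R\otimes\mathbf{K})=H(\mu|R)$, because the conditional parts coincide, and taking the second-coordinate marginal can only decrease relative entropy. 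This holds trivially when $H(\mu|R)=\infty$.

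The main obstacle is the measure-theoretic bookkeeping, namely existence of optimal couplings and kernels and measurability of the product kernel, all of which are routine on compact Polish spaces. The one genuine point to check is that the product structure of the kernel is what maps the product reference to the product reference. A general coupling of the marginal couplings would not do this, so this is where the construction must be done coordinatewise.
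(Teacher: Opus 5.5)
Your proposal is correct and is essentially the paper's proof. Both use the coordinatewise gluing (shadow) coupling $\mu\otimes\prod_k K_k$ built from optimal $W_1$-couplings $\gamma_k$. The $W_1$ identity comes from the glued coupling's upper bound combined with the marginal lower bound \eqref{eq:w1-marginal-lower-bound}, and the entropy inequality from the chain rule with a common product kernel (which sends $\bigotimes_k\rho_k$ to $\bigotimes_k\tilde\rho_k$) followed by data processing under marginalization.
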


\begin{proof}
    See Appendix~\ref{app:proof_coupling_approximation}.
\end{proof}

The above can be used to obtain a quantitative stability estimate of the entropic optimal transport problem with respect to the marginal profile. This is summarized in the following result:

\begin{lemma}\label{lemma:stability-eot}
    Let $\mathbf{\rho},\tilde{\mathbf{\rho}}\in\P(\X)^n$ be two profiles of marginals, and let $c:\X^n\rightarrow \R$ be an $L$-Lipschitz continuous cost function. Then the value of the entropic optimal transport problem satisfies
    \[|S(\mathbf{\rho})-S(\tilde{\mathbf{\rho}})|\leq L\sum_{k\in[n]} W_1(\rho_k,\tilde{\rho}_k)\lambda_{k,n}.\]
\end{lemma}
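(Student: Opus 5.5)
The plan is to work with the primal formulation of $S$ rather than the dual. By the same strong duality used in Proposition~\ref{prop:duality}, applied to generic marginals $\mathbf{\rho}$ with reference measure $R=\bigotimes_k\rho_k$, we have
\[
S(\mathbf{\rho})=\inf_{\mu\in\Pi_n(\mathbf{\rho})} E_\mu[c]+\epsilon\, H(\mu\,|\,R).
\]
The estimate is symmetric in $\mathbf{\rho}$ and $\tilde{\mathbf{\rho}}$. It therefore suffices to prove the one-sided bound
\[
S(\tilde{\mathbf{\rho}})\le S(\mathbf{\rho})+L\sum_{k}W_1(\rho_k,\tilde\rho_k)\lambda_{k,n}
\]
and then exchange the roles of the two profiles.

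Fix $\eta>0$ and choose $\mu\in\Pi_n(\mathbf{\rho})$ that is $\eta$-optimal for $S(\mathbf{\rho})$. The infimum is finite, since the product coupling has zero entropy and $c$ is bounded on the compact set $\X^n$, so such a $\mu$ exists with $H(\mu|R)<\infty$. Lemma~\ref{lemma:coupling-approximation} then supplies $\tilde\mu\in\Pi_n(\tilde{\mathbf{\rho}})$ with two properties: $W_1(\mu,\tilde\mu)=\sum_k W_1(\rho_k,\tilde\rho_k)\lambda_{k,n}$, and $H(\tilde\mu|\tilde R)\le H(\mu|R)$, where $\tilde R=\bigotimes_k\tilde\rho_k$.

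Because $c$ is $L$-Lipschitz for $|\cdot|_{\X^n}$, the Kantorovich--Rubinstein characterization of $W_1$ on $\X^n$ recalled just before the lemma gives
\[
E_{\tilde\mu}[c]-E_\mu[c]\le L\,W_1(\mu,\tilde\mu).
\]
Combining the two properties of $\tilde\mu$ with the feasibility of $\tilde\mu$ for the primal problem at $\tilde{\mathbf{\rho}}$ yields
\[
S(\tilde{\mathbf{\rho}})\le E_{\tilde\mu}[c]+\epsilon H(\tilde\mu|\tilde R)\le E_\mu[c]+L\,W_1(\mu,\tilde\mu)+\epsilon H(\mu|R)\le S(\mathbf{\rho})+\eta+L\sum_k W_1(\rho_k,\tilde\rho_k)\lambda_{k,n}.
\]
Letting $\eta\downarrow0$ gives the one-sided bound, and symmetry concludes.

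The only delicate point is the identification of the dual definition of $S$ with the primal entropic problem for arbitrary, non-discretized marginals. This needs strong duality for bounded, continuous costs on the compact space $\X^n$. I would justify it exactly as in Proposition~\ref{prop:duality}, or via the standard multi-marginal EOT duality of \citet{nutz2021entropic}, since that argument used only compactness and boundedness and continuity of the cost. Once this is in place, the rest is immediate from Lemma~\ref{lemma:coupling-approximation}.
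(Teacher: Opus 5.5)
Your proposal is correct and matches the paper's proof step for step. You pass to the primal formulation, send a (near-)optimal coupling for $\mathbf{\rho}$ to its shadow coupling from Lemma~\ref{lemma:coupling-approximation}, bound the change in cost by $L\,W_1(\mu,\tilde\mu)$ via Kantorovich--Rubinstein, use that the relative entropy does not increase, and then symmetrize. The one difference is that you take an $\eta$-optimal $\mu$ where the paper uses an exact primal optimizer; this is harmless, since it needs only equality of the primal and dual values and not attainment of the primal infimum.
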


\begin{proof}
    See Appendix~\ref{app:proof_stability_eot}.
\end{proof}

The above result establishes that, if the marginals are close in the $L^1$ sense -- as measured by the $W_1$-distance -- and the cost functions are Lipschitz, then the value of the entropic optimal transport problem is also close. This is used in Section~\ref{sec:estimation} to show that the value of the entropic optimal transport problem with estimated marginals converges to the value of the entropic optimal transport problem with true marginals.

Next, we concentrate on the case when all marginals $\rho_k$ have finite support in a common set $\X_m=\{x_1,\ldots,x_m\}\subseteq \X$ of $m$ points. In this case, the value of the entropic optimal transport problem is differentiable with respect to the marginal probabilities, and its dual admits a unique optimal solution in a suitably normalized compact set. To state this precisely, let $\Phi:\R^{n\times m}\times (\Delta_m)^n\times \Delta_n\to \R$ be the function
\[\Phi(\phi,p,\lambda)=\sum_{\ell\in [m]^n}\left( \lambda^T\phi_{\cdot,\ell} - \epsilon\left(C(\ell)\,e^{\lambda^T\phi_{\cdot,\ell}/\epsilon}-1\right)\right)\prod_{k\in[n]}p_{k,\ell_k},\]
where $C(\ell)=\exp\!\left(-\frac{c(x_{\ell_1},\ldots,x_{\ell_n})}{\epsilon}\right)$ is the exponentiated cost of the path passing through $x_{\ell}$, and $\phi_{\cdot,\ell}=(\phi_{1,\ell_1},\ldots,\phi_{n,\ell_n})^T$ is the column vector of the potential $\phi_k$ evaluated at $x_{\ell_k}$. The function $\Phi$ is the finite-dimensional version of the dual objective of the entropic optimal transport problem, where the marginals are represented by their probability vectors $p_k=(p_{k,1},\ldots,p_{k,m})^T\in \Delta_m$, and $\lambda\in \Delta_n$ collects the weights $\lambda_{k,n}$ associated with each marginal.

When the profile of marginals $\mathbf{\rho}$ has finite support in $\X_m$, we can identify the entropic optimal transport $S(\mathbf{\rho})$ with the maximization of $\Phi$ holding $(p,\lambda)$ fixed at: $p_{k,\ell_k}=\rho_k(\{x_{\ell_k}\})$ and $\lambda_k=\lambda_{k,n}$, for $k\in[n]$ and $\ell\in [m]^n$. If we denote by $\mathcal{K}_p\subseteq \R^{n\times m}$ the set of matrices satisfying the normalizations:
\[p_k^T\phi_{k,\cdot}=0 \text{ for all }k\in[n-1]\text{ and } \phi_{k,\ell}=0\text{ if }p_{k,\ell}=0,\]
with supremum norm bounded by $3(\underline{\lambda}_n^{-1}-1)\|c\|_\infty$, where $\underline{\lambda}_n=\min_{k\in[n]}\lambda_{k,n}$, we can show that there exists a unique solution $\phi\in \mathcal{K}_p$. The following lemma shows that the value of the entropic optimal transport problem is  then differentiable with respect to the probability of each point assigned by the marginals, that there exists a unique optimal solution to the dual problem in a special compact set, and that the set of solutions is stable to perturbations:

\begin{lemma}\label{lemma:finite-support-eot-properties}
    Let $\mathbf{\rho}\in\P(\X)^n$ be a profile of marginals with finite support in $\X_m\subseteq \X$. Then, the value of the entropic optimal transport problem $S(\mathbf{\rho})$ equals:
    \[S(\mathbf{\rho})=\max_{\phi\in\R^{n\times m}}\Phi(\phi,p,\lambda):= V(p,\lambda),\]
    where $p_{k,\ell_k}=\rho_k(\{x_{\ell_k}\})$ and $\lambda_k=\lambda_{k,n}$ for $k\in[n]$ and $\ell\in [m]^n$. There exists a unique optimal solution $\phi^\ast\in \R^{n\times m}$ to the dual problem in the compact set $\mathcal{K}_p$; the function $V$ is differentiable with respect to the probability of each point assigned by the marginals, $(p,\lambda)$, if $\lambda_{k,n}>0$ for all $k\in[n]$, and satisfies:
    \begin{align*}
        D_{p,\lambda}V(p,\lambda)(q,\mu)&=D_{p,\lambda}\Phi (\phi^\ast,p,\lambda)(q,\mu)= \nabla_p V(p,\lambda)^T q\\
        &=\sum_{k\in[n]}\sum_{\ell_k\in[m]}q_{k,\ell_k}\sum_{\ell_{-k}\in[m]^{n-1}}\left(\prod_{j\in[n]\setminus\{k\}}p_{j,\ell_j}\right)\left(\lambda^T\phi^\ast_{\cdot,\ell}\right),
    \end{align*}
    for any $\mu\in \R^n$ and $q\in \R^{n\times m}$ such that $\sum_{k\in[n]}\mu_k=\sum_{\ell\in[m]}q_{k,\ell}=0$ for all $k\in[n]$, and $q_{k,\ell}=0$ if $p_{k,\ell}=0$.
    Finally, for any sequence of marginals $(p^N,\lambda^N)$ converging to $(p,\lambda)$, and $\phi^N\in \mathcal{K}_{p^N}$ almost optimal, i.e. $\Phi(\phi^N,p^N,\lambda^N)+ o(1)= V(p^N,\lambda^N)$, we have that $\phi^N\to \phi^\ast\in \mathcal{K}_p$.
\end{lemma}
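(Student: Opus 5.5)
The plan is to reduce everything to finite-dimensional convex analysis for the concave function $\Phi(\cdot,p,\lambda)$ and then to read off each claim in turn. For the identity $S(\boldsymbol\rho)=\max_\phi\Phi(\phi,p,\lambda)$, observe that $R=\bigotimes_k\rho_k$ is supported on $\X_m^n$. Hence the dual objective of $S$ depends on $\phi\in C_b(\X)^n$ only through the values $\phi_k(x_\ell)$ at the atoms with $p_{k,\ell}>0$. Conversely, any matrix in $\R^{n\times m}$ extends to a bounded continuous function on $\X$, for instance by a Tietze or piecewise-linear extension. So the two suprema coincide, and $S(\boldsymbol\rho)=\sup_{\phi\in\R^{n\times m}}\Phi(\phi,p,\lambda)$. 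That this supremum is attained will follow from the compactness argument below.

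For existence and uniqueness in $\mathcal{K}_p$, I would first record the invariances of $\Phi$. It does not depend on entries with $p_{k,\ell}=0$. It is also unchanged under the shifts $\phi_{k,\cdot}\mapsto\phi_{k,\cdot}+a_k\mathbf 1$ with $\sum_k\lambda_ka_k=0$, since these leave $\lambda^T\phi_{\cdot,\ell}$ invariant. Modulo these invariances, $\Phi$ is strictly concave. Indeed, $\phi\mapsto\lambda^T\phi_{\cdot,\ell}$ enters through $t\mapsto t-\epsilon C(\ell)e^{t/\epsilon}$, which is strictly concave because $C(\ell)>0$. Moreover, the linear map $\phi\mapsto(\lambda^T\phi_{\cdot,\ell})_{\ell:\prod_kp_{k,\ell_k}>0}$ has kernel equal exactly to the span of these shifts together with the inactive entries. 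The normalization defining $\mathcal{K}_p$ selects one representative per class: fixing the $p_k$-means of rows $1,\ldots,n-1$ leaves the last row to absorb the remaining shift.

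Attainment then comes from the block-wise first-order condition \eqref{eq:foc-matrix}. Starting from $\phi=0$, each Sinkhorn block update does not decrease $\Phi$ and keeps the iterate inside the set of $\phi\in\mathcal K_p$ with $\|\phi\|_\infty\le3(\underline\lambda_n^{-1}-1)\|c\|_\infty$. This invariance I would prove in the usual way: the log-sum-exp in $\alpha_k$ has oscillation at most $2\|c\|_\infty/\lambda_{k,n}$ in $\ell_k$, and the recentering then controls the level of each row. A limit point in this compact set exists, is a maximizer, and is unique in $\mathcal K_p$ by strict concavity on the quotient.

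For differentiability I would use a Danskin/envelope argument. Set $V(p,\lambda)=\max_{\phi\in B}\Phi(\phi,p,\lambda)$, where $B$ is a fixed compact box containing the bounded normalized maximizers for all $(p',\lambda')$ near $(p,\lambda)$; the bound above is uniform for $\lambda$ bounded away from $0$. Since $\Phi$ is $C^1$ in $(p,\lambda)$ jointly with $\phi$ and the maximizer is unique up to invariances that leave $D_{p,\lambda}\Phi$ unchanged, Danskin's theorem gives $D V(p,\lambda)(q,\mu)=D_{p,\lambda}\Phi(\phi^\ast,p,\lambda)(q,\mu)$ for admissible directions. One subtlety is that directions into inactive cells change the support and hence the normalization; I would handle this by restricting to $q_{k,\ell}=0$ whenever $p_{k,\ell}=0$, as in the statement.

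The derivative is then computed explicitly. Differentiating $\Phi$ in $p_{k,\ell_k}$ gives
\[
\sum_{\ell_{-k}}\Big(\prod_{j\ne k}p_{j,\ell_j}\Big)\Big(\lambda^T\phi^\ast_{\cdot,\ell}-\epsilon\big(C(\ell)e^{\lambda^T\phi^\ast_{\cdot,\ell}/\epsilon}-1\big)\Big).
\]
At an active cell, \eqref{eq:foc-matrix} says exactly that $\sum_{\ell_{-k}}\prod_{j\ne k}p_{j,\ell_j}\,C(\ell)e^{\lambda^T\phi^\ast_{\cdot,\ell}/\epsilon}=1$. Since also $\sum_{\ell_{-k}}\prod_{j\ne k}p_{j,\ell_j}=1$, the $\epsilon$-term vanishes, which yields the stated formula. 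In the $\lambda$-direction, $\partial_{\lambda_k}\Phi=E_{p_k}[\phi^\ast_k]-E_{\nu^\ast}[\phi^\ast_k]=0$ because the Gibbs coupling has $k$-th marginal $p_k$. Hence only the $\nabla_pV$ term survives.

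The final stability claim follows by a Berge-type argument. Take $\phi^N\in\mathcal K_{p^N}$ almost optimal. First show that $\phi^N$ is bounded, by comparing $\Phi(\phi^N,p^N,\lambda^N)$ with the value at the bounded normalized maximizer and using the coercivity of $\Phi$ on the normalized quotient. Any limit point $\bar\phi$ then lies in $\mathcal K_p$, since the normalizations pass to the limit and entries on cells inactive for $p$ are handled by their vanishing weights. By continuity of $\Phi$ and of $V$ (the latter from Lemma~\ref{lemma:stability-eot}), $\bar\phi$ is optimal, so uniqueness forces $\bar\phi=\phi^\ast$.

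I expect the main obstacle to be this last step. A cell can be inactive for $p$ but active for $p^N$, and the corresponding entry $\phi^N_{k,\ell}$ is not controlled by the objective at $p$. I would first try to bound such entries uniformly through the $\alpha$-map, using the fixed-point property of almost-optimal iterates and the uniform oscillation bound, and then argue that these entries converge to the convention value $0$ in $\mathcal K_p$ or else are irrelevant to the limit.
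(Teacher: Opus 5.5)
Your arguments for the first three claims follow the paper's route: finite reduction, concavity plus normalization in $\mathcal{K}_p$, then Danskin. The genuine gap is in the last claim, and the repair you sketch cannot work, because as stated the claim fails as soon as some cell has $p_{k,\ell}=0<p^N_{k,\ell}$.

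Such an entry $\phi^N_{k,\ell}$ enters $\Phi(\cdot,p^N,\lambda^N)$ only through terms of total weight $p^N_{k,\ell}\to0$. Almost-optimality in value therefore imposes no fixed-point relation on it. Take the exact maximizer, replace this entry by any value $b$ in the interior of the box, and renormalize (a shift of order $p^N_{k,\ell}$ that leaves $\Phi$ invariant). This costs only $O(p^N_{k,\ell})=o(1)$, so $\phi^N_{k,\ell}\to b$ is possible. Without the sup-norm bound that the appendix builds into $\mathcal{K}_{p^N}$, the entry need not even stay bounded: take $|b_N|\to\infty$ with $p^N_{k,\ell}|b_N|\to0$. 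The same example defeats your coercivity-based boundedness step, since the coercivity constant degenerates with the weight.

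Exact maximizers fail too. By \eqref{eq:foc-matrix} at $p^N$, $\phi^N_{k,\ell}=\alpha_{k,\ell}(\phi^N_{-k},p^N,\lambda^N)\to\alpha_{k,\ell}(\phi^\ast_{-k},p,\lambda)$. This limit is the natural extension of $\phi^\ast_k$ to $x_{\ell,m}$ and is generically nonzero, whereas $\phi^\ast\in\mathcal{K}_p$ equals $0$ there. So no argument can show these entries converge to the convention value.

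What your Berge argument does prove, once $\phi^N$ is confined to a uniform box, is $\phi^N_{k,\ell}\to\phi^\ast_{k,\ell}$ for every $(k,\ell)$ with $p_{k,\ell}>0$. The box ensures $p^N_{k,\ell}\phi^N_{k,\ell}\to0$ off $\mathrm{supp}(p_k)$, so the row normalizations pass to the limit. Full convergence holds under the extra hypothesis $\mathrm{supp}(p^N_k)\subseteq\mathrm{supp}(p_k)$, which together with $p^N\to p$ forces the supports to coincide eventually. This weaker statement is all Theorem~\ref{thm:asymptotic_distribution} needs. A $P$-null cell receives no observations almost surely, so $\hat p^N$ satisfies the support hypothesis, and the variance estimator weights entries by $\hat p^N$ anyway. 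The paper's own proof does not close this step either; it simply asserts that the limit point ``must also belong in $\mathcal{K}_p$''.

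Two smaller remarks. First, your attainment step via Sinkhorn limit points needs repair. Recentering rows $k<n$ is not a block maximization and by itself lowers $\Phi$, so only full sweeps are monotone, and optimality of a limit point needs the usual block-coordinate argument. It is also unnecessary. Each $t\mapsto t-\epsilon C(\ell)e^{t/\epsilon}$ tends to $-\infty$ as $|t|\to\infty$, and your map $\phi\mapsto(\lambda^T\phi_{\cdot,\ell})_{\ell\text{ active}}$ is injective on the subspace $\mathcal{K}_p$, so $\Phi(\cdot,p,\lambda)$ is coercive there. The paper instead recenters the $3\|c\|_\infty$-bounded dual solution of \citet{dimarino2020schrodinger}.

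Second, on the positive side, your kernel description of that map is a more transparent route to uniqueness than the paper's Hessian bound. Your explicit check that the $\epsilon$-term vanishes by \eqref{eq:foc-matrix}, and that $\partial_{\lambda_k}\Phi=0$ at the optimum, is exactly the computation needed for the displayed derivative formula. The paper states that formula without carrying this computation out.
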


\begin{proof}
    See Appendix~\ref{app:proof_finite_support_eot_properties}.
\end{proof}

\subsection{Sample Properties of the Discretized Marginals}\label{subsec:sample_properties}

This section collects the sampling properties of the empirical estimates of the discretized marginals $(P_{k,n,m})_{k\in[n]}$ of Section~\ref{sec:x_discretization} and the cell weights $(\lambda_{k,n})_{k\in[n]}$, used in the estimation results of Section~\ref{sec:estimation}.

We start by introducing the i.i.d.\ sequence of indicator matrices $(Y_i)_{i\in[N]}$ with values in $\{0,1\}^{n\times m}$, $Y_{i,(k,\ell)}=\mathbf{1}_{[Z_i\in A_{k,n},\,X_i\in B_{\ell,m}]}$. For these matrices, exactly one entry of $Y_i$ equals $1$, marking the joint cell $A_{k,n}\times B_{\ell,m}$ visited by $(Z_i,X_i)$. Their sample average is the random matrix $\bar{Y}^N=N^{-1}\sum_{i=1}^N Y_i$, whose expectation is the matrix of joint cell probabilities, $(P(A_{k,n}\times B_{\ell,m}))_{(k,\ell)\in[n]\times [m]}$.

To estimate each marginal distribution $P_{k,n,m}(x_{\ell})$ and the vector of probability weights $\lambda_{k,n}$, let us introduce the following objects. The number of realizations of the exogenous value in each cell $A_{k,n}$, $N_k$, is the sum of the $k$-th row of the matrix $\bar{Y}^N$, that is,
\[N_k=\sum_{\ell\in[m]}\sum_{i\in [N]}Y_{i,(k,\ell)}.\]
Conditioned on the event $N_k>0$ for all $k\in[n]$, set the estimates $(\hat{p}^N,\hat{\lambda}^N)\in (\Delta_m)^n\times \Delta_n^\circ$:
\[\hat{p}^N_{k,\ell}=\frac{\bar{Y}^N_{k,\ell}}{\sum_{\ell'\in[m]}\bar{Y}^N_{k,\ell'}},\quad\text{ and }\quad \hat{\lambda}^N_{k}=\sum_{\ell\in[m]}\bar{Y}^N_{k,\ell}.\]
These estimates are well-defined on this event, and may be taken arbitrarily otherwise. Since $P[N_k=0\text{ for some }k]\leq ne^{-N\underline{\lambda}_n}$, this arbitrary choice does not affect the asymptotic results below. The random matrix $\hat{p}^N$ estimates the matrix $p=(p_{k,\ell})_{(k,\ell)\in[n]\times [m]}$ of probabilities $p_{k,\ell}=P_{k,n,m}(x_\ell)$, and $\hat{\lambda}^N$ estimates the vector $\lambda=(\lambda_{k,n})_{k\in[n]}$ of cell weights. As the following lemma shows, these estimates are almost surely consistent, asymptotically normal at rate $\sqrt{N}$, asymptotically independent of one another, and have small $L^1$-error.

\begin{lemma}\label{lemma:sample-properties}
    Let $(\hat{p}^N,\hat{\lambda}^N)$ be the estimates of the marginal probabilities and weights defined above. Then:
    \begin{enumerate}
        \item[(i)] The estimates are consistent: $\hat{p}^N\to p$ and $\hat{\lambda}^N\to \lambda$ almost surely as $N\to\infty$.
        \item[(ii)] The estimates are jointly asymptotically normal at rate $\sqrt{N}$,
        \[\sqrt{N}\left(\hat{p}^N-p,\hat{\lambda}^N-\lambda\right)\Rightarrow \mathcal{N}(0,\Sigma),\]
        where $\Sigma$ is a block-diagonal covariance matrix with blocks $\Sigma_k=\frac{1}{\lambda_{k,n}}\left(\mathrm{diag}(p_{k,\cdot})-p_{k,\cdot}p_{k,\cdot}^T\right)$ for $k\in[n]$, and $\Sigma_\lambda=\mathrm{diag}(\lambda)-\lambda\lambda^T$. In particular, $\hat{p}_1^N,\ldots,\hat{p}_n^N$, and $\hat{\lambda}^N$, are pairwise asymptotically independent.
        \item[(iii)]  If $N\underline{\lambda}_n=N\min_{k\in[n]}\lambda_{k,n}>1$, then the estimates have small $L^1$-error,
        \[E[|\hat{p}_{k,\ell}^N-p_{k,\ell}|]\leq \frac{2}{\sqrt{N}}\left(\frac{1+\sqrt{p_{k,\ell}(1-p_{k,\ell})}}{\sqrt{\underline{\lambda}_n}}\right).\]
    \end{enumerate}
\end{lemma}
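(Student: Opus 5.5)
The plan is to reduce everything to the i.i.d. multinomial structure of the indicator matrices $Y_i$. Each $Y_i$ is a one-hot vector in $\{0,1\}^{n\times m}$ with cell probabilities $\pi_{k,\ell}:=P(A_{k,n}\times B_{\ell,m})=\lambda_{k,n}p_{k,\ell}$. Hence $\bar Y^N$ is the empirical frequency of a multinomial sample, and $(\hat p^N,\hat\lambda^N)=h(\bar Y^N)$ with
\[
h_\lambda(y)_k=\sum_\ell y_{k,\ell},\qquad h_p(y)_{k,\ell}=\frac{y_{k,\ell}}{\sum_{\ell'}y_{k,\ell'}} .
\]
The map $h$ is smooth on the open set where every row sum is positive, and this set contains the population matrix $\pi$ because $\lambda_{k,n}>0$.

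For (i), the strong law of large numbers gives $\bar Y^N\to\pi$ almost surely. Since $h$ is continuous at $\pi$, it follows that $\hat p^N\to p$ and $\hat\lambda^N\to\lambda$ almost surely. Almost surely every $N_k>0$ eventually, so the arbitrary definition off that event is irrelevant.

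For (ii), the multivariate CLT gives $\sqrt N(\bar Y^N-\pi)\Rightarrow\mathcal N(0,\mathrm{diag}(\pi)-\pi\pi^T)$. I then apply the delta method with the Jacobian of $h$ at $\pi$. The $\lambda$-block is a row-summation, which yields $\mathrm{diag}(\lambda)-\lambda\lambda^T$. For the $p$-block, I linearize:
\[
\hat p_{k,\ell}-p_{k,\ell}\approx \lambda_{k,n}^{-1}\Big(\bar Y_{k,\ell}-\pi_{k,\ell}-p_{k,\ell}\sum_{\ell'}(\bar Y_{k,\ell'}-\pi_{k,\ell'})\Big).
\]
The computation I expect to be the main (though routine) obstacle is the cross-covariances, which the following checks settle.
\begin{itemize}
\item The $p$-row-$k$ block: computing $\mathrm{Cov}$ under the multinomial covariance gives $\lambda_{k,n}^{-1}(\mathrm{diag}(p_k)-p_kp_k^T)$.
\item Rows $k\ne j$: the linearized terms are uncorrelated because $\mathrm{Cov}(\bar Y_{k,\ell},\bar Y_{j,\ell'})=-\pi_{k,\ell}\pi_{j,\ell'}$. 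This is exactly cancelled by the centering $-p_{k,\ell}\sum_{\ell'}$, since the centered vector $e_{k,\ell}-p_{k,\ell}\sum_{\ell'} e_{k,\ell'}$ has zero mean against any rank-one term that factors as $\pi_{k,\cdot}\otimes(\cdot)$.
\item The $p$-block versus the $\lambda$-block: the covariance of $\lambda_{k,n}^{-1}(\bar Y_{k,\ell}-p_{k,\ell}\bar Y_{k,\cdot})$ with $\bar Y_{j,\cdot}$ is $\lambda_{k,n}^{-1}\big(\pi_{k,\ell}\mathbf 1_{j=k}-\pi_{k,\ell}\lambda_j-p_{k,\ell}(\lambda_k\mathbf 1_{j=k}-\lambda_k\lambda_j)\big)=0$.
\end{itemize}
Joint Gaussianity then turns these zero covariances into pairwise asymptotic independence.

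For (iii), I condition on $N_k$. Given $N_k=r\ge1$, the count $N_{k,\ell}\sim\mathrm{Bin}(r,p_{k,\ell})$, so by Cauchy–Schwarz
\[
E[|\hat p_{k,\ell}-p_{k,\ell}|\mid N_k]\le\sqrt{p_{k,\ell}(1-p_{k,\ell})/N_k}.
\]
On $\{N_k\ge N\lambda_{k,n}/2\}$ this gives $\sqrt{2p(1-p)}/\sqrt{N\underline\lambda_n}$.

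On the complement I bound the error by $1$ and apply the Chernoff bound $P(N_k<N\lambda_k/2)\le e^{-N\lambda_k/8}$. I then need $e^{-x/8}\le 2/\sqrt{x}$ for $x=N\underline\lambda_n>1$. This holds because $\sqrt x\,e^{-x/8}\le\sqrt{4/e}<2$. I will also bound the event $N_k=0$ (where the error is at most $1$) inside the same term.

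Summing the two contributions and using $\sqrt2\le2$ yields the stated bound $\frac{2}{\sqrt N}\big(1+\sqrt{p(1-p)}\big)/\sqrt{\underline\lambda_n}$.
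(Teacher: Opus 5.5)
Your proposal is correct. For (i) and (ii) it follows the paper's route. You apply the strong law and the CLT to the multinomial frequencies $\bar Y^N$, then push them through the row-normalization map (the paper's $\Gamma$, your $h$) by continuity and the delta method. You also do the Jacobian covariance computation that the paper dismisses as ``an immediate calculation''. It shows the $p$-rows are uncorrelated with each other and with the $\lambda$-block, so the limit covariance is exactly the stated block-diagonal $\Sigma$.

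The genuine difference is in (iii). The paper conditions on $\{N_k>0\}$ and applies the binomial Cauchy--Schwarz step. It then invokes the reciprocal-moment inequality $E[N_k^{-1/2}\mid N_k>0]\le\sqrt{2/(N\lambda_{k,n})}$, together with $2P[N_k=0]\le 2/\sqrt{N\underline{\lambda}_n}$. You instead split on the event $\{N_k\ge N\lambda_{k,n}/2\}$. On it, $N_k\ge 1$ automatically and $N_k^{-1/2}\le\sqrt{2/(N\lambda_{k,n})}$ holds deterministically. Its complement contains $\{N_k=0\}$, and you bound it by the multiplicative Chernoff bound $e^{-N\lambda_{k,n}/8}\le 2/\sqrt{N\underline{\lambda}_n}$. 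Both routes give the same numerator, $\sqrt{2p_{k,\ell}(1-p_{k,\ell})}+2\le 2\bigl(1+\sqrt{p_{k,\ell}(1-p_{k,\ell})}\bigr)$.

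The paper's version is shorter but relies on an inverse-binomial moment bound that it states without proof. Yours is self-contained and elementary. Because $\sqrt{x}\,e^{-x/8}\le 2/\sqrt{e}$ for every $x>0$, your argument also shows that the hypothesis $N\underline{\lambda}_n>1$ is not needed.
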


\begin{proof}
    See Appendix~\ref{app:proof_sample_properties}.
\end{proof}

\section{Proofs}\label{app:proofs_all}

 This subappendix will collect the proofs of the results stated in the main text, as well as the proofs of the auxiliary results in Appendix~\ref{app:aux}.

 \subsection{Proofs of Section~\ref{sec:ot_connection}}

\begin{proof}[Proof of Proposition~\ref{prop:co-mechanism}]\label{app:proof_co_mechanism}
    As shown in Lemma~\ref{lemma:correspondences-objectives}, the correspondence $G:\mathbb{K}\rightrightarrows \Omega$ defined by $G(e)=\{\w\in\Omega: g(\w)=e\}$ is non-empty, compact-valued, and upper hemicontinuous -- in particular, weakly measurable. Since $\Lambda$ is continuous (Assumption~\ref{ass:PI}), the function $(e,\w)\mapsto\Lambda(\w)$ is a Carathéodory function on $\mathrm{Graph}(G)$: constant, hence measurable, in $e$, and continuous in $\w$. By the Measurable Maximum Theorem (Theorem 18.19 in \cite{aliprantis2006infinite}), the correspondence
    \[G^\ast(e)=\{\w\in G(e): \Lambda(\w)=\Psi(e)\}=\operatorname*{arg\,min}_{\w\in G(e)}\Lambda(\w)\]
    is non-empty, compact-valued, weakly measurable, and admits a measurable selection $i:\mathbb{K}\to \Omega$. Such a function satisfies the desired properties, since $i(e)\in G^\ast(e)$ for every $e\in\mathbb{K}$ implies $i(e)\in G(e)$, so that $g(i(e))=e$, and optimality of elements of $G^\ast(e)$ implies $\Lambda(i(e))=\Psi(e)$.

    To show that the pushforward map $i_\#$ maps feasible path distributions into feasible latent laws, let $\mu\in\mathcal{M}(\mathbf{P},\lambda)$ be a feasible path distribution. Because $i$ is measurable, the pushforward $i_\#\mu$ is a well-defined Borel probability measure on $\Omega$. Moreover, for any test function $\phi\in C_b(\Omega)$,
    \[E_{g_\#\left(i_\#\mu\right)}[\phi] = E_{\mu}[\phi(g(i)) 1_{\mathbb{K}}]=E_{\mu}[\phi],\]
    since $i$ is a right-inverse and $\mu$ is supported in $\mathbb{K}$. Therefore, $g_\#(i_\#\mu)\in\P(\mathbb{K})$ can be identified with the restriction of $\mu$ to its support $\mathbb{K}$, and their unique extension to $\mathbb{H}$ must coincide. In particular, $g_\#(i_\#\mu)=\mu$, and we must have $T^\ast (g_\#(i_\#\mu))=T^\ast\mu=P$, so that $i_\#\mu\in\mathcal{Q}(\mathbf{P},\lambda)$. Moreover, the fact that $g_\#(i_\#\mu)=\mu$ and $i_\#(\mathcal{M}(\mathbf{P},\lambda))\subseteq \mathcal{Q}(\mathbf{P},\lambda)$ implies that $g_\#$ is surjective onto $\mathcal{M}(\mathbf{P},\lambda)$, concluding the proof.
\end{proof}

\begin{proof}[Proof of Theorem~\ref{thm:strong-duality}]\label{app:proof_strong_duality}
    First, note that both problems are feasible and have well-defined finite values, since the non-emptiness of $\mathcal{Q}(\mathbf{P},\lambda)$ implies the non-emptiness of $\mathcal{M}(\mathbf{P},\lambda)$, and the cost functions are bounded. Moreover, because both cost functions are also lower semicontinuous and the feasible sets are compact -- see Lemma~\ref{lemma:correspondences-objectives} -- the infima are attained, so that optimal solutions exist for both problems.

    Second, Proposition~\ref{prop:co-mechanism} shows that:
    \[\Lambda(i(e))=\Psi(e),\quad \forall e\in\mathbb{K},\quad \text{ and } \quad i_\#(\mathcal{M}(\mathbf{P},\lambda))\subseteq \mathcal{Q}(\mathbf{P},\lambda).\]
    Hence, for any optimal path distribution $\mu^\ast\in\mathcal{M}(\mathbf{P},\lambda)$, the pushforward $i_\#\mu^\ast$ is a feasible latent law in $\mathcal{Q}(\mathbf{P},\lambda)$, and the expected costs coincide:
    \[u(\mathbf{P},\lambda) =E_{\mu^\ast}[\Psi]= E_{\mu^\ast}[\Lambda\circ i] = E_{i_\#\mu^\ast}[\Lambda]  \geq v(\mathbf{P},\lambda).\]

    As for the converse inequality, observe that by definition of $\Psi$ and Proposition~\ref{prop:co-mechanism} again:
    \[\Lambda(\w)\geq \Psi(g(\w)),\quad \forall \w\in\Omega,\quad\text{ and }\quad g_\#(\mathcal{Q}(\mathbf{P},\lambda))=\mathcal{M}(\mathbf{P},\lambda).\]
    Thus, for any optimal latent law $Q^\ast\in\mathcal{Q}(\mathbf{P},\lambda)$, the pushforward $g_\#Q^\ast$ is a feasible path distribution in $\mathcal{M}(\mathbf{P},\lambda)$, and the expected costs satisfy:
    \[v(\mathbf{P},\lambda)=E_{Q^\ast}[\Lambda]\geq E_{Q^\ast}[\Psi\circ g] = E_{g_\#Q^\ast}[\Psi]\geq u(\mathbf{P},\lambda).\]

    From these inequalities, we conclude that the values of both programs must coincide. Moreover, since the pushforward maps $g_\#$ and $i_\#$ preserve feasibility, and the expected costs coincide, they must also map optimal solutions into optimal solutions,
    \[g_\#(\mathcal{Q}^\ast(\mathbf{P},\lambda))\subseteq \mathcal{M}^\ast(\mathbf{P},\lambda)\quad \text{ and }\quad i_\#(\mathcal{M}^\ast(\mathbf{P},\lambda))\subseteq\mathcal{Q}^\ast(\mathbf{P},\lambda).\]
    To show that these inclusions are in fact equalities, note first that $g_\#i_\#\mu^\ast=\mu^\ast$ for any $\mu^\ast\in\mathcal{M}^\ast(\mathbf{P},\lambda)$. Now, if $Q^\ast\in\mathcal{Q}^\ast(\mathbf{P},\lambda)$, then $g_\#Q^\ast$ must also be optimal. Moreover, by definition of the co-mechanism and the cost function $\Psi$, we have:
    \[\Lambda\circ i (g(\w))=\Psi(g(\w))\leq \Lambda(\w),\quad \forall \w\in\Omega.\]
    Note that $i_\#g_\#Q^\ast$ is also a feasible latent law. Therefore, the expected costs of $Q^\ast$ and $i_\#g_\#Q^\ast$ must coincide: both are feasible, $Q^\ast$ is optimal, and
    \[E_{i_\#g_\#Q^\ast}[\Lambda]\leq E_{Q^\ast}[\Lambda].\]
    This equality holds if and only if
    \[\Lambda\circ i\circ g = \Lambda, \quad Q^\ast\text{-a.s.}\]
    Therefore, $Q^\ast(i\circ g = \mathrm{id}_\Omega)=1$, and we conclude that $i_\#g_\#Q^\ast=Q^\ast$. This shows that the pushforward maps $g_\#$ and $i_\#$ are surjective onto the respective sets of optimal solutions, and concludes the proof.
\end{proof}

\subsection{Proofs of Section~\ref{sec:discretization}}

\begin{proof}[Proof of Proposition~\ref{prop:discretized-ot-pi}]\label{app:proof_discretized_ot_pi}
     Assumptions~\ref{ass:PI} and \ref{ass:constraints} guarantee that both constraint sets are non-empty. Lemma~\ref{lemma:correspondences-objectives} ensures the objective function is always lower semicontinuous and bounded. Hence, the values are well-defined and finite. Note also that the discretized constrained optimal transport depends only on the pushforwards of feasible path distribution by the projection $\pi_n$. Indeed, the cost function is invariant to $\pi_n$, $\Psi_n\circ \pi_n=\Psi_n$, and every feasible distribution remains feasible when pushed forward by $\pi_n$. Thus,
     \[u_n(	\mathbf{P},\lambda)=\inf_{\mu\in\pi_{n\#}\mathcal{M}_n(\mathbf{P},\lambda)} E_\mu[\Psi_n].\]
    By Proposition~\ref{prop:nested-constraints}, the set $\pi_{n\#}\mathcal{M}_n(\mathbf{P},\lambda)$ is compact and convex. Since the cost function is lower semicontinuous, the minimum is attainable in $\pi_{n\#}\mathcal{M}_n(\mathbf{P},\lambda)$. By definition of this set, the infimum must also be attainable in the bigger set $\mathcal{M}_n(\mathbf{P},\lambda)$.

    To show that both values equal, we proceed as in the proof of Proposition~\ref{prop:co-mechanism}. First, observe that by Lemma~\ref{lemma:correspondences-objectives} the correspondence $G_n:\pi_n(\mathbb{K})\rightrightarrows \W$ is upper hemicontinuous, non-empty and compact valued. Therefore, by the Measurable Maximum Theorem, there exists a measurable selection $i_n:\pi_n(\mathbb{K})\rightarrow \W$ of $G_n$ so that $\Psi_n(e)=\Lambda(i_n(e))$ for every path $e\in \pi_n(\mathbb{K})$. By definition of $G_n$, such selection must be a right-inverse for the discretized mechanism $\pi_n\circ g$, as for any path $e\in\pi_n(\mathbb{K})$: $\pi_n\circ g(i_n(e))=\pi_n(e)=e$.

    Now, let $\mu\in \mathcal{M}_n(\mathbf{P},\lambda)$ be an optimal path distribution. Since the support of $\pi_{n\#}\mu$ has to lie in $\pi_n(\mathbb{K})$, the pushforward $i_{n\#}\pi_{n\#}\mu$ is a well-defined latent distribution. Moreover, the pushforward of this latent distribution by the discretized mechanism must equal $\pi_{n\#}\mu$, since $i_n$ is a right-inverse. Therefore, as $\pi_{n\#}\mu$ satisfies the discretized marginal constraint, $i_{n\#}\pi_n\mu\in \mathcal{Q}_n(\mathbf{P},\lambda)$. Also, the fact that $i_n$ selects the latent variables attaining the minimum effect, $\Psi_n=\Lambda\circ i_n$, leads to:
    \[v_{n}(\mathbf{P},\lambda)\leq E_{i_{n\#}\pi_{n\#}\mu}[\Lambda]= E_{\pi_{n\#}\mu}[\Psi_n]=u_n(	\mathbf{P},\lambda).\]
    For the converse inequality, let $Q\in\mathcal{Q}_n(\mathbf{P},\lambda)$ be an optimal latent distribution -- which exists, since this set is convex and compact (Proposition~\ref{prop:feasible-sets}). Moreover, the path distribution $\pi_{n\#}g_\#Q$ must be feasible. As $\Lambda\geq \Psi_n(\pi_n\circ g)$ for any latent variable, then:
    \[v_{n}(\mathbf{P},\lambda)= E_{Q}[\Lambda]\geq E_{\pi_{n\#} g_\#Q}[\Psi_n]\geq u_n(\mathbf{P},\lambda),\]
    from where we conclude that both programs equal in value.

    It remains to show that the values are asymptotically closer. It suffices to demonstrate this occurs for only one class of problems, as both have the same value, whether discretized or not. Let $Q_n$ be an optimal latent law for each $n$ discretized partial identification problem. Note that any weakly convergent subsequence of these optimal distributions must converge to a feasible distribution of the original problem by Proposition~\ref{prop:nested-constraints}. Moreover, since $\W$ is compact, at least one convergent subsequence must exist. Also, the sequence of associated values belongs to the compact set $[0,1]$. Thus, it must admit at least two convergent subsequences of values, one adhering to the limsup and other to the liminf of this sequence.  By the previous argument, the respective subsequence of laws attaining these bounds must also admit a further convergent subsequence of distributions, whose respective limit distributions, $\overline{Q}$ and $\underline{Q}$,  must be a feasible in $\mathcal{Q}(\mathbf{P},\lambda)$. Since $\Lambda$ is continuous, we must have:
\begin{align*}
E_{\overline{Q}}[\Lambda]=\limsup_{n\in \N} E_{Q_n}[\Lambda]=\limsup_{n\in \N}v_n(\mathbf{P},\lambda)\geq \liminf_{n\in \N}v_n(\mathbf{P},\lambda)=\liminf_{n\in \N} E_{Q_n}[\Lambda]=E_{\underline{Q}}[\Lambda]\geq v(\mathbf{P},\lambda).
\end{align*}
    Now, let $Q ^\ast \in \mathcal{Q}(\mathbf{P},\lambda)$ be any optimal distribution. By Assumption~\ref{ass:constraints}, there is a sequence $Q_n^\ast\in \mathcal{Q}_n(\mathbf{P},\lambda)$ weakly converging to $Q^\ast$. This implies $E_{Q^\ast_n}[\Lambda]\geq v_n(\mathbf{P},\lambda)$ for any $n\in\N$. Because $\Lambda$ is continuous, we must have:
    \[v(\mathbf{P},\lambda) = E_{Q^\ast}[\Lambda]= \limsup_{n\in \N}E_{Q^\ast_n}[\Lambda]\geq\limsup_{n\in \N}v_n(\mathbf{P},\lambda)= E_{\overline{Q}}[\Lambda].\]
    Thus, the two above inequalities lead to:
    \[v(\mathbf{P},\lambda)= \lim_{n\in \N}v_n(\mathbf{P},\lambda),\]
    and the proof is complete.
\end{proof}

\subsection{Proofs of Section~\ref{sec:moment_penalization}}

\begin{proof}[Proof of Proposition~\ref{prop:support-penalization}]\label{app:proof_support_penalization}
    We begin with the first assertion. By Proposition~\ref{prop:feasible-sets}, $\tilde{\pi}_{n\#}$ maps $\mathcal{C}_n(\mathbf{P},\lambda)$ onto $\Pi_n(\mathbf{P},\lambda)$, and the cost functions are related by $\Psi_{\delta,n}(e)=c_{\delta,n}(\tilde{\pi}_n(e))$. Thus, for any $\mu\in \mathcal{C}_n(\mathbf{P},\lambda)$,
    \[E_{\mu}[\Psi_{\delta,n}]=E_{\mu}[c_{\delta,n}\circ\tilde{\pi}_n]=E_{\tilde{\pi}_{n\#}\mu}[c_{\delta,n}].\]
    Taking the infimum on both sides and using that $\tilde{\pi}_{n\#}\mathcal{C}_n(\mathbf{P},\lambda)=\Pi_n(\mathbf{P},\lambda)$ proves the first claim.

    Next, note that the infimum defining $u_{\delta,n}(\mathbf{P},\lambda)$ is attained in $\mathcal{C}_n(\mathbf{P},\lambda)$, because both the cost $\Psi_{\delta,n}$ and the constraint set $\mathcal{C}_n(\mathbf{P},\lambda)$ depend on a path only through its projection $\pi_n$, the infimum can equivalently be taken over $\pi_{n\#}\mathcal{C}_n(\mathbf{P},\lambda)$. By Proposition~\ref{prop:feasible-sets}, this set is compact, and since the cost is Lipschitz and bounded (Lemma~\ref{lemma:regularized-cost}), the minimum is attained there -- hence, by the same invariance, also in $\mathcal{C}_n(\mathbf{P},\lambda)$.

    From Lemma~\ref{lemma:regularized-cost}, the correspondence $G^\ast_{\delta,n}$ is non-empty, compact-valued, and upper hemicontinuous, so it admits a measurable selection (Theorems 18.13 and 18.20 in \cite{aliprantis2006infinite}); denote such a selection by $i_{\delta,n}:\mathbb{H}\rightarrow \W$.

    For each $n\in\N$ and $\delta>0$, let $\mu_{\delta,n}\in \mathcal{C}_n(\mathbf{P},\lambda)$ be optimal, and denote by $Q_{\delta,n}=i_{\delta,n\#}\mu_{\delta,n}\in\P(\W)$ the latent law obtained by pushing $\mu_{\delta,n}$ forward through $i_{\delta,n}$. We claim that the Wasserstein distance $W_1$ between $\pi_{n\#}g_\#Q_{\delta,n}$ and $\pi_{n\#}\mu_{\delta,n}$ is at most $\delta$. Indeed, for any $1$-Lipschitz function $\phi:\mathbb{H}\to[-1,1]$,
    \[\big|E_{\pi_{n\#}g_\# Q_{\delta,n}}[\phi]-E_{\pi_{n\#}\mu_{\delta,n}}[\phi]\big|=\Big|E_{\mu_{\delta,n}}\big[\phi(\pi_n(g(i_{\delta,n}(e))))-\phi(\pi_n(e))\big]\Big|\leq E_{\mu_{\delta,n}}\big[\|\pi_n(g(i_{\delta,n}(e)))-\pi_n(e)\|_{\mathbb{H}}\big].\]
    Since $i_{\delta,n}$ selects a latent type satisfying $\Lambda(i_{\delta,n}(e))+\delta^{-1}\|\pi_n(g(i_{\delta,n}(e)))-\pi_n(e)\|_{\mathbb{H}}=\Psi_{\delta,n}(e)$ for every $e\in\mathbb{H}$,
    \[\delta^{-1}E_{\mu_{\delta,n}}\big[\|\pi_n(g(i_{\delta,n}(e)))-\pi_n(e)\|_{\mathbb{H}}\big] = E_{\mu_{\delta,n}}[\Psi_{\delta,n}]-E_{Q_{\delta,n}}[\Lambda]\leq u_{n}(\mathbf{P},\lambda)\leq 1,\]
    where the last two inequalities use that $\Lambda\geq0$, that $\mu_{\delta,n}$ is optimal, so $E_{\mu_{\delta,n}}[\Psi_{\delta,n}]=u_{\delta,n}(\mathbf{P},\lambda)$, and that $u_{\delta,n}(\mathbf{P},\lambda)\leq u_n(\mathbf{P},\lambda)$. Multiplying both sides by $\delta$ and substituting above gives
    \[\big|E_{\pi_{n\#}g_\# Q_{\delta,n}}[\phi]-E_{\pi_{n\#}\mu_{\delta,n}}[\phi]\big|\leq \delta,\quad\text{ for every such }\phi,\]
    so $W_1(\pi_{n\#}g_\# Q_{\delta,n}, \pi_{n\#} \mu_{\delta,n})\leq \delta$. Since $T^\ast$ has Lipschitz constant at most $1$ with respect to $W_1$, and the marginals of $\mu_{\delta,n}$ are $P_n$,
    \[W_1(T_n^\ast(g_\# Q_{\delta,n}),P_n)=W_1(T^\ast\pi_{n\#}g_\# Q_{\delta,n}, T^\ast\pi_{n\#} \mu_{\delta,n})\leq W_1(\pi_{n\#}g_\# Q_{\delta,n}, \pi_{n\#} \mu_{\delta,n})\leq \delta.\]

    We can now conclude both limits. First, since $\Psi_{\delta,n}$ increases as $\delta\downarrow 0$ (Lemma~\ref{lemma:regularized-cost}) while the constraint set is unchanged for fixed $n$, the value $u_{\delta,n}(\mathbf{P},\lambda)$ is non-decreasing as $\delta\downarrow0$ and bounded above by $u_n(\mathbf{P},\lambda)$. Moreover, since $\Psi_{\delta,n}\geq \Lambda \circ i_{\delta,n}$,
    \[0\leq E_{Q_{\delta,n}}[\Lambda]\leq E_{\mu_{\delta,n}}[\Psi_{\delta,n}] =u_{\delta,n}(\mathbf{P},\lambda).\]
    Take a sequence $\delta\downarrow0$ along which the limsup is attained; by compactness of $\P(\W)$, $(Q_{\delta,n})$ admits a further subsequence converging weakly to some $Q\in \P(\W)$, and continuity of $\Lambda$ gives
    \[E_Q[\Lambda]=\limsup_{\delta\downarrow 0}E_{Q_{\delta,n}}[\Lambda]\leq \limsup_{\delta\downarrow 0} u_{\delta,n}(\mathbf{P},\lambda)\leq u_n(\mathbf{P},\lambda).\]
    Along this same subsequence, $\pi_{n\#}g_{\#}Q_{\delta,n}$ converges weakly to $\pi_{n\#}g_\#Q$, since $\pi_n\circ g$ is continuous. Combined with $W_1(T_n^\ast(g_\#Q_{\delta,n}),P_n)\leq \delta$ and the weak continuity of $T_n^\ast$, this gives $T_n^\ast(g_\#Q)=P_n$, so $Q\in \mathcal{Q}_n(\mathbf{P},\lambda)$. Proposition~\ref{prop:discretized-ot-pi} then implies $E_Q[\Lambda]\geq v_n(\mathbf{P},\lambda)=u_n(\mathbf{P},\lambda)$, which forces equality throughout and shows $u_{\delta,n}(\mathbf{P},\lambda)\uparrow u_n(\mathbf{P},\lambda)$ as $\delta\downarrow0$.

    Second, for the joint limit, note that
    \[0\leq \liminf_{\delta\downarrow 0,\,n\uparrow +\infty} E_{Q_{\delta,n}}[\Lambda]\leq \liminf_{\delta\downarrow 0,\,n\uparrow +\infty} u_{\delta,n}(\mathbf{P},\lambda)\leq \limsup_{\delta\downarrow 0,\,n\uparrow +\infty} u_{\delta,n}(\mathbf{P},\lambda)\leq \limsup_{n\uparrow +\infty} u_{n}(\mathbf{P},\lambda) =u(\mathbf{P},\lambda),\]
    where the last two relations follow from $u_{\delta,n}\leq u_n$ and Proposition~\ref{prop:discretized-ot-pi}. Take a subsequence of indices $(\delta,n)$ along which the leftmost liminf is attained, and along it extract a further subsequence such that $Q_{\delta,n}$ converges weakly to some $Q\in\P(\W)$. By item 5 of Proposition~\ref{prop:properties-projections}, since $g_\#Q_{\delta,n}$ is supported on $\mathbb{K}$,
    \[0\leq W_1(g_\#Q,\pi_{n\#}g_{\#} Q_{\delta,n})\leq \kappa_n +W_1(g_\#Q,g_\#Q_{\delta,n}),\]
    and, since $g$ is continuous, $Q_{\delta,n}\Rightarrow Q$ along the subsequence and $\kappa_n\to0$, so $\pi_{n\#}g_\# Q_{\delta,n}\Rightarrow g_{\#}Q$ along it as well. Since $T^\ast$ has Lipschitz constant at most $1$, $P_n\Rightarrow P$, and $W_1(T_n^\ast(g_\# Q_{\delta,n}),P_n)\leq \delta$, we conclude $W_1(T^\ast (g_\# Q),P)=0$, so $Q\in \mathcal{Q}(\mathbf{P},\lambda)$. By Theorem~\ref{thm:strong-duality} and continuity of $\Lambda$,
    \[u(\mathbf{P},\lambda)=v(\mathbf{P},\lambda)\leq E_Q[\Lambda]=\liminf_{\delta\downarrow0,\,n\uparrow +\infty}E_{Q_{\delta,n}}[\Lambda] \leq \liminf_{\delta\downarrow0,\,n\uparrow +\infty} u_{\delta,n}(\mathbf{P},\lambda)\leq\limsup_{\delta\downarrow0,\,n\uparrow +\infty} u_{\delta,n}(\mathbf{P},\lambda)\leq u(\mathbf{P},\lambda),\]
    which forces equality throughout and completes the proof.
\end{proof}

\subsection{Proofs of Section~\ref{sec:entropic_perturbation}}

\begin{proof}[Proof of Proposition~\ref{prop:duality}]\label{app:proof_duality}
By Lemma~\ref{lemma:c-is-cont}, the cost function $c_{\delta,n}:\X^n\to\R$ is bounded by $1+2\delta^{-1}\mathrm{diam}(\X)$. In particular, it is a bounded Borel cost on the Polish spaces $(\X^n,|\cdot|_{\X^n})$. Theorem 4.5 in \citet{dimarino2020schrodinger}, applied to the marginals $(P_{k,n})_{k\in[n]}$ and this cost, therefore gives
\[u_{\epsilon,\delta,n}(\mathbf{P},\lambda)=\sup_{\psi\in C_b(\X)^n} E_{r_n}\!\left[\sum_{k\in [n]}\psi_k- \epsilon\left(\exp\!\left({\frac{\sum_{k\in[n]} \psi_{k}-c_{\delta,n}}{\epsilon}}\right)-1\right)\right].\]
Substituting $\psi_k=\lambda_{k,n}\phi_k$ for each $k\in[n]$, a bijection of $C_b(\X)^n$ onto itself, since $\lambda_{k,n}>0$, turns $\sum_{k\in[n]}\psi_k$ into $\mathrm{T}_n\phi$, and the supremum above into the dual displayed in the proposition.

Now fix $\delta,n$. The value $u_{\epsilon,\delta,n}(\mathbf{P},\lambda)$ is monotone non-decreasing in $\epsilon$. This happens as the relative entropy is always non-negative, turning the objective of \eqref{eq:eot_value}, for every fixed feasible $\nu$, non-decreasing in $\epsilon$. As the constraint set remains the same, its infimum over $\nu\in\Pi_n(\mathbf{P},\lambda)$ is non-decreasing. Together with the rate bound of \citet{carlier2017convergence} above, whose upper envelope $u_{\delta,n}(\mathbf{P},\lambda)+d\,\epsilon\log\epsilon+\epsilon/\delta$ converges to $u_{\delta,n}(\mathbf{P},\lambda)$ as $\epsilon\downarrow0$, this forces $u_{\epsilon,\delta,n}(\mathbf{P},\lambda)\downarrow u_{\delta,n}(\mathbf{P},\lambda)$. If instead only $n$ is held fixed, the same rate bound, the fact that $\epsilon\log\epsilon$ and $\epsilon/\delta$ vanish as $\epsilon/\delta\to0$, and the first limit of Proposition~\ref{prop:support-penalization} together give $u_{\epsilon,\delta,n}(\mathbf{P},\lambda)\rightarrow u_n(\mathbf{P},\lambda)$. Finally, letting $n\uparrow+\infty$ as well, the same argument combined with the joint-limit statement of Proposition~\ref{prop:support-penalization} gives $u_{\epsilon,\delta,n}(\mathbf{P},\lambda)\rightarrow u(\mathbf{P},\lambda)$.
\end{proof}

\begin{proof}[Proof of Proposition~\ref{prop:entropic-ot-dual}]\label{app:proof_entropic_ot_dual}
   By Theorem 4.5(ii) and Proposition 4.6 in \citet{dimarino2020schrodinger}, the dual is attainable: a maximizer $\psi=(\psi_1,\ldots,\psi_n)\in C_b(\X)^n$ exists for the dual problem in the proof of Proposition~\ref{prop:duality}, unique up to the usual additive normalization, and $\phi_k:=\psi_k/\lambda_{k,n}$ is accordingly a maximizer of our dual, satisfying the first-order condition \eqref{eq:foc}. By Proposition 4.6(iii) in \citet{dimarino2020schrodinger}, translated through the same substitution, the corresponding optimizer $\nu^\ast_{\epsilon,\delta,n}\in\Pi_n(\mathbf{P},\lambda)$ of the primal problem \eqref{eq:eot_value} exists and has log-density
   \[\log\!\left(\frac{d\nu^\ast_{\epsilon,\delta,n}}{dr_n}\right)(x_1,\ldots,x_n)=\frac{\mathrm{T}_n\phi(x_1,\ldots,x_n)-c_{\delta,n}(x_1,\ldots,x_n)}{\epsilon}.\]
   It is the unique optimizer since \eqref{eq:eot_value} is a strictly convex minimization.

   For the Lipschitz continuity of potentials, fix $k\in[n]$, points $x_k,x_k'\in\X$, and let $x_{-k}\in\X^{n-1}$ be arbitrary. Write $x=(x_k,x_{-k})$, $x'=(x_k',x_{-k})$. Since $c_{\delta,n}$ is $\delta^{-1}$-Lipschitz with respect to $|\cdot|_{\X^n}$ (Lemma~\ref{lemma:c-is-cont}) and the distance $|x-x'|_{\X^n}=\lambda_{k,n}|x_k-x_k'|_\X$, we have $|c_{\delta,n}(x)-c_{\delta,n}(x')|\leq \delta^{-1}\lambda_{k,n}|x_k-x_k'|_\X$ uniformly in $x_{-k}$. Using the first-order equation \eqref{eq:foc} and the standard fact that $y\mapsto\log E[e^{y/\epsilon}]$ is $\epsilon^{-1}$-Lipschitz with respect to the sup norm,
   \begin{align*}
|\phi_k(x_k)-\phi_k(x_k')|&=\frac{\epsilon}{\lambda_{k,n}}\left|\log E_{r_{-k,n}}\!\left[e^{\frac{\mathrm{T}_{k,n}\phi_{-k}-c_{\delta,n}(x_k,\cdot)}{\epsilon}}\right]-\log E_{r_{-k,n}}\!\left[e^{\frac{\mathrm{T}_{k,n}\phi_{-k}-c_{\delta,n}(x_k',\cdot)}{\epsilon}}\right]\right|,\\
&\leq \frac{1}{\lambda_{k,n}}\sup_{x_{-k}}|c_{\delta,n}(x)-c_{\delta,n}(x')|\leq \delta^{-1}|x_k-x_k'|_\X,
   \end{align*}
   so $\phi_k$ is $\delta^{-1}$-Lipschitz. The same computation, using that $\psi_k$ is bounded by $\|c_{\delta,n}\|_\infty\leq 1+2\delta^{-1}\mathrm{diam}(\X)$ (Theorem 4.5(ii) in \citet{dimarino2020schrodinger} and Lemma~\ref{lemma:c-is-cont}), gives $\|\phi_k\|_\infty\leq(1+2\delta^{-1}\mathrm{diam}(\X))/\lambda_{k,n}$.

   Finally, the first-order condition \eqref{eq:foc} implies $E_{r_n}\!\left[\exp\left(\frac{\mathrm{T}_n\phi-c_{\delta,n}}{\epsilon}\right)\right]=1$. Conditioning on $x_k$, \eqref{eq:foc} gives $E_{r_{-k,n}}\!\left[\exp\left(\frac{\mathrm{T}_n\phi-c_{\delta,n}}{\epsilon}\right)\,\middle|\,x_k\right]=1$ for $P_{k,n}$-a.e.\ $x_k$, and taking $E_{P_{k,n}}$ of both sides gives the claim. Hence, by Proposition~\ref{prop:duality},
   \begin{align*}
    u_{\epsilon,\delta,n}(\mathbf{P},\lambda)&=E_{r_n}\!\left[\mathrm{T}_n\phi-\epsilon\left(e^{\frac{\mathrm{T}_n\phi-c_{\delta,n}}{\epsilon}}-1\right)\right],\\
    &= E_{r_n}[\mathrm{T}_n\phi]=\sum_{k\in[n]}\lambda_{k,n}\,E_{P_{k,n}}[\phi_k]=\sum_{k\in[n]}\int_{A_{k,n}\times\X}\phi_{k}(x)\,dP(z,x).
   \end{align*}
\end{proof}

\subsection{Proofs of Section~\ref{sec:x_discretization}}

\begin{proof}[Proof of Proposition~\ref{prop:fully-discretized-ot}]\label{app:proof_fully_discretized_ot}
The feasible set $\Pi_{n,m}(\mathbf{P},\lambda)$ is non-empty and, since $\X_m^n$ is finite, compact. Together with the boundedness and continuity of $c_{\delta,n}$ (Lemma~\ref{lemma:c-is-cont}), this places \eqref{eq:eot_value_discretized} within the scope of the same multi-marginal entropic optimal transport results of \citet{dimarino2020schrodinger} used in the proof of Proposition~\ref{prop:entropic-ot-dual}, now applied to the finitely supported marginals $(P_{k,n,m})_{k\in[n]}$: strong duality holds, the dual admits a maximizer $\phi$ pinned down on $\X_m$ by the same first-order argument, and the value and log-density formulas follow exactly as before, with $(P_{k,n},r_n)$ replaced throughout by $(P_{k,n,m},r_{n,m})$.

For the convergence rate, $c_{\delta,n}$ is $\delta^{-1}$-Lipschitz with respect to the weighted metric $|x-y|_{\X^n}=\sum_{k\in[n]}\lambda_{k,n}|x_k-y_k|_\X$ (Lemma~\ref{lemma:c-is-cont}), so Lemma~\ref{lemma:stability-eot} gives
\[|u_{\epsilon,\delta,n,m}(\mathbf{P},\lambda)- u_{\epsilon,\delta,n}(\mathbf{P},\lambda)|\leq\frac{1}{\delta} \sum_{k\in[n]}W_1(P_{k,n},P_{k,n,m}) \lambda_{k,n}.\]
Since the mass $P_{k,n}$ assigns to each cell $B_{j,m}$ moves a distance at most $\rho_m$ to reach its representative point $x_{j,m}$, the identity coupling within cells shows $W_1(P_{k,n},P_{k,n,m})\leq \rho_m$ for every $k\in[n]$. As $\sum_{k\in[n]}\lambda_{k,n}=1$, this gives
\[|u_{\epsilon,\delta,n,m}(\mathbf{P},\lambda)- u_{\epsilon,\delta,n}(\mathbf{P},\lambda)|\leq\frac{\rho_m}{\delta},\]
and, in particular, the stated convergence as $\rho_m\downarrow0$. The joint convergence statement follows by combining this rate with the joint-limit statement of Proposition~\ref{prop:duality}: as $\rho_m/\delta\to0$, $\epsilon/\delta\to0$, and $n\to\infty$, both $|u_{\epsilon,\delta,n,m}-u_{\epsilon,\delta,n}|\to0$ and $u_{\epsilon,\delta,n}\to v(\mathbf{P},\lambda)$, so that $u_{\epsilon,\delta,n,m}\to v(\mathbf{P},\lambda)$ as well.
\end{proof}

\begin{proof}[Proof of Proposition~\ref{prop:sinkhorn-convergence}]\label{app:proof_sinkhorn_convergence}
This is a restatement of Theorem 3.3 in \citet{carlier2022linear}, under the correspondence $\varphi_k:=\frac{\lambda_{k,n}\phi_k}{\epsilon}$ and Carlier's cost $c^{\mathrm{C}}:=\frac{c_{\delta,n}}{\epsilon}$, so that $\|c^{\mathrm C}\|_\infty=\epsilon^{-1}\|c_{\delta,n}\|_\infty$. Under this rescaling, Carlier's objective, his Equation (2.4), satisfies
\[\Phi(\phi,p,\lambda)=\epsilon\big(1-F(\varphi)\big),\]
a direct computation from the definitions of $\Phi$ and $F$. Since this map is an order-reversing affine bijection between $\phi\in\R^{n\times m}$ and $\varphi\in\R^{n\times m}$, and since the normalization defining $\mathcal{K}_p$ is invariant under the rescaling $\varphi_k=\frac{\lambda_{k,n}\phi_k}{\epsilon}$, the Sinkhorn iterates $(\phi^t)$ of Section~\ref{sec:x_discretization} correspond exactly, term by term, to the block-coordinate-descent iterates $(\varphi^t)$ of \citet[Eqs.~(2.7)--(2.14)]{carlier2022linear} for $F$, started from $\varphi^0=0$. Theorem 3.3 there gives
\[F(\varphi^t)-F(\varphi^\ast)\ \leq\ \left(1-\frac{e^{-(16n-8)\|c^{\mathrm C}\|_\infty}}{n}\right)^t\big(F(\varphi^0)-F(\varphi^\ast)\big),\]
where $\varphi^\ast$ is the unique minimizer of $F$. Substituting $\Phi=\epsilon(1-F)$ on both sides turns this into the stated bound. Finally, note that $\Phi(\phi^0,p,\lambda)=\epsilon>0$ and $\Phi(\phi^\ast,p,\lambda)\leq \|c_{\delta,n}\|_\infty\leq 1+\delta^{-1}\mathrm{diam}(\X)$, which completes the proof.
\end{proof}



\subsection{Proofs of Section~\ref{sec:consistency}}

\begin{proof}[Proof of Theorem~\ref{thm:consistency}]\label{app:proof_consistency}
    We bound the $L^1$ distance between $\hat{v}_N$ and $v(\mathbf{P},\lambda)$ in three steps, one for each source of approximation error. First, the absolute deviation of the estimator $\hat{v}_N$ from the plug-in value of the entropic optimal transport problem $u_{\epsilon,\delta,n,m}(\mathbf{P},\lambda)$, denoted by $\hat{u}_{\epsilon,\delta,n,m}$, is, by Proposition~\ref{prop:sinkhorn-convergence}, bounded by the Sinkhorn error:
    \[|\hat{v}_N-\hat{u}_{\epsilon,\delta,n,m}|\leq \exp\left(-Ie^{-\frac{16n}{\delta \epsilon}\mathrm{diam}(\X)}\right)=o(1),\]
    which, by assumption, vanishes as the sample size grows. Second, the absolute deviation of the plug-in value $\hat{u}_{\epsilon,\delta,n,m}$ from the population value $u_{\epsilon,\delta,n,m}(\mathbf{P},\lambda)$ is bounded by the $L^1$-error of the empirical marginals, an immediate consequence of the stability inequality of Lemma~\ref{lemma:stability-eot}:
    \[|\hat{u}_{\epsilon,\delta,n,m}-u_{\epsilon,\delta,n,m}(\mathbf{P},\lambda)|\leq \frac{1}{\delta}\sum_{k\in[n]} W_1(\hat{p}^N_{k,\cdot},P_{k,n,m})\lambda_{k,n}.\]
    By the dual representation of the Wasserstein distance in terms of Lipschitz functions, $W_1(\hat{p}^N_{k,\cdot},P_{k,n,m})$ is in turn bounded by
    \[W_1(\hat{p}^N_{k,\cdot},P_{k,n,m})\leq \sum_{\ell\in[m]}|\hat{p}^N_{k,\ell}-p_{k,\ell}|.\]
    Taking expectations on both sides and invoking the $L^1$-error bound of Lemma~\ref{lemma:sample-properties} for the empirical marginals gives
    \[E\left[|\hat{u}_{\epsilon,\delta,n,m}-u_{\epsilon,\delta,n,m}(\mathbf{P},\lambda)|\right]\leq \frac{1}{\delta}\sum_{k\in[n]} E\left[W_1(\hat{p}^N_{k,\cdot},P_{k,n,m})\right]\lambda_{k,n}\leq \frac{1}{\delta}\sum_{k\in[n],\ell\in[m]}\frac{4}{\sqrt{N\lambda_{k,n}}}\lambda_{k,n},\]
    which is bounded by $\frac{4nm\overline{\lambda}_n^{\frac{1}{2}}}{\delta N^{\frac{1}{2}}}$ and, by assumption, vanishes as the sample size grows. Third, and finally, the population value $u_{\epsilon,\delta,n,m}(\mathbf{P},\lambda)$ converges to the partially identified bound $v(\mathbf{P},\lambda)$ by Proposition~\ref{prop:fully-discretized-ot}. Combining the three steps gives $L^1$-consistency of $\hat{v}_N$.
\end{proof}

\subsection{Proofs of Section~\ref{sec:asymptotics}}

\begin{proof}[Proof of Theorem~\ref{thm:asymptotic_distribution}]\label{app:proof_asymptotic_distribution}
    We establish the result in three steps. First, the absolute deviation of the estimator $\hat{v}_N$ from the plug-in value of the entropic optimal transport problem $u_{\epsilon,\delta,n,m}(\mathbf{P},\lambda)$, denoted by $\hat{u}_{\epsilon,\delta,n,m}$, is, by Proposition~\ref{prop:sinkhorn-convergence}, bounded by the Sinkhorn error:
    \[|\hat{v}_N-\hat{u}_{\epsilon,\delta,n,m}|\leq \exp\left(-Ie^{-\frac{16n}{\delta \epsilon}\mathrm{diam}(\X)}\right)=o(N^{-\frac{1}{2}}),\]
    which, since $\frac{\ln N}{I}=o(1)$ by assumption, vanishes in probability faster than $N^{-\frac{1}{2}}$.

    Second, the values $\hat{u}_{\epsilon,\delta,n,m}$ and $u_{\epsilon,\delta,n,m}(\mathbf{P},\lambda)$ are, by Lemma~\ref{lemma:finite-support-eot-properties}, the value of a strongly concave and coercive objective function on a finite-dimensional vector subspace, given respectively by $V(\hat{p}^N,\hat{\lambda}^N)$ and $V(p,\lambda)$. By this same lemma, the value function $V$ is continuously differentiable at $(\hat p^N,\hat\lambda^N)$ provided $\hat\lambda^N_{k,n}>0$ for all $k\in[n]$ and, along every feasible direction $q\in\R^{n\times m}$ with $q_{k,\ell}=0$ whenever $\hat{p}^N_{k,\ell}=0$. With probability approaching one, the empirical marginal $\hat{\lambda}^N$ satisfies the first condition. Moreover, also with probability approaching one, $\hat{p}^N_{k,\ell}=0$ whenever $p_{k,\ell}=0$, and $q=\hat{p}^N-p$ is a feasible direction. As $\sqrt{N}\left((\hat{p}^N,\hat{\lambda}^N)-(p,\lambda)\right)\Rightarrow N(0,\Sigma)$ by Lemma~\ref{lemma:sample-properties}, the delta method gives
    \[\sqrt{N}\left(\hat{u}_{\epsilon,\delta,n,m}-u_{\epsilon,\delta,n,m}(\mathbf{P},\lambda)\right)\Rightarrow \mathcal{N}\left(0,\nabla V(p,\lambda)^T\Sigma \nabla V(p,\lambda)\right),\]
    and simplifying the asymptotic variance using the block structure of $\Sigma$ together with the characterization of the gradient in Lemma~\ref{lemma:finite-support-eot-properties} yields the weighted population variance $\textnormal{Var}_{(p,\lambda)}[\phi^\ast_\beta]$ defined above.

    Third, and finally, we relate the estimator's limit to that of $\phi^\ast_\beta$'s empirical counterpart. Since
    \[\hat{v}_N +o(1)= \Phi(\hat{\phi}^N_\beta,\hat{p}^N,\hat{\lambda}^N)+o(1)= V(\hat{p}^N,\hat{\lambda}^N)\]
    and $\hat{\phi}^N_\beta\in \mathcal{K}_{\hat{p}^N}$. As $(\hat{p}^N,\hat{\lambda}^N)$ converges to $(p,\lambda)$ almost surely, by the strong law of large numbers, Lemma~\ref{lemma:finite-support-eot-properties} further implies that $\hat{\phi}^N_\beta$ converges to $\phi^\ast_\beta$ almost surely. Thus, $\textnormal{Var}_{(\hat{p}^N,\hat{\lambda}^N)}[\hat{\phi}^N_\beta]$ is a consistent estimator of $\textnormal{Var}_{(p,\lambda)}[\phi^\ast_\beta]$, concluding the proof.
\end{proof}

\subsection{Proofs of Appendix~\ref{app:aux}}\label{app:proofs_aux}

\subsubsection{Proofs of Section~\ref{subsec:projections}}

\begin{proof}[Proof of Proposition~\ref{prop:properties-projections}]\label{app:proof_properties_projections}
    \begin{enumerate}
        \item Linearity is immediate from linearity of the Bochner integral defining $\tilde\pi_{k,n}$. For the norm bound, Jensen's inequality -- Proposition 1.2.11 in \cite{Hytonen2016} -- gives
        \[|\tilde\pi_{k,n}(e)|_\mathcal{H} \leq \lambda_{k,n}^{-1}\int_{A_{k,n}} |e_z|_\mathcal{H}\,d\lambda(z),\]
        so that $\|\pi_{k,n}(e)\|_{\mathbb{L}^1(\Z,\lambda;\mathcal{H})}=|\tilde\pi_{k,n}(e)|_\mathcal{H}\,\lambda_{k,n}\leq \int_{A_{k,n}}|e_z|_\mathcal{H}\,d\lambda(z)$. Summing over the disjoint cells leads to $\|\pi_n(e)\|_{\mathbb{L}^1(\Z,\lambda;\mathcal{H})}\leq \|e\|_{\mathbb{L}^1(\Z,\lambda;\mathcal{H})}$, so $\pi_n$, and by the same computation restricted to a single cell each of $\pi_{k,n}$, $\tilde\pi_{k,n}$, $\tilde\pi_n$, has operator norm at most $1$. In particular, all four are continuous.

        \item  Since $\pi_n(e)$ is constant, equal to $\tilde\pi_{k,n}(e)$, on $A_{k,n}$, its own conditional average over $A_{k,n}$ reproduces this constant:
        \[\tilde\pi_{k,n}(\pi_n(e))=\lambda_{k,n}^{-1}\int_{A_{k,n}}\tilde\pi_{k,n}(e)\,d\lambda(z)=\tilde\pi_{k,n}(e),\qquad\forall k\in[n].\]
        This gives $\tilde\pi_n\circ\pi_n=\tilde\pi_n$ and, summing the corresponding pieces over $k$, $\pi_n\circ\pi_n=\pi_n$. Convexity and compactness of the range $\pi_n(\mathbb{H})$ then follow from item 3: it is the image of the convex, compact set $\X^n$ under the inverse of an isometric bijection, hence itself convex and compact.

        \item  Every $e\in\pi_n(\mathbb{H})$ is, by construction, piecewise constant on $\mathcal{A}_n$ with values in $\X$ (using that $\X$ is closed and convex, so conditional averages of $\X$-valued paths remain in $\X$). Write $x_k=\tilde\pi_{k,n}(e)\in\X$ for its value on $A_{k,n}$. The map $x\in\X^n\mapsto \sum_{k\in[n]}x_k1_{A_{k,n}}$ is a two-sided inverse of $\tilde\pi_n$ on $\pi_n(\mathbb{H})$, so $\tilde\pi_n$ is a bijection onto $\X^n$. For $e,e'\in\pi_n(\mathbb{H})$ with values $x,x'\in\X^n$,
        \[\|e-e'\|_{\mathbb{H}}=\int_\Z |e_z-e_z'|_\mathcal{H}\,d\lambda(z)=\sum_{k\in[n]}\int_{A_{k,n}}|x_k-x_k'|_{\X}\,d\lambda(z)=\sum_{k\in[n]}|x_k-x_k'|_{\X}\lambda_{k,n}=|x-x'|_{\X^n},\]
        so $\tilde\pi_n$ is distance-preserving on $\pi_n(\mathbb{H})$, i.e.\ an isometry.

        \item Fix $k\in[n]$ and write $A_{k,n}=\bigcup_{j\in J_k}A_{j,n+1}$ as a disjoint union of cells of the finer partition, so $\lambda_{k,n}=\sum_{j\in J_k}\lambda_{j,n+1}$. Since $\pi_{n+1}(e)$ equals $\tilde\pi_{j,n+1}(e)$ on each $A_{j,n+1}$,
        \begin{align*}
            \tilde\pi_{k,n}(\pi_{n+1}(e)) &= \frac{1}{\lambda_{k,n}}\int_{A_{k,n}}\pi_{n+1}(e)(z)\,d\lambda(z) = \frac{1}{\lambda_{k,n}}\sum_{j\in J_k}\lambda_{j,n+1}\tilde\pi_{j,n+1}(e),\\
             &= \frac{1}{\lambda_{k,n}}\sum_{j\in J_k}\int_{A_{j,n+1}}e_z\,d\lambda(z) = \tilde\pi_{k,n}(e).
        \end{align*}
        As $k$ was arbitrary, $\tilde\pi_n\circ\pi_{n+1}=\tilde\pi_n$; summing the corresponding pieces over $k$ gives $\pi_n\circ\pi_{n+1}=\pi_n$.

        \item Fix $\delta>0$ and $e\in\mathbb{H}$. By density of continuous paths in $\mathbb{H}$ -- Lemma 1.2.31 in \cite{Hytonen2016} -- there is a continuous $e'\in\mathbb{H}$ with $\|e-e'\|_{\mathbb{H}}\leq\delta/3$. Because $\Z$ is compact, $e'$ is uniformly continuous, so there is $\rho>0$ such that $d_\Z(z,z')<\rho\Rightarrow|e'(z)-e'(z')|_\mathcal{H}<\delta/3$. Since $\rho_n\to 0$, choose $n_0$ with $\rho_n<\rho$ for all $n\geq n_0$; then for $n\geq n_0$, $k\in[n]$, and $z\in A_{k,n}$,
        \[|\pi_n(e')(z)-e'(z)|_\mathcal{H} = \Big|\frac{1}{\lambda_{k,n}}\int_{A_{k,n}}\big(e'(z')-e'(z)\big)\,d\lambda(z')\Big|_\mathcal{H} \leq \frac{1}{\lambda_{k,n}}\int_{A_{k,n}}|e'(z')-e'(z)|_\mathcal{H}\,d\lambda(z') < \frac{\delta}{3},\]
        using $\operatorname{diam}_\Z(A_{k,n})\leq\rho_n<\rho$. Integrating over $z\in\Z$ gives $\|\pi_n(e')-e'\|_{\mathbb{H}}\leq\delta/3$ for $n\geq n_0$, and the triangle inequality together with the operator-norm bound of item 1 gives
        \[\|\pi_n(e)-e\|_{\mathbb{H}}\leq \|\pi_n(e-e')\|_{\mathbb{H}}+\|\pi_n(e')-e'\|_{\mathbb{H}}+\|e'-e\|_{\mathbb{H}}\leq \delta,\quad\text{for all }n\geq n_0.\]

        For the uniform statement, fix $\delta>0$, and let $(e_j)_{j\in[M]}\subseteq \mathbb{K}$ be such that the open balls of radius $\frac{\delta}{4}$ centered at these points cover $\mathbb{K}$ -- which exist, since the set is compact. By the first part, applied to each of the finitely many $e_j$, there is $n_0\in\N$ such that
        \[\|\pi_n(e_j)-e_j\|_{\mathbb{H}}\leq \frac{\delta}{2},\quad\text{for every }j\in[M]\text{ and }n\geq n_0.\]
        For any $e\in\mathbb{K}$, pick $j$ with $\|e-e_j\|_{\mathbb{H}}\leq \frac{\delta}{4}$; then, using again that $\pi_n$ has operator norm at most $1$,
        \[\|\pi_n(e)-e\|_{\mathbb{H}} \leq \|\pi_n(e-e_j)\|_{\mathbb{H}} + \|\pi_n(e_j)-e_j\|_{\mathbb{H}} + \|e_j-e\|_{\mathbb{H}} \leq \|\pi_n(e_j)-e_j\|_{\mathbb{H}}+ 2 \|e-e_j\|_{\mathbb{H}}\leq \delta,\]
        for all $n\geq n_0$. Since $e\in\mathbb{K}$ was arbitrary, this proves the claim.
    \end{enumerate}
\end{proof}

\subsubsection{Proofs of Section~\ref{subsec:costs-correspondences}}

\begin{proof}[Proof of Lemma \ref{lemma:correspondences-objectives}]\label{app:proof_correspondences_objectives}
    The function $g:\W\rightarrow\mathbb{H}$ is closed, since it is continuous and $\W$ is compact; likewise, as $\pi_n$ is continuous, so is the composition $\pi_n\circ g:\W\rightarrow\mathbb{H}$. By Theorem 17.7.1 in \cite{aliprantis2006infinite}, the preimage correspondence of a closed function is upper hemicontinuous, so both $G$ and
    \[\tilde{G}_n(e)=\{\w\in\W: \pi_n(g(\w))=e\}\]
    are upper hemicontinuous. Since $G_n=\tilde{G}_n\circ\pi_n$ Theorem 17.23 in \cite{aliprantis2006infinite} (composition of a continuous function with an upper hemicontinuous correspondence is upper hemicontinuous) gives that $G_n$ is upper hemicontinuous as well.

    The same two facts give closed-valuedness: $G(e)=g^{-1}(e)$ and $\tilde{G}_n(e)=(\pi_n\circ g)^{-1}(e)$ are preimages of a point under a continuous map, hence closed, and $G_n(e)=\tilde{G}_n(\pi_n(e))$ inherits closedness from $\tilde{G}_n$. Closed subsets of the compact set $\W$ are compact, so $G$ and $G_n$ are compact-valued.

    To show that $\Psi,\Psi_n$ are lower semicontinuous, denote by $E_\alpha$ and $E_{\alpha,n}$ their level sets,
    \[E_\alpha=\{e\in\mathbb{H}:\Psi(e)\leq \alpha\}\quad\text{and}\quad E_{\alpha,n}=\{e\in\mathbb{H}:\Psi_n(e)\leq \alpha\},\]
    and let $M=\sup_{\w\in\W}\Lambda(\w)$. If $\alpha\geq M+1$, both level sets equal $\mathbb{H}$ by definition of $\Psi,\Psi_n$, hence are closed. Suppose instead $\alpha<M+1$, and let $(e_k)_{k\in\N}\subseteq E_\alpha$ (resp. $(e_k)_{k\in\N}\subseteq E_{\alpha,n}$) converge to some $e\in\mathbb{H}$; we show $e\in E_\alpha$ (resp. $e\in E_{\alpha,n}$).

    Since $\alpha<M+1$, each $e_k$ is achievable, $e_k\in g(\W)$ (resp. $\pi_n(e_k)\in\pi_n(g(\W))$), so $G(e_k)\neq\emptyset$ (resp. $G_n(e_k)\neq\emptyset$). As $\Lambda$ is lower semicontinuous and $G(e_k)$ (resp. $G_n(e_k)$) is non-empty and compact, it attains a minimizer $\w_k\in G(e_k)$ (resp. $\w_k\in G_n(e_k)$) with $\Psi(e_k)=\Lambda(\w_k)$ (resp. $\Psi_n(e_k)=\Lambda(\w_k)$). By compactness of $\W$, $(\w_k)_{k\in\N}$ has a subsequence $(\w_{k_l})_{l\in\N}$ converging to some $\w\in\W$, so $(e_{k_l},\w_{k_l})_{l\in\N}$ converges to $(e,\w)$ in $\mathrm{Graph}\,G$ (resp. $\mathrm{Graph}\,G_n$). Because $G$ (resp. $G_n$) is upper hemicontinuous and compact-valued, its graph is closed, so $(e,\w)\in\mathrm{Graph}\,G$ (resp. $\mathrm{Graph}\,G_n$), that is, $\w\in G(e)$ (resp. $\w\in G_n(e)$).

    Combining $\w\in G(e)$ with the lower semicontinuity of $\Lambda$ and $\Lambda(\w_{k_l})=\Psi(e_{k_l})\leq\alpha$ for every $l$ (resp. the analogous facts for $G_n,\Psi_n$) gives
    \begin{align*}
        \Psi(e)&\leq \Lambda(\w)\leq \liminf_{l\in\N}\Lambda(\w_{k_l})=\liminf_{l\in\N}\Psi(e_{k_l})\leq \alpha,\\
        (\text{resp. }\Psi_n(e)&\leq \Lambda(\w)\leq \liminf_{l\in\N}\Lambda(\w_{k_l})=\liminf_{l\in\N}\Psi_n(e_{k_l})\leq \alpha),
    \end{align*}
    so $e\in E_\alpha$ (resp. $e\in E_{\alpha,n}$). Every level set is therefore closed, so $\Psi,\Psi_n$ are lower semicontinuous.

    Finally,note that for any feasible path $e\in\mathbb{K}$ and for all $n\in\N$, we must have $G(e)\subseteq G_{n+1}(e)\subseteq G_n(e)$ -- where the last inclusion follows from item 4 in Proposition~\ref{prop:properties-projections}. Thus, as the value of the functions $\Psi,\Psi_{n+1},\Psi_n$ are the infimum of $\Lambda$ over these sets, we must obtain:
    \[\Psi_n(e)\leq \Psi_{n+1}(e)\leq\Psi(e).\]
    Now, since $\Lambda$ is lower semicontinuous, there exists a sequence $(\omega_{n})_{n\in\N}$ of optimal points, i.e. $\w_n\in G_n(e)$ and $\Lambda(\w_n)=\Psi_n(e)$. Because $\W$ is compact there exists a convergent subsequence, converging to some $\w\in\W$. Since $\Lambda$ is upper semicontinuous and the sequence $\Psi_n(e)$ is increasing, this implies that the entire sequence converges to the value $\Lambda(\w)$. Now, as $\w_n\in G_n(e)$ for all $n\in\N$, we must have $\pi_n(g(\w_n))=e$. The fact that $g(\w_n)\in\mathbb{K}$, $g$ is continuous, and $\pi_n$ converges uniformly to the identity map in this set implies that $g(\w)=e$. Hence,
    \[\Psi(e)\geq\lim_{n\in\N}\Psi_n(e)=\Lambda(\w)\geq \Psi(e),\]
    and the proof is complete.
\end{proof}

\begin{proof}[Proof of Lemma~\ref{lemma:regularized-cost}]\label{app:proof_regularized_cost}
    Let $e\in\mathbb{H}$ be a fixed path. Then, the function
    \[\w\in\W\mapsto \Lambda(\w)+\delta^{-1}\|\pi_n(g(\w)-e)\|_{\mathbb{H}}\in\R\]
    is strictly positive, lower semicontinuous, and bounded by $1+2\delta^{-1}\mathrm{diam}(X)$. Thus, its infimum is well-defined and attainable, with the correspondence $G^\ast_{\delta,n}$  being non-empty and compact valued -- since it is a non-empty closed subset of the compact set $\W$.

    To check the Lipschitz continuity, note that if $e,e'\in\mathbb{H}$, then:
    \begin{align*}
        |\Psi_{\delta,n}(e)-\Psi_{\delta,n}(e')|&\leq\left|\inf_{\w\in\W}\{\Lambda(\w)+\delta^{-1}\|\pi_n(g(\w)-e)\|_{\mathbb{H}}\}-\inf_{\w'\in\W}\{\Lambda(\w')+\delta^{-1}\|\pi_n(g(\w')-e')\|_{\mathbb{H}}\}\right|,\\
        &\leq \delta^{-1}\|\pi_n(e-e')\|_{\mathbb{H}}|\leq \delta^{-1}\|e-e'\|_{\mathbb{H}},
    \end{align*}
    where the last inequality uses the fact that the operator norm of $\pi_n$ is smaller than or equal to $1$.

    The graph of $G^\ast_{\delta,n}$ is closed: if $(e_k,\w_k)\in\mathrm{graph}(G^\ast_{\delta,n})$ converges to $(e,\w)\in\mathbb{H}\times\W$, continuity of $\Psi_{\delta,n}$ (shown above) and of $\w\mapsto\|\pi_n(g(\w)-e)\|_{\mathbb{H}}$, together with lower semicontinuity of $\Lambda$, give
    \[\Psi_{\delta,n}(e)=\lim_{k\in\N}\Psi_{\delta,n}(e_k)=\liminf_{k\in\N}\Big(\Lambda(\w_k)+\delta^{-1}\|\pi_n(g(\w_k)-e_k)\|_{\mathbb{H}}\Big)\geq \Lambda(\w)+\delta^{-1}\|\pi_n(g(\w)-e)\|_{\mathbb{H}}.\]
    Since $\w$ is itself a feasible candidate in the infimum defining $\Psi_{\delta,n}(e)$, the reverse inequality also holds. So, the two must be equal,  $\w\in G^\ast_{\delta,n}(e)$, and the graph is closed. By the Closed Graph Theorem -- Theorem 17.11 in \cite{aliprantis2006infinite} -- the correspondence is upper hemicontinuous.

    Also observe that if $\w\in G^\ast_{\delta,n}(e)$ with $\pi_n(e)\in\pi_n(\mathbb{K})$, then for any $\w'\in \W$ with $\pi_n(g(\w'))=\pi_n(e)$, the distance term in the definition of $\Psi_{\delta,n}(e)$ vanishes at $\w'$, so
    \[\Lambda(\w')\geq \Psi_{\delta,n}(e)=\Lambda(\w)+\delta^{-1}\|\pi_n(g(\w)-e)\|_{\mathbb{H}}\geq \delta^{-1}\|\pi_n(g(\w)-e)\|_{\mathbb{H}},\]
    using $\Lambda(\w)\geq0$. Since $\Lambda(\w')\leq1$, we conclude that $\|\pi_n(g(\w)-e)\|_{\mathbb{H}}\leq \delta$.

    Finally, that $\Psi_{\delta,n}(e)$ increases monotonically as $\delta$ decreases follows directly from the definition. Suppose $\pi_n(e)\in\pi_n(\mathbb{K})$, and let $\w_\delta\in G^\ast_{\delta,n}(e)$ for each $\delta>0$. We claim that any point of adherence $\w\in\W$ of $(\w_\delta)_{\delta>0}$ as $\delta\downarrow0$ satisfies $\Psi_n(e)=\Lambda(\w)$, i.e. is optimal for the unregularized cost. Such points of adherence exist because $\W$ is compact; fix one, along a subsequence $\delta\downarrow0$ with $\w_\delta\to\w$. Since $\|\pi_n(g(\w_\delta)-e)\|_{\mathbb{H}}\leq \delta\to0$ and $\w\mapsto\|\pi_n(g(\w)-e)\|_{\mathbb{H}}$ is continuous, $\|\pi_n(g(\w)-e)\|_{\mathbb{H}}=0$, so $\w$ is feasible for $\Psi_n(e)$ and
    \[\Lambda(\w)\geq \Psi_{n}(e)\geq \Psi_{\delta,n}(e)\geq \Lambda(\w_\delta),\]
    the first inequality by feasibility of $\w$, the second by the monotonicity just shown, and the third since $\Psi_{\delta,n}(e)=\Lambda(\w_\delta)+\delta^{-1}\|\pi_n(g(\w_\delta)-e)\|_{\mathbb{H}}\geq\Lambda(\w_\delta)$. As $\Lambda$ is lower semicontinuous and $\w_\delta\to\w$, taking $\delta\downarrow0$ along the subsequence gives
    \[\Lambda(\w)\geq\Psi_{n}(e)\geq\lim_{\delta\downarrow0}\Psi_{\delta,n}(e)\geq\liminf_{\delta\downarrow0}\Lambda(\w_\delta)\geq\Lambda(\w),\]
    so all four terms coincide. In particular $\Psi_n(e)=\Lambda(\w)$, as claimed, and since the monotone limit $\lim_{\delta\downarrow0}\Psi_{\delta,n}(e)$ equals $\Psi_n(e)$ as well, this proves $\Psi_{\delta,n}(e)\uparrow\Psi_n(e)$.
\end{proof}

\begin{proof}[Proof of Lemma~\ref{lemma:c-is-cont}]\label{app:proof_c_is_cont}
    By Proposition~\ref{prop:properties-projections} item 3, $\tilde{\pi}_n$ is an isometry onto $\X^n$, so $c_{\delta,n}(x):=\Psi_{\delta,n}(\tilde{\pi}_n^{-1}(x))$ is well-defined and measurable, with boundedness inherited directly from the bound on $\Psi_{\delta,n}$ in Lemma~\ref{lemma:regularized-cost}. For Lipschitz continuity, the isometry gives $\|\tilde{\pi}_n^{-1}(x)-\tilde{\pi}_n^{-1}(y)\|_{\mathbb{H}}=|x-y|_{\X^n}$ for any $x,y\in\X^n$, so the $\delta^{-1}$-Lipschitz continuity of $\Psi_{\delta,n}$ carries over directly:
    \[|c_{\delta,n}(x)-c_{\delta,n}(y)|=|\Psi_{\delta,n}(\tilde{\pi}_n^{-1}(x))-\Psi_{\delta,n}(\tilde{\pi}_n^{-1}(y))|\leq \delta^{-1}\|\tilde{\pi}_n^{-1}(x)-\tilde{\pi}_n^{-1}(y)\|_{\mathbb{H}}=\delta^{-1}|x-y|_{\X^n}.\]
\end{proof}

\begin{proof}[Proof of Lemma~\ref{lemma:log-sum-exp-cost}]\label{app:proof_log_sum_exp_cost}
    Under the stated hypotheses, let $e\in\mathbb{H}$ be fixed and denote by $\w^\ast\in\W$ a minimizer of the cost $\Psi_{\delta,n}(e)$, so that
    \[\Psi_{\delta,n}(e) = \Lambda(\w^\ast)+\delta^{-1} \|\pi_n(g(\w^\ast)-e)\|_{\mathbb{H}}.\]
    By the definition of $\Psi_{\delta,n,\zeta}$, restricting the expectation to the ball $\{d(\w,\w^\ast)<t\}$, for any $t>0$, we must have that:
    \[0\leq \Psi_{\delta,n,\zeta}(e) - \Psi_{\delta,n}(e) \leq -\zeta\log\left(E_{Q_0}\left[ e^{-\zeta^{-1} (\Lambda(\w)-\Lambda(\w^\ast)+\delta^{-1} (\|\pi_n(g(\w)-e)\|_{\mathbb{H}}-\|\pi_n(g(\w^\ast)-e)\|_{\mathbb{H}}))}1_{d(\w,\w^\ast)<t}\right]\right).\]
    Using the Lipschitz continuity of $\Lambda$ and $g$, we have that for any $\w$ such that $d(\w,\w^\ast)<t$:
    \[\Lambda(\w)+\delta^{-1} \|\pi_n(g(\w)-e)\|_{\mathbb{H}} \leq \Lambda(\w^\ast)+\delta^{-1} \|\pi_n(g(\w^\ast)-e)\|_{\mathbb{H}} + M(1+\delta^{-1})t.\]
    Therefore,
    \[0\leq \Psi_{\delta,n,\zeta}(e) - \Psi_{\delta,n}(e) \leq M(1+\delta^{-1})t-\zeta\log Q_0(d(\w,\w^\ast)<t) \text{ for all } t>0.\]
    In particular, taking the infimum over $t>0$ yields
    \[0\leq \Psi_{\delta,n,\zeta}(e) - \Psi_{\delta,n}(e) \leq \inf_{t>0}\left\{M(1+\delta^{-1})t-\zeta\log Q_0(d(\w,\w^\ast)<t)\right\} = \zeta s^\ast\left(\frac{M(1+\delta^{-1})}{\zeta},\w^\ast\right)\]
    establishing the desired bound.
\end{proof}

\subsubsection{Proofs of Section~\ref{subsec:marginal-operators}}

\begin{proof}[Proof of Lemma~\ref{lemma:marginal-operators}]\label{app:proof_marginal_operators}
    Linearity is immediate from linearity of the integral defining each operator. Positivity holds because each operator is defined by integrating $\phi$ against probability measures (namely $\lambda$, or the weights $\lambda_{k,n}$), so $\phi\geq0$ implies a non-negative output. The constant function $1$ is mapped to $1$ because integrating $1$ against a probability measure returns $1$: for instance $T1(e)=\int_\Z 1\,d\lambda(z)=1$, and likewise for $T_n$ and $\mathrm{T}_n$. Together, positivity and unit-preservation give the operator-norm bound: for any $\phi$ with $\|\phi\|_\infty\leq1$, both $1-\phi\geq0$ and $1+\phi\geq0$, so positivity gives $T(1-\phi)\geq0$ and $T(1+\phi)\geq0$, that is, $|T\phi|\leq T1=1$ (and likewise for $T_n$, $\mathrm{T}_n$); hence each operator has operator norm at most $1$ and, in particular, is continuous.
\end{proof}

\begin{proof}[Proof of Lemma~\ref{lemma:prob-to-prob}]\label{app:proof_prob_to_prob}
    Under the above conditions, the norm duals of $C_b(K)$ and $C_b(H)$ can be identified with the set of finite signed regular measures on $K$, $ca_r(K)$, and the set of normal signed charges of bounded variation on $H$, $ba_n(H)$, respectively -- see Theorems 14.10 and 14.14 in \cite{aliprantis2006infinite}. By the continuity and linearity of $S$, the adjoint operator $S^\ast: ba_n(H) \rightarrow ca_r(K)$ is well-defined, linear and continuous. Moreover, by the positivity of $S$, $S^\ast\mu$ is a positive regular measure on $K$ for any $\mu\in\P(H)$. Finally, because $S$ maps the constant function $1$ to $1$, the total mass of $S^\ast\mu$ is
    \[S^\ast\mu(K)=E_{S^\ast\mu}[1]=E_{\mu}[S1]=E_{\mu}[1]=1,\]
    for any $\mu\in\P(H)$. Hence $S^\ast\mu$ is a probability measure on $K$, and the adjoint operator $S^\ast:\P(H)\rightarrow \P(K)$ is well-defined, linear on convex combinations, continuous, and takes values in the set of probability measures.
\end{proof}

\begin{proof}[Proof of Proposition~\ref{prop:feasible-sets}]\label{app:proof_feasible_sets}
    By Lemma~\ref{lemma:marginal-operators} and \ref{lemma:prob-to-prob}, the adjoint operators $T^\ast$, $T_n^\ast$, and $\mathrm{T}_n^\ast$ are well-defined, linear on convex combinations, continuous, and map probability measures to probability measures ($n$-tuples in the case of the latter). This makes $\mathcal{Q}(\mathbf{P},\lambda)$, $\mathcal{Q}_n(\mathbf{P},\lambda)$, $\mathcal{M}(\mathbf{P},\lambda)$, and $\Pi_n(\mathbf{P},\lambda)$ convex and closed directly, as preimages of the single point $P$ (or $P_n$, or $(\mathbf{P}_{k,n})_{k\in[n]}$) under a continuous affine map, intersected where relevant with the convex closed support constraints $\{\mu:\mathrm{supp}(\mu)\subseteq\mathbb{K}\}$ or $\{\mu:\mathrm{supp}(\pi_n\#\mu)\subseteq\pi_n(\mathbb{K})\}$.

    The pushed-forward sets ${\pi_n}_\#\left(\mathcal{C}_n(\mathbf{P},\lambda)\right)$ and ${\pi_n}_\#\left(\mathcal{M}_n(\mathbf{P},\lambda)\right)$ require a separate argument, since a continuous image of a closed set need not itself be closed. Here, however, $T_n^\ast\mu$ depends on $\mu$ only through its projection $\pi_n\#\mu$ -- indeed $T_n\phi=T\phi\circ\pi_n$ gives $T_n^\ast=T^\ast\circ(\pi_n)_\#$ -- so the constraint $T_n^\ast\mu=P_n$ is equivalent to $T^\ast(\pi_n\#\mu)=P_n$, a condition on $\pi_n\#\mu$ alone. Hence ${\pi_n}_\#\left(\mathcal{C}_n(\mathbf{P},\lambda)\right)=\{\nu\in\P(\pi_n(\mathbb{H})): T^\ast\nu=P_n\}$ and ${\pi_n}_\#\left(\mathcal{M}_n(\mathbf{P},\lambda)\right)=\{\nu\in\P(\pi_n(\mathbb{H})): T^\ast\nu=P_n, \mathrm{supp}(\nu)\subseteq\pi_n(\mathbb{K})\}$ are themselves directly of the preimage form above, hence convex and closed by the same argument.

    Compactness follows from the fact that all these sets are equitight: $\mathcal{Q}(	\mathbf{P},\lambda)$ and $\mathcal{Q}_n(\mathbf{P},\lambda)$ are equitight because they are supported in the compact set $\W$; $\mathcal{M}(\mathbf{P},\lambda)$ and ${\pi_n}_\# \left(\mathcal{M}_n(	\mathbf{P},\lambda)\right)$ are equitight because they are supported in the compact sets $\mathbb{K}$ and $\pi_n(\mathbb{K})$, respectively; ${\pi_n}_\# \left(\mathcal{C}_n(\mathbf{P},\lambda)\right)$ is equitight as it is supported in the compact set $\pi_{n}(\mathbb{H})$; finally, $\Pi_n(\mathbf{P},\lambda)$ is equitight because it is supported in the compact set $\X^n$. By Prokhorov's theorem, these sets are relatively compact, and since they are also closed, they must be compact.
\end{proof}

\begin{proof}[Proof of Lemma~\ref{lemma:marginals-of-T}]\label{app:proof_marginals_of_T}
   It suffices to show the second claim, as the first one reduces to the case $B=\X$. Take $\phi(z,x)=\psi(z)\eta(x)$, where $\psi\in C_b(\Z)$, and $\eta\in C_b(\X)$. Then
    \[T_n\phi(e)=\int_\Z \mathbf \psi(z)\eta(\pi_n(e)_z)\,d\lambda(z)=\sum_{k\in[n]}\eta(\tilde\pi_{k,n}(e))\int_{A_{k,n}}\psi(z)\,d\lambda(z),\]
    using that $\pi_n(e)$ is constant, equal to $\tilde\pi_{k,n}(e)$, on each $A_{k,n}$. Taking expectations under $\mu$,
    \[T_n^\ast\mu(A\times B)=E_\mu[T_n\phi]=\sum_{k\in[n]}E_{\lambda}[\psi \mathbf{1}_{A_{k,n}}]\,E_\mu\big[\eta(\tilde\pi_{k,n}(e))\big]=\sum_{k\in[n]}E_{\lambda}[\psi \mathbf{1}_{A_{k,n}}]\,E_{\tilde\pi_{k,n\#}\mu}\big[\eta(e)\big].\]
    As this holds for any continuous functions of each variable, we conclude the claim. Finally, for the $\Z$-marginal of $T^\ast$, observe that for any $\psi\in C_b(\Z)$, then:
    \[E_{T^\ast\mu}[\psi]= E_\mu[T\psi]= \int_\Z \psi(z)d\lambda(z).\]
\end{proof}

\begin{proof}[Proof of Proposition~\ref{prop:nested-constraints}]\label{app:proof_nested_constraints}
    We first establish the characterization of $\mathcal{Q}(\mathbf{P},\lambda)$.
Let $Q_{n_k}\in\mathcal{Q}_{n_k}(\mathbf{P},\lambda)$ with $n_k\to\infty$ and $Q_{n_k}\Rightarrow Q$. Since $g$ is continuous, $g_\#Q_{n_k}\Rightarrow g_\#Q$. Fix $\phi\in C_b(\Z\times\X)$: uniform continuity of $\phi$ on the compact set $\Z\times\X$, together with the uniform convergence $\sup_{e\in\mathbb{K}}\|\pi_n(e)-e\|_{\mathbb{H}}\to0$ of Proposition~\ref{prop:properties-projections} item 5, gives $T_{n_k}\phi\to T\phi$ uniformly on $\mathbb{K}$, the compact set on which every $g_\#Q_{n_k}$ and $g_\#Q$ are supported. Hence
    \[\big|E_{g_\#Q_{n_k}}[T_{n_k}\phi]-E_{g_\#Q}[T\phi]\big|\leq \|T_{n_k}\phi-T\phi\|_{\infty,\mathbb{K}}+\big|E_{g_\#Q_{n_k}}[T\phi]-E_{g_\#Q}[T\phi]\big|\longrightarrow 0,\]
    the first term by uniform convergence, the second since $g_\#Q_{n_k}\Rightarrow g_\#Q$ and $T\phi$ is a fixed bounded continuous function. The left-hand quantity equals $E_{T_{n_k}^\ast(g_\#Q_{n_k})}[\phi]=E_{P_{n_k}}[\phi]$, which converges to $E_P[\phi]$ since $P_n\Rightarrow P$. Combining the two limits, $E_{g_\#Q}[T\phi]=E_P[\phi]=E_{T^\ast(g_\#Q)}[\phi]$ for every $\phi\in C_b(\Z\times\X)$, that is, $T^\ast(g_\#Q)=P$, so $Q\in\mathcal{Q}(\mathbf{P},\lambda)$.

    The converse inclusion is simply a consequence of Assumption~\ref{ass:constraints}: for $Q\in\mathcal{Q}(\mathbf{P},\lambda)$ it provides $Q_n\in\mathcal{Q}_n(\mathbf{P},\lambda)$ with $Q_n\Rightarrow Q$, so $Q\in\overline{\bigcup_{m\geq n}\mathcal{Q}_m(\mathbf{P},\lambda)}$ for every $n\in\N$. Together, these give $\mathcal{Q}(\mathbf{P},\lambda)=\bigcap_{n\in\N}\overline{\bigcup_{m\geq n}\mathcal{Q}_m(\mathbf{P},\lambda)}$.

    We now turn to $\mathcal{M}(\mathbf{P},\lambda)$. Let $\mu_{n_k}\in\mathcal{M}_{n_k}(\mathbf{P},\lambda)$ with $n_k\to\infty$ and $\mu_{n_k}\Rightarrow\mu$. The argument above, with $g_\#Q_{n_k}$ replaced by $\mu_{n_k}$ throughout, gives $T^\ast\mu=P$; it remains to show $\mathrm{supp}(\mu)\subseteq\mathbb{K}$.

    Let us denote by $\phi_n,\phi\in C_b(\mathbb{H})$ the functions measuring the distance to the sets $\pi_n(\mathbb{K})$ and $\mathbb{K}$:
    \[\phi_n(e)=\inf_{e'\in\mathbb{K}}\|e-\pi_n(e')\|_{\mathbb{H}},\quad\text{ and }\quad\phi(e)=\inf_{e'\in\mathbb{K}}\|e-e'\|_{\mathbb{H}}.\]
    Note that these functions are bounded since $\X$ is compact, hence, $\mathbb{H}$ is a bounded set. Moreover, by Proposition~\ref{prop:properties-projections} item 5, these functions are at maximum distance $\kappa_n$. Thus, $\phi_n$ converges uniformly to $\phi$. The condition that $\pi_{n_k\#}\mu_{n_k}$ is supported in $\pi_{n_k}$ implies $E_{\pi_{n_k\#}\mu_{n_k}}[\phi_{n_k}]=0$ for all $k\in \N$. Now, note that $\pi_{n_k}$ converges pointwise to the identity map in $\mathbb{H}$ and $\mu_{n_k}\Rightarrow \mu$. This implies that $\pi_{n_k\#}\mu_{n_l}\Rightarrow \mu$. This and the fact that $\pi_n$ converges uniformly to $\phi$ implies that:
    \[E_{\mu}[\phi]=\lim_{k\in \N}E_{\pi_{n_k}}[\phi_{n_k}]=0.\]
    This is possible only if $\mu(\mathbb{K})=1$, from where we obtain that $\mu\in \mathcal{M}(\mathbf{P},\lambda)$.

    Conversely, for $\mu\in\mathcal{M}(\mathbf{P},\lambda)$, Proposition~\ref{prop:co-mechanism} gives $Q\in\mathcal{Q}(\mathbf{P},\lambda)$ with $g_\#Q=\mu$; the $\mathcal{Q}$-characterization just proved gives $Q_n\in\mathcal{Q}_n(\mathbf{P},\lambda)$ with $Q_n\Rightarrow Q$, and since $g_\#\mathcal{Q}_n(\mathbf{P},\lambda)\subseteq\mathcal{M}_n(\mathbf{P},\lambda)$ and $g_\#$ is continuous, $\mu_n:=g_\#Q_n\in\mathcal{M}_n(\mathbf{P},\lambda)$ satisfies $\mu_n\Rightarrow g_\#Q=\mu$. Hence $\mu\in\overline{\bigcup_{m\geq n}\mathcal{M}_m(\mathbf{P},\lambda)}$ for every $n\in\N$, and $\mathcal{M}(\mathbf{P},\lambda)=\bigcap_{n\in\N}\overline{\bigcup_{m\geq n}\mathcal{M}_m(\mathbf{P},\lambda)}$.
\end{proof}

\subsubsection{Proofs of Section~\ref{subsec:eot_properties}}

\begin{proof}[Proof of Lemma~\ref{lemma:coupling-approximation}]\label{app:proof_coupling_approximation}
    The construction is a direct extension of the shadow coupling of Definition 3.1 in \citet{eckstein2023quantitative}. For each $k\in[n]$, let $\gamma_k\in \Pi(\rho_k,\tilde{\rho}_k)$ be an optimal coupling between $\rho_k$ and $\tilde{\rho}_k$ for the $W_1$-distance, and $K_k$ the corresponding kernel, i.e.\ $\gamma_k= K_k\otimes \rho_k$. Define the kernel $K:\X^n\rightarrow \P(\X^n)$ by $K(x_1,\ldots,x_n)=\bigotimes_{k\in[n]} K_k(x_k)$, and set $\gamma:=K\otimes \mu\in \P(\X^n\times \X^n)$. We use $\tilde{\mu}\in\P(\X^n)$ to denote its second marginal, the shadow of $\mu$ under $\tilde{\mathbf{\rho}}$. By construction, $\tilde{\mu}\in \Pi_n(\tilde{\mathbf{\rho}})$, and $\gamma$ is a coupling of $\mu$ and $\tilde{\mu}$. Using this coupling and the optimality of each $\gamma_k$,
    \[W_1(\mu,\tilde{\mu})\leq E_{\gamma}[|X-Y|_{\X^n}]= \sum_{k\in[n]} E_{\gamma_k}[|X-Y|_{\X}]\lambda_{k,n}=\sum_{k\in[n]} W_1(\rho_k,\tilde{\rho}_k)\lambda_{k,n}.\]
    Since $\mu$ and $\tilde\mu$ have marginals $\mathbf{\rho}$ and $\tilde{\mathbf{\rho}}$, respectively, the reverse inequality is exactly \eqref{eq:w1-marginal-lower-bound}, so the two combine into the claimed identity.

    For the entropy inequality, write $\bar{\gamma}:=\bigotimes_{k\in[n]}\gamma_k\in\P(\X^n\times\X^n)$ for the analogous gluing of the marginals, so that $\bar\gamma = K\otimes\big(\bigotimes_{k\in[n]}\rho_k\big)$. Such distribution shares the same kernel $K$ as $\gamma=K\otimes\mu$, only with $\mu$ replaced by $\bigotimes_{k\in[n]}\rho_k$. The chain rule for relative entropy therefore gives $H(\gamma|\bar\gamma)=H(\mu|\bigotimes_{k\in[n]}\rho_k)+E_\mu\big[H(K(x,\cdot)|K(x,\cdot))\big]=H(\mu|\bigotimes_{k\in[n]}\rho_k)$, the correction term vanishing as both couplings disintegrate through the same kernel $K$. Since $\tilde\mu$ and $\bigotimes_{k\in[n]}\tilde\rho_k$ are, respectively, the second marginals of $\gamma$ and $\bar\gamma$, the data-processing inequality for relative entropy under marginalization gives
    \[H\Big(\tilde{\mu}\,\Big|\bigotimes_{k\in[n]} \tilde{\rho}_k\Big)\leq H(\gamma|\bar\gamma)= H\Big(\mu\,\Big|\bigotimes_{k\in[n]} \rho_k\Big),\]
    which concludes the proof.
\end{proof}

\begin{proof}[Proof of Lemma~\ref{lemma:stability-eot}]\label{app:proof_stability_eot}
    The proof is a direct extension of Theorem 3.7(ii) in \citet{eckstein2023quantitative}. Since $c$ is bounded and Lipschitz and $\X^n$ is compact, $S(\mathbf\rho)$ equals the value of the (attained) primal problem $\inf_{\nu\in\Pi_n(\mathbf\rho)}E_\nu[c]+\epsilon H(\nu|\bigotimes_{k\in[n]}\rho_k)$, by the same duality result underlying Proposition~\ref{prop:entropic-ot-dual}. Let $\mu\in \Pi_n(\mathbf{\rho})$ be its optimizer, and let $\tilde{\mu}\in \Pi_n(\tilde{\mathbf{\rho}})$ be the coupling in Lemma~\ref{lemma:coupling-approximation}. Since $\tilde\mu$ is feasible, not necessarily optimal, for the EOT $S(\tilde{\mathbf\rho})$, and its relative entropy is no larger than that of $\mu$,
    \begin{align*}
        S(\tilde{\mathbf{\rho}})-S(\mathbf{\rho})&\leq \Big(E_{\tilde{\mu}}[c]+\epsilon H\big(\tilde\mu\,\big|\textstyle\bigotimes_{k\in[n]}\tilde\rho_k\big)\Big)-\Big(E_{\mu}[c]+\epsilon H\big(\mu\,\big|\textstyle\bigotimes_{k\in[n]}\rho_k\big)\Big)\leq E_{\tilde{\mu}}[c]-E_{\mu}[c]\\
        &\leq L\, W_1(\mu,\tilde{\mu})\\
        &= L\sum_{k\in[n]} W_1(\rho_k,\tilde{\rho}_k)\lambda_{k,n},
    \end{align*}
    where the second inequality follows from the $L$-Lipschitz continuity of $c$ and the Kantorovich-Rubinstein duality, and the last equality from Lemma~\ref{lemma:coupling-approximation}. By symmetry, exchanging the roles of $\mathbf{\rho}$ and $\tilde{\mathbf{\rho}}$, we obtain the desired result.
\end{proof}

\begin{proof}[Proof of Lemma~\ref{lemma:finite-support-eot-properties}]\label{app:proof_finite_support_eot_properties}
For the first assertion, note that, for a fixed dual potential $\phi\in C_b(\X)^n$, since each $\rho_k$ is supported on $\X_m$, the objective function of the entropic transport, in its dual formulation, reduces to the finite sum
\[E_{R}\left[\mathrm{T}_n\phi - \epsilon\left(e^{\frac{\mathrm{T}_n\phi -c}{\epsilon}}-1\right)\right]= \sum_{\ell\in [m]^n}\left( \sum_{k\in[n]}\phi_{k}(x_{\ell_k})\lambda_{k,n} - \epsilon\left(C(\ell)\,e^{\sum_{k\in[n]}\phi_{k}(x_{\ell_k})\lambda_{k,n}/\epsilon}-1\right)\right)\prod_{k\in[n]}p_{k,\ell_k}, \]
with $p_{k,\ell_k}=\rho_k(\{x_{\ell_k}\})$. Setting $\phi_{k,\ell_k}=\phi_k(x_{\ell_k})$, and noting that every $\phi\in\R^{n\times m}$ arises this way from some $\phi\in C_b(\X)^n$ (any function on the finite set $\X_m$ extends to a continuous function on $\X$), the right-hand side equals $\Phi(\phi,p,\lambda)$, so that taking the supremum over $\phi\in C_b(\X)^n$ on the left is equivalent to maximizing $\Phi(\cdot,p,\lambda)$ over $\phi\in\R^{n\times m}$ on the right, giving the desired equality $S(\mathbf\rho)=\sup_{\phi\in\R^{n\times m}}\Phi(\phi,p,\lambda)$; that the supremum is in fact attained is shown next.

As for the existence of a unique optimal solution in $\mathcal{K}_p$, note that this set is compact, and the objective function $\Phi(\cdot,p,\lambda)$ is continuous and concave -- concavity holds since $\phi\mapsto C(\ell)e^{\lambda^T\phi_{\cdot,\ell}/\epsilon}$ is convex for each $\ell$, and the weights $\prod_{k\in[n]}p_{k,\ell_k}\geq0$ -- so a maximizer exists by Weierstrass's theorem. Restricted to $\mathcal{K}_p$, $\Phi(\cdot,p,\lambda)$ is in fact strictly concave: writing $\Sigma:=\mathrm{diag}(\lambda_{k,n}p_{k,\ell})_{(k,\ell)\in[n]\times[m]}$, the Hessian of $\Phi$ satisfies
\[\eta^T D_\phi^2 \Phi(\phi,p,\lambda)\eta \leq - \kappa \left(\eta^T\Sigma \eta\right)=-\kappa \sum_{k\in[n]} p_{k,\cdot}^T\eta^2_{k,\cdot} \lambda_{k,n},\]
for every feasible direction $\eta\in\R^{n\times m}$, where $\kappa = \frac{1}{\epsilon}e^{-\frac{(6n+1)\|c\|_\infty}{\epsilon}}$. Since the right-hand side is always strictly negative for feasible directions -- at least one $p_{k,\ell_k}\lambda_{k,n}\eta^2_{k,\ell_k}>0$ -- $\Phi$ is strictly concave on $\mathcal{K}_p$. By Lemma 4.4 in \citet{dimarino2020schrodinger}, there is a solution to the dual of $S(\mathbf\rho)$ with $\|\phi\|_\infty\leq 3\|c\|_\infty$, so $|\phi_k(x_\ell)|\leq 3\|c\|_\infty$ for every $k\in[n]$ and $\ell\in[m]$. In the first-order optimality condition, only pairs $(k,\ell_k)$ with $p_{k,\ell_k}>0$ affect the value of the dual variables or of the program, so we may without loss of generality set $\phi_k(x_\ell)=0$ whenever $p_{k,\ell}=0$. Define $\phi^\ast\in \R^{n\times m}$ by setting
\[\phi^\ast_{k,\ell}=\begin{cases}
    \phi_{k,\ell}-p_k^T\phi_{k,\cdot},&\text{ for all }(k,\ell)\in[n-1]\times [m]\text{ so that }p_{k,\ell}>0,\\[2pt]
    \displaystyle\phi_{k,\ell}+\frac{1}{\lambda_{n,n}}\sum_{j\in[n-1]}\lambda_{j,n}\,p_j^T\phi_{j,\cdot},&\text{ if }k=n\text{ and }p_{n,\ell}>0,\\[6pt]
    0,&\text{ if }p_{k,\ell}=0.
\end{cases}\]
We claim that $\phi^\ast\in \mathcal{K}_p$ is a maximum. By construction, $p_k^T\phi^\ast_{k,\cdot}=0$ for all $k\in[n-1]$, and $\phi^\ast_{k,\ell}=0$ if $p_{k,\ell}=0$. Moreover, the $\lambda$-weighted shift added to $\phi_n$ is exactly what compensates the shifts subtracted from $\phi_1,\ldots,\phi_{n-1}$: a direct computation shows $\lambda^T\phi^\ast_{\cdot,\ell}=\lambda^T\phi_{\cdot,\ell}$ for every $\ell\in[m]^n$ with $\prod_{k\in[n]}p_{k,\ell_k}>0$. Since $\Phi(\cdot,p,\lambda)$ depends on $\phi$ only through these values at active $\ell$, this gives $\Phi(\phi^\ast,p,\lambda)=\Phi(\phi,p,\lambda)$, so $\phi^\ast$ satisfies the same first-order optimality condition as $\phi$ and is itself a maximizer. The remaining step is to confirm $\|\phi^\ast\|_\infty\leq 3n\|c\|_\infty$, uniformly over $(p,\lambda)$. For $k<n$, $\|\phi_k^\ast\|_\infty\leq 2\cdot3\|c\|_\infty=6\|c\|_\infty$ follows directly from $\|\phi\|_\infty\leq3\|c\|_\infty$. As for $k=n$, the added shift $\lambda_{n,n}^{-1}\sum_{j<n}\lambda_{j,n}p_j^T\phi_{j,\cdot}$ is bounded by $3\|c\|_\infty\frac{1-\lambda_{n,n}}{\lambda_{n,n}}\leq 3(\underline{\lambda}_n^{-1}-1)\|c\|_\infty$.

To show the differentiability, note first that we can always maximize our $\Phi$ in the compact set $\|\phi\|_{\infty}\leq 3(\underline{\lambda}_n^{-1}-1)\|c\|_{\infty}$, as we have showed the existence of solutions within this set. Then, applying Danskin's Theorem (Theorem 4.13 in \citet{Bonnans2000PerturbationAO}), we conclude that $V(p,\lambda)$ is directional differentiable with directional derivative at $(p,\lambda)$ in the feasible direction $(q,\mu)$ given by:
\[d V(p,\lambda)(q,\mu)=\max_{\phi\in Opt(p,\lambda)} D_{p,\lambda}\Phi(\phi,p,\lambda)(q,\mu),\]
where $Opt(p,\lambda)$ stands for the set of all maxima within the compact set of matrices $\|\phi\|_{\infty}\leq 3(\underline{\lambda}_n^{-1}-1)\|c\|_\infty$. Now, notice that for any maxima $\phi,\phi'$, $D_q\Phi(\phi,p,\lambda)=0=D_q\Phi(\phi',p,\lambda)$, by the first-order optimality condition, and $D_p\Phi(\phi,p,\lambda)=D_p\Phi(\phi',p,\lambda)$, since this latter quantity depends only on $\lambda^T\phi_{\cdot,\ell}$, which equals to $\lambda^T\phi'_{\cdot,\ell}$ for any active paths $\prod_{k\in[n]}p_{k,\ell_k}>0$. This implies that the directional derivative equals $D_{p,\lambda}\Phi(\phi,p,\lambda)(q,\lambda)$ for any optimum $\phi$. In particular, it holds for the $\phi^\ast$ obtained above.

Finally, for the stability of the solutions, note that if $\phi^N\in \mathcal{K}_{p^N}$ is such that $\Phi(\phi^N,p^N,\lambda^N)+ o(1)= V(p^N,\lambda^N)$, then $\phi^N$ is uniformly bounded, and thus has a convergent subsequence. Let $\phi^\infty$ be the limit of such a subsequence. By continuity of $\Phi$ and $V$, we have that $\Phi(\phi^\infty,p,\lambda)=V(p,\lambda)$, and thus $\phi^\infty$ is an optimal solution for $(p,\lambda)$. Moreover, such element must also belong in $\mathcal{K}_p$. By uniqueness of the optimal solution in $\mathcal{K}_p$, we conclude that $\phi^\infty=\phi^\ast$. Thus, as every convergent subsequence of $\phi^N$ converges to the same limit $\phi^\ast$, and there is at least one convergent subsequence, we conclude that $\phi^N\to \phi^\ast$.
\end{proof}

\subsubsection{Proofs of Section~\ref{subsec:sample_properties}}

\begin{proof}[Proof of Lemma~\ref{lemma:sample-properties}]\label{app:proof_sample_properties}
    The proof combines the strong law of large numbers, the central limit theorem, and the delta method, applied through the following conditional-independence property of multinomial sampling. For this, let us define the set $\mathrm{dom}(\Gamma)\subseteq \R^{n\times m}$ as the set of all $n\times m$ matrices with non-negative entries such that each row has a strictly positive sum. Set $\Gamma$ to be the function $\Gamma:\mathrm{dom}(\Gamma)\subseteq\R_+^{n\times m}\to (\Delta_m)^n\times \Delta_n$ defined by $\Gamma(y)=(\hat{p},\hat{\lambda})$, where
    \[\hat{p}_{k,\ell}=\frac{y_{k,\ell}}{\sum_{\ell'\in[m]}y_{k,\ell'}}\quad\text{ and }\quad\hat{\lambda}_k=\sum_{\ell\in[m]}y_{k,\ell}.\]

    For (i), $\bar{Y}^N\to E[Y_1]= (P(A_{k,n}\times B_{\ell,m}))_{k\in[n],\ell\in[m]}$ almost surely as $N\to\infty$, by the strong law of large numbers. Together with the fact that $P[N_k=0\text{ for some }k]\leq n e^{-N\underline{\lambda}_n}\to0$ as $N\to\infty$, $\Gamma$ is continuous in its domain, and the limiting point $E[Y_1]$ belongs to its domain, this gives $(\hat{p}^N,\hat{\lambda}^N)=\Gamma(\bar{Y}^N)\to \Gamma(E[Y_1])=(p,\lambda)$ almost surely.

    For (ii), note that the function $\Gamma$ is continuously differentiable in its domain. Moreover, by the central limit theorem, $\sqrt{N}(\bar{Y}^N-E[Y_1])$ is asymptotically normal with mean zero and covariance matrix $\mathrm{diag}(E[Y_1])-E[Y_1]E[Y_1]^T$. As the event $N_k>0$ for all $k>0$ tends to $0$ faster than $\sqrt{N}$, the delta method gives $\sqrt{N}(\hat{p}^N-p,\hat{\lambda}^N-\lambda)=\sqrt{N}(\Gamma(\bar{Y}^N)-\Gamma(E[Y_1]))\Rightarrow D\Gamma(E[Y_1])\cdot \mathcal{N}(0,\mathrm{diag}(E[Y_1])-E[Y_1]E[Y_1]^T)$, which is a multivariate normal with mean zero and covariance matrix $\Sigma$ as stated. An immediate calculation shows that the covariance matrix of the limiting distribution is indeed $\Sigma$.

    Finally, for (iii),
    \[E[|\hat{p}_{k,\ell}^N-p_{k,\ell}|]\leq E[|\hat{p}_{k,\ell}^N-p_{k,\ell}|\mid N_k>0]+ 2 P[N_k=0]\leq \frac{2}{\sqrt{N}}\left(\frac{1+\sqrt{p_{k,\ell}(1-p_{k,\ell})}}{\sqrt{\underline{\lambda}_n}}\right),\]
    where the last inequality follows since, conditional on $N_k$, $N_k\hat{p}^N_{k,\ell}$ is a binomial random variable with parameters $N_k$ and $p_{k,\ell}$, so $E[|\hat{p}^N_{k,\ell}-p_{k,\ell}|\mid N_k]\leq\sqrt{\mathrm{Var}(\hat{p}^N_{k,\ell} \mid N_k)}=\sqrt{p_{k,\ell}(1-p_{k,\ell})N_k^{-1}}$. Taking expectations and bounding the resulting reciprocal moment by $E[N_k^{-1/2}\mid N_k>0]\leq\sqrt{2(N\lambda_{k,n})^{-1}}\leq\sqrt{2(N\underline{\lambda}_n)^{-1}}$ gives the first term, while, when $N\underline{\lambda}_n>1$, the remaining term $2P[N_k=0]$ is bounded by $2(\sqrt{\underline{\lambda}_nN})^{-1}$. This concludes the proof.
\end{proof}

\end{document}